\definecolor{darkred}{RGB}{100,0,0}
\definecolor{darkgreen}{RGB}{0,100,0}
\definecolor{darkblue}{RGB}{0,0,150}
\definecolor{red}{RGB}{255,0,0}
\newtheorem{theorem}{Theorem}[section]
\newtheorem{lemma}{Lemma}[section]
\newtheorem{corollary}{Corollary}[section]
\newtheorem{assumption}{Assumption}[section]
\newtheorem{proposition}{Proposition}[section]
\theoremstyle{remark}
\newtheorem{remark}{Remark}[section]
\newcommand{\wt}[1]{\widetilde{#1}}
\newcommand{\wh}[1]{\widehat{#1}}
\newcommand{\calP}{\mathcal{P}}
\newcommand{\calM}{\mathcal{M}}
\newcommand{\Poff}{P^0}
\newcommand{\fnorm}[1]{\|#1\|_{\mathrm{F}}}
\newcommand{\opnorm}[1]{\|#1\|_{\mathrm{op}}}
\newcommand{\logit}{\operatorname{logit}}
\newcommand{\rank}{\operatorname{r}}
\newcommand{\row}{\mathrm{row}}
\newcommand{\col}{\mathrm{col}}
\newcommand{\srank}{\operatorname{r}_{\operatorname{stable}}}
\newcommand{\tr}{\operatorname{Tr}}
\newcommand{\E}{{\bf \mathbb{E}}}
\newcommand{\one}{{1_n}}
\newcommand{\fro}[1]{\left\| #1 \right\|_{\operatorname{F}}}
\newcommand{\op}[1]{\left\| #1 \right\|_{\operatorname{op}}}
\newcommand{\vnorm}[1]{\left\| #1 \right\|}
\newcommand{\nuc}[1]{\left\| #1 \right\|_{*}}
\newcommand{\tall}{\bTheta_{\star} }
\newcommand{\tz}{\bZ_{\star} }
\newcommand{\tg}{\bG_{\star} }
\newcommand{\ta}{\balpha_{\star} }
\newcommand{\tb}{\beta_{\star} }
\newcommand{\tx}{\bx_{\star} }
\newcommand{\dg}{\Delta_{\hbG}}
\newcommand{\da}{\Delta_{\hbalpha}}
\newcommand{\db}{\Delta_{\widehat{\beta}}}
\newcommand{\dall}{\Delta_{\hbTheta}}
\newcommand{\zt}{\bZ^t}
\newcommand{\at}{\balpha^t}
\newcommand{\bt}{\beta^t}
\newcommand{\gt}{\bG^t}
\newcommand{\allt}{\bTheta^t}
\newcommand{\xt}{x^t}
\newcommand{\dzt}{\Delta_{\bZ^t}}
\newcommand{\dzzt}{\bZ^t(\bZ^t)^\top - \tz\tz^\top}
\newcommand{\dat}{\Delta_{\balpha^t}}
\newcommand{\dbt}{\Delta_{\beta^t}}
\newcommand{\dgt}{\Delta_{\bG^t}}
\newcommand{\dallt}{\Delta_{\bTheta^t}}
\newcommand{\wl}{h}
\newcommand{\wll}{H}
\newcommand{\sam}{\nabla \wl (\bTheta)}
\newcommand{\samt}{\nabla \wl (\bTheta^t)}
\newcommand{\popt}{\nabla \wll (\bTheta^t)}
\newcommand{\popstar}{\nabla \wll (\bTheta_{\star})}
\newcommand{\sigt}{\sigma (\bTheta^t)}
\newcommand{\sigp}{\sigma (\bTheta_{\star})}
\newcommand{\etaz}{\eta_{\bZ}}
\newcommand{\etaa}{\eta_{\balpha}}
\newcommand{\etab}{\eta_{\beta}}
\newcommand{\etag}{\eta_{\bG}}
\newcommand{\param}{\mathcal{F}(n, k, M_1, M_2, \bX)}
\newcommand{\paramK}{\mathcal{F}_g(n, M_1, M_2, \bX)}
\newcommand{\lambdaP}{C_0\sqrt{\max\left\{ne^{-M_2}, \log n\right\}}}
\newcommand{\gresid}{\widebar{G}_k}
\newcommand{\pint}{\mathbb{N}_+}
\newcommand{\lambdacond}{\max\{ 2\op{\bA-\bP}, ~|\langle \bA-\bP, \bX/\fro{X} \rangle |/\sqrt{k}, ~1\}}
\newcommand{\half}{1/2}
\newcommand{\dtg}{\Delta_{\wt{G}}}
\newcommand{\dta}{\Delta_{\wt{\alpha}}}
\newcommand{\dtb}{\Delta_{\wt{\beta}}}
\newcommand{\dtall}{\Delta_{\wt{\Theta}}}
\newcommand{\wtl}{\wt{\wl}}
\newcommand{\inner}[1]{\big\langle #1 \big\rangle}
\newcommand{\argmin}{\mathop{\rm argmin}}
\newcommand{\nb}[1]{\textcolor{blue}{\texttt{[#1]}}}
\newcommand{\zm}[1]{\textcolor{red}{\texttt{[#1]}}}
\def\argmin{\mathop{\rm arg\, min}}
\newcommand{\bel}{\begin{eqnarray}\label}
\newcommand{\eel}{\end{eqnarray}}
\newcommand{\bes}{\begin{eqnarray*}}
\newcommand{\ees}{\end{eqnarray*}}
\newcommand{\bei}{\begin{itemize}}
\newcommand{\beiftnt}{\begin{itemize}\footnotesize}
\newcommand{\eei}{\end{itemize}}
\def\benu{\begin{enumerate}}
\def\eenu{\end{enumerate}}
\def\argmin{\mathop{\rm arg\, min}}
\def\S{{\mathbb{S}}}
\def\E{{\mathbb{E}}}
\def\P{{\mathbb{P}}}
\def\complex{\mathop{{\rm I}\kern-.58em\hbox{\rm C}}\nolimits}
\def\rank{\hbox{\rm rank}}
\def\mathbold{\boldsymbol} 
\def\bA{\mathbold{A}}
\def\bB{\mathbold{B}}
\def\calC{{\cal C}}
\def\calC{{\cal C}}
\def\calF{{\cal F}}
\def\bG{\mathbold{G}}
\def\hbG{{\widehat{\bG}}}
\def\bI{\mathbold{I}}
\def\bJ{\mathbold{J}}
\def\bM{\mathbold{M}}
\def\calM{{\cal M}}
\def\bP{\mathbold{P}}
\def\calP{{\cal P}}
\def\bR{\mathbold{R}}
\def\bx{\mathbold{x}}
\def\bX{\mathbold{X}}
\def\bX{{X}}
\def\by{\mathbold{y}}
\def\bY{\mathbold{Y}}
\def\bz{\mathbold{z}}
\def\bZ{\mathbold{Z}}
\def\balpha{\mathbold{\alpha}}
\def\hbalpha{{\widehat{\balpha}}}
\def\hbeta{\widehat{\beta}}
\def\bTheta{\mathbold{\Theta}}\def\hbTheta{{\widehat{\bTheta}}}
\def\0{\mathbold{0}}
\def\bx{{x}}
\def\by{{y}}
\def\bt{{\beta^t}}
\def\bX{{X}}
\def\bY{{Y}}
\def\bB{{B}}
\def\bR{{R}}
\def\bZ{{Z}}
\def\bI{{I}}
\def\bM{{M}}
\def\bA{{A}}
\def\bG{{G}}
\def\bTheta{{\Theta}}
\def\bP{{P}}
\def\bJ{{J}}
\def\balpha{{\alpha}}
\def\bz{{z}}
\def\rank{\operatorname{r}}
\title{Exploration of Large Networks with Covariates via Fast and Universal
Latent Space Model Fitting}
\author{Zhuang Ma
~and~ 
Zongming Ma\footnote{
Email: \texttt{zongming@wharton.upenn.edu}.}
\\~\\
\textit{University of Pennsylvania}
}
\date{~}
\begin{document}
\maketitle

\begin{abstract}
Latent space models are effective tools for statistical modeling and exploration of network data.
These models can effectively model real world network characteristics such as degree heterogeneity, transitivity, homophily, etc. 
Due to their close connection to generalized linear models, it is also natural to incorporate covariate information in them.
The current paper presents two universal fitting algorithms for networks with edge covariates: one based on nuclear norm penalization and the other based on projected gradient descent.
Both algorithms are motivated by maximizing likelihood for a special class of inner-product models
while working simultaneously for a wide range of different latent space models, such as distance models, which
allow latent vectors to affect edge formation in flexible ways. 
These fitting methods, especially the one based on projected gradient descent, are fast and scalable to large networks.
We obtain their rates of convergence for both inner-product models and beyond.
The effectiveness of the modeling approach and fitting algorithms is demonstrated on five real world network datasets for different statistical tasks, including community detection with and without edge covariates, and network assisted learning.

\smallskip

\noindent\textbf{Keywords}: community detection, network with covariates, non-convex optimization, projected gradient descent.
\end{abstract}

%
%
%
%
%



\section{Introduction}  
\label{sec:intro}
  
Network is a prevalent form of data for quantitative and qualitative analysis in a number of fields, including but not limited to sociology, computer science, neuroscience, etc.
Moreover, due to advances in science and technology, the sizes of the networks we encounter are ever increasing.
Therefore, to explore, to visualize and to utilize the information in large networks poses significant challenges to Statistics.
Unlike traditional datasets in which a number of features are recorded for each subject,  
network datasets provide information on the relation among all subjects under study, sometimes together with additional features.
In this paper, we focus on the modeling, visualization and exploration of networks in which additional features might be observed for each node pair.

On real world networks, people oftentimes observe the following characteristics.
First, the degree distributions of nodes are often right-skewed and so networks exhibit degree heterogeneity.
In addition, connections in networks often demonstrate transitivity, that is nodes with common neighbors are more likely to be connected.
Moreover, nodes that are similar in certain ways (students in the same grade, brain regions that are close physically, etc.) are more likely to form bonds.
Such a phenomenon is usually called homophily in network studies.
Furthermore, nodes in some networks exhibit clustering effect and in such cases it is desirable to partition the nodes into different communities.


An efficient way to explore network data and to extract key information is to fit appropriate statistical models on them.
To date, there have been a collection of network models proposed by researchers in various fields. 
These models aim to catch different subsets of the foregoing characteristics, and \citet{goldenberg2010survey} provides a comprehensive overview.
An important class of network models are \emph{latent space models} \citep{hoff2002latent}.
Suppose there are $n$ nodes in the observed network.
The key idea underlying latent space modeling is that each node $i$ can be represented by a vector $z_i$ in some low dimensional Euclidean space (or some other metric space of choice, see e.g.~\cite{krioukov2010hyperbolic,asta2014geometric} for latent spaces with negative curvature)
that is sometimes called the social space, 
and nodes that are ``close'' in the social space are more likely to be connected.
\citet{hoff2002latent} considered two types of latent space models: distance models and projection models. 
In both cases, the latent vectors $\{z_i\}_{i=1}^n$ were treated as fixed effects.
Later, a series of papers \citep{hoff2003random,handcock2007model,hoff2008modeling,krivitsky2009representing} generalized the original proposal in \cite{hoff2002latent} for better modeling of other characteristics of social networks, such as clustering, degree heterogeneity, etc.
In these generalizations, the $z_i$'s were treated as random effects generated from certain multivariate Gaussian mixtures.
Moreover, model fitting and inference in these models has been carried out via Markov Chain Monte Carlo, and it is difficult to scale these methodologies to handle large networks \citep{goldenberg2010survey}.
In addition, one needs to use different likelihood function based on choice of model and there is little understanding of the quality of fitting when the model is mis-specified.
Albeit these disadvantages, latent space models are attractive due to their friendliness to interpretation and visualization.

For concreteness, assume that we observe an undirected network 
represented by a symmetric adjacency matrix $\bA$ on $n$ nodes with $A_{ij} = A_{ji} = 1$ if nodes $i$ and $j$ are connected and zero otherwise. 
In addition, we may also observe a symmetric pairwise covariate matrix $\bX$ which measures certain characteristics of node pairs. 
We do not allow self-loop and so we set the diagonal elements of the matrices $\bA$ and $\bX$ to be zeros. 
The covariate $\bX_{ij}$ can be binary, such as an indicator of whether nodes $i$ and $j$ share some common attribute (e.g. gender, location, etc) or it can be continuous, such as a distance/similarity measure (e.g. difference in age, income, etc).
It is straightforward to generalize the methods and theory in this paper to a finite number of covariates.

In this paper, we aim to tackle the following two key issues in latent space modeling of network data:
\begin{itemize}
	\item First, we seek a class of latent space models that is special enough so that we can design fast fitting algorithms for them and hence be able to handle networks of very large sizes;

\item In addition, we would like to be able to fit a class of models that are flexible enough to well approximate a wide range of latent space models of interest so that fitting methods for this flexible class continue to work even when the model is mis-specified.
\end{itemize}
From a practical viewpoint,  
if one is able to find such a class of models and design fast algorithms for fitting them, then one would be able to use this class as working models and to use the associated fast algorithms to effectively explore large networks.



\subsection{Main contributions} 
We make progress on tackling the foregoing two issues simultaneously in the present paper, which we summarize as the following main contributions:

\begin{enumerate}
\item We first consider a special class of latent space models, called inner-product models, and design two fast fitting algorithms for this class. 
Let the observed $n$-by-$n$ adjacency matrix and covariate matrix be $A$ and $X$, respectively. The inner-product model assumes that for any $i< j$, 
\begin{equation}
\begin{gathered}
\bA_{ij} = \bA_{ji} \stackrel{ind.}{\sim} \text{Bernoulli} (\bP_{ij}),\quad \text{with}\\
\text{logit}(\bP_{ij})=\bTheta_{ij}=\balpha_i+\balpha_j+\beta\bX_{ij}+\bz_i^\top\bz_j,
\label{eq:projection}
\end{gathered}
\end{equation}
where for any $x\in (0,1)$, $\text{logit}(x) = \log[{x}/{(1-x)}]$.
Here, $\alpha_i$, $1\leq i\leq n$, are parameters modeling degree heterogeneity. The parameter $\beta$ is the coefficient for the observed covariate, and $\bz_i^\top\bz_j$ is the inner-product between the latent vectors. 
As we will show later in \prettyref{sec:model}, this class of models
can incorporate degree heterogeneity, transitivity and homophily explicitly.
From a matrix estimation viewpoint, the matrix $G = (G_{ij}) = (z_i^\top z_j)$ is of rank at most $k$ that can be much smaller than $n$.
Motivated by recent advances in low rank matrix estimation, we design two fast algorithms for fitting \eqref{eq:projection}.
One algorithm is based on lifting and nuclear norm penalization of the negative log-likelihood function.
The other is based on directly optimizing the negative log-likelihood function via projected gradient descent.
For both algorithms, we establish high probability error bounds for inner-product models.
The connection between model \eqref{eq:projection} and the associated algorithms and other related work in the literature will be discussed immdediately in next subsection.

\item We further show that these two fitting algorithms are ``universal'' in the sense that they can work simultaneously for a wide range of latent space models beyond the inner-product model class.
For example, they work for the distance model and the Gaussian kernel model in which the inner-product term $\bz_i^\top\bz_j$ in \eqref{eq:projection} is replaced with $-\|z_i-z_j\|$ and $c \exp(-\|z_i-z_j\|^2/\sigma^2)$, respectively. 
Thus, the class of inner-product models is flexible and can be used to approximate many other latent space models of interest.
In addition, the associated algorithms can be applied to networks generated from a wide range of mis-specified models and still yield reasonable results.
The key mathematical insight that enables such flexibility is introduced in \prettyref{sec:model} as the Schoenberg Condition \eqref{eq:h-constraint}.

\item We demonstrate the effectiveness of the model and algorithms on real data examples. 
In particular, we fit inner-product models by the proposed algorithms
on five different real network datasets for several different tasks, including visualization, clustering and network-assisted classification.
On three popular benchmark datasets for testing community detection on networks, 
a simple $k$-means clustering on the estimated latent vectors obtained by our algorithm yields as good result on one dataset and better results on the other two when compared with four state-of-the-art methods.
The same ``model fitting followed by $k$-means clustering'' approach also yields nice clustering of nodes on a social network with edge covariates.
Due to the nature of latent space models, for all datasets on which we fit the model, we obtain natural visualizations of the networks by plotting latent vectors.
Furthermore, we illustrate how network information can be incorporated in traditional learning problems using a document classification example.

%
%


\end{enumerate}

A Matlab implementation of the methodology in the present paper is available upon request.

%
%
%
%
%
%
%


\subsection{Other related works} 
The current form of the inner-product model \eqref{eq:projection} has previously appeared in \citet{hoff2003random} and \citet{hoff2005bilinear},
though the parameters were modeled as random effects rather than fixed values, and Bayesian approaches were proposed for estimating variance parameters of the random effects.
\citet{hoff2008modeling} proposed a latent eigenmodel which has a probit link function as opposed to the logistic link function in the present paper.
As in \cite{hoff2003random} and \cite{hoff2005bilinear}, parameters were modeled as random effects and model fitting was through Bayesian methods.
It was shown that the eigenmodel weakly generalizes the distance model in the sense that the order of the entries in the latent component can be preserved.
This is complementary to our results which aim to approximate the latent component directly in some matrix norm.
An advantage of the eigenmodel is its ability to generalize the latent class model, whereas the inner-product model \eqref{eq:projection} and the more general model we shall consider generalize a subset of latent class models due to the constraint that the latent component (after centering) is positive semi-definite. 
We shall return to this point later in \prettyref{sec:discuss}.
\citet{young2007random} proposed a random dot product model, which can be viewed as an inner-product model with identity link function.
The authors studied a number of properties of the model, such as diameter, clustering and degree distribution.
\citet{tang2013universally} studied properties of the leading eigenvectors of the adjacency matrices of latent positions graphs (together with its implication on classification in such models) where the connection probability of two nodes is the value that some universal kernel \cite{micchelli2006universal} takes on the associated latent positions and hence generalizes the random dot product model.
This work is close in spirit to the present paper. 
However, there are several important differences. 
First, the focus here is model fitting/parameter estimation as opposed to classification in \cite{tang2013universally}.
In addition, any universal kernel considered in \cite{tang2013universally} satisfies the Schoenberg condition \eqref{eq:h-constraint} and thus is covered by the methods and theory of the present paper, and so we cover a broader range of models that inner-product models can approximate.
This is also partially due to the different inference goals.
Furthermore, we allow the presence of observed covariates while \cite{tang2013universally} did not.



When fitting a network model, we are essentially modeling and estimating the edge probability matrix. 
From this viewpoint, the present paper is related to the literautre on graphon estimation and edge probability matrix estimation for block models.
See, for instance, \cite{bickel2009nonparametric,airoldi2013stochastic,wolfe2013nonparametric,gao2015rate,klopp2015oracle,gao2016optimal} and the references therein. 
However, the block models have stronger structural assumptions than the latent space models we are going to investigate.

The algorithmic and theoretical aspects of the paper is also closely connected to the line of research on low rank matrix estimation, which plays an important role in many applications such as phase retrieval \cite{candes2015phase1,candes2015phase} and matrix completion \cite{candes2010power,keshavan2010matrix,keshavan2010matrix1,candes2012exact,koltchinskii2011nuclear}. 
Indeed, the idea of nuclear norm penalization has originated from matrix completion for both general entries \cite{candes2010power} and binary entries \cite{davenport20141}.
In particular, our convex approach can be viewed as a Lagrangian form of the proposal in \cite{davenport20141} when there is no covariate and the matrix is fully observed.
We have nonetheless decided to spell out details on both method and theory for the convex approach because the matrix completion literature typically does not take into account the potential presence of observed covariates.
On the other hand, the idea of directly optimizing a non-convex objective function involving a low rank matrix has been studied recently in a series of papers. See, for instance, \cite{ma2013sparse,burer2005local,sun2016guaranteed,tu2015low,chen2015fast,lafferty2016convergence,ge2016matrix} and the references therein.
Among these papers, the one that is the most related to the projected gradient descent algorithm we are to propose and analyze is \cite{chen2015fast} which focused on estimating a positive semi-definite matrix of exact low rank in a collection of interesting problems.
Another recent and related work \cite{levina2017} has appeared after the initial posting of the present manuscript.
However, we will obtain tighter error bounds for latent space models and we will go beyond the exact low rank scenario.

%


\subsection{Organization}
After a brief introduction of standard notation used throughout the paper, the rest of the paper is organized as follows.
\prettyref{sec:model} introduces both inner-product models and a broader class of latent space models on which our fitting methods work.
The two fitting methods are described in detail in \prettyref{sec:fitting}, followed by their theoretical guarantees in \prettyref{sec:theory} under both inner-product models and the general class.
The theoretical results are further corroborated by simulated examples in \prettyref{sec:simu}.
\prettyref{sec:data} demonstrates the competitive performance of the modeling approach and fitting methods on five different real network datasets.
We discuss interesting related problems in \prettyref{sec:discuss} and present proofs of the main results in \prettyref{sec:proof}.
Technical details justifying the initialization methods for the project gradient descent approach are deferred to the appendix.
 
\paragraph{Notation} 
For $A = (A_{ij})\in \mathbb{R}^{n\times n}$, $\mathrm{Tr}(A) = \sum_{i=1}^n A_{ii}$ stands for its trace.
For $\bX, \bY\in\mathbb{R}^{m\times n}$, 
$\inner{\bX, \bY}=\mathrm{Tr}(\bX^\top\bY)$ defines an inner product between them. 
If $m\geq n$, for any matrix $\bX$ with singular value decomposition $\bX = \sum_{i=1}^n \sigma_i u_i v_i^\top$, $\|\bX\|_* = \sum_{i=1}^n \sigma_i$, $\fnorm{\bX} = \sqrt{\sum_{i=1}^n \sigma_i^2}$ and $\opnorm{\bX} = \max_{i=1}^n \sigma_i$ stand for the nuclear norm, the Frobenius norm and the operator norm of the matrix, respectively. Moreover,
$\bX_{i*}$ and $\bX_{*j}$ denote the $i$-th row and $j$-th column of $\bX$,
and for any function $f$, $f(\bX)$ is the shorthand for applying $f(\cdot)$ element-wisely to $\bX$, that is $f(\bX)\in\mathbb{R}^{m\times n}$ and $[f(\bX)]_{ij}=f(\bX_{ij})$. 
Let $\S_+^n$ be the set of all $n\times n$ positive semidefinite matrices and $O(m, n)$ be the set of all $m\times n$ orthonormal matrices. 
For any convex set $\calC$, $\bP_{\calC}(\cdot)$ is the projection onto the $\calC$. 

\section{Latent space models}
\label{sec:model}

In this section, we first give a detailed introduction of the inner-product model \eqref{eq:projection} and conditions for its identifiability.
In addition, we introduce a more general class of latent space models that includes the inner-product model as a special case.
The methods we propose later will be motivated by the inner-product model and can also be applied to the more general class.

\subsection{Inner-product models}

Recall the inner-product model defined in \eqref{eq:projection}, i.e., for any observed $A$ and $X$ and any $i< j$, 
\begin{equation*}
\bA_{ij} = \bA_{ji} \stackrel{ind.}{\sim} \text{Bernoulli} (\bP_{ij}),\quad \text{with}\quad
\text{logit}(\bP_{ij})=\bTheta_{ij}=\balpha_i+\balpha_j+\beta\bX_{ij}+\bz_i^\top\bz_j.
\end{equation*}
Fixing all other parameters, if we increase $\alpha_i$, then node $i$ has higher chances of connecting with other nodes. 
Therefore, the $\alpha_i$'s model degree heterogeneity of nodes and we call them degree heterogeneity parameters.
Next, the regression coefficient $\beta$ moderates the contribution of covariate to edge formation.
For instance, if $\bX_{ij}$ indicates whether nodes $i$ and $j$ share some common attribute such as gender, 
then a positive $\beta$ value implies that nodes that share common attribute are more likely to connect. 
Such a phenomenon is called \emph{homophily} in the social network literaute.
Last but not least, the latent variables $\{z_i\}_{i=1}^n$ enter the model through their inner-product $z_i^\top z_j$, and hence is the name of the model.
We impose no additional structural/distributional assumptions on the latent variables for the sake of modeling flexibility.

We note that model \eqref{eq:projection} also allows the latent variables to enter the second equation in the form of $g(z_i,z_j) = -\frac{1}{2}\|z_i-z_j\|^2$. 
To see this, note that $g(z_i,z_j) = -\frac{1}{2}\|z_i\|^2 - \frac{1}{2}\|z_j\|^2 + z_i^\top z_j$, and we may re-parameterize by setting $\tilde{\alpha}_i = \alpha_i - \frac{1}{2}\|z_i\|^2$ for all $i$. 
Then we have
\begin{equation*}
\Theta_{ij} = \alpha_i + \alpha_j + \beta X_{ij} - \frac{1}{2}\|z_i-z_j\|^2 = 
\tilde{\alpha}_i + \tilde{\alpha}_j + \beta X_{ij} + z_i^\top z_j.
\end{equation*}
An important implication of this observation is that the function $g(z_i,z_j) = -\frac{1}{2}\|z_i-z_j\|^2$ directly models \emph{transitivity}, i.e., nodes with common neighbors are more likely to connect since their latent variables are more likely to be close to each other in the latent space. 
In view of the foregoing discussion, the inner-product model \eqref{eq:projection} also enjoys this nice modeling capacity.

In matrix form, we have
\begin{equation}
\bTheta=\balpha\one^\top+\one\balpha^\top+\beta\bX+\bG
\label{eq:Theta}
\end{equation}
where $\one$ is the all one vector in $\mathbb{R}^n$ and $G = \bZ \bZ^\top$ with $\bZ=(\bz_1, \cdots, \bz_n)^\top \in \mathbb{R}^{n\times k}$. 
Since there is no self-edge and $\bTheta$ is symmetric, only the upper diagonal elements of $\bTheta$ are well defined, which we denote by $\bTheta^u$.  
Nonetheless we define the diagonal element of $\Theta$ as in \eqref{eq:Theta} since it is inconsequential. 
To ensure identifiability of model parameters in \eqref{eq:projection}, we assume the latent variables
are centered,  that is 
\begin{equation}
	\label{eq:J}
\bJ\bZ=\bZ\quad \mbox{where} \quad \bJ=\bI_n-\frac{1}{n}\one\one^\top.
\end{equation}
Note that this condition uniquely identifies $Z$ up to a common orthogonal transformation of its rows while $G = ZZ^\top$ is now directly identifiable.

\subsection{A more general class and the Schoenberg condition}
\label{sec:generalmodel}

Model \eqref{eq:projection} is a special case of a more general class of latent space models, which can be defined by
\begin{equation}
\begin{gathered}
\bA_{ij} = \bA_{ji} \stackrel{ind.}{\sim} \text{Bernoulli} (\bP_{ij}),\quad \text{with}\\
\text{logit}(\bP_{ij})=\bTheta_{ij}=\tilde\balpha_i+ \tilde\balpha_j+\beta\bX_{ij}+ \ell(\bz_i,\bz_j)
\label{eq:kernel-model}
\end{gathered}
\end{equation}
where $\ell(\cdot,\cdot)$ is a smooth symmetric function on $\mathbb{R}^k\times \mathbb{R}^k$.
We shall impose an additional constraint on $\ell$ following the discussion below.
In matrix form, for $\tilde\alpha = (\tilde\alpha_1,\dots,\tilde\alpha_n)'$ and $L = (\ell(z_i,z_j))$, we can write 
\begin{equation*}
\Theta = \tilde\alpha\one^\top + \one \tilde\alpha^\top + \beta X + L.
\end{equation*}
To better connect with \eqref{eq:Theta}, let 
\begin{equation}
G = J L J,\quad \mbox{and} \quad
\alpha \one^\top + \one\alpha^\top = \tilde\alpha\one^\top + \one \tilde\alpha^\top + L - J L J.
\label{eq:G-kernel}
\end{equation}
Note that the second equality in the last display holds since the expression on its righthand side is symmetric and of rank at most two.
Then we can rewrite the second last display as 
\begin{equation}
\label{eq:Theta-kernel}
\Theta = \alpha \one^\top + \one \alpha^\top + \beta X + G
\end{equation}
which reduces to \eqref{eq:Theta} and $G$ satisfies $JG = G$. 
Our additional constraint on $\ell$ is the following Schoenberg condition:
\begin{equation}
\label{eq:h-constraint}
\begin{aligned}
&\mbox{For any positive integer $n\geq 2$ and any $z_1,\dots, z_n\in \mathbb{R}^k$,}	\\
&\mbox{$G = J L J$ is positive semi-definite for $L = (\ell(z_i,z_j))$ and $J = I_n - \frac{1}{n}\one\one^\top$.
}
\end{aligned}
\end{equation}
Condition \eqref{eq:h-constraint} may seem abstract, while the following lemma elucidates two important classes of symmetric functions for which it is satisfied.

\begin{lemma}
\label{lem:h-constraint}
Condition \eqref{eq:h-constraint} is satisfied in the following cases:
\begin{enumerate}
\item $\ell$ is a positive semi-definite kernel function on $\mathbb{R}^k\times \mathbb{R}^k$;
\item $\ell(x,y) = -\|x-y\|_p^q$ for some $p\in (0,2]$ and $q\in (0, p]$ where $\|\cdot\|_p$ is the $p$-norm (or $p$-seminorm when $p < 1$) on $\mathbb{R}^k$.
\end{enumerate}
\end{lemma}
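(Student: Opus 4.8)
The plan is to prove the two cases separately, since each relies on a classical characterization going back to Schoenberg.

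\textbf{Case 1.} Suppose $\ell$ is a positive semi-definite kernel, so that for any $z_1,\dots,z_n \in \mathbb{R}^k$ the matrix $L = (\ell(z_i,z_j))$ is positive semi-definite. Then $G = JLJ = J L J^\top$ is a congruence of a PSD matrix, hence PSD: for any $v \in \mathbb{R}^n$, $v^\top G v = (Jv)^\top L (Jv) \geq 0$. This case is essentially immediate and requires no further work.

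\textbf{Case 2.} Here I would invoke the classical theory connecting negative-type kernels and embeddings into Hilbert space. First, recall Schoenberg's theorem: a symmetric function $d: \mathcal{X} \times \mathcal{X} \to [0,\infty)$ with $d(x,x)=0$ is of \emph{negative type} (i.e., $\sum_{i,j} c_i c_j d(x_i,x_j) \le 0$ whenever $\sum_i c_i = 0$) if and only if $\sqrt{d}$ embeds isometrically into a Hilbert space; equivalently, $e^{-t d}$ is a PSD kernel for all $t>0$. The key reduction: $G = JLJ$ being PSD for all configurations is exactly the statement that $-\ell$ is a kernel of negative type. Indeed, $(JLJ)_{ij} = \ell(z_i,z_j) - \bar\ell_{i\cdot} - \bar\ell_{\cdot j} + \bar\ell_{\cdot\cdot}$, and for any $v$ with entries summing to anything, $v^\top J L J v = (Jv)^\top L (Jv)$ where $Jv$ sums to zero; so $G \succeq 0$ for all configurations iff $w^\top L w \geq 0$ for all $w$ with $\one^\top w = 0$, iff $-\ell$ is of negative type. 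So it remains to show that $\ell(x,y) = -\|x-y\|_p^q$ makes $\|x-y\|_p^q$ a kernel of negative type for $p \in (0,2]$, $q \in (0,p]$.

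For this I would proceed in two steps. Step A: show $\|x-y\|_p^p$ is of negative type on $\mathbb{R}^k$ for $p \in (0,2]$. For $p=2$ this is the classical fact that squared Euclidean distance is of negative type (it embeds isometrically, trivially). For $p \in (0,2)$ one uses the integral representation $|t|^p = c_p \int_{\mathbb{R}} \frac{1 - \cos(t s)}{|s|^{1+p}}\,ds$ for a positive constant $c_p$; applying this coordinatewise to $\|x-y\|_p^p = \sum_{m=1}^k |x_m - y_m|^p$ exhibits it as a nonnegative mixture of functions of the form $1 - \cos(s(x_m - y_m))$, each of which is of negative type (since $(x,y)\mapsto \cos(s(x_m-y_m)) = \mathrm{Re}\, e^{\mathrm{i}s x_m}\overline{e^{\mathrm{i}s y_m}}$ is a PSD kernel, so $1 - \cos(\cdot)$ is of negative type), and negative type is closed under nonnegative mixtures/limits. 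Step B: pass from exponent $p$ to exponent $q \le p$. This uses the general principle that if $d$ is of negative type then so is $d^{q/p}$ for any $q/p \in (0,1]$; this follows from the subordination identity $\lambda^r = c_r \int_0^\infty (1 - e^{-\lambda u}) u^{-1-r}\,du$ for $r = q/p \in (0,1)$ and $\lambda \ge 0$: since $e^{-u d}$ is PSD for each $u>0$ (Schoenberg, using that $d = \|x-y\|_p^p$ is of negative type from Step A), $1 - e^{-u d}$ is of negative type, and integrating against the positive measure $u^{-1-r}du$ preserves negative type. Applying this with $d = \|x-y\|_p^p$ and $r = q/p$ gives that $\|x-y\|_p^q$ is of negative type, completing the proof.

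\textbf{Main obstacle.} The conceptual content is entirely classical (Schoenberg 1938, and Lévy-type subordination), so the main work is bookkeeping: stating the negative-type/PSD-after-double-centering equivalence cleanly (the identity $v^\top JLJv = (Jv)^\top L(Jv)$ with $\one^\top(Jv)=0$ makes this routine), and being careful that the two integral representations converge and that $p \le 1$ (where $\|\cdot\|_p$ is only a seminorm) causes no trouble — it does not, since everything is done coordinatewise via $|t|^p$ and no triangle inequality is used. I would cite Schoenberg's classical results rather than reprove them, keeping the argument short.
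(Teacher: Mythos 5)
Your proof is correct and follows essentially the same route as the paper: Case 1 via the congruence argument that $G = JLJ$ inherits positive semi-definiteness from $L$, and Case 2 via Schoenberg's classical theory linking negative-type kernels to positive semi-definiteness after double centering. The paper disposes of Case 2 by citing Theorems 1 and 2 of Schoenberg's papers, whereas you spell out the standard reduction ($v^\top JLJ v = (Jv)^\top L (Jv)$ with $\one^\top Jv = 0$) together with the integral-representation and subordination arguments behind those theorems; this is a difference in level of detail, not in approach.
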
 
The first claim of \prettyref{lem:h-constraint} is a direct consequence of the definition of positive semi-definite kernel function which ensures that the matrix $L$ itself is positive semi-definite and so is $G = JLJ$ since $J$ is also positive semi-definite. 
The second claim is a direct consequence of the Hilbert space embedding result by Schoenberg \cite{schoenberg1937certain,schoenberg1938metric}. See, for instance, Theorems 1 and 2 of \cite{schoenberg1937certain}.
%
%


\section{Two model fitting methods}
\label{sec:fitting}

This section presents two methods for fitting models \eqref{eq:projection} and \eqref{eq:kernel-model}--\eqref{eq:h-constraint}.
Both methods are motivated by minimizing the negative log-likelihood function of the inner-product model, and can be regarded as pseudo-likelihood approaches for more general models.
From a methodological viewpoint,
a key advantage of these methods, in particular the projected gradient descent method, is the scalability to networks of large sizes.


\subsection{A convex approach via penalized MLE}
\label{sec:convex}

In either the inner-product model or the general model
we suppose the parameter matrix $\Theta$ in \eqref{eq:Theta} or 
\eqref{eq:Theta-kernel}
satisfies
\begin{equation}
\label{eq:theta-constraint}
-M_1\leq \bTheta_{ij}\leq -M_2~~\text{for}~~1\leq i\neq j\leq n,\quad
\text{and} \quad |\bTheta_{ii}|\leq M_1 ~~\text{for}~~ 1\leq i\leq n.
\end{equation}
where 
$M_1\geq M_2$ are non-negative. 
Then 
for any $\Theta$ satisfying \eqref{eq:theta-constraint}, the corresponding edge probabilities satisfy 
\begin{equation}
	\label{eq:M1M2}
	\frac{1}{2}e^{-M_1}\leq \frac{1}{1 + e^{M_1}} \leq P_{ij} \leq \frac{1}{1 + e^{M_2}} \leq e^{-M_2}, ~ 1\leq i\neq j\leq n. 
\end{equation}
Thus $M_1$ controls the conditioning of the problem and $M_2$ controls the sparsity of the network.


Let $\sigma(x)={1}/{(1+e^{-x})}$ be the sigmoid function, i.e.~the inverse of logit function, then for any $i\neq j$, $\bP_{ij}=\sigma(\bTheta_{ij})$ and the log-likelihood function of the inner-product model \eqref{eq:projection} can be written as
\begin{equation*}
\begin{aligned}
\sum_{i<j} \left\{ \bA_{ij} \log\Big(\sigma(\bTheta_{ij})\Big)+(1-\bA_{ij})\log\Big(1-\sigma(\bTheta_{ij})\Big)\right\}
&= \sum_{i<j} \left\{ \bA_{ij}\bTheta_{ij}+\log\Big(1-\sigma(\bTheta_{ij})\Big)\right\}.
\end{aligned}
\end{equation*} 
Recall that $\bG=\bZ\bZ^\top$ in inner-product models. 
The MLE of $\bTheta^u$ is the solution to the following rank constrained optimization problem: 
\begin{equation}
\begin{aligned}
\min_{\bTheta^u, \balpha, \beta, \bG}~~ & -\sum_{i< j}\left\{ \bA_{ij}\bTheta_{ij}+\log\Big(1-\sigma(\bTheta_{ij})\Big)\right\},\\
\text{subject to }~~& \bTheta=\balpha\one^\top+\one\balpha^\top+\beta\bX+\bG,~~ -M_1\leq \bTheta_{ij}\leq -M_2,\\
& \bG\bJ=\bG,~~ \bG\in\mathbb{S}^{n}_+,~~ \text{rank}(\bG)\leq k.
\end{aligned}
\label{eq:non-convex}
\end{equation}
This optimization problem is non-convex and generally intractable.
To overcome this difficulty, we consider a convex relaxation that replaces the rank constraint on $G$ in \eqref{eq:non-convex} with a penalty term on its nuclear norm. 
Since $G$ is positive semi-definite, its nuclear norm equals its trace.
Thus, our first model fitting scheme solves the following convex program:
\begin{equation} 
\begin{aligned}
\min_{\balpha, \beta, \bG}~~  &
-\sum_{i, j}\left\{ \bA_{ij}\bTheta_{ij}+\log\Big(1-\sigma(\bTheta_{ij})\Big)\right\}
+\lambda_n \tr(\bG)\\
\text{subject to }~~  &\bTheta=\balpha\one^\top+\one\balpha^\top+\beta\bX+\bG, \bG\bJ=\bG, ~~ \bG\in\mathbb{S}^{n}_+, ~~ -M_1\leq \bTheta_{ij}\leq -M_2.
\end{aligned}
\label{eq:convex-obj}
\end{equation}

The convex model fitting method \eqref{eq:convex-obj} is motivated by the nuclear norm penalization idea originated from the matrix completion literature. See, for instance, \cite{candes2010power}, \cite{candes2012exact}, \cite{koltchinskii2011nuclear}, \cite{davenport20141}
and the references therein.
In particular, it can be viewed as a Lagrangian form of the proposal in \cite{davenport20141} when there is no covariate and the matrix is fully observed.
However, we have decided to make this proposal and study the theoretical properties as the existing literature, such as \cite{davenport20141}, does not take in consideration the potential presence of observed covariates.
Furthermore, one can still solve \eqref{eq:convex-obj} when the true underlying model is one of the general models introduced in \prettyref{sec:generalmodel}.
The appropriate choice of $\lambda_n$ will be discussed in \prettyref{sec:theory}.

\begin{remark}
In addition to the introduction of the trace penalty, the first term in the objective function in \eqref{eq:convex-obj} now sums over all $(i,j)$ pairs.
Due to symmetry, after scaling, the difference from the sum in \eqref{eq:non-convex} lies in the inclusion of all diagonal terms in $\Theta$.
This slight modification leads to neither theoretical consequence nor noticeable difference in practice. 
However, it allows easier implementation and simplifies the theoretical investigation.
We would note that the constraint $-M_1\leq \bTheta_{ij}\leq -M_2$ is included partially for obtaining theoretical guarantees. 
In simulated examples reported in \prettyref{sec:simu}, we found that the convex program worked equally well without this constraint.
\end{remark}


\subsection{A non-convex approach via projected gradient descent}
Although the foregoing convex relaxation method is conceptually neat, state-of-the-art algorithms to solve the nuclear (trace) norm minimization problem \eqref{eq:convex-obj}, such as iterative singular value thresholding,
usually require computing a full singular value decomposition at every iteration, which can still be time consuming when fitting very large networks.

To further improve scalability of model fitting, we propose an efficient first order algorithm that directly tackles the following non-convex objective function: 
\begin{equation}
\min_{Z, \balpha, \beta}~ g(Z, \alpha, \beta) = -\sum_{i, j}\left\{ \bA_{ij}\bTheta_{ij}+\log\Big(1-\sigma(\bTheta_{ij})\Big)\right\} ~~ \text{where } \bTheta=\balpha\one^\top+\one\balpha^\top+\beta\bX+ ZZ^\top.
\label{eq:nonconvex-obj}
\end{equation}
The detailed description of the method is presented in  Algorithm~\ref{alg:non-convex}.

\newcommand{\projz}{\mathcal{C}_{Z}}
\newcommand{\proja}{\mathcal{C}_{\alpha}}
\newcommand{\projb}{\mathcal{C}_{\beta}}
\begin{algorithm}[htb]
   \caption{A projected gradient descent model fitting method. 
   } 
   \label{alg:non-convex}
\begin{algorithmic}[1]
   \STATE {\bfseries Input:} Adjacency matrix: $\bA$; covariate matrix: $\bX$; latent space dimension: $k\geq 1$; initial estimates: $\bZ^0, \balpha^0, \beta^0$; step sizes: $\etaz, \etaa, \etab$; constraint sets: $\projz, \proja, \projb$.\\
   {\bfseries Output:} $\widehat{Z}=Z^T$, $\widehat{\alpha} = \alpha^T$, $\widehat{\beta} = \beta^T$.
  \FOR{$t=0, 1, \cdots, T-1$ 
  } 
    \STATE $\widetilde{\bZ}^{t+1} = \zt - \etaz\nabla_Z g(Z,\alpha, \beta) = \zt + 2\etaz \left(\bA-\sigt\right)\zt$;
    \STATE $\widetilde{\balpha}^{t+1} = \at - \etaa\nabla_\alpha g(Z,\alpha, \beta) = \at + 2\etaa (\bA-\sigt)\one$;
    \STATE $\widetilde{\beta}^{t+1} = \bt - \etab\nabla_\beta g(Z,\alpha, \beta)= \bt + \etab\inner{ \bA-\sigt, \bX}$;
    \STATE $\bZ^{t+1} = \mathcal{P}_{\projz}(\widetilde{\bZ}^{t+1}),~ \alpha^{t+1} = \mathcal{P}_{\proja}(\widetilde{\alpha}^{t+1}),~ \beta^{t+1} = \mathcal{P}_{\projb}(\widetilde{\beta}^{t+1})$;
  \ENDFOR
\end{algorithmic}
\end{algorithm}

After initialization, Algorithm \ref{alg:non-convex} iteratively updates the estimates for the three parameters, namely $Z$, $\alpha$ and $\beta$.
In each iteration, for each parameter, the algorithm first descends along the gradient direction by a pre-specified step size.
The descent step is then followed by an additional projection step which projects the updated estimates to pre-specified constraint sets.
We propose to set the step sizes as
\begin{equation}
\label{eq:stepsize}
\etaz=\eta/\op{Z^0}^2,~~ \etaa=\eta/(2n),\quad \text{and}\quad \etab=\eta/(2\fro{X}^2)
\end{equation}
for some small numeric constant $\eta > 0$. 
To establish the desired theoretical guarantees, 
we make a specific choice of the constraint sets later in the statement of \prettyref{thm:nonconvex} and \prettyref{thm:kernel-nonconvex-upper}.
In practice, one may simply set
\begin{equation}
	\label{eq:projection-practical}
	\bZ^{t+1} = J \widetilde{\bZ}^{t+1},\quad \alpha^{t+1} = \widetilde{\alpha}^{t+1},\quad \text{and}\quad \beta^{t+1} = \widetilde{\beta}^{t+1}.
\end{equation}
Here and after, when there is no covariate, i.e.~$X=0$, we skip the update of $\beta$ in each iteration.

For each iteration, the update on the latent part is performed in the space of $Z$ (that is $\mathbb{R}^{n\times k}$) rather than the space of all $n\times n$ Gram matrices as was required in the convex approach. 
In this way, it reduces the computational cost per iteration from $O(n^3)$ to $O(n^2 k)$.
Since we are most interested in cases where $k\ll n$, such a reduction leads to improved scalability of the non-convex approach to large networks. 
To implement this non-convex algorithm, we need to specify the latent space dimension $k$, which was not needed for the convex program \eqref{eq:convex-obj}. 
We defer the discussion on the data-driven choice of $k$ to \prettyref{sec:discuss}.

We note that Algorithm~\ref{alg:non-convex} is not guaranteed to find any global minimizer, or even any local minimizer, of the objective function \eqref{eq:nonconvex-obj}. 
However, as we shall show later in \prettyref{sec:theory}, under appropriate conditions, the iterates generated by the algorithm will quickly enter a neighborhood of the true parameters ($\tz, \ta, \tb$) and any element in this neighborhood is statistically at least as good as the estimator obtained from the convex method \eqref{eq:convex-obj}. 
This approach has close connection to the investigation of various non-convex methods for other statistical and signal processing applications. See for instance \cite{candes2015phase}, \cite{chen2015fast} and the references therein.
Our theoretical analysis of the algorithm is going to provide some additional insight as we shall establish its high probability error bounds for both the exact and the approximate low rank scenarios. 
In the rest of this section, 
we discuss initialization of Algorithm \ref{alg:non-convex}.


\subsubsection{Initialization}
\label{sec:initial}

Appropriate initialization is the key to success for Algorithm \ref{alg:non-convex}.
We now present two ways to initialize it which are theoretically justifiable under different circumstances.

\paragraph{Initialization by projected gradient descent in the lifted space}

The first initialization method is summarized in Algorithm~\ref{alg:convex-init}, which is essentially running the projected gradient descent algorithm on the following regularized objective function for a small number of steps:
\begin{equation}
f (G, \alpha, \beta)=-\sum_{ i, j}\left\{ \bA_{ij}\bTheta_{ij} + \log(1-\sigma(\bTheta_{ij}))\right\}  + \lambda_n \tr(G) + \frac{\gamma_n}{2} \big(\fro{G}^2 + 2\fro{\alpha\one^\top}^2 + \fro{X\beta}^2\big).
	\label{eq:f}
\end{equation}

Except for the third term, this is the same as the objective function in \eqref{eq:convex-obj}.
However, the inclusion of the additional proximal term ensures that one obtains the desired initializers after a minimal number of steps.


\newcommand{\projg}{\mathcal{C}_{G}}
\begin{algorithm}[htb]
   \caption{Initialization of Algorithm \ref{alg:non-convex} by Projected Gradient Descent} 
   \label{alg:convex-init}
\begin{algorithmic}[1]
   \STATE {\bfseries Input:} Adjacency matrix: $\bA$; covariate matrix $\bX$; initial values: $\bG^0 = 0, \balpha^0=0, \beta^0=0$; step size: $\eta$; constraint set: $\projg, \proja, \projb$; regularization parameter: $\lambda_n, \gamma_n$; latent dimension: $k$; number of steps: $T$.
  \FOR{$t=1, 2, \cdots, T$} 
    \STATE $\widetilde{\bG}^{t+1} = \gt - \eta \nabla_G f(G,\alpha, \beta) = \gt + \eta (\bA-\sigt-\lambda_n\bI_n - \gamma_n G^t )$;
    \STATE $\widetilde{\balpha}^{t+1} = \at - \eta\nabla_\alpha f(G,\alpha, \beta)/n = \at + \eta ((\bA-\sigt)\one/2n - \gamma_n\alpha^t )$;
    \STATE $\widetilde{\beta}^{t+1} = \bt - \eta\nabla_\beta f(G,\alpha, \beta)/\fnorm{X}^2= \bt + \eta (\inner{ \bA-\sigt, \bX}/\fnorm{X}^2 - \gamma_n\beta^t )$;
    \STATE $\bG^{t+1} = \mathcal{P}_{\projg}(\widetilde{\bG}^{t+1}),~ \alpha^{t+1} = \mathcal{P}_{\proja}(\widetilde{\alpha}^{t+1}),~ \beta^{t+1} = \mathcal{P}_{\projb}(\widetilde{\beta}^{t+1})$;
  \ENDFOR
  \STATE Set $Z^{T}=U_kD_k^{\half}$ where $U_kD_kU_k^\top$ is the top-$k$ eigen components of $G^{T}$;
    \STATE {\bfseries Output:} $Z^{T}, \alpha^{T}, \beta^{T}$.
\end{algorithmic}
\end{algorithm}

The appropriate choices of $\lambda_n$ and $\gamma_n$ will be spelled out in Theorem \ref{thm:init-range-kernel} and Corollary \ref{cor:init-algo-prob}.
The step size $\eta$ in Algorithm \ref{alg:convex-init} can be set at a small positive numeric constant, e.g.~$\eta = 0.2$.
The projection sets that lead to theoretical justification will be specified later in Theorem \ref{thm:init-range-kernel} while in practice, one may simply set 
$\bG^{t+1} = J\widetilde{\bG}^{t+1}$,
$\alpha^{t+1} = \widetilde{\alpha}^{t+1}$,
and
$\beta^{t+1} = \widetilde{\beta}^{t+1}$.

\paragraph{Initialization by universal singular value thresholding}
Another way to construct the initialization is to first estimate the probability matrix $P$ by universal singular value thresholding (USVT) proposed by \cite{chatterjee2015matrix} and then compute the initial estimates of $\alpha, Z, \beta$ heuristically by inverting the logit transform.
The procedure is summarized as Algorithm~\ref{alg:SVT-init}.

\begin{algorithm}[hbt]
   \caption{Initialization of Algorithm \ref{alg:non-convex} by Singular Value Thresholding} 
   \label{alg:SVT-init}
\begin{algorithmic}[1]
   \STATE {\bfseries Input:} Adjacency matrix: $\bA$; covariate matrix $\bX$; latent dimension $k$; threshold $\tau$. 
      \STATE Let $\wt{P} = \sum_{\sigma_i\geq \tau} \sigma_i u_iv_i^\top$ where $\sum_{i=1}^n \sigma_i u_i v_i^\top$ is the SVD of $A$. 
Elementwisely project $\wt{P}$ to the interval $[\frac{1}{2} e^{-M_1}, \frac{1}{2}]$ to obtain $\wh{P}$. 
Let $\hbTheta = \text{logit}( (\wh{P} + \wh{P}^\top)/2 )$;
      \STATE Let $\alpha^0, \beta^0 = \arg\min_{\alpha, \beta} \fnorm{ \hbTheta - \left(\alpha\one^\top + \one\alpha^\top + \beta X\right)}^2$;
      \STATE Let $\wh{G} = \mathcal{P}_{\mathbb{S}_+^n}(R)$ where $R =  J (\hbTheta - (\alpha^0\one^\top + \one(\alpha^0)^\top + \beta^0 X ) )J$;
      \STATE Set $Z^0=U_kD_k^{\half}$ where $U_kD_kU_k^\top$ is the top-$k$ singular value components of $\widehat{G}$;
    \STATE {\bfseries Output:} $\alpha^{0}, Z^{0}, \beta^0$.
\end{algorithmic}
\end{algorithm}

Intuitively speaking,
the estimate of $P$ by USVT is consistent when $\|P\|_*$ is ``small''. Following the arguments in Theorems 2.6 and 2.7 of \cite{chatterjee2015matrix}, such a condition is satisfied when the covariate matrix $X = 0$ or when $X$ has ``simple'' structure. 
Such ``simple'' structure could be $X_{ij} = \kappa(x_i, x_j)$ where $x_1, \cdots, x_n\in\mathbb{R}^d$ are feature vectors associated with the $n$ nodes and $\kappa(\cdot, \cdot)$ characterizes the distance/similarity between node $i$ and node $j$. 
For instance, one could have
$X_{ij} = \mathbf{1}_{\{x_i = x_j\}}$ where $x_1, \cdots, x_n\in \{1, \cdots, K\}$ is a categorical variable such as gender, race, nationality, etc;
or 
$X_{ij} = s(|x_i - x_j|)$ where $s(\cdot)$ is a continuous monotone link function and $x_1, \cdots, x_n\in\mathbb{R}$ is a continuous node covariate such as age, income, years of education, etc.
\begin{remark}
  The least squares problem in step 2 of Algorithm~\ref{alg:SVT-init} has closed form solution and can be computed in $O(n^2)$ operations. The computational cost of Algorithm~\ref{alg:SVT-init} is dominated by matrix decompositions in step 1 and step 3. 
\end{remark}


\section{Theoretical results}
\label{sec:theory}

In this section, we first present error bounds for both fitting methods under inner-product models, followed by their generalizations to the more general models satisfying the Schoenberg condition \eqref{eq:h-constraint}.
In addition, we give theoretical justifications of the two initialization methods for Algorithm \ref{alg:non-convex}.

\subsection{Error bounds for inner-product models}
We shall establish uniform high probability error bounds for inner-product models belonging to the following parameter space:
\begin{equation}
	\label{eq:para-nonconvex}
\begin{aligned}
\param=\Big\{\bTheta| & \bTheta=\balpha\one^\top+\one\balpha^\top+\beta\bX+\bZ\bZ^\top, ~\bJ\bZ=\bZ,\\  & \max\limits_{1\leq i\leq n}\|Z_{i*}\|^2 ,~ \|\balpha\|_{\infty}, ~ |\beta| \max\limits_{1\leq i < j\leq n} |X_{ij}| \leq \frac{M_1}{3},
\max\limits_{1\leq i \neq j\leq n} \Theta_{ij}\leq -M_2 \Big\}.
\end{aligned}
\end{equation}
When $X = 0$, we replace the first inequality in \eqref{eq:para-nonconvex} with $\max_{1\leq i\leq n}\|Z_{i*}\|^2 ,~ \|\balpha\|_{\infty}\leq M_1/2$.
For the results below, $k$, $M_1$, $M_2$ and $X$ are all allowed to change with $n$.

\paragraph{Results for the convex approach}


We first present theoretical guarantees for the optimizer of \eqref{eq:convex-obj}.
When $X$ in nonzero, we make the following assumption for the identifiability of $\beta$.
\begin{assumption}
The \emph{stable rank} of the covariate matrix $\bX$ satisfies $\srank(\bX)=\fro{X}^2/\op{X}^2\geq M_0 k$ for some large enough constant $M_0$. 
 \label{assump:srank}
\end{assumption}
The linear dependence on $k$ of $\srank(X)$ is in some sense necessary for $\beta$ to be identifiable as otherwise the effect of the covariates could be absorbed into the latent component $ZZ^\top$.

Let $(\hbalpha, \hbeta, \hbG)$ be the solution to \eqref{eq:convex-obj} and $(\ta, \tb, \tg)$ be the true parameter that governs the data generating process. 
Let $\hbTheta$ and $\tall$ be defined as in \eqref{eq:Theta} but with the estimates and the true parameter values for the components respectively.
Define the error terms $\dall=\hbTheta - \tall$, $\da=\hbalpha - \ta$, $\db=\hat\beta - \tb$ and $\dg=\hbG - \tg$.
The following theorem gives both deterministic and high probability error bounds for estimating both the latent vectors and logit-transformed probability matrix.


\begin{theorem}
Under Assumption~\ref{assump:srank}, for any $\lambda_n$ satisfying $\lambda_n\geq \lambdacond$,
there exists a constant $C$ such that
\begin{equation*}
\fro{\dg}^2, \fro{\dall}^2 \leq C e^{2M_1} \lambda_n^2 k.
\end{equation*}
Specifically, setting $\lambda_n=\lambdaP$ for a large enough positive constant $C_0$, there exist positive constants $c, C$ such that uniformly over $\calF(n,k,M_1, M_2, X)$, with probability at least $1-n^{-c}$,
\begin{equation*}
\fro{\dg}^2, \fro{\dall}^2 \leq C \psi_n^2
\end{equation*}
where 
\begin{equation}
	\label{eq:psi2}
	\psi_n^2 = e^{2M_1 } nk \times \max\big\{e^{-M_2}, {\log n \over n}\big\}.
\end{equation}
\label{thm:convex}
\end{theorem}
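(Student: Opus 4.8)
The plan is to follow the standard ``basic inequality'' strategy for penalized M-estimation, adapted to the positive semidefinite trace-penalized program \eqref{eq:convex-obj}. First I would write $F(\bTheta) = -\sum_{i,j}\{\bA_{ij}\bTheta_{ij} + \log(1-\sigma(\bTheta_{ij}))\}$ for the (full, symmetric) negative log-likelihood, so that the objective is $F(\bTheta) + \lambda_n\tr(\bG)$. Since $(\hbalpha,\hbeta,\hbG)$ is the minimizer and $(\ta,\tb,\tg)$ is feasible, the basic inequality $F(\hbTheta) + \lambda_n\tr(\hbG) \le F(\tall) + \lambda_n\tr(\tg)$ holds. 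Taylor-expanding $F$ to second order around $\tall$ and using that $\nabla F(\tall) = \sigma(\tall) - \bA =: -(\bA-\bP)$ (on the relevant entries, with the diagonal contributing a negligible deterministic term), I get
\begin{equation*}
\tfrac{1}{2}\inner{\nabla^2 F(\bar\bTheta)\dall, \dall} \le \inner{\bA-\bP, \dall} + \lambda_n(\tr(\tg) - \tr(\hbG)).
\end{equation*}
The Hessian is diagonal with entries $\sigma'(\bar\bTheta_{ij}) = \sigma(\bar\bTheta_{ij})(1-\sigma(\bar\bTheta_{ij}))$; because all intermediate $\bar\bTheta_{ij}$ lie in $[-M_1, -M_2]$ (up to the diagonal), this is bounded below by $c\,e^{-M_1}$, giving the curvature lower bound $\tfrac{c}{2}e^{-M_1}\fro{\dall}^2$ on the left.

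Next I would control the two terms on the right. For the stochastic term, decompose $\dall = \da\one^\top + \one\da^\top + \db\bX + \dg$ and pair each piece with $\bA-\bP$: the latent piece gives $\inner{\bA-\bP,\dg} \le \op{\bA-\bP}\nuc{\dg}$; the covariate piece gives $|\db|\,|\inner{\bA-\bP,\bX}|$; and the degree-heterogeneity piece gives something like $\|(\bA-\bP)\one\|\,\|\da\|$ (controllable since row sums of $\bA-\bP$ concentrate). The point of choosing $\lambda_n \ge \lambdacond \asymp \max\{2\op{\bA-\bP}, |\inner{\bA-\bP,\bX/\fro{X}}|/\sqrt{k}, 1\}$ is exactly so that $\op{\bA-\bP} \le \lambda_n/2$ and the covariate term is dominated, so that after using $\tr(\tg)-\tr(\hbG) = \nuc{\tg}-\nuc{\hbG} \le \nuc{\dg}$ on the appropriate subspace (the rank-$\le 2k$ trick: split $\dg$ into its projection onto the $\le 2k$-dimensional range spanned by $\tg$ and $\hbG$ minus that, apply the standard nuclear-norm cone argument) everything collapses to a bound of the form $\fro{\dall}^2 \lesssim e^{M_1}\lambda_n \nuc{\dg^{(1)}} \lesssim e^{M_1}\lambda_n\sqrt{k}\fro{\dg}$. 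Here Assumption~\ref{assump:srank} enters to guarantee that the covariate direction is not ``confusable'' with a low-rank matrix, so that $\fro{\dg}$ can in turn be bounded by $\fro{\dall}$ (plus lower-order terms); closing this loop yields $\fro{\dall}^2, \fro{\dg}^2 \lesssim e^{2M_1}\lambda_n^2 k$, the first (deterministic) conclusion.

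For the high-probability statement I would then just plug in tail bounds for the quantities appearing in $\lambdacond$: a Bernoulli-matrix operator-norm bound gives $\op{\bA-\bP} \lesssim \sqrt{\max\{ne^{-M_2}, \log n\}}$ with probability $1 - n^{-c}$ (via, e.g., a Bernstein/matrix-concentration or the Bandeira--Van Handel bound, using the sparsity control $\bP_{ij}\le e^{-M_2}$ from \eqref{eq:M1M2}), and a scalar Bernstein bound gives $|\inner{\bA-\bP,\bX}|/\fro{X} \lesssim \sqrt{\log n}$, which is $\lesssim \sqrt{k}\cdot\sqrt{\max\{ne^{-M_2},\log n\}}$ under the stable-rank assumption. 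Hence $\lambda_n = C_0\sqrt{\max\{ne^{-M_2},\log n\}}$ is a valid choice, and substituting gives $\fro{\dall}^2 \lesssim e^{2M_1}k\cdot\max\{ne^{-M_2}, \log n\} = \psi_n^2 \cdot n/n$—matching \eqref{eq:psi2} after noting $\psi_n^2 = e^{2M_1}nk\max\{e^{-M_2}, (\log n)/n\}$. The main obstacle I anticipate is the bookkeeping in the middle step: cleanly separating the three components of $\dall$ when paired against $\bA-\bP$, correctly invoking the rank-$2k$ decomposition of $\dg$ together with the PSD/trace structure (so that $\tr$ really does act like the nuclear norm and the cross terms have the right sign), and—most delicately—using Assumption~\ref{assump:srank} to decouple $\db\bX$ from the low-rank part so that the Frobenius error of $\dg$ is genuinely controlled rather than just the error of the sum $\dall$. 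The curvature bound and the concentration inequalities are comparatively routine.
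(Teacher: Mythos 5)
Your proposal is correct and follows essentially the same route as the paper's proof: the basic inequality plus the $e^{-M_1}$ curvature bound, Hölder pairing of $\bA-\bP$ with the components of $\dall$ under the choice $\lambda_n \geq \lambdacond$, a rank-$2k$ cone/nuclear-norm argument for $\dg$ (the paper's Lemma on $\mathcal{M}_k$), the stable-rank assumption to separate $\db\bX$ and recover $\fro{\dg}$ from $\fro{\dall}$ (the paper's identifiability lemma), and concentration bounds for $\op{\bA-\bP}$ and $\inner{\bA-\bP,\bX}$ to justify $\lambda_n = \lambdaP$. The only cosmetic differences (handling the $\da\one^\top$ term via row-sum concentration instead of the rank-one identity $\|\da\one^\top\|_*=\fnorm{\da\one^\top}$, and describing the cone subspace via $\hbG$ as well as $\tg$) do not change the argument.
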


If we turn the error metrics in \prettyref{thm:convex} to mean squared errors, namely $\fnorm{\dg}^2/{n^2}$ and $\fnorm{\dall}^2/{n^2}$, then we obtain the familiar $k/n$ rate in low rank matrix estimation problems \cite{koltchinskii2011nuclear,agarwal2012noisy,davenport20141,ma2015volume}. 
In particular, the theorem can viewed as a complementary result to the result in \cite{davenport20141} in the case where there are observed covariate and the $1-$bit matrix is fully observed.
When $e^{-M_2} \geq \frac{\log n}{n}$, the sparsity of the network affects the rate through the multiplier $e^{-M_2}$. As the network gets sparser, the multiplier will be no smaller than $\log n \over n$.

\begin{remark}
Note that the choice of the penalty parameter $\lambda_n$ depends on $e^{-M_2}$ which by \eqref{eq:M1M2} controls the sparsity of the observed network.
In practice, we do not know this quantity and we propose to estimate $M_2$ with
$\widehat{M}_2 = -\mathrm{logit}(\sum_{ij}A_{ij}/n^2)$.	
\end{remark}





\paragraph{Results for the non-convex approach} 
A key step toward establishing the statistical properties of the outputs of Algorithm \ref{alg:non-convex} is to characterize the evolution of its iterates.
To start with, we introduce an error metric that is equivalent to $\fnorm{\dallt}^2 = \fnorm{\Theta^t - \Theta_\star}^2$ while at the same time is more convenient for establishing an inequality satisfied by all iterates.
Note that the latent vectors are only identifiable up to an orthogonal transformation of $\mathbb{R}^k$, for any $\bZ_1, \bZ_2\in \mathbb{R}^{n\times k}$, we define the distance measure
\begin{equation*}
\text{dist} (\bZ_1, \bZ_2)=\min_{\bR\in O(k)} \fro{\bZ_1-\bZ_2\bR}
\end{equation*}
where $O(k)$ collects all $k\times k$ orthogonal matrices. 
Let 
$\bR^t=\argmin_{\bR\in O(k)} \fro{\zt-\tz\bR}$
\mbox{and}
$\dzt=\zt-\tz\bR^t$,
and further let $\dat = \balpha^t - \balpha_\star, \dgt = Z^t(Z^t)^\top - \tz\tz^\top$ and $\dbt = \beta^t - \beta_\star$.
Then the error metric we use for characterizing the evolution of iterates is
\begin{equation}
e_t=\op{\tz}^2\fro{\dzt}^2+2\fro{\dat\one^\top}^2+\fro{\dbt X}^2.
\label{eq:metric}
\end{equation}
Let $\kappa_{\tz}$ be the condition number of $\tz$ (i.e., the ratio of the largest to the smallest singular values). The following lemma shows that the two error metrics $e_t$ and $\fnorm{\dallt}^2$ are equivalent up to a multiplicative factor of order $\kappa_{\tz}^2$. 
\begin{lemma} 
Under Assumption~\ref{assump:srank}, there exists constant $0\leq c_0<1$ such that
\begin{equation*}
\begin{aligned}
e_t &\leq \frac{\kappa^2_{\tz}}{2(\sqrt{2}-1)}\fro{\dgt}^2+2\fro{\dat\one^\top}^2+\fro{\dbt X}^2 \leq  \frac{\kappa^2_{\tz}}{2(\sqrt{2}-1)(1-c_0)}\fro{\dallt}^2.
\end{aligned}
\end{equation*}
Moreover, if $\text{dist}(\zt, \tz)\leq c\op{\tz}$, 
\begin{equation*}
\begin{aligned}
e_t &\geq \frac{1}{(c+2)^2}\fro{\dgt}^2+2\fro{\dat\one^\top}^2+\fro{\dbt X}^2\geq  \frac{1}{(c+2)^2(1+c_0)}\fro{\dallt}^2.
\end{aligned}
\end{equation*}
\label{lem:error-metric}
\end{lemma}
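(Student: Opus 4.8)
The plan is to sandwich $e_t$ in two stages: first relate $\op{\tz}^2\fro{\dzt}^2$ to $\fro{\dgt}^2$ from both sides, and then compare $\fro{\dgt}^2+2\fro{\dat\one^\top}^2+\fro{\dbt X}^2$ with $\fro{\dallt}^2$. For the first stage I would write $\zt=\tz\bR^t+\dzt$, so that
\[
\dgt=\zt(\zt)^\top-\tz\tz^\top=\tz\bR^t(\dzt)^\top+\dzt(\bR^t)^\top\tz^\top+\dzt(\dzt)^\top .
\]
The standard low-rank perturbation inequality $\fro{\bZ_1\bZ_1^\top-\bZ_2\bZ_2^\top}^2\geq 2(\sqrt2-1)\,\smin(\bZ_2)^2\,\mathrm{dist}^2(\bZ_1,\bZ_2)$ (see, e.g., \cite{tu2015low}), applied with $\bZ_1=\zt$ and $\bZ_2=\tz$, gives $\fro{\dgt}^2\geq 2(\sqrt2-1)\smin(\tz)^2\fro{\dzt}^2$, hence $\op{\tz}^2\fro{\dzt}^2\leq\frac{\kz^2}{2(\sqrt2-1)}\fro{\dgt}^2$, which is the first inequality of the upper bound. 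Conversely, when $\mathrm{dist}(\zt,\tz)=\fro{\dzt}\leq c\op{\tz}$, applying submultiplicativity of $\fro{\cdot}$ against $\op{\cdot}$ to the displayed identity yields $\fro{\dgt}\leq 2\op{\tz}\fro{\dzt}+\op{\dzt}\fro{\dzt}\leq(c+2)\op{\tz}\fro{\dzt}$, i.e.\ $\op{\tz}^2\fro{\dzt}^2\geq\frac{1}{(c+2)^2}\fro{\dgt}^2$, the first inequality of the lower bound.

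For the second stage I would decompose $\dallt=M_\alpha+M_\beta+M_G$ with $M_\alpha=\dat\one^\top+\one\dat^\top$, $M_\beta=\dbt X$, $M_G=\dgt$, and expand $\fro{\dallt}^2$. Since the iterates are centered, $\bJ\zt=\zt$ and $\bJ\tz=\tz$ force $\one^\top\zt=0$, so $\dgt\one=0$ and the cross term $\inner{M_\alpha,M_G}=2(\dat)^\top\dgt\one$ vanishes. The remaining cross terms involve $X$: $\inner{M_\alpha,M_\beta}$ is controlled by Cauchy--Schwarz together with $\|X\one\|\leq\sqrt n\,\op{X}$, and $|\inner{M_\beta,M_G}|\leq\op{X}\nuc{\dgt}\leq\sqrt{2k}\,\op{X}\fro{\dgt}$ since $\dgt$ has rank at most $2k$. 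Writing $\op{X}=\fro{X}/\sqrt{\srank(X)}$ and invoking Assumption~\ref{assump:srank} ($\srank(X)\geq M_0k$) shows both are bounded by $\varepsilon_0\big(2\fro{\dat\one^\top}^2+\fro{M_\beta}^2+\fro{\dgt}^2\big)$ with $\varepsilon_0$ a constant multiple of $1/\sqrt{M_0}$. Finally $\fro{M_\alpha}^2=2\fro{\dat\one^\top}^2+2(\one^\top\dat)^2$ with $0\leq(\one^\top\dat)^2\leq\fro{\dat\one^\top}^2$ by Cauchy--Schwarz; when $X=0$ all $X$-terms disappear and $\varepsilon_0=0$.

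Choosing $c_0$ a suitable constant multiple of $1/\sqrt{M_0}$, which is $<1$ by Assumption~\ref{assump:srank}, these estimates give $\fro{\dallt}^2\geq(1-c_0)\big(\fro{\dgt}^2+2\fro{\dat\one^\top}^2+\fro{\dbt X}^2\big)$ and $\fro{\dallt}^2\leq 4\fro{\dat\one^\top}^2+(1+c_0)\big(\fro{\dgt}^2+\fro{\dbt X}^2\big)$. For the upper bound, since $\kz^2/(2(\sqrt2-1))\geq 1$ one pulls that factor out of the middle expression and applies the first estimate. For the lower bound, multiplying through by $(c+2)^2\geq 4$ and applying the second estimate works because the four-fold slack on the $\alpha$ and $\beta$ terms absorbs the $4\fro{\dat\one^\top}^2$ and $(1+c_0)\fro{\dbt X}^2$ contributions; this is exactly why a single $c_0<1$ serves both inequalities.

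The main obstacle is the second stage. The ``natural'' energy carried by $\dat$ is $\fro{\dat\one^\top+\one\dat^\top}^2$, which can be as large as twice $2\fro{\dat\one^\top}^2$ (attained when $\dat\propto\one$), and the cross terms mixing the covariate $X$ with the degree-heterogeneity part and the low-rank part are genuinely nonzero; controlling the latter is precisely where the stable-rank lower bound of Assumption~\ref{assump:srank} is needed, and one must be careful not to contract the $(c+2)^2\geq 4$ factor prematurely. By contrast, the first-stage perturbation bounds relating $\fro{\dzt}$ and $\fro{\dgt}$ are routine.
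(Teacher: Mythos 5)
Your argument is correct and follows essentially the same two-stage route as the paper: the bounds $2(\sqrt{2}-1)\sigma_k^2(\tz)\fro{\dzt}^2\leq\fro{\dgt}^2\leq(c+2)^2\op{\tz}^2\fro{\dzt}^2$ are exactly the cited results of \cite{tu2015low} (Lemmas~\ref{lemma-tu1} and \ref{lemma-tu2}), and your expansion of $\fro{\dallt}^2$ with $\dgt\one=0$, the bound $0\leq(\one^\top\dat)^2\leq\fro{\dat\one^\top}^2$, and the nuclear-norm/operator-norm control of the $X$-cross terms under Assumption~\ref{assump:srank} is precisely the content of Lemma~\ref{lem:kernel-identifiability-nonconvex} specialized to $\gresid=0$, which the paper invokes as a black box. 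The only difference is that you re-derive these ingredients inline (and track the possible factor $2$ on the $\alpha$-term, absorbing it with the slack $2(c+2)^2\geq 8$), which is sound but not a different method.
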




In addition, our error bounds depend on the following assumption on the initializers. 
\begin{assumption}The initializers $\bZ^0, \balpha^0, \beta^0$ in Algorithm~\ref{alg:non-convex} satisfy
$e_0\leq c e^{-2M_1}\op{\tz}^4/\kappa_{\tz}^4$ for a sufficiently small positive constant $c$.
\label{assump:init}
\end{assumption}
Note that the foregoing assumption is not very restrictive. If $k\ll n$, $M_1$ is a constant and the entries of $Z_\star$ are i.i.d.~random variables with mean zero and bounded variance, then
$\|Z_\star\|_{\mathrm{op}}\asymp \sqrt{n}$ and $\kappa_{Z_\star} \asymp 1$. 
In view of Lemma \ref{lem:error-metric}, this only requires $\frac{1}{n^2} \|{\Theta}^0-\Theta_\star\|_{\mathrm{F}}^2$ to be upper bounded by some constant.
We defer verification of this assumption for initial estimates constructed by Algorithm \ref{alg:convex-init} and Algorithm \ref{alg:SVT-init} to \prettyref{sec:init-theory}.

The following theorem states that under such an initialization, errors of the iterates converge linearly till they reach the same statistical precision $\psi_n^2$ as in \prettyref{thm:convex} modulo a multiplicative factor that depends only on the condition number of $Z_\star$.
\begin{theorem}
Let Assumptions \ref{assump:srank} and \ref{assump:init} be satisfied. 
Set the constraint sets as\footnote{When $X=0$, $\projb = \emptyset$ and we replace $M_1/3$ in $\projz$ and $\proja$ with $M_1/2$ in correspondence with the discussion following \eqref{eq:para-nonconvex}.}
\begin{equation*}
\begin{aligned}
\projz & = \{Z\in\mathbb{R}^{n\times k}, JZ = Z, \max_{1\leq i\leq n}\|Z_{i*}\|\leq M_1/3\}, \\
\proja & = \{\alpha\in\mathbb{R}^n, \|\alpha\|_{\infty}\leq M_1/3\}, ~\projb = \{\beta\in\mathbb{R}, \beta\|X\|_{\infty} \leq M_1/3\}.	
\end{aligned}
\end{equation*}
Set the step sizes as in \eqref{eq:stepsize} for any $\eta\leq c$ where $c>0$ is a universal constant. 
Let $\zeta_n = \lambdacond$. Then we have
\begin{enumerate}
  \item \emph{Deterministic errors of iterates:} If $\op{\tz}^2\geq C_1\kappa_{\tz}^2 e^{M_1} \zeta_n^2 \times \max\left\{\sqrt{\eta k e^{M_1}}, 1\right\}$ for a sufficiently large constant $C_1$,  there exist positive constants $\rho$ and $C$ such that 
\begin{equation*}
e_t\leq 2\bigg(1- \frac{\eta}{e^{M_1}\kappa_{\tz}^2}\rho \bigg)^t e_0 + \frac{C\kappa_{\tz}^2}{\rho} e^{2M_1} \zeta_n^2 k .
\end{equation*}
\item \emph{Probabilistic errors of iterates:} If $\op{\tz}^2\geq C_1\kappa_{\tz}^2\sqrt{n}e^{M_1-M_2/2} \max\left\{\sqrt{\eta k e^{M_1}}, 1\right\}$ for a sufficiently large constant $C_1$, there exist positive constants $\rho, c_0$ and $C$ such that uniformly over $\calF(n,k,M_1, M_2, X)$ with probability at least $1-n^{-c_0}$, 
\begin{equation*}
e_t\leq 2\bigg(1- \frac{\eta}{e^{M_1}\kappa_{\tz}^2}\rho \bigg)^t e_0 + 
C\frac{\kappa_{\tz}^2}{\rho}\psi_n^2.
\end{equation*}
For any $T > T_0 = \log  (\frac{M_1^2}{\kappa_{\tz}^2e^{4M_1 - M_2}}\frac{n}{k^2} ) / \log (1- \frac{\eta}{e^{M_1}\kappa_{\tz}^2}\rho)$, 
\[
\fnorm{\Delta_{G^T}}^2, ~\fnorm{\Delta_{\Theta^T}}^2 \leq  C^\prime\kappa_{\tz}^2 \psi_n^2.
\]
for some constant $C^\prime > 0$. 
\end{enumerate}
\label{thm:nonconvex}
\end{theorem}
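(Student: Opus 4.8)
The plan is to establish a one-step recursion for the error metric $e_t$ of \eqref{eq:metric} of the form $e_{t+1}\leq (1-\frac{\eta}{e^{M_1}\kappa_{\tz}^2}\rho)\,e_t + b_n$, valid on the event that $e_t$ has not escaped the basin $e_t\lesssim e^{-2M_1}\op{\tz}^4/\kappa_{\tz}^4$ of Assumption~\ref{assump:init}, with statistical floor $b_n\asymp \frac{\kappa_{\tz}^2}{\rho}e^{2M_1}\zeta_n^2 k$; the overall architecture parallels recent non-convex analyses such as those in \cite{chen2015fast}, adapted to the logistic loss and to the nuisance parameters $\alpha,\beta$. Given such a recursion, the theorem follows by induction on $t$: assuming the target bound $e_t\leq 2(1-\frac{\eta}{e^{M_1}\kappa_{\tz}^2}\rho)^t e_0 + C\frac{\kappa_{\tz}^2}{\rho}e^{2M_1}\zeta_n^2 k$ holds (which keeps $e_t$ in the basin, thanks to the lower bound on $\op{\tz}^2$ assumed in each part, chosen precisely so that the floor $b_n$ stays within it), one applies the recursion once more and sums the geometric series; and \prettyref{lem:error-metric} converts bounds on $e_t$ into the stated bounds on $\fnorm{\Delta_{G^t}}^2$ and $\fnorm{\Delta_{\Theta^t}}^2$, at the cost of a factor of order $\kappa_{\tz}^2$.

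For the one-step bound I would argue as follows. Since $\projz,\proja,\projb$ are convex and contain the truth $(\tz,\ta,\tb)$ — and $\projz$ is invariant under right multiplication by $O(k)$, hence also contains $\tz\bR$ for every rotation $\bR$ — the projection step in Algorithm~\ref{alg:non-convex} is nonexpansive toward the truth, so it suffices to control the unprojected iterates $\widetilde Z^{t+1},\widetilde\alpha^{t+1},\widetilde\beta^{t+1}$, the latent error being measured after realignment by the Procrustes-optimal rotation $\bR^{t+1}$. Expanding $\op{\tz}^2\fro{\widetilde Z^{t+1}-\tz\bR^{t+1}}^2+2\fro{(\widetilde\alpha^{t+1}-\ta)\one^\top}^2+\fro{(\widetilde\beta^{t+1}-\tb)X}^2$ and inserting the gradient formulas, I split each occurrence of $\bA-\sigt$ into a ``signal'' part $\sigp-\sigt$ and a ``noise'' part $\bA-\bP$. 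For the signal part, a mean-value expansion of the sigmoid under the constraint $-M_1\leq\bTheta^t_{ij},\bTheta^\star_{ij}\leq -M_2$ supplies local strong convexity, $\inner{\sigt-\sigp,\bTheta^t-\bTheta_\star}\gtrsim e^{-M_1}\fro{\dallt}^2$, and smoothness, $\fro{\sigt-\sigp}^2\lesssim\fro{\dallt}^2$; combined with $\dallt=\dat\one^\top+\one(\dat)^\top+\dbt X+\dgt$ and the decomposition $\dgt=\tz\bR^t(\dzt)^\top+\dzt(\bR^t)^\top\tz^\top+\dzt(\dzt)^\top$, these yield the contractive term of order $e^{-M_1}\kappa_{\tz}^{-2}e_t$, up to quadratic remainders of order $\op{\dzt}^2\op{\tz}^2$ and $\fro{\dzt(\dzt)^\top}^2$ which the inductive smallness of $e_t$ absorbs. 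For the noise part, the surviving inner products are of the form $\inner{\bA-\bP,\dzt(Z^t)^\top}$, $\inner{\bA-\bP,\dat\one^\top}$ and $\inner{\bA-\bP,\dbt X}$, each bounded by $\op{\bA-\bP}$ or $|\inner{\bA-\bP,X/\fro{X}}|$ times the corresponding error norm; Young's inequality turns these into a small multiple of $e_t$ plus a term of order $e^{2M_1}\zeta_n^2 k$, and Assumption~\ref{assump:srank} on the stable rank of $X$ is exactly what lets the $\beta$-direction be controlled this way without being absorbed into $ZZ^\top$. Collecting terms and taking $\eta$ below a universal constant produces the recursion, whose unrolling is Part~1.

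For Part~2, I would substitute high-probability bounds for $\zeta_n=\lambdacond$: standard concentration (matrix Bernstein for the centered adjacency matrix, whose variance proxy comes from \eqref{eq:M1M2} since $\max_{ij}P_{ij}\leq e^{-M_2}$; a scalar Bernstein inequality for $\inner{\bA-\bP,X}$) gives $\op{\bA-\bP}^2\lesssim\max\{ne^{-M_2},\log n\}$ and $|\inner{\bA-\bP,X/\fro{X}}|^2\lesssim\log n$, so $\zeta_n^2\lesssim\max\{ne^{-M_2},\log n\}$ and therefore $e^{2M_1}\zeta_n^2 k\lesssim\psi_n^2$ with probability at least $1-n^{-c_0}$; the same substitution turns the lower bound on $\op{\tz}^2$ into the stated $C_1\kappa_{\tz}^2\sqrt{n}e^{M_1-M_2/2}\max\{\sqrt{\eta k e^{M_1}},1\}$. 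Unrolling gives $e_t\leq 2(1-\frac{\eta}{e^{M_1}\kappa_{\tz}^2}\rho)^t e_0 + C\frac{\kappa_{\tz}^2}{\rho}\psi_n^2$, i.e.\ the same precision $\psi_n^2$ as for the convex estimator of \prettyref{thm:convex} up to a $\kappa_{\tz}^2$ factor. The final display follows from the crude bound $e_0\lesssim n^2 M_1^2 e^{-2M_1}$ (using $\max_i\|Z_{i*}\|^2\leq M_1/3$ from the constraint sets and \prettyref{lem:error-metric}) against $\psi_n^2\gtrsim e^{2M_1}nk\max\{e^{-M_2},\log n/n\}$: the geometric term drops below $\psi_n^2$ after exactly $T_0$ iterations, so for $T>T_0$ one has $e_T\lesssim\kappa_{\tz}^2\psi_n^2$, and \prettyref{lem:error-metric} delivers $\fnorm{\Delta_{G^T}}^2,\fnorm{\Delta_{\Theta^T}}^2\lesssim\kappa_{\tz}^2\psi_n^2$.

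The main obstacle is the one-step contraction, and inside it two coupled difficulties. First the alignment: $\bR^t$ is the Procrustes-optimal rotation, generally distinct from $\bR^{t+1}$, so bounding $\fro{\dztp}$ in terms of $\fro{\dzt}$ requires passing through $\widetilde Z^{t+1}$ and invoking a perturbation bound for the optimal orthogonal alignment, which holds only while $\text{dist}(\zt,\tz)\lesssim\op{\tz}$ — this is why the smallness of $e_t$ must be propagated through the induction rather than merely assumed at $t=0$. Second the quadratic remainders $\op{\dzt}^2\op{\tz}^2$, $\fro{\dzt(\dzt)^\top}^2$ and the like: suppressing them is precisely what consumes Assumption~\ref{assump:init} and the lower bounds on $\op{\tz}^2$, and the bookkeeping required for it is what produces the $\kappa_{\tz}^2$ and $e^{M_1}$ factors in the contraction rate. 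By contrast, the concentration bounds for $\op{\bA-\bP}$ and $\inner{\bA-\bP,X}$ are routine; they merely fix the statistical floor at $\psi_n^2$.
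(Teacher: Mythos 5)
Your proposal is correct in outline and follows essentially the same route as the paper's proof: a one-step contraction for the error metric obtained by splitting $A-\sigma(\Theta^t)$ into the noise $A-P$ and the signal $\sigma(\Theta_\star)-\sigma(\Theta^t)$, strong convexity and smoothness of the logistic loss on $[-M_1,-M_2]$, the stable-rank identifiability lemma to control the $\alpha,\beta$ directions, an induction that propagates $\fro{\dzt}\lesssim e^{-M_1}\op{\tz}/\kappa_{\tz}^2$ so the quadratic remainders stay absorbed, and finally concentration of $\op{A-P}$ and $\inner{A-P,X}$ to pass from the deterministic to the probabilistic statement and to the bound on $\fnorm{\Delta_{G^T}}^2,\fnorm{\Delta_{\Theta^T}}^2$ via the error-metric equivalence.

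Two details differ from what you describe, both in your favor to know. First, the alignment difficulty you single out as the main obstacle is handled in the paper without any perturbation bound for the Procrustes rotation: because $\projz$ is invariant under right multiplication by orthogonal matrices, $\tz\widetilde{R}^{t+1}\in\projz$, and one simply chains the minimality of $R^{t+1}$, the non-expansiveness of the projection onto $\projz$, and the minimality of $\widetilde{R}^{t+1}$ to get $\fro{Z^{t+1}-\tz R^{t+1}}\leq\fro{\widetilde{Z}^{t+1}-\tz R^{t}}$; the old rotation is reused and no alignment-perturbation estimate is needed. Second, plain matrix Bernstein yields only $\op{A-P}\lesssim\sqrt{ne^{-M_2}\log n}+\log n$, which would inflate the statistical floor by a logarithmic factor in the regime $ne^{-M_2}\gtrsim\log n$; to reach $\psi_n^2$ exactly you need the sharper spectral bound for Bernoulli adjacency matrices of Feige--Ofek/Lei--Rinaldo type, which is what the paper invokes (together with a scalar Bernstein bound for $\inner{A-P,X}$, as you propose). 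With these two substitutions your argument matches the paper's.
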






\begin{remark}
In view of Lemma \ref{lem:error-metric}, the rate obtained by the non-convex approach in terms of $\fro{\dall}^2$ matches the upper bound achieved by the convex method, up to a multiplier of $\kappa_{\tz}^2$. 
As suggested by Lemma~\ref{lem:error-metric}, the extra factor comes partly from the fact that $e_t$ is a slightly stronger loss function than $\fro{\dallt}^2$ and in the worst case can be $c\kappa_{\tz}^2$ times larger than $\fro{\dallt}^2$. 
\end{remark}
\newcommand{\projzc}{\mathcal{C}^1_{Z}}
\newcommand{\projzd}{\mathcal{C}^2_{Z}}
\begin{remark}
Under the setting of Theorem~\ref{thm:nonconvex}, the projection steps for $\alpha, \beta$ in Algorithm~\ref{alg:non-convex} are straightforward and have the following closed form expressions:
$\balpha^{t+1}_i=\widetilde{\balpha}^{t+1}_i\min \{1, {M_1}/{(3|\widetilde{\balpha}^{t+1}_i|)} \}$, $\beta^{t+1}=\widetilde{\beta}^{t+1} \min\{ 1, {M_1}/{(3|\widetilde{\beta}^{t+1}|\max_{i,j}|\bX_{ij}|}) \}$. 
The projection step for $Z$ is slightly more involved. Notice that $\projz = \projzc\bigcap\projzd$ where 
$$\projzc = \{Z\in\mathbb{R}^{n\times k}, JZ = Z\}, ~\projzd = \{Z\in\mathbb{R}^{n\times k}, \max_{1\leq i\leq n}\|Z_{i*}\|^2\leq M_1/3\}.$$
Projecting to either of them has closed form solution, that is 
$\mathcal{P}_{\projzc} (Z) = JZ$, $[\mathcal{P}_{\projzd} (Z)]_{i*} = Z_{i*}\min \{ 1, \sqrt{{M_1}/{(3\|Z_{i*}\|^2)}} \}$.
Then Dykstra's projection algorithm \citep{dykstra1983algorithm} (or alternating projection algorithm) can be applied to obtain $\mathcal{P}_{\projz}(\widetilde{Z}^{t+1})$. 
We note that projections induced by the boundedness constraints for $Z, \alpha, \beta$ are needed for establishing the error bounds theoretically. 
However, when implementing the algorithm, users are at liberty to drop these projections and to only center the columns of the $Z$ iterates as in \eqref{eq:projection-practical}. 
We did not see any noticeable difference thus caused on simulated examples reported in \prettyref{sec:simu}.
\end{remark}

\begin{remark}
When both $M_1$ and $M_2$ are constants and the covariate matrix $X$ is absent, the result in Section 4.5 of \cite{chen2015fast}, in particular Corollary 5, implies the error rate of $O(nk)$ in \prettyref{thm:nonconvex}.
However, when $M_1\to\infty$ and $M_2$ remains bounded as $n\to\infty$, the error rate in \cite{chen2015fast} becomes\footnote{One can verify that in this case we can identify the quantities in Corollary 5 of \cite{chen2015fast} as $\sigma = 1$, $p=1$, $d=n$, $r=k$, $\nu\asymp M_1$, $L_{4\nu}\asymp 1$ and $\ell_{4\nu}\asymp e^{4M_1}$.} $O(e^{8M_1}M_1^2nk)$, which can be much larger than the rate $O(e^{2M_1}nk)$ in \prettyref{thm:nonconvex} even when $X$ is absent.
We feel that this is a byproduct of the pursuit of generality in \cite{chen2015fast} and so the analysis has not been fine-tuned for latent space models.
In addition, Algorithm \ref{alg:non-convex} enjoys nice theoretical guarantees on its performance even when the model is mis-specified and the $\Theta$ matrix is only approximately low rank. 
See \prettyref{thm:kernel-nonconvex-upper} below.
This case which is important from a modeling viewpoint was not considered in \cite{chen2015fast} as its focus was on generic non-convex estimation of low rank positive semi-definite matrices rather than fittings latent space models.
\end{remark}

\subsection{Error bounds for more general models}
\label{sec:general-result} 
We now investigate the performances of the fitting approaches on more general models satisfying the Schoenberg condition \eqref{eq:h-constraint}.
To this end, we consider the following parameter space for the more general class of latent space models
\begin{equation}
	\label{eq:para-general}
\begin{aligned}
\paramK=\Big\{\bTheta & | \bTheta=\balpha\one^\top+\one\balpha^\top+\beta\bX+G, 
G \in\mathbb{S}_+^n, JG = G,
\\ & \max\limits_{1\leq i\leq n} G_{ii},~\|\balpha\|_{\infty}, ~ |\beta| \max\limits_{1\leq i < j\leq n} |X_{ij}| \leq M_1/3, \max\limits_{1\leq i \neq j\leq n} \Theta_{ij}\leq -M_2 \Big\}.
\end{aligned}
\end{equation}
As before, when $X = 0$, we replace the first inequality in \eqref{eq:para-general} with $\max_{1\leq i\leq n}\|Z_{i*}\|^2 ,~ \|\balpha\|_{\infty}\leq M_1/2$.
For the results below, $M_1$, $M_2$ and $X$ are all allowed to change with $n$.
Note that the latent space dimension $k$ is no longer a parameter in \eqref{eq:para-general}.
Then for any positive integer $k$,
let $U_kD_kU_k^\top$ be the best rank-$k$ approximation to $\tg$.
In this case, with slight abuse of notation, we let 
\begin{equation}
	\label{eq:Zstar-general}
\tz=U_kD_k^{\half} \quad \mbox{and}\quad \gresid = \tg - U_kD_kU_k^\top.	
\end{equation}

Note that \eqref{eq:para-general} does not specify the spectral behavior of $G$ which will affect the performance of the fitting methods as the theorems in this section will later reveal.
We choose not to make such specification due to two reasons.
First, the spectral behavior of distance matrices resulting from different kernel functions and manifolds is by itself a very rich research topic.
See, for instance, \cite{mezard1999spectra,bogomolny2003spectral,el2010spectrum,cheng2013spectrum} and the references therein.
In addition, the high probability error bounds we are to develop in this section is going to work uniformly for all models in \eqref{eq:para-general} and can be specialized to any specific spectral decay pattern of $G$ of interest.


\paragraph{Results for the convex approach}
The following theorem is a generalization of Theorem \ref{thm:convex} to the general class \eqref{eq:para-general}.

\begin{theorem}
	For any $k\in\pint$ such that Assumption~\ref{assump:srank} holds and any $\lambda_n$ satisfying $\lambda_n \geq \lambdacond$,
	there exists a constant $C$ such that the solution to the convex program \eqref{eq:convex-obj} satisfies
\begin{equation*}
\fro{\dall}^2\leq C\left(e^{2M_1}\lambda_n^2k + e^{M_1}\lambda_n \|\gresid\|_*\right).
\end{equation*}
Specifically, setting $\lambda_n=\lambdaP$ for a large enough constant $C_0$, there exists positive constants $c, C$ such that uniformly over $\calF_g(n, M_1, M_2, X)$ with probability at least $1-n^{-c}$,
\begin{equation}
	\label{eq:convex-bd-general}
\fro{\dall}^2 \leq C ( \psi_n^2 + e^{M_1 - M_2/2}\sqrt{n}\|\gresid\|_* ).
\end{equation}
\label{thm:kernel-convex-upper}
\end{theorem}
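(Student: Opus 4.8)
The plan is to run exactly the argument that proves \prettyref{thm:convex}, carrying along one additional term throughout to account for the fact that the true latent Gram matrix $\tg$ is no longer exactly of rank $k$: it only splits as $\tg = \tz\tz^\top + \gresid$, where $\tz = U_kD_k^{1/2}$ is built from the top-$k$ eigencomponents of $\tg$ and $\gresid = \tg - U_kD_kU_k^\top\succeq 0$ is the spectral tail. Write $h(\Theta) = -\sum_{i,j}\{A_{ij}\Theta_{ij} + \log(1-\sigma(\Theta_{ij}))\}$ for the data term of the objective, so that $\nabla h(\Theta_\star) = \sigma(\Theta_\star) - A = P - A$ on the off-diagonal entries. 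Since $\Theta_\star$ lies in the feasible set of \eqref{eq:convex-obj}, optimality of $\hbTheta$ gives the basic inequality $h(\hbTheta) - h(\Theta_\star) \le \lambda_n(\tr\tg - \tr\hbG)$. On the box $\{-M_1 \le \Theta_{ij} \le -M_2\}$ one has $\sigma'(\Theta_{ij}) = \sigma(\Theta_{ij})(1-\sigma(\Theta_{ij})) \gtrsim e^{-M_1}$, and this box is convex, so $h$ is $(c\,e^{-M_1})$-strongly convex along the segment from $\Theta_\star$ to $\hbTheta$; hence $h(\hbTheta) - h(\Theta_\star) \ge \langle P-A,\dall\rangle + c\,e^{-M_1}\fro{\dall}^2$.

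Next I would process the penalty gap with the decomposability of the trace norm with respect to the model subspace $\mathcal{M}$ associated with $\range(U_k)$ (in the Negahban--Wainwright sense), for which $\mathcal{P}_{\mathcal{M}^\perp}\tg = \gresid$; this yields $\tr\tg - \tr\hbG \le \nuc{\mathcal{P}_{\mathcal{M}}\dg} - \nuc{\mathcal{P}_{\mathcal{M}^\perp}\dg} + 2\nuc{\gresid}$. Combining with the strong-convexity bound (discarding the quadratic term for the moment), expanding $\dall = \da\one^\top + \one\da^\top + \db X + \dg$, and bounding the stochastic term direction by direction — $|\langle A-P,\da\one^\top+\one\da^\top\rangle| \le 2\op{A-P}\,\fro{\da\one^\top}$, $|\db|\,|\langle A-P,X\rangle| = |\db|\,\fro{X}\,|\langle A-P,X/\fro{X}\rangle| \le \lambda_n\sqrt{k}\,\fro{\db X}$, and $|\langle A-P,\dg\rangle| \le \op{A-P}\,\nuc{\dg}$ — and using $\lambda_n \ge \lambdacond$ to dominate $\op{A-P}$, I arrive at an approximate cone condition $\nuc{\mathcal{P}_{\mathcal{M}^\perp}\dg} \lesssim \nuc{\mathcal{P}_{\mathcal{M}}\dg} + \nuc{\gresid} + \fro{\da\one^\top} + \sqrt k\,\fro{\db X}$, and hence $\nuc{\dg} \lesssim \sqrt k\,\fro{\dg} + \nuc{\gresid} + \fro{\da\one^\top} + \sqrt k\,\fro{\db X}$ (using $\mathrm{rank}(\mathcal{P}_{\mathcal{M}}\dg) \le 2k$). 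Substituting this back, re-inserting the quadratic term, and invoking the component-separation estimate from the proof of \prettyref{thm:convex} — which peels off the degree directions using $J\tg = \tg$ and separates $\db X$ from $\dg$ using \prettyref{assump:srank}, so that $\fro{\da\one^\top} + \fro{\db X} + \fro{\dg} \lesssim \fro{\dall}$ — produces the scalar inequality $c\,e^{-M_1}\fro{\dall}^2 \lesssim \lambda_n\sqrt k\,\fro{\dall} + \lambda_n\nuc{\gresid}$. Solving this quadratic gives $\fro{\dall}^2 \lesssim e^{2M_1}\lambda_n^2 k + e^{M_1}\lambda_n\nuc{\gresid}$, i.e.\ the deterministic claim.

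For the probabilistic statement I would plug in $\lambda_n = \lambdaP$ and invoke the same concentration estimates used for \prettyref{thm:convex}: with probability at least $1 - n^{-c}$, uniformly over $\calF_g$, the quantities $\op{A-P}$, $|\langle A-P, X/\fro{X}\rangle|/\sqrt k$ and $\vnorm{(A-P)\one}/\sqrt n$ are all $\lesssim \sqrt{\max\{ne^{-M_2},\log n\}}$, so $\lambda_n \ge \lambdacond$ on this event. Then $e^{2M_1}\lambda_n^2 k \asymp \psi_n^2$ and, in the regime $ne^{-M_2}\ge\log n$ where the tail is active, $e^{M_1}\lambda_n\nuc{\gresid} \asymp e^{M_1 - M_2/2}\sqrt n\,\nuc{\gresid}$, giving \eqref{eq:convex-bd-general}. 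I expect the only genuinely new obstacle to be the second paragraph: obtaining a clean \emph{approximate} cone condition in the presence of three heterogeneous parameter blocks ($\alpha$, $\beta$, $G$) when the trace penalty acts on only one of them, and verifying that the choice $\mathcal{M} = \range(U_k)$ (best rank-$k$ eigenspace of $\tg$) is compatible with the identifiability geometry — built on $J$ and \prettyref{assump:srank} — used in the component-separation step, so that the additive bias $\nuc{\gresid}$ is carried through cleanly and does not contaminate the separation of the Frobenius norms of the components. Everything else is a line-by-line re-run of the exactly-low-rank proof with this extra term tracked along.
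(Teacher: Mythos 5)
Your proposal is correct and follows essentially the same route as the paper: basic inequality plus $e^{-M_1}$-strong convexity of the logistic loss, a decomposability/approximate-cone bound for $\nuc{\dg}$ with the extra $\nuc{\gresid}$ term (the paper's Lemma~\ref{lem:kernel-cone}), the component-separation step under Assumption~\ref{assump:srank} (Lemma~\ref{lem:kernel-identifiability}), solving the resulting quadratic, and then concentration of $\op{A-P}$ and $\inner{A-P,X}$ for the probabilistic claim. One minor point: in the general model the separation estimate is not the clean inequality $\fro{\da\one^\top}+\fro{\db X}+\fro{\dg}\lesssim\fro{\dall}$ you state but carries an additive $\nuc{\gresid}^2/k$ bias, yet this only contributes another $\lambda_n\nuc{\gresid}$ term of the same order, so your final bound is unaffected.
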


The upper bound in \eqref{eq:convex-bd-general} has two terms.
The first is the same as that for the inner-product model.
The second can be understood as the effect of model mis-specification, since the estimator is essentially based on the log-likelihood of the inner-product model.
We note that the bound holds for any $k$ such that Assumption \ref{assump:srank} holds while the choice of the tuning parameter $\lambda_n$ does not depend on $k$. 
Therefore, we can take the infimum over all admissible values of $k$, depending on the stable rank of $X$.
When $X = 0$, we can further improve the bound on the right side of \eqref{eq:convex-bd-general} to be the infimum of the current expression over all $0\leq k\leq n$.

\paragraph{Results for the non-convex approach}
Under the definition in \eqref{eq:Zstar-general},
we continue to use the error metric $e_t$ defined in equation~\eqref{eq:metric}. 
The following theorem is a generalization of Theorem \ref{thm:nonconvex} to the general class \eqref{eq:para-general}.

\begin{theorem}
Under Assumptions~\ref{assump:srank} and \ref{assump:init}, set the constraint sets $\projz$, $\proja$, $\projb$ and the step sizes $\etaz$, $\etaa$ and $\etab$ as in Theorem \ref{thm:nonconvex}.
Let $\zeta_n = \lambdacond$. Then we have
\begin{enumerate}
\item \emph{Deterministic errors of iterates:} If $\op{\tg}\geq C_1\kappa_{\tz}^2 e^{M_1}\zeta_n^2  \times \max\left\{\sqrt{\eta k e^{M_1}}, \sqrt{\eta\fnorm{\gresid}^2/\zeta_n^2}, 1\right\}$, there exist positive constants $\rho$ and $C$ such that 
 \begin{equation*}
e_t\leq 2\bigg(1- \frac{\eta}{e^{M_1}\kappa_{\tz}^2}\rho \bigg)^t e_0 + \frac{C\kappa_{\tz}^2}{\rho} \left(e^{2M_1}\zeta_n^2 k + e^{M_1}\fro{\gresid}^2\right).
\end{equation*}
	\item \emph{Probabilistic errors of iterates:} If $\op{\tg}\geq C_1\kappa_{\tz}^2\sqrt{n}e^{M_1-M_2/2}  \max\left\{\sqrt{\eta k e^{M_1}}, \sqrt{\eta\fnorm{\gresid}^2/\zeta_n^2}, 1\right\}$ for a sufficiently large constant $C_1$, there exist positive constants $\rho, c_0$ and $C$ such that uniformly over $\calF_g(n,M_1,M_2,X)$ with probability at least $1-n^{-c_0}$, the iterates generated by Algorithm~\ref{alg:non-convex} satisfying
 \begin{equation*}
e_t\leq 2\bigg(1- \frac{\eta}{e^{M_1}\kappa_{\tz}^2}\rho \bigg)^t e_0 + \frac{C\kappa_{\tz}^2}{\rho} \left( \psi_n^2  +   e^{M_1}\fro{\gresid}^2 \right).
\end{equation*}
For any $T > T_0 = \log(\frac{M_1^2}{\kappa_{\tz}^2e^{4M_1 - M_2}}\frac{n}{k^2} ) / \log (1- \frac{\eta}{e^{M_1}\kappa_{\tz}^2}\rho)$, 
\[
\fnorm{\Delta_{G^T}}^2, ~\fnorm{\Delta_{\Theta^T}}^2 \leq  C^\prime \kappa_{\tz}^2 \left( \psi_n^2 +   e^{M_1}\fro{\gresid}^2 \right)
\]
for some constant $C^\prime$. 
\end{enumerate}
\label{thm:kernel-nonconvex-upper}
\end{theorem}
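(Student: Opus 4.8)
The plan is to run the argument behind Theorem~\ref{thm:nonconvex} with the true parameter replaced by its best rank-$k$ ``reachable'' version, and to carry the spectral residual $\gresid=\tg-\tz\tz^\top$ through the analysis as an additional deterministic bias. Write $\Theta_\star^{(k)}=\ta\one^\top+\one\ta^\top+\tb X+\tz\tz^\top$ for the rank-$k$ target the algorithm can actually represent, so that $\Theta_\star=\Theta_\star^{(k)}+\gresid$ while still $P=\sigma(\Theta_\star)$. First I would check that $(\tz,\ta,\tb)$ lies in the constraint sets $\projz\times\proja\times\projb$ fixed in Theorem~\ref{thm:nonconvex}: since $\gresid\succeq 0$ we have $\|Z_{\star,i*}\|^2=(\tz\tz^\top)_{ii}\le G_{\star,ii}\le M_1/3$ and $J\tz=\tz$, while the bounds on $\ta,\tb$ are inherited from \eqref{eq:para-general}. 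This lets me use nonexpansiveness of the Euclidean projections in Algorithm~\ref{alg:non-convex} toward $(\tz,\ta,\tb)$, which is the entry point of the per-iteration estimate.

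The core is a one-step recursion $e_{t+1}\le(1-\eta\rho/(e^{M_1}\kappa_{\tz}^2))\,e_t+B_t$. Expanding $\fro{\dztp}^2\le\fro{\wt{Z}^{t+1}-\tz\bR^t}^2$ together with the analogous bounds for $\alpha$ and $\beta$, substituting the gradient formulas from Algorithm~\ref{alg:non-convex}, and splitting $A-\sigt=(A-P)+(\sigp-\sigt)$, I would decompose $\sigp-\sigt=\sigma'(\wt\Theta)\odot(\Theta_\star^{(k)}-\Theta^t)+\sigma'(\wt\Theta)\odot\gresid$. The first summand drives the contraction: on the bounded constraint region one has $|\Theta^t_{ij}|\lesssim M_1$, so the negative log-likelihood, viewed in the $\Theta$ coordinate, is strongly convex with curvature of order $e^{-M_1}$ and smooth with constant at most $1/4$; combined with Lemma~\ref{lem:error-metric}, which keeps $e_t$, $\fro{\dgt}^2$ and $\fro{\dallt}^2$ equivalent up to a $\kappa_{\tz}^2$ factor as long as $\mathrm{dist}(Z^t,\tz)\le c\op{\tz}$, this produces the geometric factor. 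The stochastic term $\inner{A-P,\cdot}$ is handled exactly as in Theorem~\ref{thm:nonconvex}: it is controlled on the $Z$ and $\alpha$ updates via $\op{A-P}$ and on the $\beta$ update via $|\inner{A-P,X/\fro{X}}|$, then absorbed into $e_t$ by Young's inequality, leaving a $\zeta_n^2 k$-type residual, and Assumption~\ref{assump:srank} enters here to keep the covariate direction identifiable. The new bias summand $\inner{\sigma'(\wt\Theta)\odot\gresid,\cdot}$ is split by Young's inequality as well, part absorbed into the curvature gain and the remainder contributing the $e^{M_1}\fro{\gresid}^2$ term to the error floor; the SNR hypothesis $\op{\tg}=\op{\tz}^2\ge C_1\kappa_{\tz}^2 e^{M_1}\zeta_n^2\max\{\dots\}$ is what keeps this residual, together with the corresponding noise and smoothness residuals, below the level at which one can still close the argument. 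Throughout I would maintain the inductive hypothesis $\mathrm{dist}(Z^t,\tz)\le c\op{\tz}$, which the SNR hypotheses on $\op{\tg}$ guarantee by making the error floor small relative to $\op{\tz}^2$.

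Given the recursion, unrolling yields $e_t\le 2(1-\eta\rho/(e^{M_1}\kappa_{\tz}^2))^t e_0+\frac{C\kappa_{\tz}^2}{\rho}(e^{2M_1}\zeta_n^2 k+e^{M_1}\fro{\gresid}^2)$, which is the deterministic assertion. For the probabilistic one I would substitute the high-probability bounds $\op{A-P}\lesssim\sqrt{\max\{ne^{-M_2},\log n\}}$ and the matching bound on $|\inner{A-P,X/\fro{X}}|$ — the same events that underlie Theorems~\ref{thm:convex} and \ref{thm:nonconvex} — which turn $e^{2M_1}\zeta_n^2 k$ into $\psi_n^2$ and the SNR condition into its $\sqrt{n}\,e^{M_1-M_2/2}$ form. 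For the final estimates on $\fnorm{\Delta_{G^T}}^2$ and $\fnorm{\Delta_{\Theta^T}}^2$: Assumption~\ref{assump:init} bounds $e_0$ by $c\,e^{-2M_1}\op{\tz}^4/\kappa_{\tz}^4$, so solving for the first $t$ at which the transient term $2(1-\eta\rho/(e^{M_1}\kappa_{\tz}^2))^t e_0$ drops below the error floor gives the threshold $T_0$ in the statement; for $t>T_0$ the transient term is dominated, and the lower bound in Lemma~\ref{lem:error-metric} converts $e_T$ back into $\fnorm{\Delta_{G^T}}^2$ and $\fnorm{\Delta_{\Theta^T}}^2$ at the cost of one more $\kappa_{\tz}^2$ factor.

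The main obstacle is the one-step recursion: one has to control the three coupled projected-gradient updates, the $O(k)$ rotational alignment of $Z$, the restricted curvature of the log-likelihood, the stochastic term $A-P$, and the new $\gresid$ bias simultaneously, all while preserving the inductive ``stay local'' hypothesis $\mathrm{dist}(Z^t,\tz)\le c\op{\tz}$. In particular, the $\gresid$ bias must be split with the right Young's-inequality weight and the resulting residual combined with the SNR lower bound on $\op{\tg}$ so that $\gresid$ enters the error floor precisely at order $e^{M_1}\fro{\gresid}^2$; one also has to verify that the cross terms between $\gresid$ and $A-P$ do not generate a contribution exceeding $\psi_n^2+e^{M_1}\fro{\gresid}^2$.
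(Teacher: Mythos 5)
Your plan follows essentially the same route as the paper's proof: check that $(\tz,\ta,\tb)$ is feasible because $\gresid\succeq 0$ (so $\|Z_{\star,i*}\|^2\le G_{\star,ii}\le M_1/3$), establish a one-step contraction for the weighted error metric by combining projection nonexpansiveness and rotational alignment, restricted strong convexity at scale $e^{-M_1}$, the identifiability lemma under Assumption~\ref{assump:srank}, and Young's-inequality splits that send the noise into a $\zeta_n^2 k$ floor and the $\gresid$ bias into an $e^{M_1}\fnorm{\gresid}^2$ floor, maintain locality of $Z^t$ by induction under the SNR condition, unroll the recursion, and then plug in the concentration bounds and the metric equivalence to obtain the probabilistic statement and the threshold $T_0$. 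The only cosmetic differences are that the paper controls the population-gradient term via co-coercivity of the population risk (producing a $\fnorm{\sigma(\Theta^t)-\sigma(\Theta_\star)}^2$ slack that absorbs the $\gresid$ and squared-gradient contributions) rather than an elementwise mean-value linearization, and your remark that converting $e_T$ back to $\fnorm{\Delta_{\Theta^T}}^2$ costs an extra $\kappa_{\tz}^2$ is unnecessary — that direction of the equivalence is free, the $\kappa_{\tz}^2$ being already present in the error floor.
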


Compared with Theorem \ref{thm:nonconvex}, the upper bound here has an additional term $e^{M_1} \|\bar{G}_k\|_{\mathrm{F}}^2$ that can be understood as the effect of model mis-specification.
Such a term can result from mis-specifying the latent space dimension in Algorithm \ref{alg:non-convex} when the inner-product model holds, or it can arise when the inner-product model is not the true underlying data generating process.
This term is different from its counterpart in Theorem \ref{thm:kernel-convex-upper} which depends on $\bar{G}_k$ through its nuclear norm.
In either case, the foregoing theorem guarantees that Algorithm \ref{alg:non-convex} continues to yield reasonable estimate of $\Theta$ and $G$ as long as $\|\bar{G}_k\|_{\mathrm{F}}^2 = O(e^{M_1-M_2}nk)$, i.e., when the true underlying model can be reasonably approximated by an inner-product model with latent space dimension $k$.


\subsection{Error bounds for initialization}
\label{sec:init-theory}

We conclude this section with error bounds for the two initialization methods in \prettyref{sec:fitting}. 
These bounds justify that the methods yield initial estimates satisfying Assumption \ref{assump:init} under different circumstances.

\paragraph{Error bounds for Algorithm \ref{alg:convex-init}}
The following theorem indicates that Algorithm \ref{alg:convex-init} yields good initial estimates after a small number of iterates as long as 
the latent effect $\tg$ is substantial and the remainder $\bar{G}_k$ is well controlled.

\begin{theorem}
Suppose that Assumption~\ref{assump:srank} holds
and that $\fnorm{\ta\one^\top}, \fnorm{\tb X} \leq C\fnorm{\tg}$ for a numeric constant $C>0$.
Let $\lambda_n$ satisfy $\lambdaP\leq \lambda_n\leq c_0\op{\tg}/(e^{2M_1}\sqrt{k}\kappa_{\tz}^3)$ for sufficiently large constant $C_0$ and sufficiently small constant $c_0$, let $\gamma_n$ satisfy $\gamma_n \leq \delta\lambda_n/\op{\tg}$ for sufficiently small constant $\delta$. 
Choose step size $\eta\leq 2/9$ and set the
constraint sets as
\footnote{When $X=0$, $\projb = \emptyset$ and we replace $M_1/3$ in $\projg$ and $\proja$ with $M_1/2$ in correspondence with the discussion following \eqref{eq:para-nonconvex} and \eqref{eq:para-general}.}
\begin{align*}
\projg & = \{G\in\mathbb{S}_+^{n\times n}, JG = G, \max_{1\leq i, j\leq n}|G_{ij}|\leq M_1/3\}, \\
\proja & = \{\alpha\in\mathbb{R}^n, \|\alpha\|_{\infty}\leq M_1/3\}, ~\projb = \{\beta\in\mathbb{R}, \beta\|X\|_{\infty} \leq M_1/3\}.	
\end{align*}
If the latent vectors contain strong enough signal in the sense that 
\begin{equation}
  \op{\tg}^2\geq C\kappa_{\tz}^6e^{2M_1}\max \Big\{
  \phi_n^2 , ~\|\gresid\|_*^2/k , ~ \fro{\gresid}^2
  \Big\},
  \label{eq:Z-op2}
\end{equation}
for some sufficiently large constant $C$, there exist positive constants $c, C_1$ such that with probability at least $1-n^{-c}$, for any given constant $c_1>0$, $e_T\leq c_1^2e^{-2M_1}\op{\tz}^4/\kappa_{\tz}^4$  as long as $ T \geq T_0$, where
\begin{equation}
  T_0= \log \left(\frac{C_1e^{2M_1}k\kappa_{\tz}^6}{c_1^2}\right)\left(\log\left(\frac{1}{1-\gamma_n\eta}\right)\right)^{-1}.
\end{equation}
\label{thm:init-range-kernel}
\end{theorem}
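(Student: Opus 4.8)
\medskip
\noindent\textbf{Proof plan.}
The plan is to read Algorithm~\ref{alg:convex-init} as projected gradient descent on the objective $f$ of \eqref{eq:f}, which is strongly convex thanks to its ridge term, and to combine three ingredients: (i) linear convergence of the iterates $W^t=(G^t,\alpha^t,\beta^t)$ to the unique constrained minimizer $W^\sharp=(G^\sharp,\alpha^\sharp,\beta^\sharp)$ of $f$; (ii) a statistical bound on the distance between $W^\sharp$ and the truth $W_\star=(\tg,\ta,\tb)$; and (iii) a spectral perturbation step turning a Frobenius bound on $G^T-\tg$ into a bound on $e_T$ through the rank-$k$ factor $Z^T=U_kD_k^{\half}$. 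First I would record the analytic properties of $f$. The negative log-likelihood part is a sum of convex functions of the affine map $\Theta=\alpha\one^\top+\one\alpha^\top+\beta X+G$, hence convex; on the constraint set $\projg\times\proja\times\projb$ every entry obeys $-M_1\le\Theta_{ij}\le-M_2$, so $\sigma'(\Theta_{ij})=P_{ij}(1-P_{ij})$ lies between a fixed multiple of $e^{-M_1}$ and $\tfrac14$. The upper bound, the block rescalings ($\nabla_\alpha$ divided by $n$, $\nabla_\beta$ by $\fro{X}^2$) built into the algorithm, and Assumption~\ref{assump:srank} (which keeps the covariate block from being absorbed into the $G$ block) together make the likelihood part of $f$ smooth with an absolute constant in the rescaled geometry; the ridge term $\tfrac{\gamma_n}{2}\big(\fro{G}^2+2\fro{\alpha\one^\top}^2+\fro{\beta X}^2\big)$ then makes $f$ strongly convex with modulus of order $\gamma_n$ there. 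Since $\projg,\proja,\projb$ are convex and $\eta\le2/9$ is below $1/L$, non-expansiveness of the projections yields the contraction $\vnorm{W^{t+1}-W^\sharp}\le(1-\gamma_n\eta)\vnorm{W^t-W^\sharp}$, with $\vnorm{(G,\alpha,\beta)}^2:=\fro{G}^2+2\fro{\alpha\one^\top}^2+\fro{\beta X}^2$. The lower bound $\sigma'\gtrsim e^{-M_1}$ supplies the restricted curvature used next.

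Next I would bound $\Delta_\sharp^2:=\fro{G^\sharp-\tg}^2+\fro{(\alpha^\sharp-\ta)\one^\top}^2+\fro{(\beta^\sharp-\tb)X}^2$ essentially as in the proof of Theorem~\ref{thm:kernel-convex-upper}: since $W_\star$ is feasible, first-order optimality of $W^\sharp$ and convexity give a basic inequality; splitting $\langle\sigma(\Theta_\star)-A,\,\Delta_\Theta\rangle$ into its $G$-, $\alpha$- and $\beta$-parts and bounding these by $\op{A-P}\nuc{\Delta_G}$, an $\ell_1$ term, and $|\langle A-P,X/\fro{X}\rangle|\,|\beta^\sharp-\tb|\fro{X}$ respectively (all controlled with probability $1-n^{-c}$ once $\lambda_n\ge\lambdaP$), and using the $e^{-M_1}$-order restricted curvature, yields
\[
\Delta_\sharp^2\ \lesssim\ e^{2M_1}\lambda_n^2k+e^{M_1}\lambda_n\nuc{\gresid},
\]
the ridge contribution being negligible because $\gamma_n\le\delta\lambda_n/\op{\tg}$. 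Combined with the hypothesis $\fro{\ta\one^\top},\fro{\tb X}\le C\fro{\tg}$ and the decomposition $\tg=\tz\tz^\top+\gresid$ (with $\fro{\tz\tz^\top}^2\le k\op{\tz}^4$), this also gives $\vnorm{W^\sharp}\le\vnorm{W_\star}+\Delta_\sharp\lesssim\sqrt{k}\,\op{\tz}^2$, where $\lambda_n\le c_0\op{\tg}/(e^{2M_1}\sqrt{k}\kappa_{\tz}^3)$ and \eqref{eq:Z-op2} keep the noise and mis-specification pieces dominated by $\op{\tz}^4$. Feeding this into the contraction with $W^0=0$, for $T\ge T_0$ we get $\vnorm{W^T-W^\sharp}\le(1-\gamma_n\eta)^T\vnorm{W^\sharp}$, and the displayed $T_0$ is chosen exactly so that (with $C_1$ large) $\kappa_{\tz}^2\vnorm{W^T-W^\sharp}^2$ is below a small constant times $c_1^2e^{-2M_1}\op{\tz}^4/\kappa_{\tz}^4$.

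Finally I would pass from $G^T$ to $e_T$. Write $G^T=\tg+E$, so $\fro{E}\le\vnorm{W^T-W^\sharp}+\fro{G^\sharp-\tg}$ is small. Since $\tz\tz^\top$ is the best rank-$k$ approximation of $\tg$ and $\gresid=\tg-\tz\tz^\top$, the elementary estimate $\fro{(\tg+E)-\operatorname{bestrk}_k(\tg+E)}=\min_{\operatorname{rank}(B)\le k}\fro{\tg+E-B}\le\fro{\gresid}+\fro{E}$ gives
\[
\fro{Z^T(Z^T)^\top-\tz\tz^\top}=\fro{\operatorname{bestrk}_k(G^T)-\operatorname{bestrk}_k(\tg)}\ \le\ 2\fro{E}+2\fro{\gresid}.
\]
By \eqref{eq:Z-op2}, $\op{E}$ lies strictly inside the $k$-th eigengap of $\tg$, which is $\sigma_k(\tz)^2-\op{\gresid}\gtrsim\op{\tz}^2/\kappa_{\tz}^2$, so $Z^T(Z^T)^\top$ falls in the regime where the Lipschitz estimate for the positive-semidefinite square root (the lower inequality in the proof of Lemma~\ref{lem:error-metric}) applies, whence $\operatorname{dist}(Z^T,\tz)^2\lesssim\fro{Z^T(Z^T)^\top-\tz\tz^\top}^2/\sigma_k(\tz)^2$. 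Therefore
\[
e_T=\op{\tz}^2\operatorname{dist}(Z^T,\tz)^2+2\fro{(\alpha^T-\ta)\one^\top}^2+\fro{(\beta^T-\tb)X}^2\ \lesssim\ \kappa_{\tz}^2\big(\vnorm{W^T-W^\sharp}^2+\Delta_\sharp^2+\fro{\gresid}^2\big),
\]
and each term on the right is at most a small constant times $c_1^2e^{-2M_1}\op{\tz}^4/\kappa_{\tz}^4$: the first by the choice of $T_0$ with $C_1$ large, the second by the upper bound on $\lambda_n$, and the third by the $\fro{\gresid}^2$ branch of \eqref{eq:Z-op2} with $C$ large; the $\nuc{\gresid}^2/k$ and $\phi_n^2$ branches absorb the remaining contributions of $\Delta_\sharp^2$ and of the requirement $\lambdaP\le c_0\op{\tg}/(e^{2M_1}\sqrt k\kappa_{\tz}^3)$. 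This yields the asserted bound, and the high-probability statement follows from the single event of probability $1-n^{-c}$ on which $\op{A-P}$ and $|\langle A-P,X/\fro{X}\rangle|/\sqrt{k}$ are bounded as above.

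I expect the crux to be the interface between the statistical bound on $W^\sharp$ and the spectral step: one must pin down $\Delta_\sharp^2$ with the correct $e^{M_1}$ and $\kappa_{\tz}$ dependence, uniformly over $\paramK$, precisely so that $\op{G^T-\tg}$ stays strictly inside the $k$-th singular-value gap of $\tg$ and so that the $\kappa_{\tz}^2$ loss incurred by the square-root step is still absorbed by the target accuracy $c_1^2e^{-2M_1}\op{\tz}^4/\kappa_{\tz}^4$. This is exactly where Assumption~\ref{assump:srank} and the signal-strength condition \eqref{eq:Z-op2} enter essentially, and where the $\kappa_{\tz}^6$ and $e^{2M_1}$ factors in \eqref{eq:Z-op2}, together with the logarithmic $T_0$, originate.
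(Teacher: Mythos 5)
Your proposal follows the paper's own proof essentially step for step: read Algorithm~\ref{alg:convex-init} as projected gradient descent on the $\gamma_n$-strongly convex, $O(1)$-smooth regularized objective \eqref{eq:f} (the paper's Lemma~\ref{lem:init-convex-smooth} and Lemma~\ref{lem:init-descent}) to get linear convergence to the constrained regularized minimizer, bound that minimizer's distance to the truth by the cone/basic-inequality argument with the ridge gradient absorbed via $\gamma_n\leq\delta\lambda_n/\op{\tg}$ and $\fnorm{\ta\one^\top},\fnorm{\tb X}\leq C\fnorm{\tg}$ (Theorem~\ref{thm:init}), transfer to $e_T$ through the best rank-$k$ factorization at a $\kappa_{\tz}^2$ cost plus the $\fnorm{\gresid}^2$ term, and verify the deterministic conditions on the single high-probability event of Lemma~\ref{lem:prob}. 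The only cosmetic deviation is the spectral step: you invoke an eigengap of $\tg$ to justify the square-root/distance estimate, whereas the paper uses Lemma~\ref{lemma-tu1}, which gives $\mathrm{dist}(Z^T,\tz)^2\leq \fnorm{Z^T(Z^T)^\top-\tz\tz^\top}^2/\bigl(2(\sqrt{2}-1)\sigma_k^2(\tz)\bigr)$ unconditionally (and it is this inequality, used in the first bound of Lemma~\ref{lem:error-metric}, not the second one requiring $\mathrm{dist}(Z^t,\tz)\leq c\op{\tz}$, that is needed), so the gap condition is superfluous rather than wrong.
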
 

We note that the theorem holds for both inner-product models and more general models satisfying condition \eqref{eq:h-constraint}.
In addition, it gives the ranges of $\lambda_n$ and $\gamma_n$ for implementing Algorithm \ref{alg:convex-init}. Note that the choices $\lambda_n$ and $\gamma_n$ affect the number of iterations needed.
To go one step further,
the following corollary characterizes the ideal choices of $\gamma_n$ and $\lambda_n$ in Algorithm \ref{alg:convex-init}.
It is worth noting that the choice of $\lambda_n$ here does not coincide with 
that in Theorem \ref{thm:convex} and Theorem \ref{thm:kernel-convex-upper}.
So this is slightly different from the conventional wisdom that to initialize the non-convex approach, it would suffice to simply run the convex optimization algorithm for a small number of steps.
Interestingly, the corollary shows that when $M_1$, $k$ and $\kappa_{\tz}$ are all upper bounded by universal constants, for appropriate choices of $\gamma_n$ and $\lambda_n$ in Algorithm \ref{alg:convex-init}, the number of iterations needed does not depend on the graph size $n$.
\begin{corollary}
   Specifically in Theorem~\ref{thm:init-range-kernel}, if we choose $\gamma_n =  c_0/(e^{2M_1}\sqrt{k}\kappa_{\tz}^3)$ for some sufficiently small constant $c_0$, and $\lambda_n = C_0\gamma_n\op{\tg}$ for some sufficiently large constant $C_0$, there exist positive constants $c, C_1$ such that with probability at least $1-n^{-c}$, for any given constant $c_1>0$, $e_T\leq c_1^2e^{-2M_1}\op{\tz}^4/\kappa_{\tz}^4$  as long as $ T \geq T_0$, where
\begin{equation}
  T_0= \log \left(\frac{C_1e^{2M_1}k\kappa_{\tz}^6}{c_1^2}\right)\left(\log\left(\frac{1}{1-\gamma_n \eta}\right)\right)^{-1}.
\end{equation}
\label{cor:init-algo-prob}
\end{corollary}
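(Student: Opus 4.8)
The plan is to obtain the stated claim as the special case of \prettyref{thm:init-range-kernel} in which one substitutes the concrete choices $\gamma_n = c_0/(e^{2M_1}\sqrt{k}\kappa_{\tz}^3)$ and $\lambda_n = C_0\gamma_n\op{\tg}$. Every structural hypothesis of \prettyref{thm:init-range-kernel}---Assumption~\ref{assump:srank}, the norm comparison $\fnorm{\ta\one^\top},\fnorm{\tb X}\le C\fnorm{\tg}$, the constraint sets and the step size $\eta\le 2/9$ prescribed there, and the signal condition \eqref{eq:Z-op2}---is inherited verbatim, so the whole task reduces to checking that this particular pair $(\gamma_n,\lambda_n)$ falls inside the admissible ranges $\lambdaP\le\lambda_n\le c_0'\,\op{\tg}/(e^{2M_1}\sqrt{k}\kappa_{\tz}^3)$ (we write $c_0'$ for the small constant called $c_0$ in \prettyref{thm:init-range-kernel}, to avoid clashing with the $c_0$ appearing in the present statement) and $\gamma_n\le\delta\lambda_n/\op{\tg}$ demanded in \prettyref{thm:init-range-kernel}. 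Once these two inclusions are in hand, the assertion $e_T\le c_1^2 e^{-2M_1}\op{\tz}^4/\kappa_{\tz}^4$ for all $T\ge T_0$, together with the displayed formula for $T_0$, is literally the conclusion of \prettyref{thm:init-range-kernel}, because $T_0$ there is already written as a function of $\gamma_n$.

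The verification amounts to three short bookkeeping steps. First, by construction $\gamma_n = (1/C_0)\,\lambda_n/\op{\tg}$, so $\gamma_n\le\delta\lambda_n/\op{\tg}$ holds with $\delta = 1/C_0$, which is as small as required once $C_0$ is taken large. Second, $\lambda_n = C_0 c_0\,\op{\tg}/(e^{2M_1}\sqrt{k}\kappa_{\tz}^3)$, so the upper bound $\lambda_n\le c_0'\,\op{\tg}/(e^{2M_1}\sqrt{k}\kappa_{\tz}^3)$ holds as soon as $c_0$ is chosen small enough that $C_0 c_0\le c_0'$, which is legitimate since $C_0$ is fixed first. Third---the only point deserving attention---one checks the lower bound $\lambda_n\ge\lambdaP$. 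Comparing \eqref{eq:psi2} with the definition of $\lambdaP$ gives $\psi_n\asymp e^{M_1}\sqrt{k}\,\lambdaP$, and the noise-level term $\phi_n$ in \eqref{eq:Z-op2} is of order at least $\psi_n$; hence \eqref{eq:Z-op2} forces $\op{\tg}\ge\sqrt{C}\,\kappa_{\tz}^3 e^{M_1}\phi_n\ge c\sqrt{C}\,\kappa_{\tz}^3 e^{2M_1}\sqrt{k}\,\lambdaP$ for an absolute constant $c>0$. Multiplying through by $C_0 c_0/(e^{2M_1}\sqrt{k}\kappa_{\tz}^3)$ yields $\lambda_n\ge c\,C_0 c_0\sqrt{C}\,\lambdaP$, which is at least $\lambdaP$ once the constant $C$ in \eqref{eq:Z-op2} is taken large enough relative to the already-frozen product $C_0 c_0$. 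Finally $\gamma_n\eta\le(2/9)c_0<1$, so $\log\big(1/(1-\gamma_n\eta)\big)>0$ and $T_0$ is well defined.

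The main obstacle is thus not analytic but purely the order in which the constants are fixed: first take $C_0$ large and $c_0$ small to handle the $\gamma_n$--$\lambda_n$ coupling and the upper bound on $\lambda_n$, and only afterwards take the constant $C$ in the signal condition \eqref{eq:Z-op2} large enough (depending on the now-frozen $C_0,c_0$) to secure $\lambda_n\ge\lambdaP$; no circular dependence arises. With both ranges verified, \prettyref{thm:init-range-kernel} applies directly. As the statement is meant to highlight, when $M_1$, $k$ and $\kappa_{\tz}$ are all bounded by absolute constants the chosen $\gamma_n$ is bounded below by a constant, hence $\log\big(1/(1-\gamma_n\eta)\big)\asymp 1$ and the iteration count $T_0\asymp\log(1/c_1^2)$ does not grow with the network size $n$.
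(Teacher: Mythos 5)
Your proposal is correct, but it takes a genuinely different route from the paper. You treat the corollary as a pure specialization of Theorem~\ref{thm:init-range-kernel}: you check $\gamma_n\le\delta\lambda_n/\op{\tg}$ with $\delta=1/C_0$, check that $\lambda_n=C_0c_0\op{\tg}/(e^{2M_1}\sqrt{k}\kappa_{\tz}^3)$ stays below that theorem's upper threshold once $c_0$ is small relative to the already-fixed $C_0$, and deduce the lower bound $\lambda_n\ge\lambdaP$ from the signal condition \eqref{eq:Z-op2} via $\phi_n\gtrsim\psi_n\asymp e^{M_1}\sqrt{k}\sqrt{\max\{ne^{-M_2},\log n\}}$, choosing the constant $C$ in \eqref{eq:Z-op2} last; this ordering of constants is non-circular, and since Theorem~\ref{thm:init-range-kernel} already states $T_0$ in the simplified form, nothing further is needed. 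The paper instead bypasses Theorem~\ref{thm:init-range-kernel} and re-verifies the hypotheses of the underlying deterministic Theorem~\ref{thm:init-algo}: it invokes Lemma~\ref{lem:prob} to bound $\op{\bA-\bP}$ and $\inner{\bA-\bP,\bX}/(\sqrt{k}\fnorm{X})$ with high probability, uses Lemma~\ref{lem:Z} together with $\fnorm{\ta\one^\top},\fnorm{\tb X}\le C\fnorm{\tg}$ to get $\fnorm{\ta\one^\top}+\fnorm{X\tb}\le C_3\sqrt{k}\op{\tg}$, verifies \eqref{eq:init-lambda} and \eqref{eq:Z-op1} for $\lambda_n=C_0\gamma_n\op{\tg}$, and then bounds $\|\tx\|_{\mathcal{D}}^2/\fnorm{\tg}^2$ by a constant so that the appendix expression for $T_0$ collapses to the displayed one. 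Your route is shorter because Theorem~\ref{thm:init-range-kernel} has already absorbed those verifications and the $T_0$ simplification; its only soft spot is the reliance on reading the quantity $\phi_n$ in \eqref{eq:Z-op2} as being at least of order $\psi_n$, which is indeed the paper's intent and is exactly the step its own proof uses when it deduces $C_2\sqrt{\max\{ne^{-M_2},\log n\}}\le C_2\gamma_n\op{\tg}$ from \eqref{eq:Z-op2}. Both arguments buy the same conclusion; the paper's version is self-contained at the level of the deterministic theorem, while yours makes transparent that the corollary is nothing more than an admissible-parameter check for Theorem~\ref{thm:init-range-kernel}.
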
 

\begin{remark}
Similar to computing $\mathcal{P}_{\projz}(\cdot)$ in Algorithm \ref{alg:non-convex}, $\mathcal{P}_{\projg}(\cdot)$ could also be implemented by Dykstra's projection algorithm since $\projg$ is the intersection of two convex sets. The boundedness constraint $\max_{i, j}|\bG_{ij}|\leq M/3$ is only for the purpose of proof.  In practice, if ignoring this constraint, $\bG^{t+1}$ will have closed form solution $\bG^{t+1}=\mathcal{P}_{\mathbb{S}_+^n}(\bJ\widetilde{\bG}^{t+1}\bJ)$  where $\mathcal{P}_{\mathbb{S}_+^n}(\cdot)$ can be computed by singular value thresholding. 
\end{remark}


\paragraph{Error bounds for Algorithm \ref{alg:SVT-init}}
 
The following result justifies the singular value thresholding approach to initialization for inner-product models with no edge covariate.
\begin{proposition}
  If no covariates are included in the latent space model and $\fro{\tg}\geq c_0 n$ for some numeric constant $c_0>0$, then there exists constant $c_1$ such that with probability at least $1 - n^{c_1}$, for any $n \geq C(k, M_1, \kappa_{\tz})$ where $C(k, M_1, \kappa_{\tz})$ is a constant depending on $k, M_1$ and $\kappa_{\tz}$, the outputs of Algorithm~\ref{alg:SVT-init} with $\tau\geq 1.1\sqrt{n}$ satisfies the initialization condition in Assumption~\ref{assump:init}.
  \label{prop:svt-init}
\end{proposition}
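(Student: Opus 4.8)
The plan is to push one probabilistic estimate — control of $\op{A-P}$ — through the deterministic steps of Algorithm~\ref{alg:SVT-init}, bound the resulting error metric $e_0$, and compare it with the right-hand side of the condition in \prettyref{assump:init}, using that $\fro{\tg}\ge c_0 n$ forces $\op{\tz}^2=\op{\tg}\ge \fro{\tg}/\sqrt{k}\ge c_0 n/\sqrt{k}$, so that the target $c\,e^{-2M_1}\op{\tz}^4/\kappa_{\tz}^4$ is $\Omega_{k,M_1,\kappa_{\tz}}(n^2)$; it will therefore suffice to show $e_0=o_{k,M_1,\kappa_{\tz}}(n^2)$ on a high-probability event.

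\textbf{The random step and the USVT bound.} First I would show that, for $n$ large, $\op{A-P}\le 1.1\sqrt n$ on an event of probability at least $1-n^{-c_1}$. In the no-covariate parameter space the constraints $\max_i\|Z_{i*}\|^2,\|\balpha\|_\infty\le M_1/2$ and $\Theta_{ij}\le -M_2\le 0$ force $|\Theta_{ij}|$ to be at most a fixed multiple of $M_1$, hence $P_{ij}=\sigma(\Theta_{ij})\in(0,\tfrac12]$ has variance at most $1/4$; the bound then follows from the Bandeira--van Handel spectral-norm estimate for symmetric matrices with independent, $[-1,1]$-valued, mean-zero entries together with concentration of $\op{A-P}$ around its mean. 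On this event the threshold exceeds $\op{A-P}$, so the standard analysis of hard singular-value thresholding (as in the proof of Theorem~2.7 of \cite{chatterjee2015matrix}) gives, for $\tau$ of order $\sqrt n$, $\fro{\wt P-P}^2\le C\tau\|P\|_*\le C'\sqrt n\,\|P\|_*$.

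\textbf{The nuclear-norm bound (the crux).} Next I would prove $\|P\|_*=o(n^{3/2})$, in fact $\|P\|_*=O_{k,M_1}(n\,\mathrm{polylog}\,n)$. Since $\Theta=\balpha\one^\top+\one\balpha^\top+\tg$ has rank at most $k+2$ and all entries in some $[-R,R]$ with $R\asymp M_1$, pick a degree-$d$ polynomial $q_d$ with $\sup_{[-R,R]}|\sigma-q_d|\le Ce^{-cd/R}$; applying $q_d$ entrywise to $\Theta$ yields a combination of Hadamard powers $\Theta^{\circ j}$, $0\le j\le d$, of ranks $\binom{k+1+j}{j}$, so $q_d(\Theta)$ has rank polynomial in $d$ of degree $k+2$. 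Taking $d\asymp M_1\log n$ makes $\fro{P-q_d(\Theta)}\le n\,Ce^{-cd/R}=o(n)$ while $\mathrm{rank}(q_d(\Theta))=O_{k,M_1}(\mathrm{polylog}\,n)$, whence $\|P\|_*\le\sqrt{\mathrm{rank}(q_d(\Theta))}\,\fro{q_d(\Theta)}+\sqrt n\,\fro{P-q_d(\Theta)}=O_{k,M_1}(n\,\mathrm{polylog}\,n)$. Combining with the previous step, $\fro{\wt P-P}^2=o_{k,M_1}(n^2)$; since elementwise clipping to $[\tfrac12 e^{-M_1},\tfrac12]$ is a coordinatewise contraction and $P$ lies in this box up to an $O(M_1^2 n)$ squared-Frobenius diagonal correction, also $\fro{\wh P-P}^2=o_{k,M_1}(n^2)$, and $\logit$ being Lipschitz with constant $O(e^{M_1})$ on the box gives $\fro{\hbTheta-\tall}^2=o_{k,M_1}(n^2)$ with $\tall$ as in \eqref{eq:Theta}.

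\textbf{Deterministic propagation and conclusion.} Because $\tg$ is centered, $J\tg=\tg$, it is orthogonal to $\mathcal L=\{u\one^\top+\one u^\top:u\in\mathbb R^n\}$, so the least-squares step outputs $\alpha^0\one^\top+\one(\alpha^0)^\top=\mathcal P_{\mathcal L}(\hbTheta)$, hence $2n\|\alpha^0-\ta\|^2\le\fro{(\alpha^0-\ta)\one^\top+\one(\alpha^0-\ta)^\top}^2=\fro{\mathcal P_{\mathcal L}(\hbTheta)-\mathcal P_{\mathcal L}(\tall)}^2\le\fro{\hbTheta-\tall}^2$, i.e. $\fro{\Delta_{\alpha^0}\one^\top}^2=n\|\alpha^0-\ta\|^2=o(n^2)$. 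Moreover $R=J\hbTheta J$ and $\tg=J\tall J$, and since $\mathcal P_{\mathbb S_+^n}$ is $1$-Lipschitz in Frobenius norm with $\tg\in\mathbb S_+^n$, $\fro{\wh G-\tg}\le\fro{J(\hbTheta-\tall)J}\le\fro{\hbTheta-\tall}$; as $\mathrm{rank}(\tg)\le k$, the rank-$k$ truncation $Z^0(Z^0)^\top$ of $\wh G$ obeys $\fro{Z^0(Z^0)^\top-\tg}\le 2\fro{\wh G-\tg}$, so $\fro{\Delta_{G^0}}^2=o(n^2)$. Feeding these into the Procrustes/perturbation estimate behind the first inequality of \prettyref{lem:error-metric} (which in the $X=0$ case reads $e_0\le\frac{\kappa_{\tz}^2}{2(\sqrt2-1)}\fro{\Delta_{G^0}}^2+2\fro{\Delta_{\alpha^0}\one^\top}^2$) yields $e_0=o_{k,M_1,\kappa_{\tz}}(n^2)$, which is below $c\,e^{-2M_1}\op{\tz}^4/\kappa_{\tz}^4\ge c\,e^{-2M_1}c_0^2 n^2/(k\kappa_{\tz}^4)$ once $n\ge C(k,M_1,\kappa_{\tz})$; this is \prettyref{assump:init}. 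The main obstacle is the nuclear-norm estimate $\|P\|_*=o(n^{3/2})$ via polynomial approximation of the sigmoid, and the bookkeeping ensuring all implied constants depend on $k$ and $M_1$ only; everything else is routine concentration and $1$-Lipschitz manipulation.
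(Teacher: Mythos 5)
Your proposal is correct, and it reaches the paper's conclusion by a genuinely different route at the one non-trivial step. The paper's proof simply invokes Theorem 2.7 of \cite{chatterjee2015matrix} to get $\fro{\wh P-P}^2\le C(k,M_1,\kappa_{\tz})\,n^{2-\frac{1}{k+3}}$, then does exactly the deterministic propagation you describe (closed-form least squares for $\alpha^0$, comparison of $\wh G$ with $J\wh\Theta J$ via the optimality of the PSD projection, rank-$k$ truncation, and Lemma \ref{lemma-tu1}), and finally compares $e_0$ with the threshold in Assumption \ref{assump:init} using $\op{\tg}^2\gtrsim\fro{\tg}^2/k\gtrsim n^2/k$. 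You instead re-derive the USVT step from scratch: the spectral-norm bound $\op{A-P}\le 1.1\sqrt n$ (the paper's Lemma \ref{lem:concen-A} plus the diagonal correction would also do), the deterministic hard-thresholding bound $\fro{\wt P-P}^2\lesssim\tau\|P\|_*$, and — the real novelty — a nuclear-norm bound $\|P\|_*=O_{k,M_1}(n\,\mathrm{polylog}\,n)$ obtained by polynomially approximating the sigmoid on $[-O(M_1),O(M_1)]$ and exploiting that Hadamard powers of the rank-$(k+2)$ matrix $\Theta_\star$ have rank polynomial in the degree. This buys a sharper error, $\fro{\wh P-P}^2=O(n^{3/2}\mathrm{polylog}\,n)$ rather than $n^{2-1/(k+3)}$ (so the dimension-dependent exponent disappears and the required $n\ge C(k,M_1,\kappa_{\tz})$ is milder), at the cost of carrying out an approximation-theoretic argument the paper outsources; your deterministic steps are otherwise the same as the paper's, with marginally better constants (nonexpansiveness of the PSD projection gives a factor $1$ where the paper's triangle inequality gives $2$). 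Two small points to tighten: both you and the paper implicitly need $\tau\asymp\sqrt n$ (a huge $\tau$ would inflate $\tau\|P\|_*$), so state that the bound is applied at $\tau=1.1\sqrt n$ or note monotonicity of the truncation error in $\tau$; and when invoking the hard-thresholding lemma, record that $1.1\sqrt n\ge(1+\delta)\op{A-P}$ on your event, since the constant in $\fro{\wt P-P}^2\lesssim\tau\|P\|_*$ depends on that margin.
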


Although we do not have further theoretical results, Algorithm \ref{alg:SVT-init} worked well on all the simulated data examples reported in \prettyref{sec:simu}.

\section{Simulation studies}
\label{sec:simu}

In this section, we present results of simulation studies on three different aspects of the proposed methods: 
(1) scaling of estimation errors with network sizes, 
(2) impact of initialization on Algorithm \ref{alg:non-convex}, and
(3) performance of the methods on general models.

\paragraph{Estimation errors}
We first investigate how estimation errors scale with network size. 
To this end, we fix $\tb = -\sqrt{2}$ and for any $(n, k) \in \left\{500, 1000, 2000, 4000, 8000\right\}\times \left\{2, 4, 8\right\}$, we set the other model parameters randomly following these steps:
\begin{enumerate}
	\setlength\itemsep{0em}
	\item Generate the degree heterogeneity parameters: $(\ta)_i = -\alpha_i / \sum_{j=1}^n\alpha_j$ for $1\leq i\leq n$, where $\alpha_1, \cdots, \alpha_n \stackrel{iid}{\sim}  U[1, 3]$.
	\item Generate $\mu_1,~ \mu_2\in\mathbb{R}^k$ with coordinates iid following $U[-1, 1]$ as two latent vector centers;
	\item Generate latent vectors: for $i=1,\dots, k$, let $(z_1)_i, \cdots, (z_{\lfloor n/2\rfloor})_i \stackrel{iid}{\sim} (\mu_1)_i + {N}_{[-2, 2]}(0, 1)$ and
	$(z_{\lfloor n/2\rfloor+1})_i, \cdots, (z_{n})_i \stackrel{iid}{\sim} (\mu_2)_i + {N}_{[-2, 2]}(0, 1)$ where ${N}_{[-2, 2]}(0, 1)$ is the standard normal distribution restricted onto the interval $[-2,2]$,
	then set $\tz = JZ$ where $Z=[z_1, \cdots, z_n]^\top$ and $J$ is as defined in \eqref{eq:J}. Finally, we normalize $\tz$ such that $\fro{\tg} = n$;
	\item Generate the covariate matrix: $X = n \wt{X}/\|\wt{X}\|_{\text{F}}$ where $\wt{X}_{ij} \stackrel{iid}{\sim} \min\left\{ |{N}(1, 1)|, 2\right\}$.
\end{enumerate}
For each generated model, we further generated $30$ independent copies of the adjacency matrix for each model configuration. Unless otherwise specified, for all experiments in this section, with given $(n, k)$, the model parameters are set randomly following the above four steps and algorithms are run on $30$ independent copies of the adjacency matrix.

The results of the estimation error for varying $(n, k)$ are summarized in the $\log$-$\log$ boxplots in Figure~\ref{fig:simu-err}, where ``Relative Error - $Z$'' is defined as $\|\wh{Z}\wh{Z}^\top - \tz\tz^\top\|_{\text{F}}^2/\|\tz\tz^\top\|_{\text{F}}^2$ and ``Relative Error - $\Theta$'' is defined as $\|\wh{\bTheta}- \tall\|_{\text{F}}^2/\|\tall\|_{\text{F}}^2$. 
From the boxplots, for each fixed latent space dimension $k$, the relative estimation errors for both $Z_\star$ and $\Theta_\star$ scale at the order of $1/\sqrt{n}$.
This agrees well with the theoretical results in \prettyref{sec:fitting}. For different latent space dimension $k$, the error curve (in log-log scale) with respect to network size $n$ only differs in the intercept. 

\begin{figure}[!h]
\centering
\includegraphics[width=0.9\textwidth]{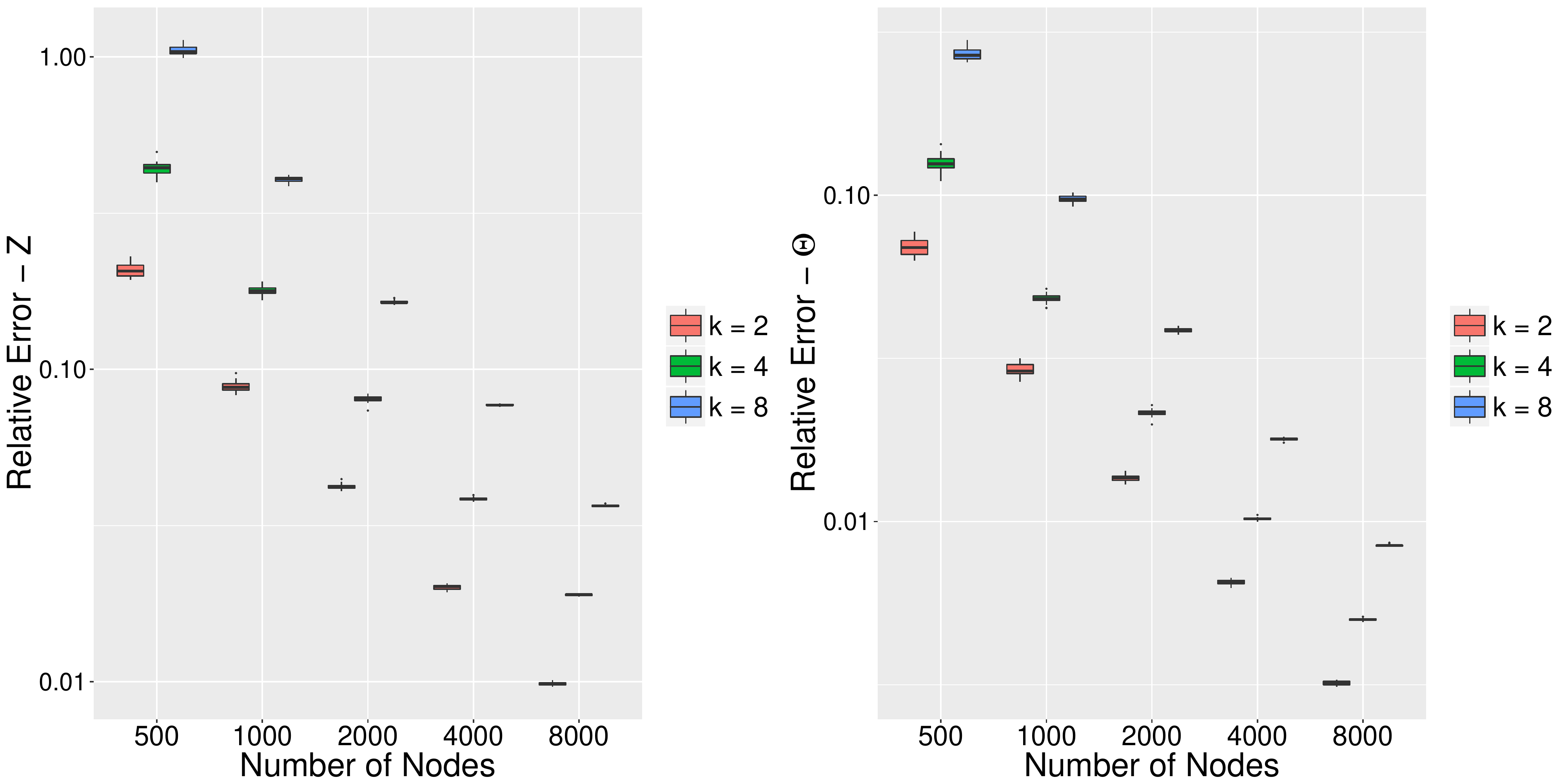}
\caption{$\log$-$\log$ boxplot for relative estimation errors with varying network size and latent space dimension. }
\label{fig:simu-err}
\end{figure}

%

\paragraph{Impact of initialization on Algorithm \ref{alg:non-convex}}
We now turn to the comparison of three different initialization methods for Algorithm \ref{alg:non-convex}: 
the convex method (Algorithm~\ref{alg:convex-init}), singular value thresholding (Algorithm \ref{alg:SVT-init}), and random initialization.
To this end, we fixed $n = 4000, k =4$. 
In Algorithm~\ref{alg:convex-init}, we choose $T = 10$ and $\lambda_n=2\sqrt{n\wh{p}}$ where $\wh{p}=\sum_{ij}A_{ij}/n^2$. In Algorithm~\ref{alg:SVT-init}, we set $M_1 = 4$ and threshold $\tau = \sqrt{n\wh{p}}$. 
The relative estimation errors are summarized as boxplots in Figure~\ref{fig:simu-init}. Clearly, the non-convex algorithm is very robust to the initial estimates. Similar phenomenon is observed in real data analysis where different initializations yield nearly the same clustering accuracy.  
\begin{figure}[!h]
\centering
\includegraphics[width=0.75\textwidth]{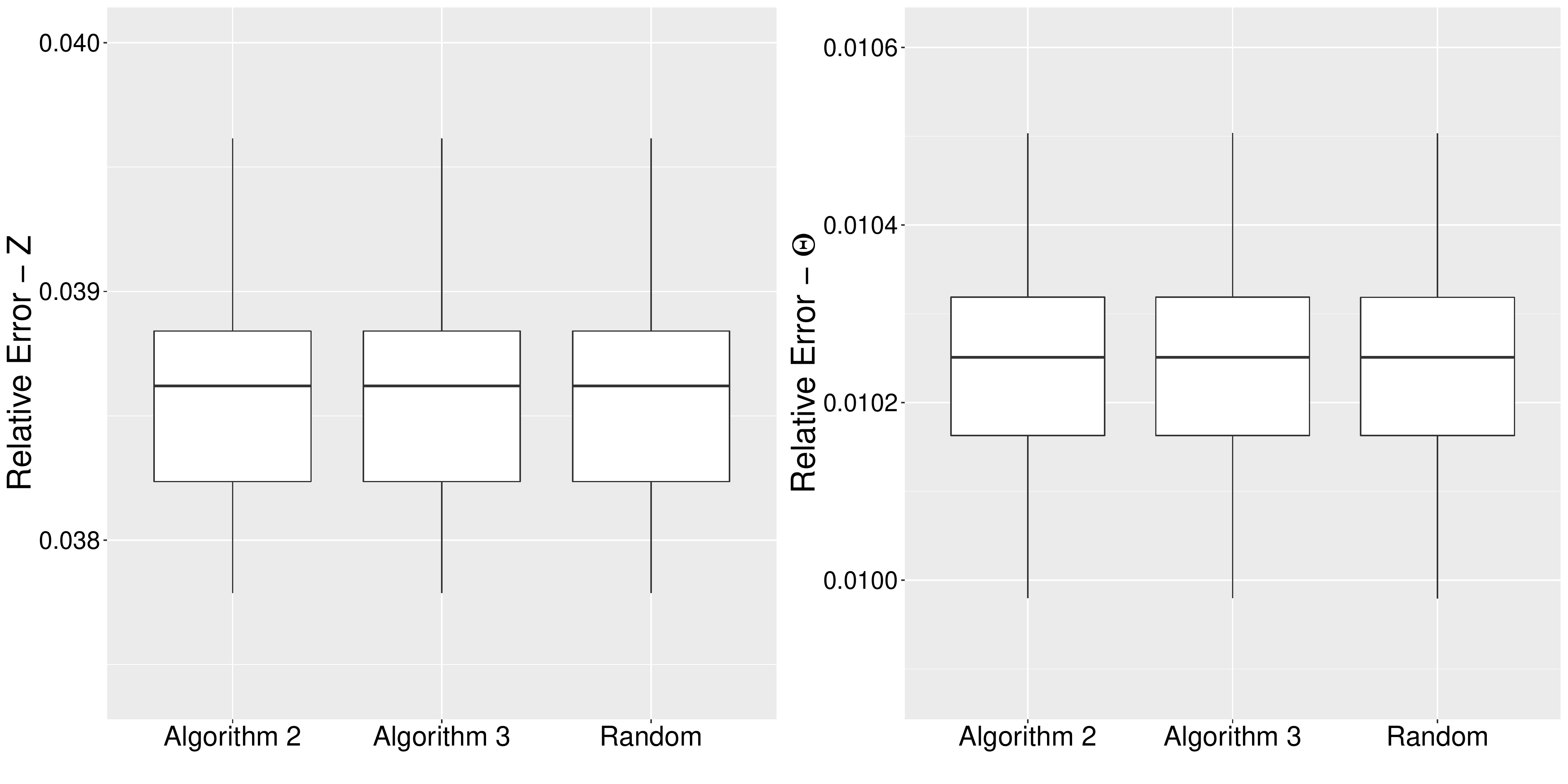}
\caption{Boxplot for relative estimation error with different initialization methods.}
\label{fig:simu-init}
\end{figure}

\paragraph{Performance on general models}
Finally, to investigate the performance of the proposed method under the general model~\eqref{eq:kernel-model}, we try two frequently used kernel functions, distance kernel $\ell_d(z_i, z_j) = -\|z_i - z_j\|$ and Gaussian kernel $\ell_g(z_i, z_j) = 4\exp(-\|z_i - z_j\|^2/9)$. In this part, we use $d$ to represent the dimension of the latent vectors (that is, $z_1, \cdots, z_n\in\mathbb{R}^d$) and $k$ to represent the fitting dimension in Algorithm~\ref{alg:non-convex}. We fix $d=4$ and network size $n = 4000$. Model parameters are set randomly in the same manner as the four step procedure except that the third step is changed to: 
\begin{quote}
	Generate latent vectors: for $i=1,\dots, d$, let $(z_1)_i, \cdots, (z_{\lfloor n/2\rfloor})_i \stackrel{iid}{\sim} (\mu_1)_i + {N}_{[-2, 2]}(0, 1)$ and
	$(z_{\lfloor n/2\rfloor+1})_i, \cdots, (z_{n})_i \stackrel{iid}{\sim} (\mu_2)_i + {N}_{[-2, 2]}(0, 1)$ where ${N}_{[-2, 2]}(0, 1)$ is the standard normal distribution restricted onto the interval $[-2,2]$. Finally for given kernel function $\ell(\cdot, \cdot)$, set $\tg = JLJ$ where $L_{ij} = \ell(z_i, z_j)$. 
\end{quote}
We run both the convex approach and Algorithm~\ref{alg:non-convex} with different fitting dimensions. The boxplot for the relative estimation errors and the singular value decay of the kernel matrix under distance kernel and Gaussian kernel are summarized in Figure~\ref{fig:simu-mis-distance} and Figure~\ref{fig:simu-mis-guassian} respectively. 

As we can see, under the generalized model, the non-convex algorithm exhibits bias-variance tradeoff with respect to the fitting dimension, which depends on the singular value decay of the kernel matrix. The advantage of the convex method is the adaptivity to the unknown kernel function.

\begin{figure}[!h]
\centering
\includegraphics[width=0.75\textwidth]{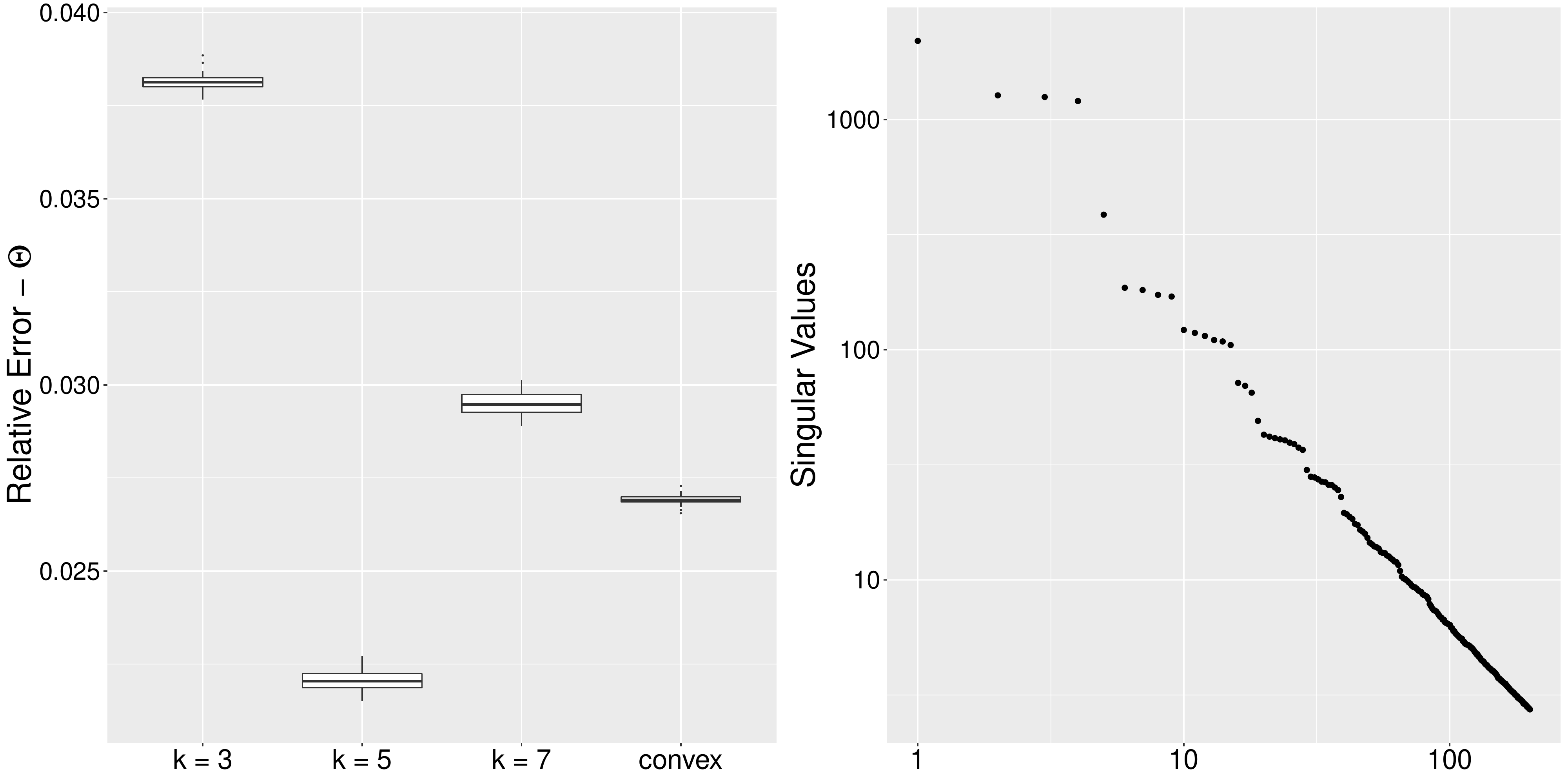}
\caption{The $\log$-$\log$ plot of relative estimation errors of both convex and non-convex approach under the distance kernel $\ell_d(z_i, z_j) = -\|z_i - z_j\|$ (left panel). The $\log$-$\log$ plot of ordered eigenvalues of $G_\star$ (right panel).}
\label{fig:simu-mis-distance}
\end{figure}

\begin{figure}[!h]
\centering
\includegraphics[width=0.75\textwidth]{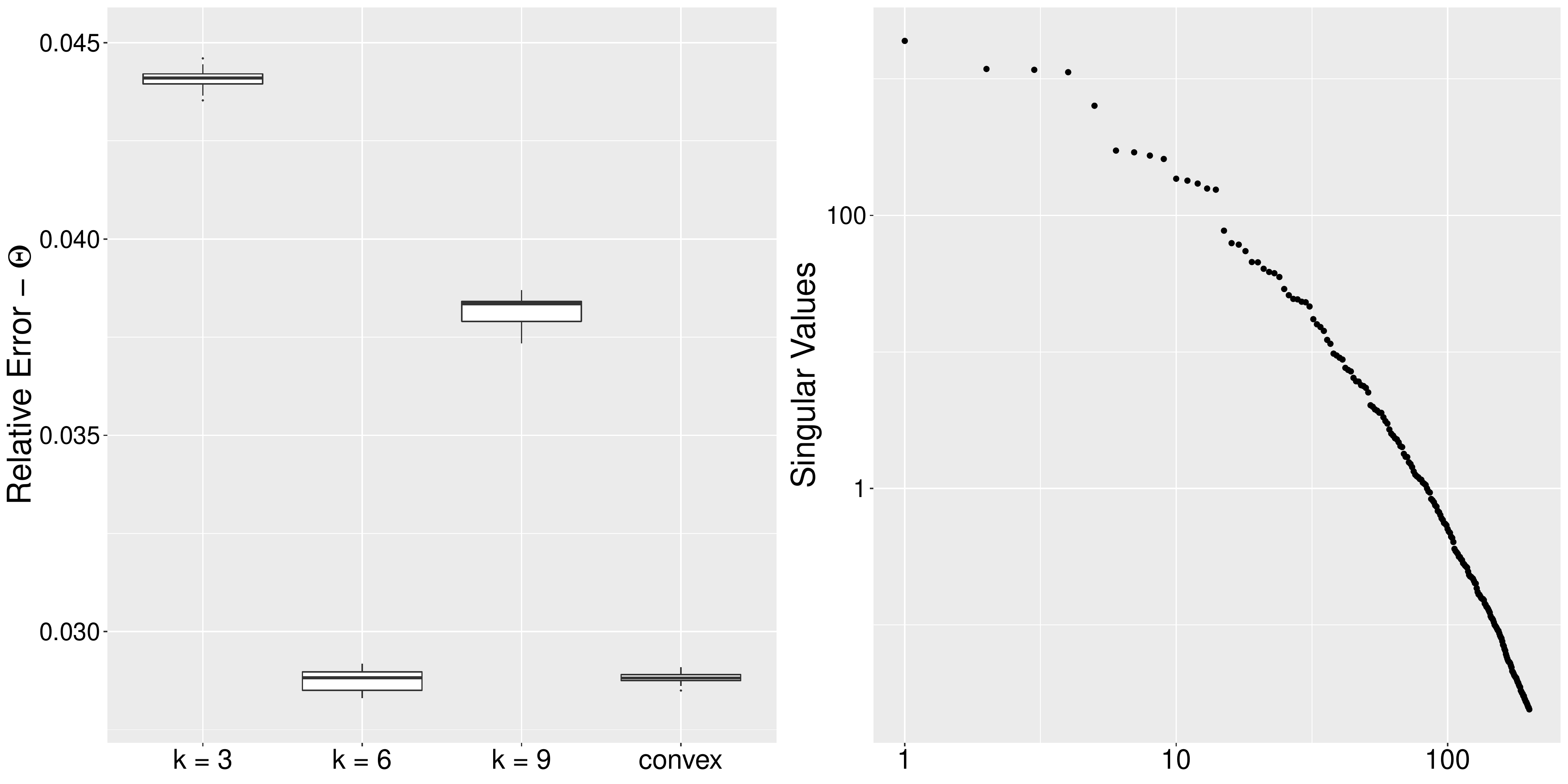}
\caption{$\log$-$\log$ plot for the relative estimation errors of both convex and non-convex approach under the Gaussian kernel $\ell_g(z_i, z_j) = 4\exp(-\|z_i - z_j\|^2/9)$ (left panel). The $\log$-$\log$ plot of ordered eigenvalues of $G_\star$ (right panel).}
\label{fig:simu-mis-guassian}
\end{figure}

When the true underlying model is not the inner-product model,
Theorem~\ref{thm:kernel-nonconvex-upper} indicates that the optimal choice of fitting dimension $k$ should depend on the size of the network. 
To illustrate such dependency, we vary both network size and fitting dimension, of which the results are summarized in Figure~\ref{fig:simu-mis-num_nodes}. As the size of the network increases, the optimal choice of fitting dimension increases as well. 
\begin{figure}[!h]
\centering
\includegraphics[width=0.8\textwidth]{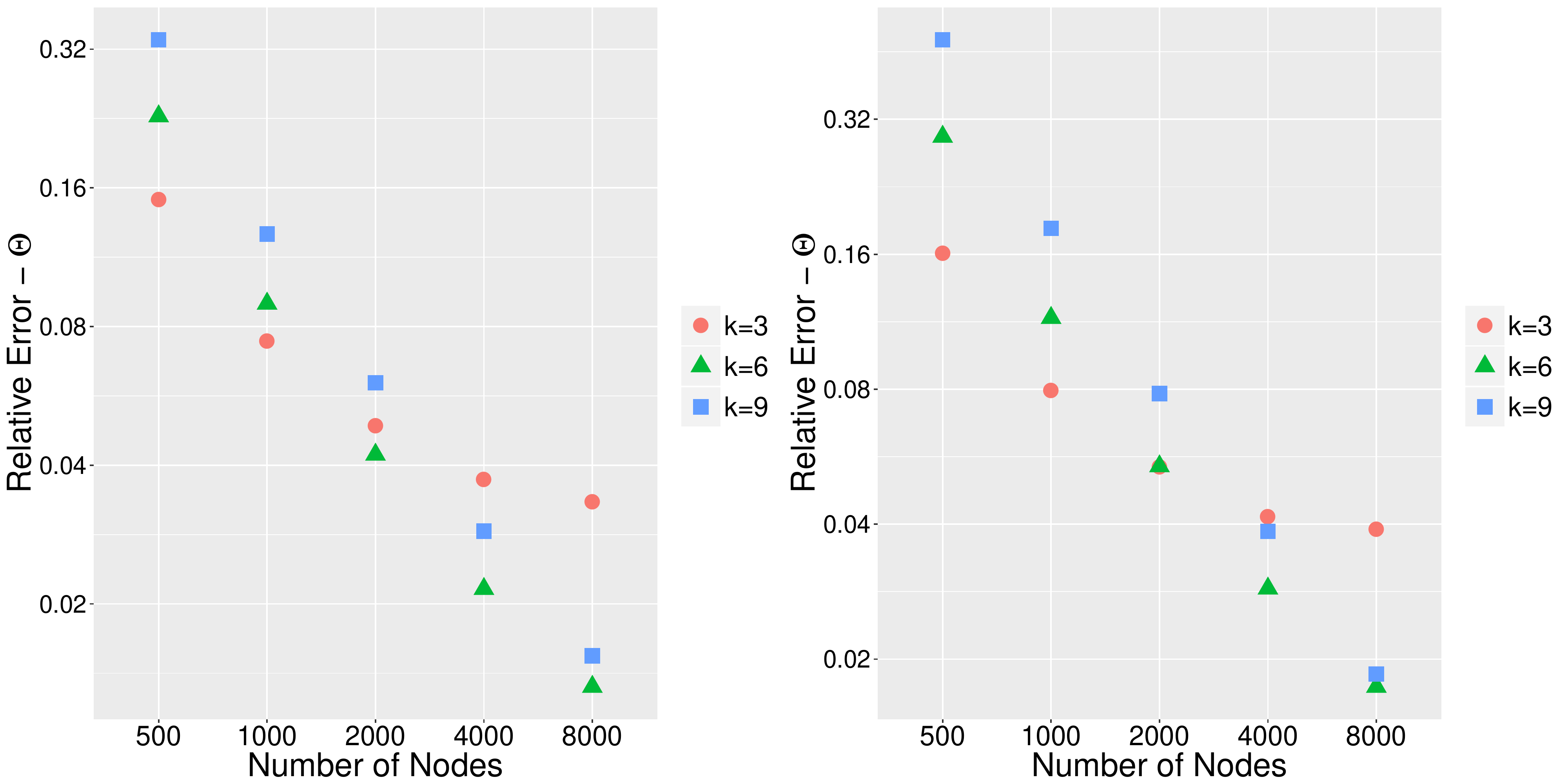}
\caption{$\log$-$\log$ plot for the relative estimation error with varying network size under distance kernel $\ell_d(z_i, z_j) = -\|z_i - z_j\|$ (left panel) and under the Gaussian kernel $\ell_g(z_i, z_j) = 4\exp(-\|z_i - z_j\|^2/9)$ (right panel).}
\label{fig:simu-mis-num_nodes}
\end{figure}


\section{Real data examples}
\label{sec:data}

In this section, we demonstrate how the model and fitting methods can be used to explore real world datasets that involve large networks.
In view of the discussion in \prettyref{sec:general-result} and \prettyref{sec:simu}, we can always employ the inner-product model \eqref{eq:projection} as the working model.
In particular, we illustrate three different aspects. 
First, we consider community detection on networks without covariate. 
To this end, we compare the performance of simple $k$-means clustering on fitted latent variables with several state-of-the-art methods.
Next, we investigate community detection on networks with covariates. 
In this case, we could still apply $k$-means clustering on fitted latent variables.
Whether there is covariate or not, we can always visualize the network by plotting fitted latent variables in some appropriate way.
Furthermore, we study how fitting the model can generate new feature variables to aid content-based classification of documents.  
The ability of feature generation also makes the model and the fitting methods potentially useful in other learning scenarios when additional network information among both training and test data is present.

\subsection{Community detection without covariate}
\newcommand{\lscd}{{LSCD}}

Community detection on networks without covariate has been intensively studied from both theoretical and methodological viewpoints.
Thus, it naturally serves as a test example for the effectiveness of the model and fitting methods we have proposed in previous sections. 
To adapt our method to community detection, we propose to partition the nodes by the following two step procedure:
\begin{enumerate}
\item Fit the inner-product model to data with Algorithm \ref{alg:non-convex};
\item Apply a simple $k$-means clustering on the fitted latent variables.
\end{enumerate}
In what follows, we call this two step procedure \lscd~(Latent Space based Community Detection). 
We shall compare it with three state-of-the-art methods:
(i) SCORE \citep{jin2015fast}: a normalized spectral clustering method developed under degree-corrected block models (DCBM);
(2) OCCAM \citep{zhang2014detecting}: a normalized and regularized spectral clustering method for potentially overlapping community detection;
(3) CMM \citep{chen2015convexified}: a convexified modularity maximization method developed under DCBM.
(4) Latentnet \cite{krivitskyfitting}: a hierachical bayesian method based on the latent space clustering model \citep{handcock2007model}. 

To avoid biasing toward our own method, we compare these methods on three datasets that have been previously used in the original papers to justify the first three methods at comparison:
a political blog dataset \citep{adamic2005political} that was studied in \cite{jin2015fast} and two Facebook datasets (friendship networks of Simmons College and Caltech) \citep{traud2012social} that were studied in \cite{chen2015convexified}. 
To make fair comparison, for all the methods, we supplied the true number of communities in each data. 
When fitting our model, we set the latent space dimension to be the same as the number of communities. 

In the latentnet package \cite{krivitskyfitting}, there are three different ways to predict the community membership. Using the notation of the R package \cite{krivitskyfitting}, they are \texttt{mkl\$Z.K}, \texttt{mkl\$mbc\$Z.K} and \texttt{mle\$Z.K}. We found that \texttt{mkl\$mbc\$Z.K} consistently outperformed the other two on these data examples and we thus used it as the outcome of Latentnet. 
Due to the stochastic nature of the Bayesian approach, we repeated it 20 times on each dataset and reported both the average errors as well as the standard deviations (numbers in parentheses).

Table \ref{table-error} summarizes the performance of all five methods on the three datasets. 
Among all the methods at comparison, all methods performed well on the political blog dataset with Latentnet being the best,  
and \lscd~outperformed all other methods on the two Facebook datasets.
On the Caltech friendship dataset, it improved the best result out of the other four methods by almost $15\%$ in terms of proportion of mis-clustered nodes.
 

\begin{table}[htb]
\begin{center}
\begin{tabular}{c|c|c|c|c|c|c}
\hline
Dataset& \# Clusters & \lscd &SCORE& OCCAM  & CMM & Latentnet \\
\hline
Political Blog & 2 & 4.746\% & 4.746\% & 5.319\% & 5.074\% & \textbf{4.513}\% (0.117\%)  \\

Simmons College&4 & \textbf{11.79\%}&23.57\%& 23.43\% &12.04\%  & 29.09\% (1.226\%)  \\

Caltech &8& \textbf{17.97\%} &31.02\%& 32.03\%& 21.02\%  & 38.47\% (1.190\%) \\
\hline
\end{tabular}
\end{center}
\vspace{-0.15in} 
\caption{A summary on proportions of mis-clustered nodes by different methods on three datasets. 
}
\label{table-error}
\end{table}
In what follows, we provide more details on each dataset and on the performance of these community detection methods on them.
\paragraph{Political Blog}
This well-known dataset was recorded by \cite{adamic2005political} during the 2004 U.S. Presidential Election. 
The original form is a directed network of hyperlinks between 1490 political blogs. 
The blogs were manually labeled as either liberal or conservative according to their political leanings.
The labels were treated as true community memberships. 
Following the literature, we removed the direction information and focused on the largest connected component which contains 1222 nodes and 16714 edges. 
All five methods performed comparably on this dataset with Latentnet achieving the smallest mis-clustered proportion.

\paragraph{Simmons College}
The Simmons College Facebook network is an undirected graph that contains 1518 nodes and 32988 undirected edges. 
For comparison purpose, we followed the same pre-processing steps as in  \cite{chen2015convexified} by considering the largest connected component of the students with graduation year between 2006 and 2009, which led to a subgraph of 1137 nodes and 24257 edges. 
It was observed in \cite{traud2012social} that the class year has the highest assortativity values among all available demographic characteristics, and so we treated the class year as the true community label.  
On this dataset, {\lscd} achieved the lowest mis-clustered proportion among these methods, with CMM a close second lowest.

An important advantage of model \eqref{eq:projection} is that it can provide a natural visualization of the network.
To illustrate,
the left panel of Figure~\ref{fig:simmons} is a 3D visualization of the network with the first three coordinates of the estimated latent variables.
From the plot, one can immediately see three big clusters: class year 2006 and 2007 combined (red), class year 2008 (green) and class year 2009 (blue). 
The right panel zooms into the cluster that includes class year 2006 and 2007 by projecting the the estimated four dimensional latent vectors onto a two dimensional discriminant subspace that was estimated from the fitted latent variables and the clustering results of \lscd. 
It turned out that class year 2006 and 2007 could also be reasonably distinguished by the latent vectors. 
\begin{figure}[!bth]
 \centering
\begin{minipage}{0.5\textwidth}
\includegraphics[width=\textwidth, height=6cm]{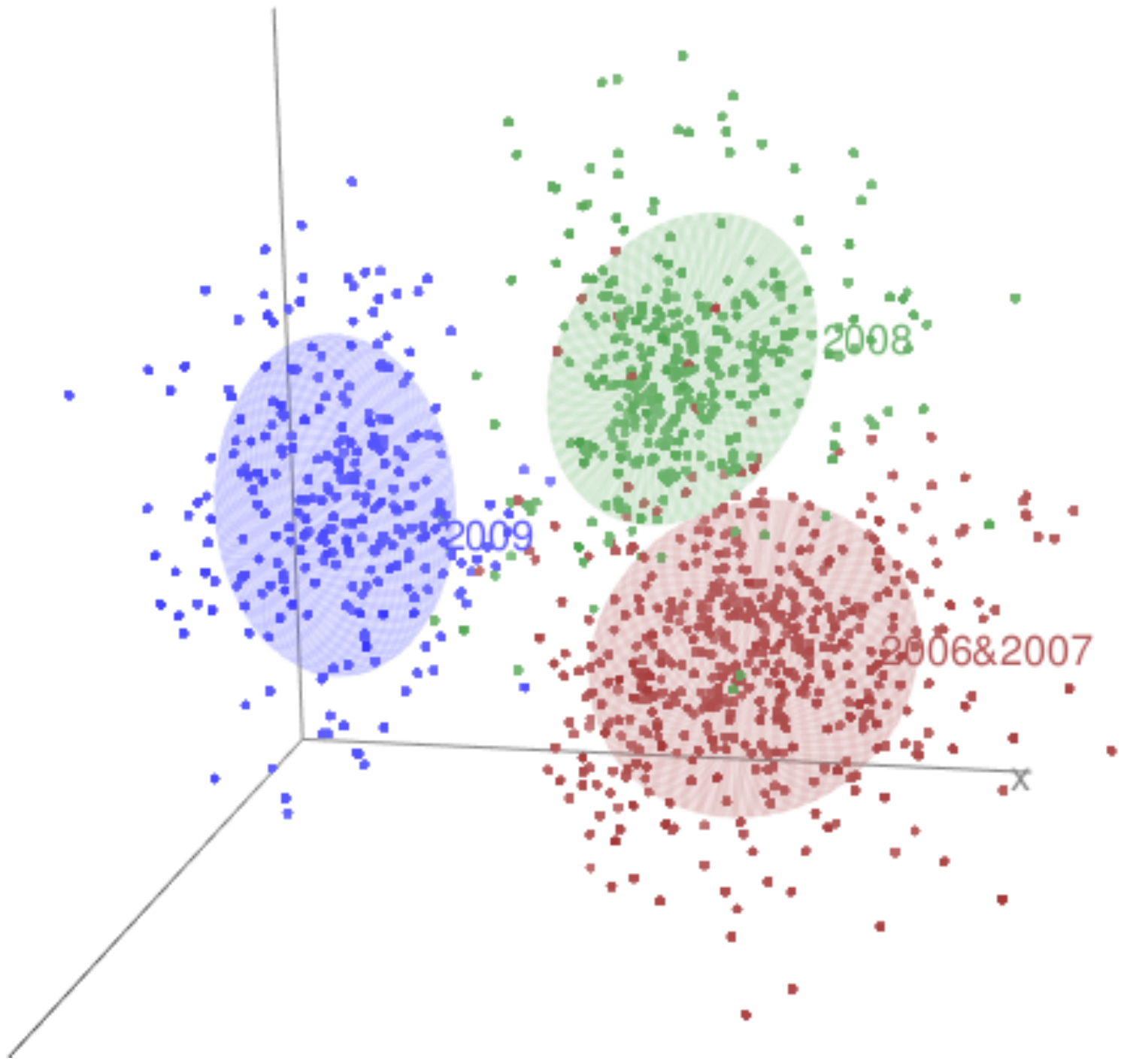}
\end{minipage}
\hfill
\begin{minipage}{0.45\textwidth}
\includegraphics[width=\textwidth, height=6cm]{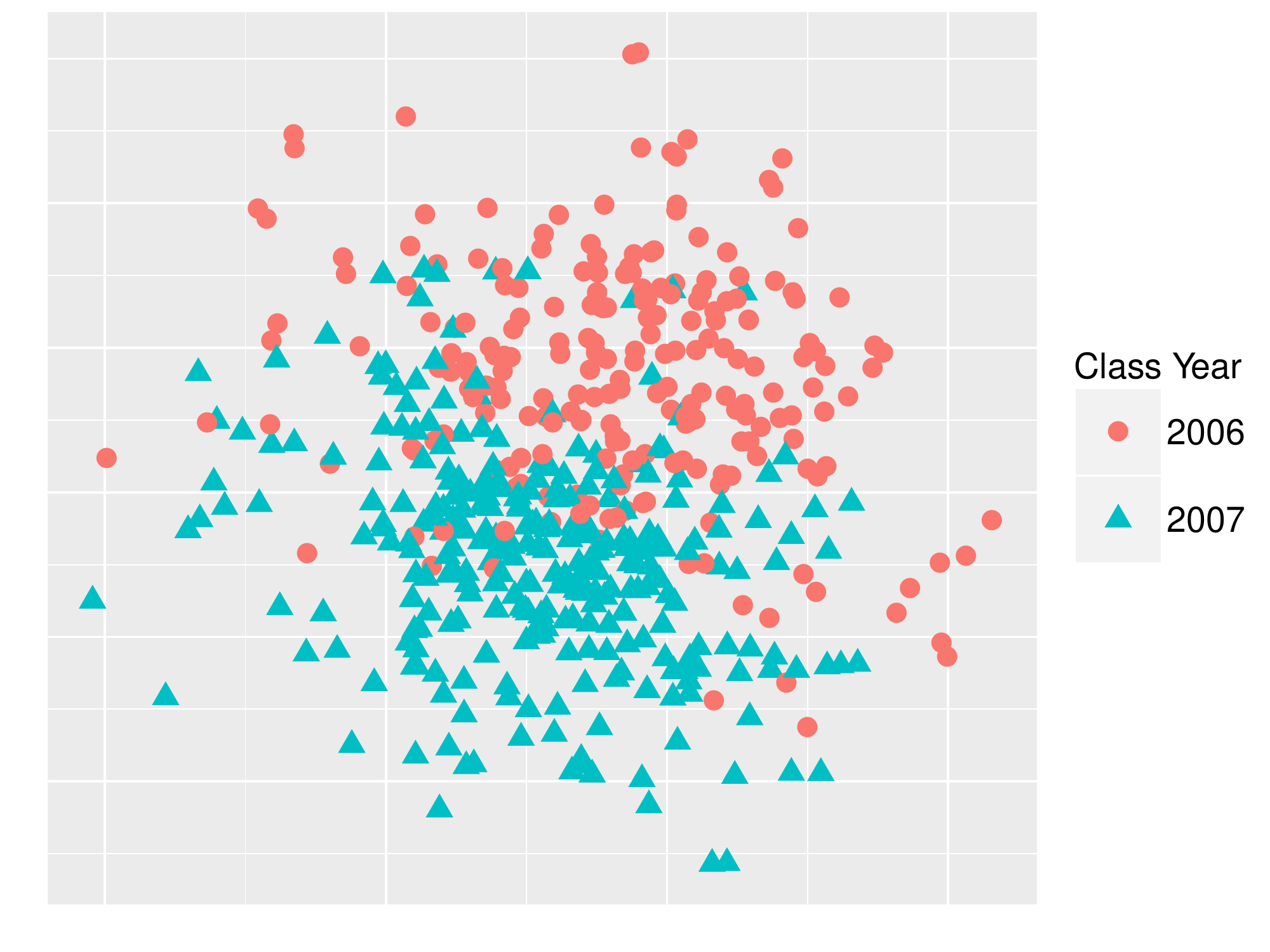}
\end{minipage}
\caption{The left panel is a visualization of the network with the first three coordinates of the estimated latent vectors. The right panel is a visualization of students in class year 2006 and 2007 by projecting the four dimensional latent vectors to an estimated two dimensional discriminant subspace. 
\label{fig:simmons}}
\end{figure}

\paragraph{Caltech Data}
In contrast to the Simmons College network in which communities are formed according to class years, 
communities in the Caltech friendship network are formed according to dorms \cite{traud2011comparing,traud2012social}. 
In particular, students spread across eight different dorms which we treated as true community labels. 
Following the same pre-processing steps as in \cite{chen2015convexified}, we excluded the students whose residence information was missing and considered the largest connected component of the remaining graph, which contained 590 nodes and 12822 undirected edges. 
This dataset is more challenging than the Simmons College network. 
Not only the size of the network halves but the number of communities doubles. 
In some sense, it serves the purpose of testing these methods when the signal is weak. 
\lscd~also achieved the highest overall accuracy on this dataset, reducing the second best error rate (achieved by CMM) by nearly 15\%. 
See the last row of Table \ref{table-error}. 
Moreover, \lscd~achieved the lowest maximum community-wise misclustering error among the four methods.
See Figure \ref{fig:caltech} for a detailed comparison of community-wise misclustering rates of the five methods.

\smallskip

It is worth noting that the two spectral methods, SCORE and OCCAM, 
fell behind on the two Facebook datasets. 
One possible explanation is that the structures of these Facebook networks are more complex than the political blog network and so DCBM suffers more under-fitting on them. 
In contrast, the latent space model \eqref{eq:projection} is more expressive and goes well beyond simple block structure. 
The Latentnet approach did not perform well on the Facebook datasets, either.
One possible reason is the increased numbers of communities compared to the political blog dataset, which substantially increased the difficulty of sampling from posterior distributions.


\begin{figure}[tbh]
\centering
\includegraphics[width=0.9\textwidth]{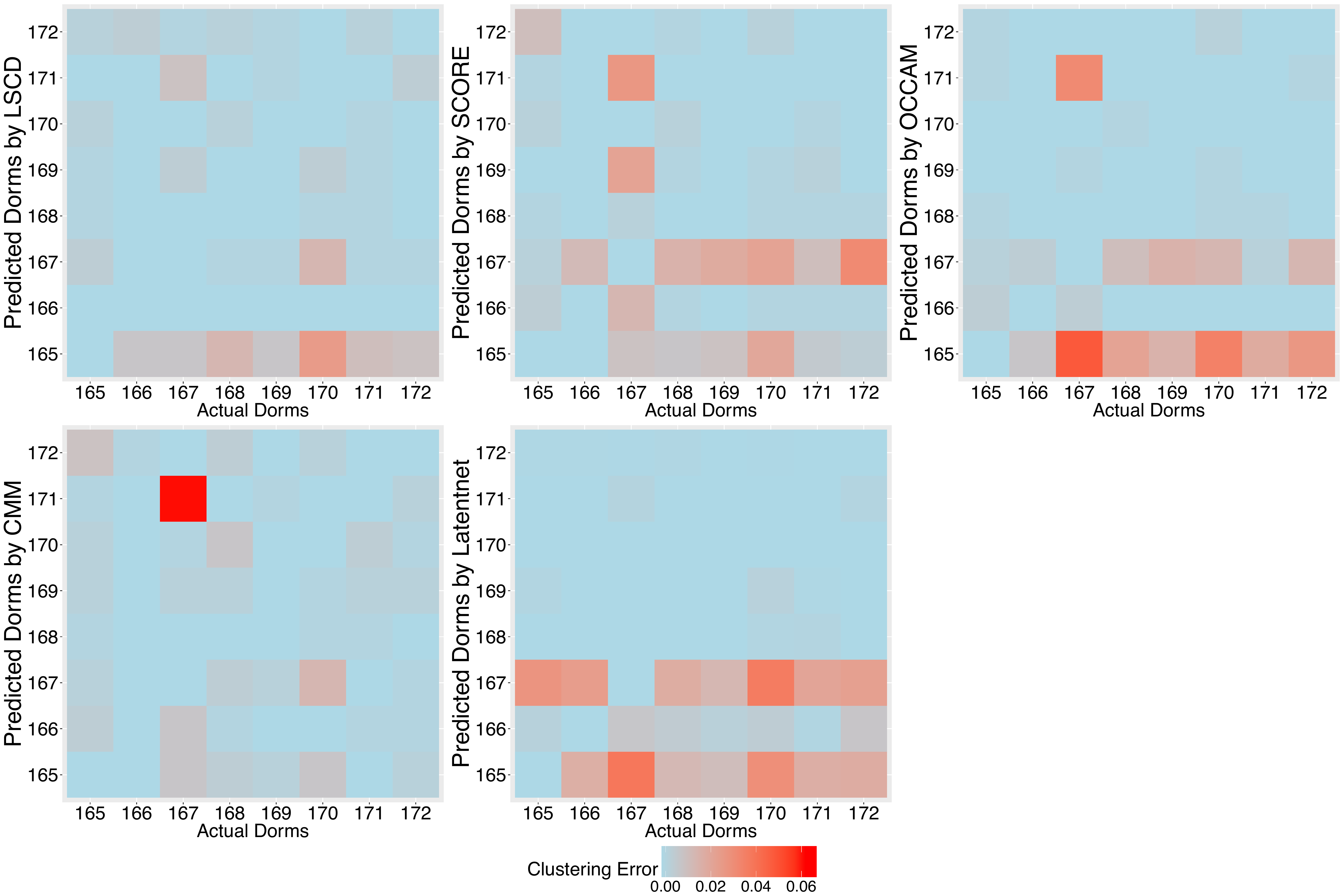}
\caption{Comparison of community-wise misclustering errors in Caltech friendship network. 
Top row, left to right: \lscd, SCORE and OCCAM;
bottom row, left to right: CMM and Latentnet.\label{fig:caltech}
}
\end{figure}

\subsection{Community detection with covariate}
\label{sec:data-covariate}

We now further demonstrate the power of the model and our proposed fitting methods by considering community detection on networks with covariates.
Again, we used the \lscd~procedure laid out in the previous subsection for community detection.

To this end, we consider a lawyer network dataset which was introduced in 
\cite{lazega2001collegial} that studied the relations among 71 lawyers in a New England law firm.
The lawyers were asked to check the names of those who they socialized with outside work, who they knew their family and vice versa. 
There are also several node attributes contained in the dataset: status (partner or associate), gender, office, years in the firm, age, practice (litigation or corporate), and law school attended, among which status is most assortative. 
Following \cite{zhang2015community}, we took status as the true community label. 
Furthermore, we symmetrized the adjacency matrix, excluded two isolated nodes and finally ended up with 69 lawyers connected by 399 undirected edges. 

Visualization and clustering results with and without covariate are shown in Figure~\ref{fig:lawyer}.
On the left panel, as we can see, the latent vectors without adjustment by any covariate worked reasonably well in separating the lawyers of different status and most of the 12 errors (red diamonds) were committed on the boundary. 
On the right panel, we included a covariate `practice' into the latent space model: we set $X_{ij}=X_{ji}=1$ if $i\neq j$ and the $i$th and the $j$th lawyers shared the same practice, and $X_{ij}=X_{ji} = 0$ otherwise.
Ideally, the influence on the network of being the same type of lawyer should be `ruled out' this way and the remaining influence on connecting probabilities should mainly be the effect of having different status. 
In other words, the estimated latent vectors should mainly contain the information of lawyers' status and the effect of lawyers' practice type should be absorbed into the factor $\beta X$. 
The predicted community memberships of lawyers indexed by orange numbers (39, 43, 45, 46, 51, 58) were successfully corrected after introducing this covariate. 
So the number of mis-clustered nodes was reduced by 50\%.
We also observed that lawyer 37, though still mis-clustered, was significantly pushed towards the right cluster. 

\begin{figure}[!h]
\centering
\includegraphics[width=0.85\textwidth]{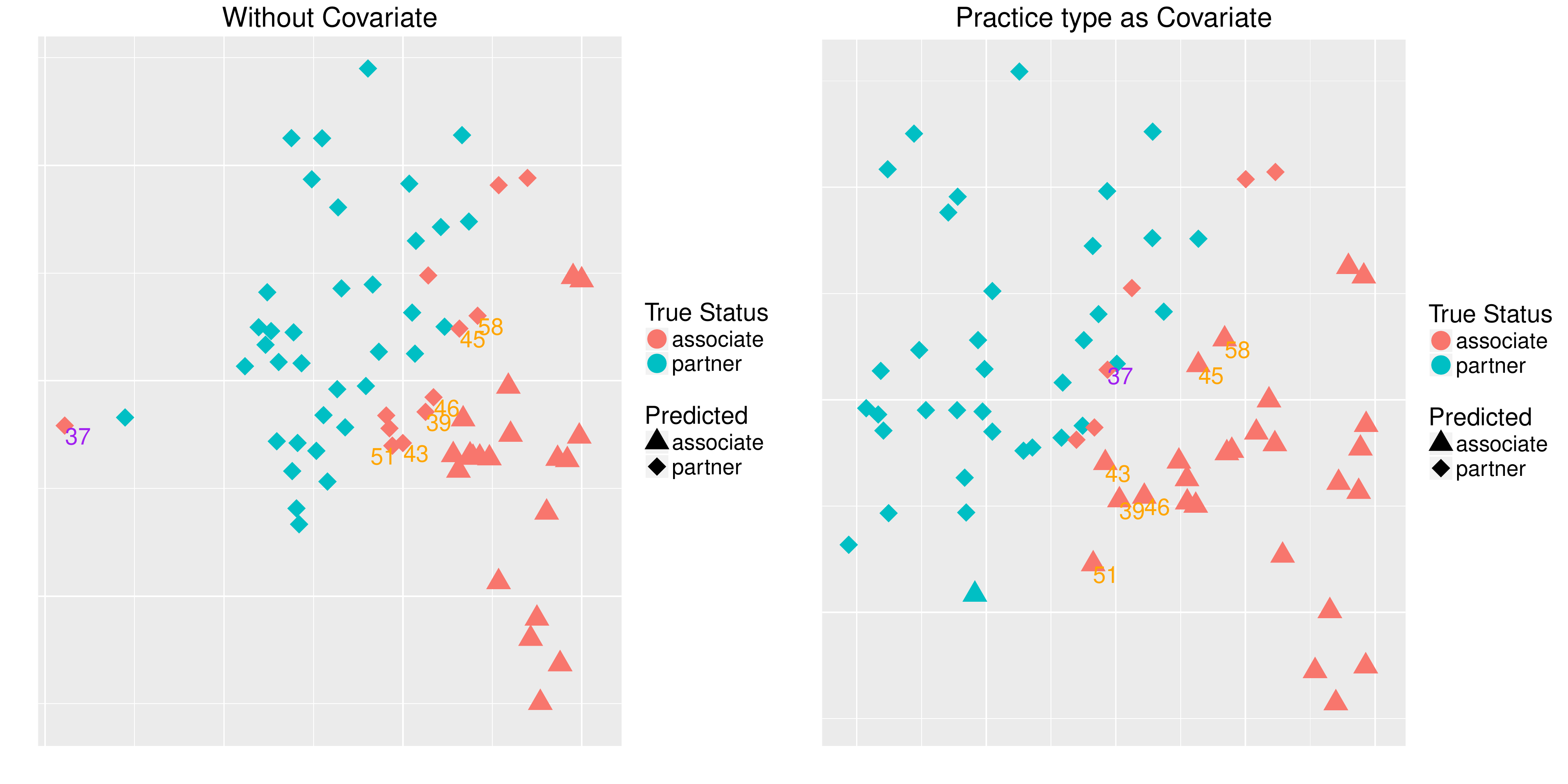}
\caption{Visualization of the lawyer network using the estimated two dimensional latent vectors. 
The left panel shows results without including any covariate while the right panel shows results that used practice type information.\label{fig:lawyer}
}
\end{figure}

\subsection{Network assisted learning}
In this section, we demonstrate how fitting model \eqref{eq:projection} can generate new features to be used in machine learning applications when additional network information is available. 
Consider a network with $n$ nodes and observed adjacency matrix $A$. 
Suppose the profile of the nodes is represented by $d$ dimensional features, denoted by $x_1, \cdots, x_n\in\mathbb{R}^d$. 
Assume each node is associated with a label (or say, variable of interest), denoted by $y$, either continuous or categorical. 
Suppose the labels are only observed for a subset of the nodes in the network. Without loss of generality, we assume $y_1, \cdots, y_m$ are observed for some $m < n$. 
The goal here is to predict the labels $y_{m+1}, \cdots, y_n$ based on the available information. 
Without considering the network information, this is the typical setup of supervised learning with labeled training set $(x_1, y_1), \cdots, (x_m, y_m)$ and unlabeled test set $x_{m+1}, \cdots, x_{n}$. As one way to utilize the network information, we propose to supplement the existing features in the prediction task with the latent vectors estimated by Algorithm~\ref{alg:non-convex} (without any edge covariates). 

To give a specific example, we considered the Cora dataset \citep{mccallum2000automating}.
It contains 2708 machine learning papers which were manually classified into 7 categories: Neural Networks, Rule Learning, Reinforcement Learning, Probabilistic Methods, Theory, Genetic Algorithms and Case Based. 
The dataset also includes the contents of the papers and a citation network, which are represented by a document-word matrix (the vocabulary contains 1433 frequent words) and an adjacency matrix respectively. 
The task is to predict the category of the papers based on the available information. For demonstration purpose, we only distinguish neural network papers from the other categories, and so the label $y$ is binary.

Let $W$ be the document-word matrix. In the present example, $W$ is of size $2708\times 1433$ ($2708$ papers and $1433$ frequent words). 
An entry $W_{ij}$ equals $1$ if the $i$th document contains the $j$th word. 
Otherwise $W_{ij}$ equals zero. 
As a common practice in latent semantic analysis, to represent the text information as vectors, we extract leading-$d$ principal component loadings from $WW^\top$ as the features. We chose $d = 100$ by maximizing the prediction accuracy using cross-validation. 

However, how to utilize the information contained in the citation network for the desired learning problem is less straightforward.
We propose to augment the latent semantic features with the latent vectors estimated from the citation network. 
Based on the simple intuition that papers in the same category are more likely to cite each other, we expect the latent vectors, as low dimensional summary of the network, to contain information about the paper category. 
The key message we want to convey here is that with vector representation of the nodes obtained from fitting the latent space model, 
network information can be incorporated in many supervised and unsupervised learning problems and other exploratory data analysis tasks.

Back to the Cora dataset, for illustration purpose, 
we fitted standard logistic regressions with the following three sets of features:
\begin{enumerate}
	\item the leading 100 principal component loadings;
    \item estimated degree parameters $\hat\alpha_i$ and latent vectors $\hat{z}_i$ obtained from Algorithm \ref{alg:non-convex};
    \item the combination of features in 1 and 2.
\end{enumerate}
We considered three different latent space dimensions: $k = 2, 5, 10$. 
As we can see from Figure \ref{figure:cora}, the latent vectors contained a considerable amount of predictive power for the label. 
Adding the latent vectors to the principal components of the word-document matrix could substantially reduce misclassification rate. 

\begin{figure}[!tbh]
	\centering
\includegraphics[width=0.55\textwidth]{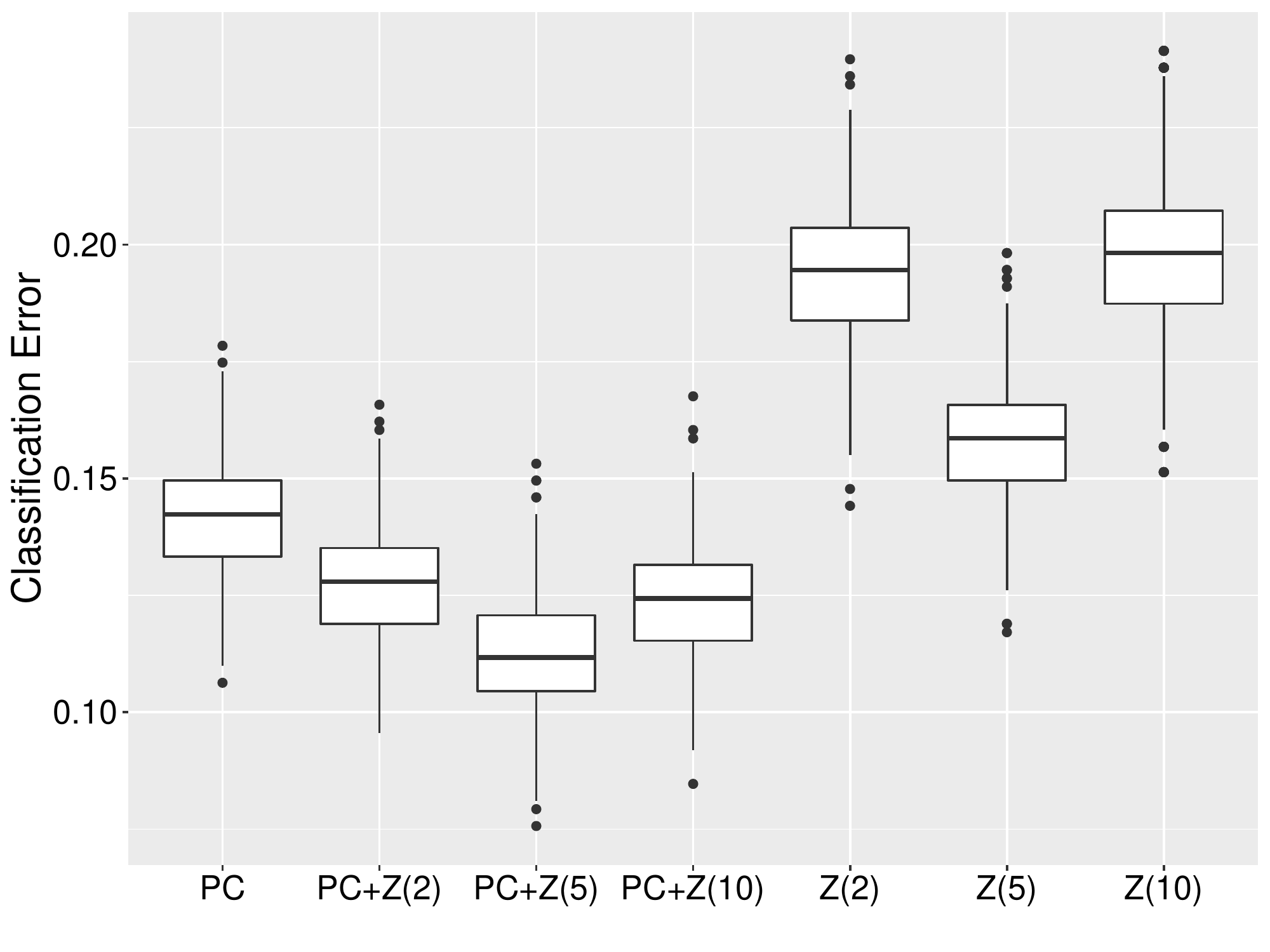}
\caption{Boxplots of misclassification rates using logistic regression with different feature sets. 
We randomly split the dataset into training and test sets with size ratio 3:1 for 500 times and computed misclassification errors for each configuration. 
PC represents the leading 100 principal component loadings of the document-word matrix. $Z(k)$ represents the feature matrix where the $i$th row is the concatenation of the estimated degree parameter $\wh{\alpha}_i$ and the estimated latent vector $\wh{z}_i$ with latent dimension $k$. PC+$Z(k)$ means the combination of the two sets of features.}
\label{figure:cora}
\end{figure}



\section{Discussion}
\label{sec:discuss}

In this section, we discuss a number of related issues and potential problems for future research.

%

\paragraph{Data-driven choice of latent space dimension}
For the projected gradient descent method, i.e., Algorithm \ref{alg:non-convex}, one needs to specify the latent space dimension $k$ as an input.
Although \prettyref{thm:kernel-nonconvex-upper} suggests that the algorithm could still work reasonably well if the specified latent space dimension is slightly off the target, it is desirable to have a systematic approach to selecting $k$ based on data.
One possibility is to inspect the eigenvalues of $G^T$ in Algorithm \ref{alg:convex-init} and set $k$ to be the number of eigenvalues larger than the parameter $\lambda_n$ used in the algorithm.

%
\paragraph{Undirected networks with multiple covariates and weighted edges}
The model \eqref{eq:projection} and the fitting methods can easily be extended to handle multiple covariates.
When the number of covariates is fixed, error bounds analogous to those in Section \ref{sec:theory} can be expected.
We omit the details.
Moreover, as pointed out in \cite{goldenberg2010survey}, latent space models for binary networks such as \eqref{eq:projection} can readily be generalized to weighted networks, i.e., networks with non-binary edges.
We refer interested readers to the general recipe spelled out in Section 3.9 of \cite{goldenberg2010survey}.
If the latent variables enter a model for weighted networks in the same way as in model \eqref{eq:projection}, we expect the key ideas behind our proposed fitting methods to continue to work.

%

\paragraph{Directed networks} 
In many real world networks, edges are directed. 
Thus, it is a natural next step to generalize model \eqref{eq:projection} to handle such data.
Suppose for any $i\neq j$, $A_{ij} = 1$ if there is an edge pointing from node $i$ to node $j$, and $A_{ij} = 0$ otherwise.
We can consider the following model: for any $i\neq j$, 
\begin{equation}
\bA_{ij} \stackrel{ind.}{\sim} \text{Bernoulli} (\bP_{ij}),\quad \text{with}\quad
\text{logit}(\bP_{ij})=\bTheta_{ij}=\balpha_i+\gamma_j+\beta\bX_{ij}+\bz_i^\top
w_j.
\label{eq:model-directed} 
\end{equation}
Here, the $\alpha_i$'s $\in \mathbb{R}$ model degree heterogeneity of outgoing edges while the $\gamma_j$'s $\in \mathbb{R}$ model heterogeneity of incoming edges. 
The meaning of $\beta$ is the same as in model \eqref{eq:projection}.
To further accommodate asymmetry, we associate with each node two latent vectors $z_i, w_i\in \mathbb{R}^k$, where the $z_i$'s are latent variables influencing outgoing edges and the $w_i$'s incoming edges. 
Such a model has been proposed and used in the study of recommender system 
\cite{agarwal2009regression} and it is also closely connected to the latent eigenmodel proposed in \cite{hoff2008modeling} if one further restricts $z_i\in \{w_i, -w_i\}$ for each $i$.
Under this model, the idea behind the convex fitting method in Section \ref{sec:convex} can be extended.
However, it is more challenging to devise a non-convex fitting method with similar theoretical guarantees to what we have in the undirected case.
On the other hand, it should be relatively straightforward to further extend the ideas to directed networks with multiple covariates and weighted edges.
A recent paper \cite{levina2017} has appeared after the initial posting of the present manuscript, which obtained some interesting results along these directions.

%
%
%






\section{Proofs of main theorems}
\label{sec:proof}
We present here the proofs of Theorem \ref{thm:kernel-convex-upper} and Theorem \ref{thm:kernel-nonconvex-upper} since Theorem \ref{thm:convex} is a corollary of the former and Theorem \ref{thm:nonconvex} a corollary of the latter.
Throughout the proof, let $P = (\sigma(\Theta_{\star,ij}))$ and $\Poff = (P_{ij}\mathbf{1}_{i\neq j})$. 
Thus, $\mathbb{E}(A) = \Poff$.
Moreover, for any $\Theta\in \mathbb{R}^{n\times n}$, define 
\begin{equation}
	\label{eq:h}
\wl (\bTheta)=-\sum_{i, j=1}^n\{ \bA_{ij}\bTheta_{ij} + \log(1-\sigma(\bTheta_{ij}))\}. 
\end{equation}
For conciseness, we denote $\calF_g(n, M_1, M_2, X)$ by $\calF_g$ throughout the proof.
We focus on the case where $X$ is nonzero, and the case of $X=0$ is simpler.

\subsection{Proof of Theorem \ref{thm:kernel-convex-upper}}
Let $\tz\in \mathbb{R}^{n\times k}$ such that $\tz\tz^\top$ is the best rank $k$ approximation to $\tg$.
For any matrix $M$, let $\col(M)$ be the subspace spanned by the column vectors of $M$ and $\row(M) = \col(M^\top)$. 
For any subspace $\mathcal{S}$ of $\mathbb{R}^n$ (or $\mathbb{R}^{n\times n}$), let $\mathcal{S}^\perp$ be its orthogonal complement, and $\calP_{\mathcal{S}}$ the projection operator onto the subspace. The proof relies on the following two lemmas.

\begin{lemma}
	\label{lem:kernel-cone}
Let $\mathcal{M}^\perp_k=\{\bM\in\mathbb{R}^{n\times n}: \row(\bM)\subset \col(\tz)^{\perp}\,\,\,\mathrm{and}\,\,\, \col(\bM)\subset \col(\tz)^{\perp}\}$ 
and $\mathcal{M}_k$ be its orthogonal complement in $\mathbb{R}^{n\times n}$ under trace inner product. 
If $\lambda_n\geq 2\op{A-P}$, then for $\gresid = \calP_{\mathcal{M}_k^\perp} \tg$, we have
\begin{equation*}
\|\dg \|_* \leq 4\sqrt{2k}\fnorm{\mathcal{P}_{\mathcal{M}_k}\dg} + 2\fnorm{\da\one^\top}+ \frac{2}{\lambda_n}| \langle \bA-\bP, \db\bX \rangle| + 4\| \gresid\|_*.
\end{equation*}
\end{lemma}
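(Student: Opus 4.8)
The plan is to run the standard ``basic inequality'' argument for nuclear-norm penalized $M$-estimation, with two twists: the objective also carries the nuisance components $\alpha$ and $\beta X$, and the target $\tg$ need not be exactly rank $k$, so the residual $\gresid$ must be carried along. First I would note that $(\ta,\tb,\tg)$ is a feasible point for the convex program \eqref{eq:convex-obj}: $\tg\succeq 0$ by the Schoenberg condition \eqref{eq:h-constraint}, $J\tg=\tg$, and the box constraint on $\tall$ holds because $\tall\in\calF_g$. Optimality of the solution $(\hbalpha,\hbeta,\hbG)$ then gives $h(\hbTheta)+\lambda_n\tr(\hbG)\le h(\tall)+\lambda_n\tr(\tg)$. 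Since $h$ is convex with gradient $\nabla h(\tall)=P-A$ (the diagonal of $P$ enters here, which is exactly why $P$ and not $\Poff$ shows up), convexity yields $h(\hbTheta)-h(\tall)\ge\langle P-A,\hbTheta-\tall\rangle$, and combining the two gives
\[
\lambda_n\bigl(\tr(\hbG)-\tr(\tg)\bigr)\le\langle A-P,\hbTheta-\tall\rangle .
\]
Because $\hbG,\tg\succeq 0$, the left side equals $\lambda_n(\|\hbG\|_*-\|\tg\|_*)$.

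Next I would bound the right side and lower-bound the left side. Expanding $\hbTheta-\tall=\da\one^\top+\one\da^\top+\db X+\dg$ and using symmetry of $A-P$, the right side is $2\langle A-P,\da\one^\top\rangle+\langle A-P,\db X\rangle+\langle A-P,\dg\rangle$, which by trace duality is at most $\op{A-P}\bigl(2\fnorm{\da\one^\top}+\|\mathcal P_{\mathcal M_k}\dg\|_*+\|\mathcal P_{\mathcal M_k^\perp}\dg\|_*\bigr)+|\langle A-P,\db X\rangle|$, using the rank-one identity $\|\da\one^\top\|_*=\fnorm{\da\one^\top}$. For the left side I would use decomposability of $\|\cdot\|_*$ relative to the pair $(\mathcal M_k,\mathcal M_k^\perp)$: writing $P_U$ for the orthogonal projection onto $\col(\tz)$, one has $\tz\tz^\top\in\mathcal M_k$ with row and column space exactly $\col(\tz)$, while $\gresid=\mathcal P_{\mathcal M_k^\perp}\tg=(I-P_U)\tg(I-P_U)\in\mathcal M_k^\perp$, so $\tz\tz^\top$ and $\gresid$ are orthogonally supported and $\|\tg\|_*=\|\tz\tz^\top\|_*+\|\gresid\|_*$; writing $\hbG=\tz\tz^\top+\gresid+\mathcal P_{\mathcal M_k^\perp}\dg+\mathcal P_{\mathcal M_k}\dg$ and peeling off the last summand by the triangle inequality, then using that $\tz\tz^\top$ is orthogonally supported against $\gresid+\mathcal P_{\mathcal M_k^\perp}\dg\in\mathcal M_k^\perp$, a short computation gives
\[
\tr(\hbG)-\tr(\tg)\ \ge\ \|\mathcal P_{\mathcal M_k^\perp}\dg\|_*-\|\mathcal P_{\mathcal M_k}\dg\|_*-2\|\gresid\|_* .
\]

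Finally I would plug both estimates into the displayed inequality, use the hypothesis $\lambda_n\ge 2\op{A-P}$ to absorb each $\op{A-P}\times(\text{norm})$ term into $\tfrac{\lambda_n}{2}\times(\text{norm})$, and rearrange to reach the ``cone'' bound $\|\mathcal P_{\mathcal M_k^\perp}\dg\|_*\le 3\|\mathcal P_{\mathcal M_k}\dg\|_*+2\fnorm{\da\one^\top}+\tfrac{2}{\lambda_n}|\langle A-P,\db X\rangle|+4\|\gresid\|_*$. Adding $\|\mathcal P_{\mathcal M_k}\dg\|_*$ to both sides bounds $\|\dg\|_*$, and since $\mathcal P_{\mathcal M_k}\dg=P_U\dg+(I-P_U)\dg P_U$ has rank at most $2k$, we may replace $\|\mathcal P_{\mathcal M_k}\dg\|_*$ by $\sqrt{2k}\,\fnorm{\mathcal P_{\mathcal M_k}\dg}$, which is precisely the claimed inequality. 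I expect the decomposability step to be the only genuinely delicate point: one must verify that $\tz\tz^\top$ lies in $\mathcal M_k$ and is orthogonally supported against $\mathcal M_k^\perp$, and track the constants so that the cone constant emerges as $3$ (hence $4\sqrt{2k}$ after restoring the $\mathcal M_k$ component). Everything else — convexity of $h$, the rank-one and rank-$2k$ facts, and the arithmetic of absorbing the noise terms under $\lambda_n\ge 2\op{A-P}$ — is routine bookkeeping.
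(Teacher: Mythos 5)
Your proposal is correct and follows essentially the same route as the paper's proof: the basic inequality from optimality, convexity of $h$ with $\nabla h(\tall)=P-A$, trace duality with the split $\|\dg\|_*\le\|\calP_{\calM_k}\dg\|_*+\|\calP_{\calM_k^\perp}\dg\|_*$, decomposability of the nuclear norm over $(\calM_k,\calM_k^\perp)$ giving $\|\hbG\|_*-\|\tg\|_*\ge\|\calP_{\calM_k^\perp}\dg\|_*-\|\calP_{\calM_k}\dg\|_*-2\|\gresid\|_*$, absorption under $\lambda_n\ge 2\op{A-P}$, and finally the rank-$2k$ bound on $\calP_{\calM_k}\dg$. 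The constants (cone constant $3$, hence $4\sqrt{2k}$) come out exactly as in the paper, so no gap.
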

\begin{proof}
See \prettyref{sec:proof-43-lem1}.
\end{proof}

\begin{lemma} 
\label{lem:kernel-identifiability}
For any $k\geq 1$ such that Assumption~\ref{assump:srank} holds. Choose $\lambda_n\geq \max\left\{2\opnorm{\bA-\bP}, 1\right\}$ and $| \langle \bA-\bP, \bX \rangle | \leq \lambda_n \sqrt{k}\fnorm{X}$. 
There exist constants $C>0$ and $0\leq c<1$ such that
\begin{equation*}
\begin{aligned}
\fnorm{\dall}^2 & \geq (1 - c)
\big(\fnorm{\dg}^2 + 2\fnorm{\da\one^\top}^2+\fnorm{\db\bX}^2 \big) - C\|\gresid\|_*^2/k,\quad\mathrm{and} \\
\fnorm{\dall}^2 & \leq (1 + c) 
\big(\fnorm{\dg}^2 + 2\fnorm{\da\one^\top}^2+\fnorm{\db\bX}^2 \big) + C\|\gresid\|_*^2/k.
\end{aligned}
\end{equation*} 
\end{lemma}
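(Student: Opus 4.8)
\textbf{Proof plan for Lemma \ref{lem:kernel-identifiability}.}

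The goal is to show that the composite squared Frobenius error $\fnorm{\dall}^2$ is equivalent, up to constants and a lower-order nuclear-norm term in $\gresid$, to the sum of the individual squared errors $\fnorm{\dg}^2 + 2\fnorm{\da\one^\top}^2 + \fnorm{\db X}^2$. Since $\dall = \da\one^\top + \one\da^\top + \db X + \dg$, expanding the square gives the sum of the three squared norms (note $\fnorm{\da\one^\top + \one\da^\top}^2 = 2\fnorm{\da\one^\top}^2 + 2\langle \da\one^\top,\one\da^\top\rangle = 2\fnorm{\da\one^\top}^2 + 2\|\da\|^2\cdot\langle\one,\da\rangle^2/\dots$ — actually $\langle \da\one^\top,\one\da^\top\rangle = (\one^\top\da)^2$, which is controlled) plus cross terms. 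So the plan is: (i) expand $\fnorm{\dall}^2$ and identify the main diagonal terms and the cross terms; (ii) show each cross term is dominated by a small multiple of the main terms, using near-orthogonality of the three parameter blocks.

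First I would establish the geometric separation between the three blocks. The block $\dg$ satisfies $J\dg = \dg$, so its rows and columns are orthogonal to $\one$; this immediately kills the cross terms $\langle \dg, \da\one^\top\rangle$ and $\langle \dg, \one\da^\top\rangle$. The remaining dangerous cross terms are $\langle \dg, \db X\rangle$, $\langle \da\one^\top + \one\da^\top, \db X\rangle$, and the internal term $(\one^\top\da)^2$ inside $\fnorm{\da\one^\top+\one\da^\top}^2$. For the covariate cross term $\langle \dg, \db X\rangle$: here is where Assumption~\ref{assump:srank} enters — since $\srank(X) = \fnorm{X}^2/\opnorm{X}^2 \geq M_0 k$ with $M_0$ large, and $\dg - \gresid = \calP_{\mathcal{M}_k}\dg$ lies in a space of rank at most $2k$, one has $|\langle \calP_{\mathcal{M}_k}\dg, X\rangle| \leq \sqrt{2k}\,\opnorm{X}\,\fnorm{\calP_{\mathcal{M}_k}\dg} \leq \sqrt{2k/(M_0 k)}\,\fnorm{X}\,\fnorm{\dg} = \sqrt{2/M_0}\,\fnorm{X}\,\fnorm{\dg}$, which is a small constant times $\fnorm{X}\fnorm{\dg}$; the contribution of $\gresid$ is absorbed into the $\|\gresid\|_*$ term via $|\langle\gresid, X\rangle|\le \|\gresid\|_*\opnorm{X}$ and $\opnorm{X}^2 \le \fnorm{X}^2/(M_0 k)$. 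The term $\langle\da\one^\top + \one\da^\top, \db X\rangle$ is handled similarly by noting $\da\one^\top + \one\da^\top$ has rank at most $2$, giving a bound $\sqrt{2}\cdot 2\,\opnorm{X}\,|\db|\,\|\da\|$ times the appropriate scaling — again small by the stable rank condition (rank $2 \le M_0 k$ trivially). For the internal term $(\one^\top\da)^2 \le n\|\da\|^2 = \fnorm{\da\one^\top}^2$, which is fine since it only contributes a bounded multiple.

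Then, combining via the elementary inequality $2ab \le \epsilon a^2 + \epsilon^{-1}b^2$ with $\epsilon$ chosen proportional to $\sqrt{1/M_0}$, and choosing $M_0$ large enough, all cross terms are bounded by $c\,(\fnorm{\dg}^2 + 2\fnorm{\da\one^\top}^2 + \fnorm{\db X}^2)$ for some $c<1$, plus a term of order $\|\gresid\|_*^2/k$ coming from the $\gresid$ contributions (using $\|\gresid\|_* \opnorm{X} \le \|\gresid\|_*\fnorm{X}/\sqrt{M_0 k}$ and then $2ab \le a^2 + b^2$-type splitting to separate $\fnorm{\db X}^2$ from $\|\gresid\|_*^2/k$). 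Both the upper and lower bounds in the lemma then follow by moving the cross-term bound to the appropriate side. The conditions $\lambda_n \ge \max\{2\opnorm{A-P},1\}$ and $|\langle A-P, X\rangle| \le \lambda_n\sqrt{k}\fnorm{X}$ are presumably used indirectly — in fact I would expect they enter because the relevant $\dg$ here should be the \emph{constrained} error whose $\calP_{\mathcal{M}_k}$-component is controlled via Lemma~\ref{lem:kernel-cone}, or they may simply not be needed for this purely algebraic lemma and are stated for uniformity with its later invocation; I would double check whether the decomposition $\gresid = \calP_{\mathcal{M}_k^\perp}\tg$ requires them.

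\textbf{Main obstacle.} The crux is the covariate cross term $\langle \dg, \db X\rangle$: $\dg$ is a general matrix with $J\dg = \dg$ and no a priori rank bound (only $\hbG$ and $\tg$ individually have structure, and $\hbG$ is full rank in general), so one cannot directly say $\dg$ has low rank. The resolution must use that the \emph{difference} $\dg$ projects onto $\mathcal{M}_k$ (rank $\le 2k$) plus $\gresid$, i.e. one writes $\dg = \calP_{\mathcal{M}_k}\dg + \calP_{\mathcal{M}_k^\perp}\dg$ and argues that the tail part is exactly (or is absorbed into) $\gresid$ up to controllable error — this bookkeeping, rather than any hard inequality, is where care is needed, and it is why the $\|\gresid\|_*^2/k$ correction appears with that particular normalization.
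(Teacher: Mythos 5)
Your expansion and the treatment of the $\alpha$-related cross terms match the paper: $J\dg=\dg$ kills $\langle\dg,\da\one^\top\rangle$, and $\tr(\da\one^\top\da\one^\top)=(\one^\top\da)^2$ is dropped (lower bound) or bounded by $\fnorm{\da\one^\top}^2$ (upper bound). The genuine gap is in your handling of the covariate cross term $\db\langle\dg,X\rangle$. You assert $\dg-\gresid=\calP_{\calM_k}\dg$, i.e.\ that the tail $\calP_{\calM_k^\perp}\dg$ coincides with (or is controllably close to) $\gresid$. That is false in general: $\calP_{\calM_k^\perp}\dg=\calP_{\calM_k^\perp}\hbG-\gresid$, and $\hbG$, being the solution of the nuclear-norm-penalized program, has no a priori reason to lie in $\calM_k$, so $\calP_{\calM_k^\perp}\hbG$ need not be small. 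Without control of this tail, the only generic bound on its contribution is $|\db\langle\calP_{\calM_k^\perp}\dg,X\rangle|\le\fnorm{\dg}\fnorm{\db X}$, which carries no small factor and therefore cannot be absorbed with a constant $c<1$; your argument as written does not close this.

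The paper closes it differently, and this is the key idea you are missing: it bounds the full nuclear norm $\|\dg\|_*$ via the cone-type inequality of Lemma~\ref{lem:kernel-cone}, namely $\|\dg\|_*\le 4\sqrt{2k}\fnorm{\calP_{\calM_k}\dg}+2\fnorm{\da\one^\top}+\frac{2}{\lambda_n}|\langle A-P,\db X\rangle|+4\|\gresid\|_*$, then uses $|\langle\dg+2\da\one^\top,\db X\rangle|\le(\|\dg\|_*+2\|\da\one^\top\|_*)\opnorm{\db X}$ together with $\opnorm{X}\le\fnorm{X}/\sqrt{\srank(X)}$ and Assumption~\ref{assump:srank}. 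The cone inequality is not algebraic: it comes from the basic (optimality) inequality for $\hbG$ and requires $\lambda_n\ge 2\opnorm{A-P}$, while the hypothesis $|\langle A-P,X\rangle|\le\lambda_n\sqrt{k}\fnorm{X}$ is exactly what converts the term $\frac{2}{\lambda_n}|\langle A-P,\db X\rangle|$ into $2\sqrt{k}\fnorm{\db X}$. So your suspicion that these hypotheses "may not be needed for this purely algebraic lemma" is wrong: the lemma holds for the specific error of the penalized estimator, not for arbitrary feasible $(\dg,\da,\db)$, and the stated conditions are precisely what make the cone step available. Repairing your proof amounts to replacing your identification of $\calP_{\calM_k^\perp}\dg$ with $\gresid$ by an appeal to Lemma~\ref{lem:kernel-cone}; the rest of your estimates (low-rank duality bound, stable-rank absorption, $2ab\le\epsilon a^2+\epsilon^{-1}b^2$ splitting yielding the $\|\gresid\|_*^2/\srank(X)\lesssim\|\gresid\|_*^2/k$ correction) then go through essentially as in the paper.
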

\begin{proof}
See \prettyref{sec:proof-43-lem2}.
\end{proof}

 \begin{lemma}
There exist absolute constants $c, C$ such that for any $\Theta\in\calF_g$ with probability at least $1-n^{-c}$, the following inequality holds
 	\begin{equation*}
 \begin{aligned}
 \opnorm{A - P},~\frac{\inner{\bA-\bP, \bX}}{\sqrt{k}\fnorm{X}}\leq C\sqrt{\max\left\{ne^{-M_2}, \log n\right\}}.
 \end{aligned}
 \end{equation*}
 \label{lem:prob}
 \end{lemma}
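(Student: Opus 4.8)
\textbf{Proof plan for Lemma~\ref{lem:prob}.}
The statement asserts two bounds that hold simultaneously with high probability: a spectral-norm bound on the centered adjacency matrix $A-P$, and a bound on the scalar $\inner{A-P,X}/(\sqrt{k}\fnorm{X})$. Recall $\mathbb{E}(A-\Poff)=0$ and that $A$ and $\Poff$ differ from $P$ only on the diagonal, so $A-P$ equals the mean-zero matrix $A-\Poff$ plus a deterministic diagonal correction of operator norm at most $\max_i P_{ii}\le e^{-M_2}$, which is negligible against the target rate. The plan is therefore to prove both bounds for $A-\Poff$ and absorb the diagonal term into the constant.

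For the spectral-norm bound, I would invoke a standard concentration inequality for the adjacency matrix of an inhomogeneous Erd\H{o}s--R\'enyi graph. The off-diagonal entries $\{A_{ij}-P_{ij}\}_{i<j}$ are independent, bounded in $[-1,1]$, symmetric, with variance $P_{ij}(1-P_{ij})\le P_{ij}\le e^{-M_2}$ by \eqref{eq:M1M2}. Hence the maximum expected row sum of variances is at most $n e^{-M_2}$. A matrix Bernstein / Bandeira--van Handel type bound (or the classical result of, e.g., Lei--Rinaldo / Chung--Radcliffe for random graphs) then gives $\opnorm{A-\Poff}\le C\sqrt{\max\{n e^{-M_2},\log n\}}$ with probability at least $1-n^{-c}$: the $\sqrt{n e^{-M_2}}$ term is the bulk of the spectrum when the graph is not too sparse, and the $\sqrt{\log n}$ term takes over in the very sparse regime $e^{-M_2}\lesssim (\log n)/n$ where one must control the largest entries via a truncation or combinatorial path-counting argument. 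I would cite the relevant off-the-shelf theorem rather than reprove it.

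For the linear-functional bound, write $\inner{A-\Poff,X}=\sum_{i<j}2(A_{ij}-P_{ij})X_{ij}$, a sum of independent mean-zero bounded random variables. Its variance is $\sum_{i<j}4 P_{ij}(1-P_{ij})X_{ij}^2\le 4 e^{-M_2}\sum_{i<j}X_{ij}^2\le 2 e^{-M_2}\fnorm{X}^2$, and each summand is bounded by $2\|X\|_\infty$. Bernstein's inequality then yields, for a suitable constant,
\[
\Big|\inner{A-\Poff,X}\Big|\;\le\; C\Big(\sqrt{e^{-M_2}\fnorm{X}^2\log n}\;+\;\|X\|_\infty\log n\Big)
\]
with probability at least $1-n^{-c}$. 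Dividing by $\sqrt{k}\fnorm{X}$ and using Assumption~\ref{assump:srank}, which gives $\fnorm{X}^2\ge M_0 k\opnorm{X}^2\ge M_0 k\|X\|_\infty^2$ (since $\opnorm{X}\ge\|X\|_\infty$ fails in general — instead use $\opnorm{X}\ge \max_{ij}|X_{ij}|$ is false, so I would instead bound $\|X\|_\infty\le\opnorm{X}\le \fnorm{X}/\sqrt{M_0 k}$ via the stable-rank assumption, giving $\|X\|_\infty\log n/(\sqrt k\fnorm{X})\le \log n/(k\sqrt{M_0})$, which is dominated by $\sqrt{\log n}$). The first term contributes $\sqrt{e^{-M_2}\log n}\le\sqrt{\max\{n e^{-M_2},\log n\}}$ after dividing, as desired. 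Finally, take a union bound over the two events.

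The main obstacle is the spectral-norm bound in the genuinely sparse regime $e^{-M_2}\asymp (\log n)/n$ or sparser, where naive matrix-concentration bounds degrade by logarithmic factors and one needs the sharper graph-specific argument (regularization of high-degree vertices or a moment/path-counting computation) to land exactly at $\sqrt{\max\{n e^{-M_2},\log n\}}$ without extra $\log$ factors. Since such results are by now standard in the network literature, I would state the needed concentration bound as a cited lemma and focus the written proof on verifying its hypotheses (variance profile, boundedness) and on the comparatively elementary Bernstein argument for the linear functional.
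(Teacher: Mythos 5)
Your overall route matches the paper's: split off the (deterministic) diagonal so that you work with the mean-zero matrix $A-\Poff$, quote an off-the-shelf spectral concentration bound for inhomogeneous random graphs (the paper uses Lemma~\ref{lem:concen-A} with $d=ne^{-M_2}$), handle $\inner{A-P, X}=\inner{A-\Poff, X}$ (zero diagonal of $X$) by a Bernstein bound, and finish with a union bound. Two slips in the first half are worth flagging, though they do not hurt you: under $\calF_g$ only the off-diagonal entries of $\Theta$ are bounded above by $-M_2$, so $P_{ii}$ can be as large as $\sigma(M_1)$, not $e^{-M_2}$ — the paper simply bounds $\opnorm{\Poff-P}\le \max_i P_{ii}\le 1\le\sqrt{\log n}$, which is all that is needed; and $\|X\|_\infty\le\opnorm{X}$ does hold in general (it is $\max_{i,j}|e_i^\top X e_j|$), so your hedging about that inequality is unnecessary.

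The genuine gap is in the last step of your bound for the linear functional. After dividing Bernstein's linear term $\|X\|_\infty\log n$ by $\sqrt{k}\fnorm{X}$ and invoking the stable-rank assumption, you reach $\log n/(k\sqrt{M_0})$ and assert it is dominated by $\sqrt{\log n}$. That comparison requires $\sqrt{\log n}\lesssim k\sqrt{M_0}$, which fails for fixed $k$ (say $k=2$) and large $n$; so in the sparse regime $ne^{-M_2}\lesssim \log^2 n$ your argument does not deliver the stated bound uniformly with absolute constants. (It also leans on Assumption~\ref{assump:srank}, which the lemma does not assume and which the paper's proof never uses.) The paper sidesteps this by plugging the full target threshold $C\sqrt{k\max\{ne^{-M_2},\log n\}}\,\fnorm{X}$ directly into the Chung--Bernstein tail (Lemmas~\ref{lemma-sum-bernoulli} and~\ref{lemma-AX}); the resulting exponent carries the whole factor $k\max\{ne^{-M_2},\log n\}\ge k\log n$ after using only $p_{\max}\le e^{-M_2}$ and $\fnorm{X}\ge\|X\|_\infty$, giving a polynomially small failure probability with no constraint tying $k$ to $\log n$. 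To repair your version, evaluate the Bernstein tail at the target threshold rather than first fixing the deviation at the $1-n^{-c}$ confidence level, or otherwise control the linear Bernstein term without the unjustified domination step.
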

 \begin{proof}
 For any $\Theta$ in the parameter space, 
 the off diagonal elements of $\Theta$ are uniformly bounded from above by $-M_2$, 
 and so $\max_{i,j}\Poff_{ij}\leq e^{-M_2}$. 
 Moreover, $\max_{i} P_{ii}\leq 1$ under our assumption.
 Thus, $\opnorm{A-P} \leq \opnorm{A-\Poff} + \opnorm{\Poff-P} \leq \opnorm{A-\Poff} + 1$.
 Together with Lemma~\ref{lem:concen-A}, this implies that there exist absolute constants $c_1, C >0$ such that uniformly over the parameter space
 \begin{equation}
 	\P \left(\opnorm{A - P}\leq C\sqrt{\max\left\{ne^{-M_2}, \log n\right\}}\right) \geq 1-n^{-c_1}.
 	\label{eq:lem-prob-part1}
 \end{equation}
 Since the diagonal entries of $X$ are all zeros, we have $\langle A-P, X\rangle = \langle A-\Poff, X\rangle$. 
 Hence, Lemma~\ref{lemma-AX} implies that uniformly over the parameter space,
 \begin{equation}
 \begin{aligned}
 \P \left(\frac{\inner{\bA-\bP, \bX}}{\sqrt{k}\fnorm{X}}\leq C\sqrt{\max\left\{ne^{-M_2}, \log n\right\}}\right) &\geq 1-3\exp \left(-C^2\max\left\{ne^{-M_2}, \log n\right\} k/8\right)\\
 &\geq 1 - 3n^{-C^2k/8}.
 \end{aligned}
 	\label{eq:lem-prob-part2}
 \end{equation}
 Combining~\eqref{eq:lem-prob-part1} and \eqref{eq:lem-prob-part1} finishes the proof.
 \end{proof}

\paragraph{Proof of Theorem~\ref{thm:kernel-convex-upper}}
$1^\circ$ We first establish the deterministic bound.
Observe that $\hbTheta=\hbalpha\one^\top+\one\hbalpha^\top+\hbeta\bX+\hbG$ is the optimal solution to \eqref{eq:convex-obj}, and that the true parameter $\tall=\ta\one^\top+\one\ta^\top+\tb\bX+\tg$ is feasible. 
Thus, we have the basic inequality
\begin{equation}
\wl(\hbTheta)-\wl(\tall)+\lambda_n (\|\hbG\|_*-\|\tg\|_* )\leq 0,
\label{eq:basic}
\end{equation}
where $h$ is defined in \eqref{eq:h}.
For any $\bTheta \in \calF_g$, $|\Theta_{ij}|\leq M_1$ for all $i,j$ and so for $\tau=e^{M_1}/(1+e^{M_1})^2$, the Hessian
\begin{equation*}
\nabla^2 \wl(\bTheta)=\mathrm{diag}\big(\mathrm{vec}\big(\sigma(\bTheta) \circ \left(1-\sigma(\bTheta)\right)\big)\big)\succeq \tau\bI_{n^2\times n^2}.	
\end{equation*}
For any vector $b$, $\mathrm{diag}(b)$ is the diagonal matrix with elements of $a$ on its diagonals.
For any matrix $B = [b_1, \dots, b_n]\in \mathbb{R}^{n\times n}$, $\mathrm{vec}(B)\in \mathbb{R}^{n^2}$ is obtained by stacking $b_1, \dots, b_n$ in order.
For any square matrices $A$ and $B$, $A\succeq B$ if and only if $A-B$ is positive semi-definite.
With the last display, Taylor expansion gives
\begin{equation*}
\wl(\hbTheta)-\wl(\tall)\geq 
\langle \nabla_{\bTheta} \wl(\tall),\dall \rangle +\frac{\tau}{2}\fnorm{\dall}^2.
\end{equation*}
On the other hand, triangle inequality implies
\begin{equation*}
\lambda_n (\|\hbG\|_*-\|\tg\|_* )\geq -\lambda_n\|\Delta_{\bG}\|_*.
\end{equation*}
Together with \eqref{eq:basic}, the last two displays imply
\begin{equation*}
\langle \nabla_{\bTheta} \wl (\tall),\dall \rangle +\frac{\tau}{2}\fnorm{\dall}-\lambda_n\|\dg\|_*\leq 0.
\end{equation*}
Triangle inequality further implies
\begin{equation}
\begin{aligned}
\frac{\tau}{2}\|\Delta_{\bTheta}\|_F^2
& \leq 
\lambda_n\|\Delta_{\bG}\|_*
+ | \langle \nabla_{\bTheta} \wl (\tall),\dg+ \da\one^\top +\one\da^\top  \rangle |
 + | \db \langle \nabla_{\bTheta} \wl(\tall), \bX \rangle |\\
& = \lambda_n\|\Delta_{\bG}\|_*
+ | \langle \bA-\bP, \dg+ 2\da\one^\top \rangle |
+ | \db\, \langle \bA-\bP, \bX \rangle |\\
& \leq \lambda_n\|\Delta_{\bG}\|_*
+ | \langle \bA-\bP, \dg+ 2\da\one^\top \rangle |
 + \lambda_n\sqrt{k}\fnorm{\db X}.
\end{aligned}
 \label{convexity}
\end{equation}
Here the equality is due to the symmetry of $\bA-\bP$ and the last inequality is due to the condition imposed on $\lambda_n$.
We now further upper bound the first two terms on the rightmost side.
First, by Lemma~\ref{lem:kernel-cone} and the assumption that $| \langle \bA-\bP, \bX \rangle|\leq \lambda_n\sqrt{k}\fnorm{X}$, we have
\begin{equation}
\begin{aligned}
\|\Delta_{\bG}\|_* & \leq 4\sqrt{2k}\,\fnorm{\mathcal{P}_{\mathcal{M}_k}\dg} 
+ 2\fnorm{\da\one^\top}+ 2\sqrt{k}\,\fnorm{\db X} + 4 \|\gresid\|_*. 
\end{aligned}
\label{eq:kernel-part2}
\end{equation}
Moreover, H\"{o}lder's inequality implies
\begin{equation}
\begin{aligned}
|\langle  \bA-\bP,\dg+2 \da\one^\top  \rangle | 
& \leq \opnorm{\bA-\bP} (\|\dg\|_*+2\|\da\one^\top\|_* )\\
& = \opnorm{\bA-\bP} (\|\dg\|_*+2\fnorm{\da\one^\top} )\\
&\leq \frac{\lambda_n}{2} (\|\dg\|_* + 2 \fnorm{\da\one^\top} ).
\end{aligned}
\label{eq:kernel-part1}
\end{equation}
Here the equality holds since $\da\one^\top$ is a rank one matrix.
Substituting \eqref{eq:kernel-part2} and \eqref{eq:kernel-part1} into \eqref{convexity}, we obtain that
\begin{equation*}
\begin{aligned}
\frac{\tau}{2}\fnorm{\dall}^2 &\leq \frac{3\lambda_n}{2}\|\dg\|_*+\lambda_n \fnorm{\da\one^\top}+\lambda_n\sqrt{k}\fnorm{\db X}\\
&\leq \frac{3\lambda_n}{2} (4\sqrt{2k}\fnorm{\mathcal{P}_{\mathcal{M}_k}\dg} + 2\fnorm{\da\one^\top}+ 2\sqrt{k}\|\db \bX\|_F + 4\|\gresid\|_* )\\
& \hskip 15em
+\lambda_n \fnorm{\da\one^\top}+\lambda_n\sqrt{k}\fnorm{\db X}\\
& \leq C_1\lambda_n \big( \sqrt{k}\, (\fnorm{\mathcal{P}_{\mathcal{M}_k}\dg} + \fnorm{\da\one^\top} + \fnorm{\db \bX} ) +  \|\gresid\|_* \big).
\end{aligned}
\end{equation*}
By Lemma~\ref{lem:kernel-identifiability}, we can further bound the righthand side as
\begin{equation*}
\begin{aligned}
\frac{\tau}{2}\fnorm{\dall}^2 &\leq  C_2\lambda_n \sqrt{k}\, (\fnorm{\dall} + \|\gresid\|_* /\sqrt{k} ) + C_1\lambda_n \|\gresid\|_*\\
&\leq C_2\lambda_n \sqrt{k}\, \fnorm{\dall} + (C_1 + C_2)\lambda_n \|\gresid\|_*.
\end{aligned}
\end{equation*}
Solving the quadratic inequality, we obtain
\begin{equation*}
\fnorm{\dall}^2 \leq C'\left(\frac{\lambda_n^2k}{\tau^2} + \frac{\lambda_n \|\gresid\|_*}{\tau}\right).
\end{equation*}
Note that $\tau\geq ce^{-M_1}$ for some positive constant $c$. 
Therefore, 
\begin{equation*}
	\fnorm{\dall}^2\leq {C}\left(e^{2M_1}\lambda_n^2k + e^{M_1}\lambda_n \|\gresid\|_*\right).
\end{equation*}

$2^\circ$ We now turn to the probabilistic bound.
By Lemma~\ref{lem:prob}, there exist constants $c_1, C_1$ such that for any $\lambda_n\geq 2C_1\sqrt{\max\left\{ne^{-M_2}, \log n\right\}}$, we have uniformly over the parameter space that
\begin{equation*}
\P \left(\lambda_n \geq 2\max\left\{\opnorm{A - P}, \frac{\inner{\bA-\bP, \bX}}{\sqrt{k}\fnorm{X}} \right\} \right) \geq 1 - n^{-c_1}.
\end{equation*}
Denote this event as $E$. Since the conditions on $\lambda_n$ in the first part of Theorem~\ref{thm:kernel-convex-upper} are satisfied on $E$, it follows that there exists an absolute constant $C>0$ such that uniformly over the parameter space, with probability at least $1-n^{-c_1}$,
$\fnorm{\dall}^2 \leq C \phi_n^2$.
This completes the proof.
\qed

\subsubsection{Proof of Lemma~\ref{lem:kernel-cone}}
\label{sec:proof-43-lem1}

By the convexity of $\wl (\bTheta)$,
\begin{align*}
\wl(\hbTheta)-\wl(\tall) 
&\geq  \langle \nabla_{\bTheta} \wl(\tall), \dall \rangle\\
& = - \langle  \bA-\bP,\, \dg+2 \da\one^\top +\db \bX \rangle\\
&\geq  -\op{\bA-\bP}\big(\|\dg\|_*+2\|\da\one^\top\|_*\big) 
- | \langle  \bA-\bP, \db\bX \rangle|\\
&\geq 
-\frac{\lambda_n}{2}
\big(\|\mathcal{P}_{\mathcal{M}_k}\dg\|_*+\|\mathcal{P}_{\mathcal{M}^\perp_k}\dg\|_*+2\fnorm{\da\one^\top}\big) 
- | \langle  \bA-\bP, \db\bX \rangle|\,.
\end{align*}
The last inequality holds since $\lambda_n\geq 2\op{\bA-\bP}$ and $\calP_{\calM_k}+\calP_{\calM_k^\perp}$ equals identity.
On the other hand, by the definition of $\gresid$, 
\begin{equation*}
\begin{aligned}
\|\hbG\|_*-\|\tg\|_*&=\|\mathcal{P}_{\mathcal{M}_k}\tg+ \gresid + \mathcal{P}_{\mathcal{M}_k}\dg+\mathcal{P}_{\mathcal{M}^\perp_k}\dg\|_*-\|\mathcal{P}_{\mathcal{M}_k}\tg+ \gresid\|_*\\
&\geq \|\mathcal{P}_{\mathcal{M}_k}\tg+\mathcal{P}_{\mathcal{M}^\perp_k}\dg\|_* -\|\gresid\|_*-\|\mathcal{P}_{\mathcal{M}_k}\dg\|_* - \|\mathcal{P}_{\mathcal{M}_k}\tg\|_*- \|\gresid\|_*\\
&=\|\mathcal{P}_{\mathcal{M}_k}\tg\|_*+\|\mathcal{P}_{\mathcal{M}^\perp_k}\dg\|_* - 2\|\gresid\|_* - \|\mathcal{P}_{\mathcal{M}_k}\dg\|_* - \|\mathcal{P}_{\mathcal{M}_k}\tg\|_* \\
&=\|\mathcal{P}_{\mathcal{M}^\perp_k}\dg\|_*-\|\mathcal{P}_{\mathcal{M}_k}\dg\|_* - 2\|\gresid\|_*\,.
\end{aligned}
\end{equation*} 
Here, the second last equality holds since $\calP_{\calM_k}\tg$ and $\mathcal{P}_{\mathcal{M}^\perp_k}\dg$ have orthogonal column and row spaces.
Furthermore, since $\hbTheta$ is the optimal solution to \eqref{eq:convex-obj}, and $\tall$ is feasible, the basic inequality and the last two displays imply
\begin{equation*}
\begin{aligned}
0 &\geq \wl (\hbTheta) - \wl (\tall)+\lambda_n \big( \|\hbG\|_* - \|\tg\|_* \big)\\
&\geq -\frac{\lambda_n}{2}\big(\|\mathcal{P}_{\mathcal{M}_k}\dg\|_*+\|\mathcal{P}_{\mathcal{M}^\perp_k}\dg\|_*+2\fnorm{\da\one^\top}\big)\\
&\quad\quad\quad\quad\quad - | \langle  \bA-\bP, \db\bX \rangle| + \lambda_n\big(\|\mathcal{P}_{\mathcal{M}^\perp_k}\dg\|_*-\|\mathcal{P}_{\mathcal{M}_k}\dg\|_* - 2\|\gresid\|_*\big)\\
& =  \frac{\lambda_n}{2} \big(\|\mathcal{P}_{\mathcal{M}^\perp_k}\dg\|_*- 3\|\mathcal{P}_{\mathcal{M}_k}\dg\|_*  - 4\|\gresid\|_* - 2\fnorm{\da\one^\top}\big)
- | \langle  \bA-\bP, \db\bX \rangle|\,.
\end{aligned}
\end{equation*}
Rearranging the terms leads to
\begin{equation*}
\|\mathcal{P}_{\mathcal{M}^\perp_k}\dg \|_* \leq 3\|\mathcal{P}_{\mathcal{M}_k}\dg \|_* + 2\fnorm{\da\one^\top}  + \frac{2}{\lambda_n}| \langle \bA-\bP, \db\bX \rangle| + 4\|\gresid\|_*\,,
\end{equation*}
and triangle inequality further implies 
\begin{equation*}
\|\dg \|_* \leq 4\|\mathcal{P}_{\mathcal{M}_k}\dg \|_* + 2\fnorm{\da\one^\top}+ \frac{2}{\lambda_n}|\big \langle \bA-\bP, \db\bX \big\rangle| + 4\|\gresid\|_*\,.
\end{equation*}
Last but not least, note that 
the rank of $\mathcal{P}_{\mathcal{M}_k}\dg$ is at most $2k$, and so we complete the proof by further bounding the first term on the righthand side of the last display by  $4\sqrt{2k}\fnorm{\mathcal{P}_{\mathcal{M}_k}\dg}$.

\subsubsection{Proof of Lemma~\ref{lem:kernel-identifiability}}
\label{sec:proof-43-lem2}
By definition, we have the decomposition
\begin{equation*}
\begin{aligned}
\fnorm{\dall}^2&=\fnorm{\dg+\da\one^\top+\one\da^\top+\db\bX}^2\\
& = \fnorm{\dg+\da\one^\top+\one\da^\top}^2
+ \fnorm{\db X}^2
+ 2\, \langle \dg+\da\one^\top+\one\da^\top, \db\bX \rangle\\
& = \fnorm{\dg}^2
+2\fnorm{\da\one^\top}^2
+2\,\tr(\da\one^\top\da\one^\top)+\fnorm{\db X}^2
+2\,\langle \dg+2\da\one^\top, \db\bX \rangle\,.
\end{aligned}
\end{equation*}
Here the last equality is due to the symmetry of $\bX$ and the fact that $\dg \one = 0$. 
Since $\tr(\da\one^\top\da\one^\top)=\tr(\one^\top\da\one^\top\da)=|\one^\top\da|^2\geq 0$, the last display implies
\begin{equation}
\begin{aligned}
\fnorm{\dall}^2&\geq \fnorm{\dg}^2+2\fnorm{\da\one^\top}^2+\fnorm{\db X}^2+2\,\langle \dg+2\da\one^\top, \db\bX \rangle.
\end{aligned}
\label{eq:identi-obj}
\end{equation}
Furthermore, we have
\begin{align*}
&  |\langle \dg +2\da\one^\top, \db\bX \rangle| \\
& \leq \|\dg\|_*\opnorm{\db\bX} + 2 \|\da\one^\top\|_*\opnorm{\db\bX}
\\
& \leq \big(4\sqrt{2k}\fnorm{\mathcal{P}_{\mathcal{M}_k}\dg} + 4\fnorm{\da\one^\top}+ \frac{2}{\lambda_n}| 
\langle \bA-\bP, \db\bX \rangle| + 4\|\gresid\|_*\big)\opnorm{\db\bX}\\
&\leq \big(4\sqrt{2k}\fnorm{\mathcal{P}_{\mathcal{M}_k}\dg} + 4\fnorm{\da\one^\top}+ 2\sqrt{k}\fnorm{\db\bX} + 4\|\gresid\|_* \big)
\frac{\fnorm{\db X}}{\sqrt{\srank(\bX)}}\\
& \leq \frac{C_0\sqrt{k}}{\sqrt{\srank(\bX)}} \big(\fnorm{\dg}^2+2\fnorm{\da\one^\top}^2+\fnorm{\db\bX}^2\big) + \frac{4\|\gresid\|_*}{\sqrt{\srank(\bX)}} \fnorm{\db X}\\
& \leq \frac{C_0\sqrt{k}}{\sqrt{\srank(\bX)}} \big(\fnorm{\dg}^2+2\fnorm{\da\one^\top}^2+\fnorm{\db\bX}^2\big) + \frac{2\|\gresid\|_*^2}{c_0\srank(\bX)} + 2c_0\fnorm{\db X}^2
\end{align*}
for any constant $c_0\geq 0$. 
Here, the first inequality holds since the operator norm and the nuclear norm are dual norms under trace inner product.
The second inequality is due to Lemma~\ref{lem:kernel-cone} and the fact that $\|\da\one^\top\|_* = \fnorm{\da\one^\top}$ since $\da\one^\top$ is of rank one.
The third inequality is due to the definition of $\srank(X)$ and that $| \langle \bA-\bP, \bX \rangle | \leq \lambda_n \sqrt{k}\fnorm{X}$ by assumption and $\db$ is a scalar.
The fourth inequality is due to Assumption \ref{assump:srank} and the last due to $2ab\leq a^2 + b^2$ for any $a,b\in \mathbb{R}$.
Substituting these inequalities into \eqref{eq:identi-obj} leads to
\begin{equation*}
\begin{aligned}
\fnorm{\dall}^2 & \geq \left(1-\frac{2C_0\sqrt{k}}{\sqrt{\srank(\bX)}} \right)\fnorm{\dg}^2 + \left(2 - \frac{2C_0\sqrt{k}}{\sqrt{\srank(\bX)}}\right)\fnorm{\da\one^\top}^2\\
&\quad\quad+\left(1 - \frac{2C_0\sqrt{k}}{\sqrt{\srank(\bX)}} - 4c_0 \right)\fnorm{\db X}^2 - \frac{4\|\gresid\|_*^2}{c_0\srank(\bX)}\,.
\end{aligned}
\end{equation*}
On the other hand, notice that $\tr(\da\one^\top\da\one^\top)\leq \|\da\one^\top\|_F^2$, we have
\begin{equation*}
\begin{aligned}
\fnorm{\dall}^2&\leq \left(1 + \frac{2C_0\sqrt{k}}{\sqrt{\srank(\bX)}}  \right)\fnorm{\dg}^2+\left(4 + \frac{2C_0\sqrt{k}}{\sqrt{\srank(\bX)}} \right)\fnorm{\da\one^\top}^2\\
&\quad\quad+\left(1 + \frac{2C_0\sqrt{k}}{\sqrt{\srank(\bX)}} + 4c_0  \right)\fnorm{\db X}^2 + \frac{4\|\gresid\|_*^2}{c_0\srank(\bX)}\,.
\end{aligned}
\end{equation*}
Together with Assumption \ref{assump:srank}, the last two displays complete the proof.

\subsection{Proofs of Lemma~\ref{lem:error-metric} and Theorem~\ref{thm:kernel-nonconvex-upper}}
\newcommand{\et}{\wt{e}}
Again, we directly prove the results under the general model. Recall that $\tg \approx U_kD_kU_k^\top$ is the top-$k$ eigen-decomposition of $\tg$, $\tz=U_kD_k^{\half}$, $\gresid = \tg - U_kD_kU_k^\top$ and $\dgt = \zt(\zt)^\top - \tz\tz^\top$. For the convenience of analysis, we will instead analyze the following quantity, 
\begin{equation*}
\et_t=\op{Z^0}^2\fnorm{\dzt}^2+2\fnorm{\dat\one^\top}^2+\fnorm{\dbt X}^2.
\end{equation*}
Under Assumption~\ref{assump:init}, 
\begin{equation}
	\op{\Delta_{Z^0}} \leq \delta \op{\tz}, ~~
	(1-\delta)e_t \leq \et_t \leq (1+\delta) e_t. 
\label{eq:ett}
\end{equation}
for some sufficiently small constant $\delta \in (0, 1)$. The rest of the proof relies on the following  lemmas. 
\begin{lemma}
	For any $\tall\in\paramK$, $\max\limits_{1\leq i\leq n}\|(\tz)_i\|_2^2\leq M_1/3$.
\end{lemma}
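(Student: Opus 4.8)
The plan is to read the claim off directly from the spectral structure of $\tg$ together with the diagonal constraint that is built into the parameter space $\paramK$. First I would recall from \eqref{eq:Zstar-general} that $\tz=U_kD_k^{\half}$, where $U_kD_kU_k^\top$ is the best rank-$k$ approximation to $\tg$; hence $\tz\tz^\top=U_kD_k^{\half}(U_kD_k^{\half})^\top=U_kD_kU_k^\top$, and the squared length of the $i$-th row of $\tz$ is exactly the $i$-th diagonal entry of this matrix, i.e. $\|(\tz)_i\|_2^2=(\tz\tz^\top)_{ii}=(U_kD_kU_k^\top)_{ii}$.

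Next I would use positive semidefiniteness. Since $\tall\in\paramK$ forces $\tg\in\mathbb{S}_+^n$, all eigenvalues of $\tg$ are nonnegative, so in the eigendecomposition $\tg=\sum_{i}\lambda_i u_iu_i^\top$ with $\lambda_1\geq\cdots\geq\lambda_n\geq 0$, both the truncated part $U_kD_kU_k^\top=\sum_{i\leq k}\lambda_i u_iu_i^\top$ and the remainder $\gresid=\tg-U_kD_kU_k^\top=\sum_{i>k}\lambda_i u_iu_i^\top$ are themselves positive semidefinite. In particular $(\gresid)_{ii}\geq 0$ for every $i$, so comparing diagonal entries in $\tg=U_kD_kU_k^\top+\gresid$ gives $(U_kD_kU_k^\top)_{ii}=(\tg)_{ii}-(\gresid)_{ii}\leq (\tg)_{ii}$. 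The definition of $\paramK$ in \eqref{eq:para-general} imposes $\max_{1\leq i\leq n}(\tg)_{ii}\leq M_1/3$, so chaining the inequalities yields $\|(\tz)_i\|_2^2=(U_kD_kU_k^\top)_{ii}\leq(\tg)_{ii}\leq M_1/3$ for all $i$, which is the assertion. (As usual in this section I would treat the case $X$ nonzero; the reduction is the same when $X=0$.)

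There is no genuinely hard step here: the only point requiring a moment of care is the observation that truncating the spectrum of a positive semidefinite matrix leaves both the kept part and the discarded part positive semidefinite, which is precisely what makes the diagonal of $U_kD_kU_k^\top$ dominated entrywise by that of $\tg$. I would also note that the identical argument applies verbatim in the inner-product parameter space $\param$, where $\tg=\tz\tz^\top$ is exactly rank $k$, $\gresid=0$, and the bound $\|(\tz)_i\|_2^2=(\tg)_{ii}\leq M_1/3$ is then immediate from \eqref{eq:para-nonconvex}; this is the form in which the lemma will be invoked in the proof of \prettyref{thm:kernel-nonconvex-upper}.
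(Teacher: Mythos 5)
Your argument is correct and is essentially the paper's own proof: the paper simply notes that $\tg-\tz\tz^\top=\gresid\in\mathbb{S}_+^n$ (which you justify via the spectral truncation of a PSD matrix) and then compares diagonal entries, giving $\|(\tz)_i\|_2^2\leq G_{ii}\leq M_1/3$. No substantive difference.
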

\begin{proof}
	By definition, $\tg - \tz\tz^\top\in\mathcal{S}_+^n$, which implies, $e_i^\top \left(\tg - \tz\tz^\top\right)e_i = G_{ii} - \|(\tz)_i\|_2^2\geq 0$, that is $\|(\tz)_i\|_2^2 \leq G_{ii}\leq M_1/3$ for any $1\leq i\leq n$.
\end{proof}

\begin{lemma}
If Assumption~\ref{assump:srank} holds, there exist constants $0\leq c_0<1$ and $C_0$  such that  
\begin{equation*}
\begin{gathered}
\fnorm{\dallt}^2  \geq (1 - c_0)\left(\fnorm{Z^t(Z^t)^\top - \tz\tz^\top}^2+2\fnorm{\dat\one^\top}^2+\fnorm{\dbt X}^2\right) - C_0\fnorm{\gresid}^2, \\
\fnorm{\dallt}^2  \leq (1 + c_0)\left(\fnorm{Z^t(Z^t)^\top - \tz\tz^\top}^2+2\fnorm{\dat\one^\top}^2+\fnorm{\dbt X}^2\right) + C_0\fnorm{\gresid}^2. 
\end{gathered}
\end{equation*}
\label{lem:kernel-identifiability-nonconvex}
\end{lemma}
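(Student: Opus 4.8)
The argument will parallel that of Lemma~\ref{lem:kernel-identifiability}, with one simplification and one extra term to track. The simplification is that, unlike the optimizer of the convex program \eqref{eq:convex-obj}, the iterates give a difference $\dgt = \zt(\zt)^\top - \tz\tz^\top$ of rank at most $2k$, so $\|\dgt\|_* \le \sqrt{2k}\,\fnorm{\dgt}$ and no cone inequality in the spirit of Lemma~\ref{lem:kernel-cone} is required. The extra term is the misspecification residual: since $\tg = \tz\tz^\top + \gresid$ by \eqref{eq:Zstar-general},
\[
\dallt \;=\; \Theta^t - \Theta_\star \;=\; \dgt + \dat\one^\top + \one\dat^\top + \dbt X - \gresid .
\]
The plan is to expand $\fnorm{\dallt}^2$, discard the cross terms that vanish, control the two covariate cross terms through Assumption~\ref{assump:srank}, and absorb the $\gresid$-contributions into the additive $C_0\fnorm{\gresid}^2$.

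First I would dispose of the benign cross terms. Because the columns of $\zt$ (which lies in $\projz$, so $J\zt = \zt$) and of $\tz$ (built from the centered matrix $\tg$) are centered, both $\dgt$ and $\gresid$ annihilate $\one$, whence $\langle\dgt,\dat\one^\top+\one\dat^\top\rangle = \langle\gresid,\dat\one^\top+\one\dat^\top\rangle = 0$. The self-term $\fnorm{\dat\one^\top+\one\dat^\top}^2 = 2\fnorm{\dat\one^\top}^2 + 2(\one^\top\dat)^2$ is handled exactly as in Lemma~\ref{lem:kernel-identifiability} (drop the nonnegative part for the lower bound, use $(\one^\top\dat)^2 \le \fnorm{\dat\one^\top}^2$ for the upper bound). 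What is left to bound are $\dbt\langle\dgt,X\rangle$, $\dbt\langle\dat\one^\top,X\rangle$, $\langle\dgt,\gresid\rangle$, and $\dbt\langle X,\gresid\rangle$.

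The two covariate cross terms invoke the stable-rank hypothesis. Writing $\opnorm{X} = \fnorm{X}/\sqrt{\srank(X)} \le \fnorm{X}/\sqrt{M_0 k}$, I would bound $|\dbt\langle\dgt,X\rangle| \le |\dbt|\,\|\dgt\|_*\,\opnorm{X} \le \sqrt{2/M_0}\,\fnorm{\dgt}\,\fnorm{\dbt X}$, using that $\dgt$ has rank at most $2k$, and $|\dbt\langle\dat\one^\top,X\rangle| = |\dbt|\,|\dat^\top X\one| \le |\dbt|\,\|\dat\|\,\sqrt n\,\opnorm{X} = M_0^{-1/2}k^{-1/2}\,\fnorm{\dat\one^\top}\,\fnorm{\dbt X}$. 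Applying $2ab \le a^2 + b^2$ to each, the coefficients of $\fnorm{\dgt}^2$, $\fnorm{\dat\one^\top}^2$ and $\fnorm{\dbt X}^2$ are perturbed only by $O(M_0^{-1/2})$, which Assumption~\ref{assump:srank} makes as small as we wish.

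The residual cross terms are the one spot where the argument departs (slightly) from the convex case. Expanding $\fnorm{\dallt}^2$ produces $\fnorm{\gresid}^2$ outright, together with $-2\langle\dgt,\gresid\rangle$ and $-2\dbt\langle X,\gresid\rangle$; I would bound these by plain Cauchy--Schwarz in Frobenius norm, $|\langle\dgt,\gresid\rangle| \le \fnorm{\dgt}\fnorm{\gresid}$ and $|\dbt\langle X,\gresid\rangle| \le \fnorm{\dbt X}\fnorm{\gresid}$, followed by $2ab \le \epsilon a^2 + \epsilon^{-1}b^2$ with a small fixed $\epsilon$. It is important here \emph{not} to route through the $\|\cdot\|_*$--$\opnorm{\cdot}$ duality that was natural in Lemma~\ref{lem:kernel-identifiability}, since $\|\dgt\|_* \le \sqrt{2k}\,\fnorm{\dgt}$ would then inject a spurious factor $k$ into the residual, whereas the target bound carries a clean $C_0\fnorm{\gresid}^2$. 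Collecting everything and choosing $M_0$ large and $\epsilon$ small yields both inequalities with a single $0 \le c_0 < 1$ and a constant $C_0$ depending only on $\epsilon$. The only real obstacle is bookkeeping: each application of the elementary inequalities must deposit its ``square'' part against the matching one of the three target quantities so that all three coefficients stay simultaneously within $(1-c_0,\,1+c_0)$; there is no analytic difficulty beyond that.
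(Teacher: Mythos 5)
Your proposal is correct and follows essentially the same route as the paper's proof: expand $\fnorm{\dallt}^2$, kill the $\one$-cross terms by centering, control the covariate cross terms via the rank-$2k$ nuclear-norm bound together with Assumption~\ref{assump:srank}, and absorb the $\gresid$ cross terms by Cauchy--Schwarz and $2ab\le \epsilon a^2+\epsilon^{-1}b^2$ into $C_0\fnorm{\gresid}^2$. The only difference is cosmetic bookkeeping (the paper folds $-\gresid$ into $\Delta_{G^t}=Z^t(Z^t)^\top-\tg$ first and then peels it off, whereas you keep it explicit from the start).
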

\begin{proof}
See \prettyref{sec:proof-44-lem1}.
\end{proof}
 
\begin{lemma} 
Under Assumption~\ref{assump:srank}, let $\zeta_n = \lambdacond$, if $\fnorm{\dzt}\leq c_0e^{-M_1}\op{\tz}/\kappa^2_{\tz}$ and $\op{\tz}^2\geq C_0e^{M_1}\kappa_{\tz}^2\zeta_n^2$ for sufficiently small constant $c_0$ and sufficiently large constant $C_0$, there exist a constant $c$ such that, for any $\eta\leq c$, there exist positive constants $\rho$ and $C$, 
\begin{equation*}
\begin{aligned}
\et_{t+1}&\leq \left(1- \frac{\eta}{e^{M_1}\kappa^2}\rho\right)\et_t + \eta C \left(\fnorm{\gresid}^2 + e^{M_1}\zeta_n^2k \right).
\end{aligned}
\end{equation*}
\label{lem:kernel-nonconvex-1}
\end{lemma}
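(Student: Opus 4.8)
The plan is a one-step analysis of the projected gradient iteration: assuming the current iterate lies in the stated neighbourhood, I will show one descent step multiplies $\wt e_t$ by $1-\rho\eta/(e^{M_1}\kappa_{\tz}^2)$ and adds a statistical floor of order $\eta(\fnorm{\gresid}^2+e^{M_1}\zeta_n^2k)$.

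\emph{Step 1: discard the projection.} The matrix $\tz R$ is feasible for $\projz$ for every $R\in O(k)$ (by $J\tz=\tz$ and, by the lemma above, $\max_i\|(\tz)_{i*}\|^2\le M_1/3$), and $\ta\in\proja$, $\tb\in\projb$ by definition of $\paramK$; since Euclidean projection onto a convex set is non-expansive, it suffices to bound $\op{Z^0}^2\fnorm{\wt Z^{t+1}-\tz R^t}^2+2\fnorm{(\wt\alpha^{t+1}-\ta)\one^\top}^2+\fnorm{(\wt\beta^{t+1}-\tb)X}^2$ for the \emph{un}-projected updates, with $R^t$ the optimal rotation at time $t$. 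Using $\nabla_\Theta g(\Theta^t)=\sigt-A$ together with the chain rule and the identity $\dgt=\dzt(Z^t)^\top+Z^t(\dzt)^\top-\dzt(\dzt)^\top$, this equals $\wt e_t-2\eta\mathcal B+\eta^2Q$, where $Q$ is a quadratic-in-gradient remainder and $\mathcal B$ differs from $\inner{\sigt-A,\,\Theta^t-\Theta_\star^k}$ (with $\Theta_\star^k=\ta\one^\top+\one\ta^\top+\tb X+\tz\tz^\top$ the rank-$k$ target) only by a second-order term $\inner{\sigt-A,\dzt(\dzt)^\top}$ and a term proportional to $\inner{\sigt-A,\dbt X}$ reflecting the asymmetric weight of the $\beta$ coordinate in $\wt e_t$. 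The step sizes \eqref{eq:stepsize} are chosen precisely so the first-order coefficients collapse to the clean $2\eta$.

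\emph{Step 2: lower bound $\mathcal B$ (restricted strong convexity).} Decompose
\[
\sigt-A=\big(\sigt-\sigma(\Theta_\star^k)\big)+\big(\sigma(\Theta_\star^k)-\sigp\big)+\big(\sigp-\Poff\big)-\big(A-\Poff\big).
\]
Paired against $\Theta^t-\Theta_\star^k=\dgt+\dat\one^\top+\one\dat^\top+\dbt X$, the first bracket yields the curvature: $\sigma'\ge\tau\ge ce^{-M_1}$ on the bounded range gives $\inner{\sigt-\sigma(\Theta_\star^k),\Theta^t-\Theta_\star^k}\ge\tau\fnorm{\Theta^t-\Theta_\star^k}^2$, and Lemma~\ref{lem:kernel-identifiability-nonconvex} (which uses Assumption~\ref{assump:srank} to decouple the $\beta X$ and $\alpha\one^\top+\one\alpha^\top$ directions from the low-rank direction) turns this into $\ge ce^{-M_1}\big(\fnorm{\dgt}^2+\fnorm{\dat\one^\top}^2+\fnorm{\dbt X}^2\big)$. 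The second bracket is $O(\fnorm{\gresid})$ in Frobenius norm by $1$-Lipschitzness of $\sigma$; the third is diagonal with operator norm $\le1$; the last is mean-zero noise governed by $\op{A-\Poff}\le\op{A-P}+1$ and $|\inner{A-\Poff,X}|$, hence by $\zeta_n$. Pairing the diagonal, mis-specification, noise and $\dzt(\dzt)^\top$ pieces against the error, I bound each by H\"older together with the fact that $\dgt$ and $\dzt(\tz R^t)^\top$ have rank at most $2k$ (so nuclear norm $\le\sqrt{2k}$ times Frobenius norm) and then by Young, absorbing half of the curvature budget; \emph{crucially} I keep the budget in the component form $\fnorm{\dgt}^2+\fnorm{\dat\one^\top}^2+\fnorm{\dbt X}^2$ while doing Young, so the additive remainder comes out as $\eta C(\fnorm{\gresid}^2+e^{M_1}\zeta_n^2k)$ with \emph{no} $\kappa_{\tz}$ factor. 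The quadratic term $\eta^2Q$ carries an extra $\eta$ (via $\op{A-\sigt}\le\op{A-\Poff}+1$ and the row bound on $Z^t$) and is absorbed once $\eta\le c$. Only at the end do I convert the remaining half-budget via Lemma~\ref{lem:error-metric}, $\fnorm{\dgt}^2+\fnorm{\dat\one^\top}^2+\fnorm{\dbt X}^2\gtrsim\kappa_{\tz}^{-2}\wt e_t$ — the single source of the $1/\kappa_{\tz}^2$ in the rate — and use $\wt e_t\asymp e_t$ from \eqref{eq:ett}; this gives the claimed recursion.

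\emph{Role of the hypotheses and the main obstacle.} The bound $\fnorm{\dzt}\le c_0e^{-M_1}\op{\tz}/\kappa_{\tz}^2$ keeps the iterate in the basin where Lemma~\ref{lem:error-metric} and the restricted strong convexity estimate apply and where $\inner{\sigt-A,\dzt(\dzt)^\top}$ is dominated by the linear term, while $\op{\tz}^2\ge C_0e^{M_1}\kappa_{\tz}^2\zeta_n^2$ ensures the curvature at the signal level beats the noise so the contraction survives the Young splits. The main obstacle is the curvature lower bound on $\mathcal B$: one must pass between $\dzt$ and $\dgt$ through the optimal rotation without losing the $\sigma_{\min}(\tz)^2=\op{\tz}^2/\kappa_{\tz}^2$ factor (which forces the smallness hypothesis on $\fnorm{\dzt}$), prevent the $\beta X$ direction from cancelling against $ZZ^\top$ and $\alpha\one^\top+\one\alpha^\top$ (which is exactly what the stable-rank assumption buys), and carry the mis-specification matrix $\gresid$ through every estimate since the gradient does not vanish at $(\tz,\ta,\tb)$ even in expectation. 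The remaining work is careful but routine tracking of the $e^{M_1}$ and $\kappa_{\tz}$ factors.
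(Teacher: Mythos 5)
Your overall skeleton matches the paper's: drop the projection by non-expansiveness and feasibility of $\tz R$, expand the three updates with the weighted step sizes \eqref{eq:stepsize}, extract a restricted-strong-convexity drift, do the Young splits in component form, and convert to $e_t$ only at the very end through the $\kappa_{\tz}^{-2}$ factor. The substantive deviation is where the curvature comes from, and it opens a genuine gap. You linearize around the rank-$k$ surrogate $\Theta_\star^k=\ta\one^\top+\one\ta^\top+\tb X+\tz\tz^\top$ and keep only the first-order curvature $\langle \sigt-\sigma(\Theta_\star^k),\,\Theta^t-\Theta_\star^k\rangle\ \ge\ \tau\fnorm{\Theta^t-\Theta_\star^k}^2$ with $\tau\asymp e^{-M_1}$ as your sole negative quadratic budget. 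With that budget, two of your absorptions cannot produce the stated constants. First, the mis-specification bracket $\sigma(\Theta_\star^k)-\sigp$ paired against $\Theta^t-\Theta_\star^k$ is of size $\fnorm{\gresid}\,\fnorm{\Theta^t-\Theta_\star^k}$, and Young against a $\tau$-sized budget costs $\tau^{-1}\fnorm{\gresid}^2\asymp e^{M_1}\fnorm{\gresid}^2$; the lemma (and Theorem~\ref{thm:kernel-nonconvex-upper} downstream) requires the floor $\fnorm{\gresid}^2$ with no $e^{M_1}$ factor, and $M_1$ is allowed to grow with $n$. Second, both the second-order term $\langle \sigt-A,\,\dzt\dzt^\top\rangle$ and the $\eta^2$ remainder contain pieces of order $\fnorm{\sigt-\sigp}^2$ with absolute-constant coefficients (e.g.\ $\fnorm{(\sigt-\sigp)Z^t}^2/\op{Z^0}^2\approx\fnorm{\sigt-\sigp}^2$); your justification $\op{A-\sigt}\le\op{A-\Poff}+1$ is incorrect, since only the diagonal of $\Poff-\sigt$ is small while its off-diagonal part is not $O(\zeta_n)$ inside the basin. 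Absorbing these pieces into a $\tau$-sized budget forces $\eta\lesssim e^{-M_1}$, contradicting the claimed universal constant step size $\eta\le c$.

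The missing ingredient is the paper's use of strong convexity \emph{and} smoothness of the population risk via the co-coercivity inequality (Lemma~\ref{lemma-stongly-convex}): splitting the linear term as $\langle\popt,\dallt\rangle-\langle\popt,\gresid\rangle+\langle \mathrm{noise},\,\Theta^t-\Theta_\star^k+\dzt\dzt^\top\rangle$ and pairing the population gradient $\popt=\sigt-\sigp$ against the \emph{full} error $\dallt$ (not the rank-$k$ error) yields, in addition to $\frac{\tau/4}{\tau+1/4}\fnorm{\dallt}^2$, the extra negative term $\frac{1}{\tau+1/4}\fnorm{\sigt-\sigp}^2$ whose coefficient is an absolute constant. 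Every $\fnorm{\sigt-\sigp}^2$-sized nuisance — the second-order term, the $\eta^2$ remainder, and the mis-specification term handled as $|\langle\popt,\gresid\rangle|\le c_4\fnorm{\sigt-\sigp}^2+\fnorm{\gresid}^2/(4c_4)$ — is absorbed there at constant cost, which is exactly what produces the clean $\fnorm{\gresid}^2$ floor and permits a constant $\eta$. If you import that inequality and this splitting, the rest of your outline (component-form Young splits, conversion to $e_t$ via Lemma~\ref{lemma-tu1} only at the end, and $\widetilde e_t\asymp e_t$) goes through essentially as in the paper.
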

\begin{proof}
See \prettyref{sec:proof-44-lem2}.
\end{proof}

 
\begin{lemma}
Under Assumption~\ref{assump:srank}, let $\zeta_n = \lambdacond$, if $\op{\tz}^2\geq C_1\kappa_{\tz}^2\zeta_n^2 e^{M_1}\max\left\{\sqrt{\eta\fnorm{\gresid}^2/\zeta_n^2}, \sqrt{\eta k e^{M_1}}, 1\right\}$ for a sufficiently large constant $C_1$ and $\et_0\leq c_0^2e^{-2M_1}\op{\tz}^4/4\kappa_{\tz}^4$, then for all $t\geq 0$, 
\begin{equation*}
\fnorm{\dzt}\leq \frac{c_0}{e^{M_1}\kappa_{\tz}^2}\op{\tz} .
\end{equation*}
\label{lem:kernel-nonconvex-2}
\end{lemma}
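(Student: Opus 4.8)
The plan is to prove Lemma~\ref{lem:kernel-nonconvex-2} by induction on $t$, using the one-step contraction inequality from Lemma~\ref{lem:kernel-nonconvex-1} as the engine. The statement to maintain is the invariant $\fnorm{\dzt}\leq \frac{c_0}{e^{M_1}\kappa_{\tz}^2}\op{\tz}$ for all $t$; it is convenient to prove the slightly stronger quantitative bound $\et_t \leq \frac{c_0^2}{4}e^{-2M_1}\op{\tz}^4/\kappa_{\tz}^4$ for all $t$, since $\et_t \geq \op{Z^0}^2\fnorm{\dzt}^2$ and $\op{Z^0}\geq (1-\delta)\op{\tz}$ by \eqref{eq:ett}, so controlling $\et_t$ immediately controls $\fnorm{\dzt}$. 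The base case $t=0$ is exactly the hypothesis $\et_0\leq c_0^2 e^{-2M_1}\op{\tz}^4/(4\kappa_{\tz}^4)$.

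For the inductive step, suppose $\et_t \leq \frac{c_0^2}{4}e^{-2M_1}\op{\tz}^4/\kappa_{\tz}^4$. Then $\fnorm{\dzt}^2 \leq \et_t/\op{Z^0}^2 \leq \frac{1}{(1-\delta)^2}\cdot\frac{c_0^2}{4}e^{-2M_1}\op{\tz}^2/\kappa_{\tz}^4$, so $\fnorm{\dzt}\leq c_0'e^{-M_1}\op{\tz}/\kappa_{\tz}^2$ for a constant $c_0'$ that is small once $c_0$ is chosen small; this verifies the hypothesis $\fnorm{\dzt}\leq c_0 e^{-M_1}\op{\tz}/\kappa_{\tz}^2$ needed to invoke Lemma~\ref{lem:kernel-nonconvex-1}. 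The condition $\op{\tz}^2\geq C_0 e^{M_1}\kappa_{\tz}^2\zeta_n^2$ in Lemma~\ref{lem:kernel-nonconvex-1} follows from the stated hypothesis $\op{\tz}^2\geq C_1\kappa_{\tz}^2\zeta_n^2 e^{M_1}\max\{\cdots,1\}$ since the max is at least $1$. Applying Lemma~\ref{lem:kernel-nonconvex-1} gives
\begin{equation*}
\et_{t+1}\leq \Big(1-\frac{\eta\rho}{e^{M_1}\kappa_{\tz}^2}\Big)\et_t + \eta C\big(\fnorm{\gresid}^2 + e^{M_1}\zeta_n^2 k\big).
\end{equation*}
To close the induction it suffices to show the right-hand side is again $\leq \frac{c_0^2}{4}e^{-2M_1}\op{\tz}^4/\kappa_{\tz}^4 =: B$. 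Since $\et_t\leq B$, it is enough that $\big(1-\frac{\eta\rho}{e^{M_1}\kappa_{\tz}^2}\big)B + \eta C(\fnorm{\gresid}^2 + e^{M_1}\zeta_n^2 k) \leq B$, i.e.\ $\eta C(\fnorm{\gresid}^2 + e^{M_1}\zeta_n^2 k)\leq \frac{\eta\rho}{e^{M_1}\kappa_{\tz}^2}B$, which after canceling $\eta$ and rearranging reads
\begin{equation*}
\fnorm{\gresid}^2 + e^{M_1}\zeta_n^2 k \leq \frac{\rho c_0^2}{4C}\, e^{-3M_1}\op{\tz}^4/\kappa_{\tz}^6.
\end{equation*}
It remains to check this is implied by the hypothesis on $\op{\tz}^2$. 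Treating the two summands separately: the term $e^{M_1}\zeta_n^2 k$ needs $\op{\tz}^4 \gtrsim e^{4M_1}\kappa_{\tz}^6 \zeta_n^2 k$, i.e.\ $\op{\tz}^2\gtrsim e^{2M_1}\kappa_{\tz}^3\zeta_n\sqrt{k}$, which is dominated by $C_1\kappa_{\tz}^2\zeta_n^2 e^{M_1}\sqrt{\eta k e^{M_1}}$ up to adjusting constants (here one uses $\zeta_n\geq 1$ and absorbs $\kappa_{\tz}$ powers, noting $\kappa_{\tz}\geq 1$); the term $\fnorm{\gresid}^2$ needs $\op{\tz}^4\gtrsim e^{4M_1}\kappa_{\tz}^6\fnorm{\gresid}^2$, i.e.\ $\op{\tz}^2\gtrsim e^{2M_1}\kappa_{\tz}^3\fnorm{\gresid}$, which matches the branch $C_1\kappa_{\tz}^2\zeta_n^2 e^{M_1}\sqrt{\eta\fnorm{\gresid}^2/\zeta_n^2} = C_1\kappa_{\tz}^2\zeta_n e^{M_1}\sqrt{\eta}\,\fnorm{\gresid}$ in the max, again up to constants and the harmless extra $\kappa_{\tz}$ factors. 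Choosing $C_1$ large enough relative to $C,\rho,c_0$ and the implied constants makes both inequalities hold, completing the induction.

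The main obstacle, and the place demanding care, is the bookkeeping in the last paragraph: matching the somewhat intricate form of the hypothesized lower bound on $\op{\tz}^2$ (a maximum of three terms involving $\eta$, $k$, $\fnorm{\gresid}$, $\zeta_n$, $e^{M_1}$, $\kappa_{\tz}$) against what the induction actually requires, and verifying that all the constants can be chosen consistently without circularity — in particular that $c_0$ (small) and $C_1$ (large) can be fixed independently of each other and of the running constants $\rho, C$ from Lemma~\ref{lem:kernel-nonconvex-1}. One should also double-check the relation \eqref{eq:ett} between $\et_t$ and $e_t$ and between $\op{Z^0}$ and $\op{\tz}$ is being applied in the right direction at each use. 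Everything else is a routine contraction-plus-induction argument.
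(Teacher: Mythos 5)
Your skeleton (induction on $t$, driven by the one-step contraction of Lemma~\ref{lem:kernel-nonconvex-1}) is the same as the paper's, but the closing step of your induction does not go through, and it fails exactly at the bookkeeping you flagged. You ask the one-step map to preserve the level set $\{\wt{e}_t\le B\}$ with $B=\tfrac{c_0^2}{4}e^{-2M_1}\op{\tz}^4/\kappa_{\tz}^4$, i.e. you need $\eta C\big(\fnorm{\gresid}^2+e^{M_1}\zeta_n^2k\big)\le \tfrac{\eta\rho}{e^{M_1}\kappa_{\tz}^2}B$, which after cancelling $\eta$ forces $\op{\tz}^4\gtrsim e^{3M_1}\kappa_{\tz}^6\fnorm{\gresid}^2$ and $\op{\tz}^4\gtrsim e^{4M_1}\kappa_{\tz}^6\zeta_n^2k$. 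The hypothesis of the lemma, squared, only supplies $\op{\tz}^4\ge C_1^2\kappa_{\tz}^4\zeta_n^2e^{2M_1}\,\eta\fnorm{\gresid}^2$ and $\op{\tz}^4\ge C_1^2\kappa_{\tz}^4\zeta_n^4e^{3M_1}\,\eta k$. The shortfall is a factor of order $e^{M_1}\kappa_{\tz}^2/(\eta\zeta_n^2)$, and since $M_1$ and $\kappa_{\tz}$ are allowed to grow with $n$ while $\eta$ is a small constant, no choice of the constant $C_1$ can absorb it. Your remark that the extra $\kappa_{\tz}$ and $e^{M_1}$ powers are harmless ``since $\kappa_{\tz}\ge1$, $\zeta_n\ge1$'' points in the wrong direction: those inequalities make the required side larger, not smaller. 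Your argument would be fine if $M_1$ and $\kappa_{\tz}$ were treated as bounded constants, but that is not the statement being proved.

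The paper closes the induction with a weaker requirement, and this is where the two accountings diverge. It does not demand per-step absorption $b\le aB$ with the small gain $a=\eta\rho e^{-M_1}\kappa_{\tz}^{-2}$; instead it bounds $\wt{e}_{t_0+1}\le \wt{e}_0+\eta C\big(\fnorm{\gresid}^2+e^{M_1}\zeta_n^2k\big)$ and charges the additive noise term once against the slack between the initial level $\tfrac{c_0^2}{4}e^{-2M_1}\op{\tz}^4/\kappa_{\tz}^4$ and the larger threshold $\tfrac{c_0^2}{2}e^{-2M_1}\op{\tz}^4/\kappa_{\tz}^4$, which still yields $\fnorm{\Delta_{Z^{t_0+1}}}\le c_0e^{-M_1}\op{\tz}/\kappa_{\tz}^2$ after dividing by $\op{Z^0}^2\ge \tfrac{9}{16}\op{\tz}^2$. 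The only inequality it then needs is $\eta C\big(\fnorm{\gresid}^2+e^{M_1}\zeta_n^2k\big)\le \tfrac{c_0^2}{4}e^{-2M_1}\op{\tz}^4/\kappa_{\tz}^4$, and that is precisely what the $\sqrt{\eta}$ factors inside the max in the hypothesis are calibrated to give (take $C_1^2\ge 4C/c_0^2$ and use $\zeta_n\ge1$). To repair your write-up you must either restructure the induction along these lines, or strengthen the signal condition by the factor $e^{M_1}\kappa_{\tz}^2/\eta$ — but the latter proves a different (weaker) lemma than the one stated.
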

\begin{proof}
See \prettyref{sec:proof-44-lem3}.
\end{proof}

\paragraph{Proof of Lemma~\ref{lem:error-metric}}
By Lemma~\ref{lem:kernel-identifiability-nonconvex}, notice that $\gresid = 0$ under the inner product model, 
\begin{equation}
\begin{gathered}
\fnorm{\dallt}^2  \geq (1 - c_0)\left(\fnorm{Z^t(Z^t)^\top - \tz\tz^\top}^2+2\fnorm{\dat\one^\top}^2+\fnorm{\dbt X}^2\right),\\
\fnorm{\dallt}^2  \leq (1 + c_0)\left(\fnorm{Z^t(Z^t)^\top - \tz\tz^\top}^2+2\fnorm{\dat\one^\top}^2+\fnorm{\dbt X}^2\right).
\end{gathered}
\label{eq:error-metric-inner}
\end{equation}
By Lemma~\ref{lemma-tu1}, 
\begin{equation*}
	\fnorm{Z^t(Z^t)^\top - \tz\tz^\top}^2 \geq 2(\sqrt{2}-1)\kappa_{\tz}^{-2} \op{\tz}^2\fnorm{\dzt}^2 
\end{equation*}
which implies, 
\begin{equation*}
	e_t \leq \frac{\kappa_{\tz}^{2}}{2(\sqrt{2}-1)} \fnorm{Z^t(Z^t)^\top - \tz\tz^\top}^2 +2\fnorm{\dat\one^\top}^2+\fnorm{\dbt X}^2 
	\leq \frac{\kappa_{\tz}^{2}}{2(\sqrt{2}-1)(1-c_0)}\fnorm{\dallt}^2. 
\end{equation*}
Similarly, by Lemma~\ref{lemma-tu2}, when $\text{dist}(\zt, \tz)\leq c\op{\tz}$, 
\begin{equation*}
	\fnorm{Z^t(Z^t)^\top - \tz\tz^\top}^2 \leq (2+c)^2 \op{\tz}^2\fnorm{\dzt}^2,
\end{equation*}
and this implies, 
\begin{equation*}
\begin{aligned}
		e_t &\geq \frac{1}{(2+c)^2}\fnorm{Z^t(Z^t)^\top - \tz\tz^\top}^2 + 2\fnorm{\dat\one^\top}^2+\fnorm{\dbt X}^2 \\
		& \geq 	\frac{1}{(2+c)^2(1+c_0)}\op{\tz}^2\fnorm{\dzt}^2.
\end{aligned}
\end{equation*}
\qed
 
\paragraph{Proof of \prettyref{thm:kernel-nonconvex-upper}}
Consider the deterministic bound first.
	By Lemma~\ref{lem:kernel-nonconvex-2}, for all $t\geq 0$, 
\begin{equation*}
\fnorm{\dzt}\leq \frac{c_0}{e^{M_1}\kappa_{\tz}^2}\op{\tz}.
\end{equation*}
Then apply Lemma~\ref{lem:kernel-nonconvex-1}, there exists positive constants $\rho$ and $M$ such that for all $t\geq 0$,
\begin{equation*}
\begin{aligned}
\et_{t+1}\leq \left(1- \frac{\eta}{e^{M_1}\kappa_{\tz}^2}\rho\right)\et_t + \eta C \left(\fnorm{\gresid}^2 + e^{M_1}\zeta_n^2k\right) .
\end{aligned}
\end{equation*}
Therefore, 
\begin{equation*}
\begin{aligned}
\et_t&\leq \left(1- \frac{\eta}{e^{M_1}\kappa_{\tz}^2}\rho\right)^t \et_0+ \sum_{i=0}^t\eta C \left(\fnorm{\gresid}^2 + e^{M_1}\zeta_n^2k\right) \left(1-\frac{\eta}{e^{M_1}\kappa_{\tz}^2}\rho\right)^i\\
&\leq \left(1 - \frac{\eta}{e^{M_1}\kappa_{\tz}^2}\rho\right)^t \et_0 + \frac{C\kappa^2}{\rho} \left( e^{2M_1}\zeta_n^2k + e^{M_1}\fnorm{\gresid}^2 \right).
\end{aligned}
\end{equation*}
Notice that $0.9 e_t\leq \et_t \leq 1.1 e_t$, 
\begin{equation*}
\begin{aligned}
e_t&\leq 2\left(1 - \frac{\eta}{e^{M_1}\kappa_{\tz}^2}\rho\right)^t e_0 + \frac{2C\kappa^2}{\rho} \left( e^{2M_1}\zeta_n^2k + e^{M_1}\fnorm{\gresid}^2 \right).
\end{aligned}
\end{equation*}

Given the last display, 
the proof of the probabilistic bound is nearly the same as that of the counterpart in Theorem~\ref{thm:kernel-convex-upper} and we leave out the details. 
\qed

\subsubsection{Proof of Lemma~\ref{lem:kernel-identifiability-nonconvex}}
\label{sec:proof-44-lem1}
By definition,
\begin{equation*}
\begin{aligned}
\fnorm{\dgt}^2 &= \fnorm{Z^t(Z^t)^\top - \tz\tz^\top - \gresid}^2\\
& \geq \fnorm{Z^t(Z^t)^\top - \tz\tz^\top}^2 + \fnorm{\gresid}^2 - 2| \big\langle Z^t(Z^t)^\top - \tz\tz^\top, \gresid \big\rangle|\\
& \geq \fnorm{Z^t(Z^t)^\top - \tz\tz^\top}^2 + \fnorm{\gresid}^2 - 2\fnorm{Z^t(Z^t)^\top - \tz\tz^\top}\fnorm{\gresid}\\
&\geq \fnorm{Z^t(Z^t)^\top - \tz\tz^\top}^2 + \fnorm{\gresid}^2 - c_1\fnorm{Z^t(Z^t)^\top - \tz\tz^\top}^2 - c_1^{-1}\fnorm{\gresid}^2\\
&\geq (1-c_1)\fnorm{Z^t(Z^t)^\top - \tz\tz^\top}^2 - (c_1^{-1}-1)\fnorm{\gresid}^2
\end{aligned}
\end{equation*}
where the second last inequality comes from $a^2 + b^2\geq 2ab$ and holds for any $c_1\geq 0$. Similarly, it could be shown that 
\begin{equation*}
	\fnorm{\dgt}^2\leq (1+c_1)\fnorm{Z^t(Z^t)^\top - \tz\tz^\top}^2 + (1 + c_1^{-1})\fnorm{\gresid}^2.
\end{equation*}
Expanding the term $\fnorm{\dallt}^2$, we obtain
\begin{equation*}
\begin{aligned}
\fnorm{\dallt}^2&=\fnorm{\dgt+\dat\one^\top+\one\dat^\top+\dbt\bX}^2\\
&=\fnorm{\dgt+\dat\one^\top+\one\dat^\top}^2+\fnorm{\dbt X}^2+2\big\langle \dgt+\dat\one^\top+\one\dat^\top, \dbt\bX \big\rangle\\
&=\fnorm{\dgt}^2+2\fnorm{\dat\one^\top}^2+2\tr(\dat\one^\top\dat\one^\top)+\fnorm{\dbt X}^2\\
&\quad\quad+2\big\langle \dgt+2\dat\one^\top, \dbt\bX \big\rangle,
\end{aligned}
\end{equation*}
where the last equality is due to the symmetry of $\bX$. 
Notice that $\,\tr(\da\one^\top\da\one^\top)=\tr(\one^\top\da\one^\top\da)=|\one^\top\da|^2\geq 0$, 
\begin{equation}
\begin{aligned}
\fnorm{\dallt}^2&\geq (1-c_1)\fnorm{Z^t(Z^t)^\top - \tz\tz^\top}^2 - (c_1^{-1}-1)\fnorm{\gresid}^2+2\fnorm{\dat\one^\top}^2+\fnorm{\dbt X}^2\\
&\quad+2\big\langle Z^t(Z^t)^\top - \tz\tz^\top +  2\dat\one^\top, \dbt\bX \big\rangle - 2\inner{\gresid, \dbt\bX }.
\end{aligned}
\label{identi-obj}
\end{equation}
By H\"{o}lder's inequality,
\begin{equation*}
\begin{aligned}
|\big\langle Z^t(Z^t)^\top - \tz\tz^\top &+  2\dat\one^\top, \dbt\bX \big\rangle|\leq \left(\|Z^t(Z^t)^\top - \tz\tz^\top\|_* + 2\|\dat\one^\top\|_* \right)\op{\dbt\bX}\\
&\leq \left( \sqrt{2k}\fnorm{Z^t(Z^t)^\top - \tz\tz^\top} + 2\fnorm{\dat\one^\top}\right)\op{\dbt\bX} \\
&\leq \left( \sqrt{2k}\fnorm{Z^t(Z^t)^\top - \tz\tz^\top} + 2\fnorm{\dat\one^\top}\right)\fnorm{\dbt\bX}/\sqrt{\srank(X)}\\
&\leq C_1\sqrt{\frac{k}{\srank(X)}}\left(\fnorm{Z^t(Z^t)^\top - \tz\tz^\top}^2 + \fnorm{\dat\one^\top}^2+\fnorm{\dbt\bX}^2\right),
\end{aligned}
\end{equation*}
and for any $c>0$, 
\begin{equation*}
	|\inner{\gresid, \dbt\bX }| \leq \fnorm{\gresid}\fnorm{\dbt\bX} \leq c\fnorm{\dbt\bX}^2 + \frac{1}{4c}\fnorm{\gresid}^2. 
\end{equation*}
Substitute these inequalities into \eqref{identi-obj}, 
\begin{equation*}
\begin{aligned}
\fnorm{\dall}^2 & \geq \left(1-2C_1\sqrt{\frac{k}{\srank(X)}}-c_1 \right)\fnorm{Z^t(Z^t)^\top - \tz\tz^\top}^2 + \left(2 - 2C_1\sqrt{\frac{k}{\srank(X)}}\right)\fnorm{\dat\one^\top}^2\\
&\quad\quad+\left(1 - 2C_1\sqrt{\frac{k}{\srank(X)}} - 2c \right)\fnorm{\dbt X}^2 - (c_1^{-1}+1/2c)\fnorm{\gresid}^2.
\end{aligned}
\end{equation*}
On the other hand, notice that $\tr(\dat\one^\top\dat\one^\top)\leq \|\dat\one\|_F^2$, we have
\begin{equation*}
\begin{aligned}
\fnorm{\dall}^2 & \leq \left(1+2C_1\sqrt{\frac{k}{\srank(X)}}+c_1 \right)\fnorm{Z^t(Z^t)^\top - \tz\tz^\top}^2 + \left(2 + 2C_1\sqrt{\frac{k}{\srank(X)}}\right)\fnorm{\da\one^\top}^2\\
&\quad\quad+\left(1 + 2C_1\sqrt{\frac{k}{\srank(X)}} + 2c \right)\fnorm{\dbt X}^2 + (c_1^{-1}+1/2c)\fnorm{\gresid}^2.
\end{aligned}
\end{equation*}
This completes the proof.

\subsubsection{Proof of Lemma~\ref{lem:kernel-nonconvex-1}}
\label{sec:proof-44-lem2}

\newcommand{\dalltk}{\Delta_{\widebar{\Theta}^t}}
Let $\allt=\at\one^\top+\one(\at)^\top+\bt\bX+\zt(\zt)^\top$, $\bR^t=\argmin\limits_{\bR\in\mathbb{R}^{r\times r}, \bR\bR^\top=\bI_r} \fnorm{\zt-\tz\bR}$, $\widetilde{\bR}^t=\argmin\limits_{\bR\in\mathbb{R}^{r\times r}, \bR\bR^\top=\bI_r}\fnorm{\widetilde{\bZ}_t-\tz\bR}$ and $\dzt=\zt-\tz\bR^t$, then
\begin{equation*}
\begin{aligned}
\fnorm{\bZ^{t+1}-\tz\bR^{t+1}}^2&\leq \fnorm{\bZ^{t+1}-\tz\widetilde{\bR}^{t+1}}^2
\leq \fnorm{\widetilde{\bZ}_{t+1}-\tz\widetilde{\bR}^{t+1}}^2 
\leq \fnorm{ \widetilde{\bZ}_{t+1}-\tz\bR^{t}}^2.
\end{aligned}
\end{equation*}
The first and the last inequalities are due to the definition of $\bR^{t+1}$ and $\widetilde{\bR}^{t+1}$, and the second inequality is due to the projection step. 
Plugging in the definition of $ \widetilde{\bZ}^{t+1}$, we obtain
\begin{equation*}
\begin{aligned}
\fnorm{\bZ^{t+1}-\tz \bR^{t+1}}^2&\leq \fnorm{\zt-\tz\bR^{t}}^2+\etaz^2\fnorm{\samt \zt}^2- 2\etaz\inner{\samt\zt, \zt-\tz\bR^t }\\
&=\fnorm{\zt-\tz\bR^{t}}^2+\etaz^2\fnorm{\samt \zt}^2- 2\etaz\inner{\samt, (\zt-\tz\bR^t)(\zt)^\top }.
\end{aligned}
\end{equation*} 
Note that 
\begin{equation*}
\zt(\zt)^\top-\tz\bR^t(\zt)^\top=\frac{1}{2}(\zt(\zt)^\top-\tz\tz^\top)+\frac{1}{2}(\zt(\zt)^\top+\tz\tz^\top)-\tz\bR(\zt)^\top.
\end{equation*}
Also due to the symmetry of $\samt$,
\begin{equation*}
\inner{\samt , \frac{1}{2}(\zt(\zt)^\top+\tz\tz^\top)-\tz\bR(\zt)^\top}=\frac{1}{2}\inner{\samt, \dzt\dzt^\top}.
\end{equation*}
Therefore, combine the above three equations,
\begin{equation}
\begin{aligned}
\fnorm{ \bZ^{t+1}-\tz\bR^{t+1}}^2&\leq \fnorm{ \zt-\tz\bR^{t}}^2+\etaz^2\fnorm{\samt \zt}^2-\etaz\inner{\samt , \dzt\dzt^\top} \\
&\quad -\etaz\inner{\samt ,(\zt(\zt)^\top-\tz\tz^\top)}.
\end{aligned}
\label{non-convex-obj-Z}
\end{equation}
By similar and slightly simpler arguments, we also obtain
\begin{align}
\| \balpha^{t+1}-\ta\|^2&\leq \|\widetilde{\balpha}_{t+1}-\ta\|^2 \nonumber \\
&=\| \balpha^{t}-\ta\|^2+\etaa^2\fnorm{\samt \one}^2- 2\etaa\inner{\samt\one, \balpha^{t}-\ta}.
\label{non-convex-obj-a}
\end{align}
\begin{align}
\| \beta^{t+1}-\tb\|^2&\leq \|\widetilde{\beta}_{t+1}-\tb\|^2 \nonumber \\
&=\| \beta^{t}-\tb\|^2+\etab^2\inner{\samt, \bX}^2- 2\etab\inner{\samt, (\beta^{t}-\tb)\bX}.
\label{non-convex-obj-b}
\end{align}

For $h(\Theta)$ in \eqref{eq:h}, define
\begin{equation*}
	H(\Theta) = \E_{\Theta_\star}[h(\Theta)] - \sum_{i=1}^n \Theta_{ii}\sigma(\Theta_{\star,ii}).
\end{equation*}
Then it is straightforward to verify that $\nabla H(\Theta) = \sigma(\Theta) - \sigma(\Theta_\star)$ and so $\nabla H(\Theta_\star) = 0$.
With $\etaz=\eta/\opnorm{Z^0}^2, \etaa=\eta/2n, \etab=\eta/2\fnorm{X}^2$ , the weighted sum $\op{Z^0}^2\times$\eqref{non-convex-obj-Z}+$2n\times$\eqref{non-convex-obj-a}+$\fnorm{X}^2\times$\eqref{non-convex-obj-b} is equivalent to
\begin{equation*}
\begin{aligned}
	\et_{t+1} &\leq \et_t-\eta\,\inner{\samt,\dzzt+2(\balpha^{t}-\ta)\one^\top+(\beta^{t}-\tb)\bX+\dzt\dzt^\top} \\
	& ~~~ + \left(\op{Z^0}^2\etaz^2\fnorm{\samt \zt}^2+2n\etaa^2\fnorm{\samt \one}^2+\fnorm{X}^2\etab^2\inner{\samt, \bX}^2 \right)\\
	&\leq \et_t-\eta\,\inner{\samt, \dalltk}-\eta\,\inner{\samt,\dzt\dzt^\top} \\
	& ~~~ +\left(\frac{\eta^2}{\op{Z^0}^2}\fnorm{\samt \zt}^2+\frac{\eta^2}{2n}\fnorm{\samt \one}^2+\frac{\eta^2}{4\fnorm{X}^2}\inner{\samt, \bX}^2 \right),\\
\end{aligned}
\end{equation*}
where $\dalltk = \dzzt + \dat\one^\top + \one(\dat)^\top + \dbt\bX = \dallt - \gresid$. 
Then, simple algebra further leads to
\begin{align}
\et_{t+1}&\leq \et_t-\eta\inner{\samt - \popt, \dalltk}-\eta\inner{\popt,\dallt}- \eta\inner{\popt, \gresid} - \eta\inner{\samt,\dzt\dzt^\top} 
\nonumber \\
& ~~~ + \left(\frac{\eta^2}{\op{Z^0}^2}\fnorm{\samt \zt}^2+\frac{\eta^2}{2n}\fnorm{\samt \one}^2+\frac{\eta^2}{4\fnorm{X}^2}\inner{\samt, \bX}^2 \right) 
\nonumber \\
&\leq  \et_t-\eta\inner{\popt,\dallt}+\eta |\inner{\samt-\popt,\dalltk}|  + \eta |\inner{\sam,\dzt\dzt^\top}|  
\nonumber \\
& ~~~ + \eta|\inner{\popt,\gresid}| +\eta^2\left(\frac{1}{\op{Z^0}^2}\fnorm{\samt \zt}^2+\frac{1}{2n}\fnorm{\samt \one}^2+\frac{1}{4\fnorm{X}^2}\inner{\samt, \bX}^2 \right) \nonumber \\
&=\et_t-\eta D_1+\eta D_2+\eta D_3+\eta D_4+\eta^2 D_5.
\label{nonconvex-obj}
\end{align}

In what follows, we control 
Note that for any $\bTheta\in \calF_g$, 
\begin{equation*}
\frac{1}{4}\bI_{n^2\times n^2}\succeq \nabla^2 \wll(\bTheta)
= \mathrm{diag}\Big(\mathrm{vec}\big(\sigma(\bTheta)\circ \left(1-\sigma(\bTheta)\right)\big)\Big)\succeq \tau\bI_{n^2\times n^2}	
\end{equation*}
where $\tau=e^{M_1}/(1+e^{M_1})^2\asymp e^{-M_1}$. Hence $\wll(\cdot)$ is $\tau$-strongly convex and $\frac{1}{4}$-smooth. Further notice that $\popstar=0$, then by Lemma~\ref{lemma-stongly-convex}, 
\begin{equation*}
\begin{aligned}
D_1=\inner{\popt,\dallt} \geq \frac{\tau/4}{\tau + 1/4}\fnorm{\dallt}^2 + \frac{1}{\tau+1/4}\fnorm{\sigt-\sigp}^2.
\end{aligned}
\end{equation*}
By triangle inequality, 
\begin{equation*}
D_2\leq |\inner{\sigp - \bA, \dzzt}|+ 2|\inner{\sigp - \bA , \dat\one^\top}| + |\inner{\sigp - \bA , \dbt\bX}|.	
\end{equation*}
Recall that $\zeta_n = \lambdacond$, and so
\begin{equation*}
D_2\leq \frac{\zeta_n}{2} \|\dzzt\|_*  + \zeta_n\|\dat\one^\top\|_*+ \zeta_n\sqrt{k}\fnorm{\dbt X}.	
\end{equation*}
Notice that $\dzzt$ has rank at most $2k$, 
\begin{equation*}
D_2\leq \frac{\zeta_n\sqrt{2k}}{2}\fnorm{\dzzt}+ \zeta_n\fnorm{\dat\one^\top} + \zeta_n\sqrt{k}\|\dbt\bX\|_{\rm F}.	
\end{equation*}
Further by Cauchy-Schwarz inequality, there exists constant $C_2$ such that for any positive constant $c_2$ which we will specify later, 
\begin{equation*}
D_2\leq c_2 \left(\fnorm{\dzzt}^2 + 2\fnorm{\dat\one^\top}^2 + \|\dbt\bX\|_{\rm F}^2 \right) + \frac{C_2}{4c_2}\zeta_n^2k.	
\end{equation*}
By Lemma~\ref{lem:kernel-identifiability-nonconvex}, there exist constants $c_1, C_1$ such that
\begin{equation}
\begin{aligned}
D_1 - D_2 &\geq \left(\frac{(1-c_1)\tau}{4\tau + 1} - c_2\right) \left(\fnorm{\dzzt}^2+ 2\|\dat\one^\top\|_F^2 +\|\dbt\bX\|_{\rm F}^2 \right) \\
	& \quad + \frac{1}{\tau+1/4}\fnorm{\sigt-\sigp}^2  - C_1\fnorm{\gresid}^2 - \frac{C_2}{4c_2}\zeta_n^2k.
\end{aligned}
\label{eq:nonconvex-1}
\end{equation}
By Lemma~\ref{lemma-tu1}, 
\begin{equation*}
\begin{aligned}
	D_1 - D_2 &\geq \frac{2(\sqrt{2}-1)}{\kappa^2}\left(\frac{(1-c_1)\tau}{4\tau + 1} - c_2\right)e_t  + \frac{1}{\tau+1/4}\fnorm{\sigt-\sigp}^2  - C_1\fnorm{\gresid}^2	- \frac{C_2}{4c_2}\zeta_n^2 k.
\end{aligned}
\end{equation*}
To bound $D_3$, notice that $\dzt\dzt^\top$ is a positive semi-definite matrix,
\begin{equation*}
\begin{aligned}
D_3&\leq |\inner{\samt, \dzt\dzt^\top}|\leq \op{\samt}\left\|\dzt\dzt^\top\right\|_* \\
& = \op{\samt} \tr(\dzt\dzt^\top)\leq  \op{\samt}\fnorm{\dzt}^2\\
&=\op{\samt-\popt+\popt}\fnorm{\dzt}^2\\
&\leq \op{\samt-\popt}\fnorm{\dzt}^2+\op{\popt}\fnorm{\dzt}^2\\
&= \op{\sigp-\bA} \fnorm{\dzt}^2+\op{\sigt-\sigp}\fnorm{\dzt}^2\\
&\leq \frac{\zeta_n}{2}\fnorm{\dzt}^2+\fnorm{\sigt-\sigp}\fnorm{\dzt}^2.
\end{aligned}
\end{equation*}
By the assumption that $\fnorm{\dzt}\leq \frac{c_0}{e^{M_1}\kappa^2}\op{\tz}$, 
\begin{equation*}
\begin{aligned}
\fnorm{\sigt-\sigp}\fnorm{\dzt}^2&\leq \frac{c_0}{e^{M_1}\kappa^2}\fnorm{\sigt-\sigp}\fnorm{\dzt}\op{\tz}\\
&\leq c_3\fnorm{\sigt-\sigp}^2 + \frac{c_0}{4c_3 e^{M_1}\kappa^2}\fnorm{\dzt}^2\op{\tz}^2.
\end{aligned}
\end{equation*}
for any constant $c_3$ to be specified later. Then 
\begin{equation*}
\begin{aligned}
D_3\leq \left(\frac{\zeta_n}{2\op{\tz}^2}+ \frac{c_0}{4c_3e^{M_1}\kappa^2}\right) e_t + c_3\fnorm{\sigt-\sigp}^2.
\end{aligned}
\end{equation*}
By the assumption that $\op{\tz}^2\geq C_0\kappa^2\zeta_n e^{M_1}$ for sufficiently large constant $C_0$, 
\begin{equation}
D_3\leq \left(\frac{1}{2C_0e^{M_1}\kappa^2}+ \frac{c_0}{4c_3e^{M_1}\kappa^2}\right) e_t + c_3\fnorm{\sigt-\sigp}^2.
\label{eq:nonconvex-2}
\end{equation}
For $D_4$ simple algebra leads to
\begin{equation}
\begin{aligned}
D_4 &= |\inner{\popt,\gresid}| = |\inner{\sigt - \sigp, \gresid}|\leq \fnorm{\sigt - \sigp}\fnorm{\gresid} \\
&\leq c_4\fnorm{\sigt - \sigp}^2 + \frac{1}{4c_4}\fnorm{\gresid}^2
\end{aligned}
\label{eq:nonconvex-3}
\end{equation}
for any constant $c_4$ to be specified later.

We now turn to bounding $D_5$.
To this end, we upper bound its three terms separately as follows. First,
\begin{equation*}
\begin{aligned}
\fnorm{\samt \zt}^2&=\fnorm{\left(\samt-\popt\right) \zt+\popt \zt}^2\\
&\leq 2( \fnorm{(\samt-\popt)\zt}^2 +\fnorm{\popt\zt}^2 )\\
&\leq 2( \fnorm{(\sigp-A)\zt}^2 +\fnorm{(\sigt-\sigp)\zt}^2 )\\
&\leq 2 ( \opnorm{\sigp-A}^2\fnorm{\zt}^2 +\fnorm{\sigt-\sigp}^2\opnorm{\zt}^2 )\\
&\leq 2\Big( \frac{\zeta_n^2}{4}\fnorm{\zt}^2 + \op{\zt}^2 \fnorm{\sigt-\sigp}^2 \Big).
\end{aligned}
\end{equation*}
Next,
\begin{equation*}
\begin{aligned}
\|\samt \one\|^2&=\left\|\left(\samt-\popt\right) \one+\popt \one\right\|^2\\
&\leq 2\Big( \vnorm{\left(\samt-\popt\right)\one}^2 +\|\popt\one\|^2 \Big)\\
&\leq 2\Big( \vnorm{(\sigp-A)\one}^2 +\vnorm{(\sigt-\sigp)\one}^2 \Big)\\
&\leq 2n\Big(\frac{\zeta_n^2}{4} + \fnorm{\sigt-\sigp}^2\Big).
\end{aligned}
\end{equation*}
Furthermore,
\begin{equation*}
\begin{aligned}
\inner{\popt, \bX}^2&=\Big(\inner{\samt-\popt, \bX}+\inner{\popt, \bX}\Big)^2\\
&\leq 2\Big(\inner{\sigp-A, \bX}^2+\inner{\sigt-\sigp, \bX}^2\Big)\\
&\leq  2\Big(\zeta_n^2 k\fnorm{X}^2+\fnorm{\sigt-\sigp}^2\fnorm{X}^2\Big).
\end{aligned}
\end{equation*}
When $\text{dist}(\zt, \tz)\leq c\op{\tz}$, combining these inequalities yields
\begin{equation*}
\begin{aligned}
D_5&\leq \left(\frac{\op{\zt}^2}{\op{\tz}^2} \frac{\zeta_n^2k}{2} + \frac{\zeta_n^2}{4} + \frac{\zeta_n^2 k}{2}\right) + \left(\frac{2\op{\zt}^2}{\op{\tz}^2}  + \frac{3}{2}\right)\fnorm{\sigt-\sigp}^2.
\end{aligned}
\end{equation*}
By the assumption that $\fnorm{\dzt}\leq \frac{c_0}{e^{M_1}\kappa^2}\op{\tz}$ for some sufficiently small $c_0$, 
\begin{equation}
\begin{aligned}
D_5&\leq C_5 \left(\zeta_n^2k + \fnorm{\sigt-\sigp}^2\right).
\end{aligned}
\label{eq:nonconvex-4}
\end{equation}
Combining \eqref{nonconvex-obj}, \eqref{eq:nonconvex-1}, \eqref{eq:nonconvex-2}, \eqref{eq:nonconvex-3} and \eqref{eq:nonconvex-4}, we obtain
\begin{equation*}
\begin{aligned}
\et_{t+1}&\leq \et_t-\eta\left(\frac{2(\sqrt{2}-1)}{\kappa^2}\left(\frac{(1-c_1)\tau}{4\tau + 1} - c_2\right) - \frac{1}{2C_0e^{M_1}\kappa^2}+ \frac{c_0}{4c_3e^{M_1}\kappa^2} \right)e_t + \eta \left(C_1 + \frac{1}{4c_4}\right)\fnorm{\gresid}^2	 \\
& \quad - \left(\frac{1}{\tau + 1/4} - c_3 - c_4 - C_5\eta\right)\fnorm{\sigt-\sigp}^2 + \eta\frac{C_2}{4c_2}\zeta_n^2k + \eta^2C_5\zeta_n^2k 
\end{aligned}
\end{equation*}
where $c_2, c_3, c_4$ are arbitrary constants, $c_0$ is a sufficiently small constant, and $C_0$ is a sufficiently large constant. Notice that $\tau\asymp e^{-M_1}$. Choose $c_2 = c\tau$ and $c, c_3, c_4, \eta$ small enough such that 
\begin{equation*}
\begin{gathered}
2(\sqrt{2}-1)\left(\frac{(1-c_1)\tau}{4\tau + 1} - c_2\right) -  \frac{1}{2e^{M_1}C_0} - \frac{c_0}{4c_3e^{M_1}}> \wt{\rho} e^{-M_1}, \quad \text{and}\\
		\frac{1}{\tau + 1/4} - c_3 - c_4 - C_5\eta \geq 0,
\end{gathered}
\end{equation*}
for some positive constant $\wt{\rho}$. Recall that $\et_t\geq (1-\delta)e_t$. Then there exists a universal constant $C > 0$ such that 
\begin{equation*}
\begin{aligned}
\et_{t+1}&\leq \left(1- \frac{\eta}{e^{M_1}\kappa^2}\wt{\rho}(1-\delta)\right)\et_t + \eta C \left(\fnorm{\gresid}^2 + e^{M_1}\zeta_n^2k\right).  
\end{aligned}
\end{equation*}
The proof is completed by setting $\rho = (1-\delta)\wt{\rho}$.

\subsubsection{Proof of Lemma~\ref{lem:kernel-nonconvex-2}}
\label{sec:proof-44-lem3}

Note the claim is deterministic in nature and we prove by induction. 
At initialization we have
\begin{equation*}
\begin{aligned}
\fnorm{\Delta_{\bZ^0}}\leq \left(\frac{\et_0}{\op{Z^0}^2}\right)^{\frac{1}{2}}\leq \left(\frac{c_0^2}{4e^{2M_1}\kappa^4}\frac{\op{\tz}^4}{\op{Z^0}^2}\right)^\frac{1}{2}=\frac{c_0}{2e^{M_1}\kappa^2}\op{\tz}\frac{\op{\tz}}{\op{Z^0}}\leq \frac{c_0}{e^{M_1}\kappa^2}\op{\tz},
\end{aligned}
\end{equation*}
where the last inequality is obtained from 
%
\[
\op{Z^0} \geq \op{\tz} - \op{\Delta_{\bZ^0}} \geq \left( 1 - \frac{c_0}{2e^{M_1}\kappa^2} \right) \op{\tz} \geq \frac{3}{4}\op{\tz},
\]
where the second the the last inequalities are due to Assumption \ref{assump:init}.

Suppose the claim is true for all $t\leq t_0$, by Lemma~\ref{lem:kernel-nonconvex-1}, 
\begin{equation*}
\begin{aligned}
\et_{t_0+1}&\leq \left(1-\frac{\eta}{e^{M_1}\kappa^2}\rho\right)^{t_0}\et_0 + \eta C \left(\fnorm{\gresid}^2 + e^{M_1}\zeta_n^2k\right)  \\
&\leq \et_0 + \eta C \left(\fnorm{\gresid}^2 + e^{M_1}\zeta_n^2k\right)\\  
&\leq \frac{c_0^2}{4e^{2M_1}\kappa^4}\op{\tz}^4 + \eta C \left(\fnorm{\gresid}^2 + e^{M_1}\zeta_n^2k\right)\\
&=\frac{c_0^2}{e^{2M_1}\kappa^4}\op{\tz}^4  \left(\frac{1}{4}+ \eta\frac{Ce^{2M_1}\zeta_n^2\kappa^4}{c_0^2\op{\tz}^4}\left(\frac{\fnorm{\gresid}^2}{\zeta_n^2} + e^{M_1}k\right)\right)\\
&\leq \frac{c_0^2}{e^{2M_1}\kappa^4}\op{\tz}^4  \left(\frac{1}{4}+ \frac{C}{c_0^2C_1^2}\right).
\end{aligned}
\end{equation*}
Choosing $C_1$ large enough such that $C_1^2\geq  \frac{4C}{c_0^2}$, then
\begin{equation*}
\et_{t_0+1}\leq \frac{c_0^2}{2e^{2M_1}\kappa^4}\op{\tz}^4  
\end{equation*}
and therefore,  
\begin{equation*}
\fnorm{\Delta_{\bZ^{t_0+1}}}\leq \left(\frac{\et_{t_0+1}}{\op{\tz}^2}\right)^{\frac{1}{2}}\leq \frac{c_0}{\sqrt{2}e^{M_1}\kappa^2}\op{\tz}\frac{\op{\tz}}{\op{Z^0}} \leq  \frac{c_0}{e^{M_1}\kappa^2}\op{\tz}.
\end{equation*}
This completes the proof.

\subsection{Additional technique lemmas}

We state below additional technical lemmas used in the proofs.
\begin{lemma}[\cite{chung2002connected}]
Let $X_1, \cdots, X_n$ be independent Bernoulli random variables with $P(X_i = 1) = p_i$. For $S_n = \sum_{i=1}^n a_iX_i$ and $\nu = \sum_{i=1}^n a_i^2p_i$. Then we have 
\begin{equation*}
\begin{aligned}
&P(S_n-\E S_n < -\lambda) \leq \mathrm{exp}(-\lambda^2/2\nu),\\
&P(S_n-\E S_n > \lambda) \leq \mathrm{exp}\left(-\frac{\lambda^2}{2(\nu+a\lambda/3)}\right),
\end{aligned}
\end{equation*}
where $a=\max \{|a_1|, \cdots, |a_n|\} $. 
\label{lemma-sum-bernoulli}
\end{lemma}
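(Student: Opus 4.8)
The statement is the classical Chernoff--Bernstein bound for a weighted sum of independent Bernoulli variables, quoted from \cite{chung2002connected}; the plan is to reprove it by the exponential moment (Laplace transform) method. Write $Y_i = a_i(X_i-p_i)$, so that $S_n-\E S_n=\sum_{i=1}^n Y_i$ is a sum of independent, mean-zero variables. For any $t>0$, Markov's inequality applied to $e^{t\sum_i Y_i}$ and to $e^{-t\sum_i Y_i}$ respectively gives
\[
\P\Big(\sum_i Y_i>\lambda\Big)\le e^{-t\lambda}\prod_{i=1}^n\E e^{tY_i},\qquad
\P\Big(\sum_i Y_i<-\lambda\Big)\le e^{-t\lambda}\prod_{i=1}^n\E e^{-tY_i},
\]
so everything reduces to controlling the one-dimensional moment generating functions and then optimizing over $t$.

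For the per-coordinate factors I would compute $\E e^{sY_i}=e^{-sa_ip_i}\big(1-p_i+p_ie^{sa_i}\big)=e^{-sa_ip_i}\big(1+p_i(e^{sa_i}-1)\big)$ and apply $1+u\le e^u$ to obtain $\E e^{sY_i}\le\exp\!\big(p_i\,\psi(sa_i)\big)$, where $\psi(x):=e^x-1-x\ge 0$. Hence both products above are at most $\exp\!\big(\sum_i p_i\,\psi(\pm ta_i)\big)$. For the upper tail I would invoke the elementary inequalities $\psi(x)\le x^2/2$ for $x\le 0$ and $\psi(x)\le \tfrac{x^2/2}{1-x/3}$ for $0\le x<3$ (both following from the power series of $\psi$ together with $k!\ge 2\cdot 3^{k-2}$); restricting to $0<t<3/a$ and using $|a_i|\le a$ together with $\sum_i a_i^2p_i=\nu$, this yields $\prod_i\E e^{tY_i}\le\exp\!\big(\tfrac{t^2\nu}{2(1-ta/3)}\big)$, so that $\P(\sum_i Y_i>\lambda)\le\exp\!\big(-t\lambda+\tfrac{t^2\nu}{2(1-ta/3)}\big)$. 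Choosing $t=\lambda/(\nu+a\lambda/3)$, which indeed lies in $(0,3/a)$ since $\nu>0$, makes $1-ta/3=\nu/(\nu+a\lambda/3)$ and collapses the exponent to $-\lambda^2/\big(2(\nu+a\lambda/3)\big)$, as claimed. For the lower tail the computation is softer: using only $\psi(x)\le x^2/2$ for $x\le 0$ gives $\prod_i\E e^{-tY_i}\le\exp(t^2\nu/2)$ for all $t>0$, hence $\P(\sum_i Y_i<-\lambda)\le\exp(-t\lambda+t^2\nu/2)$, and optimizing at $t=\lambda/\nu$ produces $\exp(-\lambda^2/(2\nu))$.

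The only real work here is the selection of the elementary bounds on $\psi$ and the verification that the minimizing $t$ stays in the admissible range $(0,3/a)$ for the upper tail; everything else is bookkeeping enabled by independence. Since the lemma is already in the literature, an equally acceptable route is simply to cite \cite{chung2002connected}; I include the self-contained argument because it is short and pins down the exact numerical constants invoked in the probabilistic steps of the earlier proofs (e.g.\ in \prettyref{lem:prob}).
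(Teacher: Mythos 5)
The paper never proves this lemma---it is quoted verbatim from Chung and Lu---so the only question is whether your self-contained Chernoff--Bernstein derivation is sound. The upper-tail half is: the MGF bound $\E e^{tY_i}\le\exp(p_i\psi(ta_i))$, the case split $\psi(x)\le x^2/2$ for $x\le 0$ and $\psi(x)\le\frac{x^2/2}{1-x/3}$ for $0\le x<3$, the restriction $0<t<3/a$ with $|a_i|\le a$, and the choice $t=\lambda/(\nu+a\lambda/3)$ all fit together exactly as you say, and this part even covers weights of arbitrary sign.

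The gap is in the lower tail. You bound $\E e^{-tY_i}\le\exp\bigl(p_i\psi(-ta_i)\bigr)$ and then invoke $\psi(x)\le x^2/2$ ``for $x\le0$''; but the argument here is $x=-ta_i$, which is nonpositive only when $a_i\ge 0$. If some $a_i<0$ this step fails, and in fact the stated sub-Gaussian lower tail is false for signed weights: take $n=1$, $a_1=-1$, $p_1=0.01$, $\lambda=0.9$; then $\P(S_1-\E S_1<-\lambda)=\P(X_1=1)=0.01$, whereas $\exp(-\lambda^2/(2\nu))=\exp(-40.5)$. So the lemma must be read, as in Chung--Lu's original statement, with nonnegative weights (in which case $\max_i|a_i|=\max_i a_i$ and your proof is complete as written), or else the lower tail has to be replaced by the Bernstein-type bound obtained by applying your upper-tail argument to $-S_n$. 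This is worth stating explicitly because the lemma is invoked in Lemma~\ref{lemma-AX} with weights $X_{ij}$ that are not assumed nonnegative, so the sign hypothesis is not a vacuous footnote. A minor further point: your parenthetical derives both bounds on $\psi$ from the power series together with $k!\ge 2\cdot 3^{k-2}$, but for $x\le 0$ the series is alternating and that route does not apply; the inequality $\psi(x)\le x^2/2$ on $x\le 0$ is instead a one-line calculus fact (the function $x^2/2-\psi(x)$ has value and derivative $0$ at the origin and second derivative $1-e^x\ge 0$ for $x\le 0$).
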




\begin{lemma}[\cite{tu2015low}]
For any $\bZ_1, \bZ_2\in\mathbb{R}^{n\times k}$, we have 
\begin{equation*}
\mathrm{dist}(\bZ_1, \bZ_2)^2\leq \frac{1}{2(\sqrt{2}-1)\sigma_k^2(\bZ_1)}\fnorm{\bZ_1\bZ_1^\top-\bZ_2\bZ_2^\top}^2.
\end{equation*}
\label{lemma-tu1}
\end{lemma}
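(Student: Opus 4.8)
This lemma is a standard linear-algebra fact, borrowed from \cite{tu2015low}; so one legitimate option is simply to cite it. For completeness, here is the plan for a self-contained argument. The first move is a \emph{Procrustes reduction}: since $\mathrm{dist}(\bZ_1,\bZ_2)$, $\fnorm{\bZ_1\bZ_1^\top-\bZ_2\bZ_2^\top}$ and $\sigma_k(\bZ_1)$ are all invariant under replacing $\bZ_2$ by $\bZ_2\bR$ with $\bR\in O(k)$, we may assume that the identity matrix attains the minimum, i.e.\ that the identity maximizes $\langle \bZ_2^\top\bZ_1,\bR\rangle$ over $O(k)$. Taking an SVD of $\bZ_2^\top\bZ_1$ and reading off this optimality shows that then $\bZ_1^\top\bZ_2=\bZ_2^\top\bZ_1\succeq 0$. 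Writing $\bD:=\bZ_1-\bZ_2$ we have $\mathrm{dist}(\bZ_1,\bZ_2)^2=\fnorm{\bD}^2$, so it suffices to bound $\fnorm{\bZ_1\bZ_1^\top-\bZ_2\bZ_2^\top}^2$ from below by $2(\sqrt2-1)\sigma_k^2(\bZ_1)\fnorm{\bD}^2$.

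Next I would establish an \emph{exact identity}. Starting from $\bZ_1\bZ_1^\top-\bZ_2\bZ_2^\top=\tfrac12\big[(\bZ_1+\bZ_2)\bD^\top+\bD(\bZ_1+\bZ_2)^\top\big]$, squaring the Frobenius norm, and using that $(\bZ_1+\bZ_2)^\top\bD=\bZ_1^\top\bZ_1-\bZ_2^\top\bZ_2$ is symmetric (this is exactly where the PSD symmetry of $\bZ_1^\top\bZ_2$ is used) so that the cross term becomes a perfect square, one obtains
\[
\fnorm{\bZ_1\bZ_1^\top-\bZ_2\bZ_2^\top}^2=\tfrac12\fnorm{(\bZ_1+\bZ_2)\bD^\top}^2+\tfrac12\fnorm{\bZ_1^\top\bZ_1-\bZ_2^\top\bZ_2}^2 .
\]
Substituting $\bZ_1+\bZ_2=2\bZ_1-\bD$ and a short computation (expand the Frobenius norms in terms of the $k\times k$ matrices $\bZ_1^\top\bZ_1$, $\bZ_1^\top\bD$, $\bD^\top\bD$ and regroup) yields the equivalent form
\[
\fnorm{\bZ_1\bZ_1^\top-\bZ_2\bZ_2^\top}^2=2\fnorm{\bZ_1\bD^\top}^2+2\fnorm{\bZ_2^\top\bD}^2-\fnorm{\bD^\top\bD}^2 .
\]

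To finish, I would drop the nonnegative term $2\fnorm{\bZ_2^\top\bD}^2$ and use $\fnorm{\bZ_1\bD^\top}^2=\tr\!\big((\bZ_1^\top\bZ_1)(\bD^\top\bD)\big)\ge \sigma_k^2(\bZ_1)\,\tr(\bD^\top\bD)=\sigma_k^2(\bZ_1)\fnorm{\bD}^2$ together with $\fnorm{\bD^\top\bD}^2\le \op{\bD}^2\tr(\bD^\top\bD)=\op{\bD}^2\fnorm{\bD}^2$, which gives $\fnorm{\bZ_1\bZ_1^\top-\bZ_2\bZ_2^\top}^2\ge\big(2\sigma_k^2(\bZ_1)-\op{\bD}^2\big)\fnorm{\bD}^2$; this already proves the lemma, with room to spare, whenever $\op{\bD}^2\le(4-2\sqrt2)\sigma_k^2(\bZ_1)$. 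In the complementary regime $\op{\bD}^2>(4-2\sqrt2)\sigma_k^2(\bZ_1)$ one instead retains the $\fnorm{\bZ_2^\top\bD}^2$ term (equivalently the $\fnorm{\bZ_1^\top\bZ_1-\bZ_2^\top\bZ_2}^2$ term in the symmetrized identity) and argues more carefully; the precise constant $2(\sqrt2-1)$ is exactly what comes out of balancing the cutoff between the two regimes. The one delicate point — and the part I would expect to be the main obstacle — is this ``far'' regime, where $\bZ_1-\bZ_2$ is large relative to $\sigma_k(\bZ_1)$ and the crude $\sigma_k^2(\bZ_1)\fnorm{\bD}^2$ estimate is by itself too weak, forcing one to use the term discarded above and to optimize the threshold to pin down the sharp constant; the full treatment is carried out in \cite{tu2015low}.
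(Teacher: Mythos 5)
The paper offers no proof of this lemma at all — it is quoted verbatim from \citet{tu2015low} — so your primary suggestion (simply cite it) coincides exactly with what the paper does. Your supplementary sketch is correct as far as it is carried out: the orthogonal-Procrustes reduction does make $\bZ_2^\top\bZ_1$ symmetric PSD, both of your identities check out under that symmetry (and in fact each reduces to the other by the substitution $\bZ_1+\bZ_2=2\bZ_1-\bD$), and the estimates $\fnorm{\bZ_1\bD^\top}^2\ge\sigma_k^2(\bZ_1)\fnorm{\bD}^2$ and $\fnorm{\bD^\top\bD}^2\le\op{\bD}^2\fnorm{\bD}^2$ legitimately give the claim when $\op{\bD}^2\le(4-2\sqrt{2})\,\sigma_k^2(\bZ_1)$. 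Be aware, though, that the regime you defer to the reference is not a residual technicality but the heart of the lemma: the constant $2(\sqrt{2}-1)$ is attained in that ``far'' regime (for $k=1$, take $\bZ_1=e_1$ and $\bZ_2=(\sqrt{2}-1)^{1/2}e_2$, for which $\op{\bD}^2=\sqrt{2}>4-2\sqrt{2}$ and the ratio $\fnorm{\bZ_1\bZ_1^\top-\bZ_2\bZ_2^\top}^2/\fnorm{\bD}^2$ equals exactly $2(\sqrt{2}-1)$), so no amount of room-to-spare bookkeeping in the near regime can substitute for the balancing argument you leave to \citet{tu2015low}. Since the paper itself rests on that citation, this is acceptable, but as a self-contained proof your write-up would still be incomplete without it.
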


\begin{lemma}[\cite{tu2015low}]
For any $\bZ_1, \bZ_2\in\mathbb{R}^{n\times k}$ such that $\mathrm{dist}(\bZ_1, \bZ_2)\leq c\op{\bZ_1}$, we have 
\begin{equation*}
\fnorm{\bZ_1\bZ_1^\top-\bZ_2\bZ_2^\top}\leq (2+c)\op{\bZ_1}\mathrm{dist}(\bZ_1, \bZ_2).
\end{equation*}
\label{lemma-tu2}
\end{lemma}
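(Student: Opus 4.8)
The plan is to reduce the claim to a one-line matrix identity after aligning $\bZ_2$ to $\bZ_1$ by the optimal orthogonal transformation. Let $\bR_\star = \argmin_{\bR\in O(k)}\fnorm{\bZ_1-\bZ_2\bR}$, set $\bW = \bZ_2\bR_\star$ and $\Delta = \bZ_1 - \bW$. Then $\fnorm{\Delta} = \mathrm{dist}(\bZ_1,\bZ_2)$, and since $\bR_\star\bR_\star^\top = I_k$ we have $\bW\bW^\top = \bZ_2\bR_\star\bR_\star^\top\bZ_2^\top = \bZ_2\bZ_2^\top$. Hence it suffices to bound $\fnorm{\bZ_1\bZ_1^\top - \bW\bW^\top}$.

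First I would record the symmetric polarization identity $\bZ_1\bZ_1^\top-\bW\bW^\top = \tfrac12(\bZ_1+\bW)(\bZ_1-\bW)^\top + \tfrac12(\bZ_1-\bW)(\bZ_1+\bW)^\top$, which follows by direct expansion. Applying the submultiplicative bound $\fnorm{A B^\top}\le \op{A}\,\fnorm{B}$ to each term together with the triangle inequality gives $\fnorm{\bZ_1\bZ_1^\top-\bW\bW^\top}\le \op{\bZ_1+\bW}\,\fnorm{\bZ_1-\bW} = \op{\bZ_1+\bW}\,\fnorm{\Delta}$. Next I would control $\op{\bZ_1+\bW}$: writing $\bZ_1+\bW = 2\bZ_1 - \Delta$ and using subadditivity of the operator norm together with $\op{\Delta}\le\fnorm{\Delta}\le c\,\op{\bZ_1}$ (the hypothesis), I obtain $\op{\bZ_1+\bW}\le 2\op{\bZ_1}+\op{\Delta}\le(2+c)\op{\bZ_1}$. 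Combining the two displays yields $\fnorm{\bZ_1\bZ_1^\top-\bZ_2\bZ_2^\top}\le(2+c)\op{\bZ_1}\,\mathrm{dist}(\bZ_1,\bZ_2)$, as claimed.

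There is no genuine obstacle here; the whole argument is two inequalities chained together. The only point requiring a moment of care — and the reason for using the symmetric form of the identity rather than the asymmetric expansion $\bZ_1\Delta^\top+\Delta\bZ_1^\top+\Delta\Delta^\top$ — is that one must arrange for $\op{\bZ_1}$ (and not $\op{\bZ_2}$ or $\op{\bW}$) to appear, with the sharp constant $2+c$; the substitution $\bZ_1+\bW = 2\bZ_1-\Delta$ achieves exactly this. I would also note that this lemma, like Lemma~\ref{lemma-tu1}, is taken from \cite{tu2015low}, so in the appendix it would suffice to cite that reference or reproduce the short computation above.
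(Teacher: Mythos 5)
Your argument is correct. The paper does not prove this lemma at all --- it is simply quoted from \cite{tu2015low} --- and your short derivation is a valid reconstruction of the standard argument there: align $\bZ_2$ by the optimal rotation, bound $\fnorm{\bZ_1\bZ_1^\top-\bW\bW^\top}$ by $\op{\bZ_1+\bW}\fnorm{\Delta}$, and use $\op{\Delta}\le\fnorm{\Delta}\le c\op{\bZ_1}$. One small remark: the symmetric polarization identity is not essential for getting the constant $2+c$; the asymmetric split $\bZ_1\bZ_1^\top-\bW\bW^\top=\bZ_1\Delta^\top+\Delta\bW^\top$ gives $\fnorm{\cdot}\le\bigl(\op{\bZ_1}+\op{\bW}\bigr)\fnorm{\Delta}$, and since $\op{\bW}\le\op{\bZ_1}+\op{\Delta}\le(1+c)\op{\bZ_1}$ this yields the same bound, so either route can be cited or reproduced in the appendix.
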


\begin{lemma}[\cite{nesterov2004introductory}]
For a continuously differentiable function $f$, if it is $\mu$-strongly convex and $L$-smooth on a convex domain $\mathcal{D}$, say for any $\bx, \by\in \mathcal{D}$, 
\begin{equation*}
\frac{\mu}{2}\|\bx-\by\|^2\leq  f(\by)-f(\bx)-\inner{f'(\bx), \by-\bx}\leq \frac{L}{2}\|\bx-\by\|^2,
\end{equation*}
then
\begin{equation*}
\inner{f'(\bx)-f'(\by), \bx-\by}\geq \frac{\mu L}{\mu+L}\|\bx-\by\|^2+\frac{1}{\mu+L}\|f'(\bx)-f'(\by)\|^2 ,
\end{equation*}
and also
\begin{equation*}
\inner{f'(\bx)-f'(\by), \bx-\by}\geq \mu \|\bx-\by\|^2.
\end{equation*}
\label{lemma-stongly-convex}
\end{lemma}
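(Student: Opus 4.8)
The closing statement (Lemma~\ref{lemma-stongly-convex}) is the classical strong-convexity/smoothness gradient inequality, and the plan is to obtain it by reducing to the purely convex case and invoking co-coercivity of the gradient. First I would dispose of the degenerate case $L=\mu$ (which forces $f$ to be $\frac{\mu}{2}\|\cdot\|^2$ plus an affine term, for which both displayed inequalities hold with equality by direct substitution) and assume $L>\mu$. Then set $\phi(\bx)=f(\bx)-\frac{\mu}{2}\|\bx\|^2$, so that $\nabla\phi(\bx)=\nabla f(\bx)-\mu\bx$. The left inequality in the hypothesis says exactly that $\phi$ is convex on $\mathcal D$, and the right inequality says $\phi$ is $(L-\mu)$-smooth; this is the only place the two hypotheses are used for the first conclusion.

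Next I would establish the standard co-coercivity bound for a convex $L'$-smooth function ($L'=L-\mu$ here):
\[
\inner{\nabla\phi(\bx)-\nabla\phi(\by),\,\bx-\by}\ \geq\ \frac{1}{L'}\,\|\nabla\phi(\bx)-\nabla\phi(\by)\|^2 .
\]
This follows by fixing $\by$, setting $\psi(\bx)=\phi(\bx)-\inner{\nabla\phi(\by),\bx}$, which is convex, $L'$-smooth, and globally minimized at $\by$ (since $\nabla\psi(\by)=0$); applying the descent lemma $\psi(\bu)\le\psi(\bx)+\inner{\nabla\psi(\bx),\bu-\bx}+\frac{L'}{2}\|\bu-\bx\|^2$ at $\bu=\bx-\frac1{L'}\nabla\psi(\bx)$ gives $\psi(\by)\le\psi(\bx)-\frac{1}{2L'}\|\nabla\psi(\bx)\|^2$; writing this out, swapping $\bx\leftrightarrow\by$, and adding the two gives the claim. (If $\mathcal D\neq\mathbb R^n$ one restricts to the affine line through $\bx$ and $\by$; in the applications in this paper $\mathcal D$ may simply be taken to be the whole space.) Substituting $\nabla\phi(\bx)-\nabla\phi(\by)=\nabla f(\bx)-\nabla f(\by)-\mu(\bx-\by)$ and writing $a=\inner{f'(\bx)-f'(\by),\bx-\by}$, $b=\|f'(\bx)-f'(\by)\|^2$, $c=\|\bx-\by\|^2$, the bound reads $a-\mu c\ge\frac{1}{L-\mu}\bigl(b-2\mu a+\mu^2 c\bigr)$; clearing the denominator and collecting terms yields $(L+\mu)a\ge b+L\mu\,c$, i.e.\ the first displayed inequality.

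For the second inequality I would argue directly and without using smoothness: apply the left hypothesis inequality once as stated and once with $\bx$ and $\by$ interchanged,
\[
\tfrac{\mu}{2}\|\bx-\by\|^2\le f(\by)-f(\bx)-\inner{f'(\bx),\by-\bx},\qquad
\tfrac{\mu}{2}\|\bx-\by\|^2\le f(\bx)-f(\by)-\inner{f'(\by),\bx-\by},
\]
and add them; the right-hand side collapses to $\inner{f'(\bx)-f'(\by),\bx-\by}$. The whole argument is routine, so I do not anticipate a real obstacle; the only point needing mild care is handling the domain $\mathcal D$ in the co-coercivity step (and the edge case $L=\mu$), which is dealt with by working along line segments inside $\mathcal D$.
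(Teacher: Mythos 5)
The paper gives no proof of this lemma --- it is quoted directly from Nesterov's book --- so there is nothing internal to compare against. Your argument is the standard textbook one: subtract $\frac{\mu}{2}\|\cdot\|^2$, apply co-coercivity of the convex $(L-\mu)$-smooth part $\phi$, and check the algebra (your reduction $(L+\mu)a\ge b+\mu L c$ is correct), with the second inequality obtained by symmetrizing the strong-convexity bound. On $\mathcal{D}=\mathbb{R}^n$ this is complete and is exactly Nesterov's proof.

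The one genuine soft spot is the point you flagged yourself but did not actually close: the co-coercivity step evaluates $\psi$ at $u=\bx-\frac{1}{L'}\nabla\psi(\bx)$, which may leave $\mathcal{D}$, and neither of your patches works. Restricting to the affine line through $\bx$ and $\by$ only gives co-coercivity of the directional derivative, i.e.\ it bounds $\langle\nabla\phi(\bx)-\nabla\phi(\by),\bx-\by\rangle$ below by the square of the \emph{component} of $\nabla\phi(\bx)-\nabla\phi(\by)$ along $\bx-\by$, not by the full squared norm, so the stated inequality does not follow. And ``take $\mathcal{D}$ to be the whole space'' is not available in this paper's application: the lemma is invoked for $H(\Theta)$ on the bounded convex set $\calF_g$, where the strong-convexity constant $\tau\asymp e^{-M_1}$ exists only because $|\Theta_{ij}|\le M_1$ there; $H$ is not strongly convex on all of $\mathbb{R}^{n\times n}$. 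The clean repair, which matches how the lemma is actually used, is to exploit twice differentiability: for $\bx,\by\in\mathcal{D}$ write $\nabla f(\bx)-\nabla f(\by)=Mv$ with $v=\bx-\by$ and $M=\int_0^1\nabla^2 f(\by+tv)\,dt$; the Hessian bounds along the segment (which lies in $\mathcal{D}$ by convexity) give $\mu I\preceq M\preceq L I$, and the claimed inequality reduces eigenvalue-wise to $\lambda(\mu+L)\ge \mu L+\lambda^2$, i.e.\ $(\lambda-\mu)(L-\lambda)\ge 0$ for $\lambda\in[\mu,L]$. This version needs no auxiliary point outside $\mathcal{D}$; for merely $C^1$ functions on the whole space, your original argument is the right one.
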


\begin{lemma}[\cite{lei2015consistency}, \cite{gao2015achieving}]
Let $\bA$ be the symmetric adjacency matrix of a random graph on $n$ nodes in which edges occur independently. 
Let $\E[\bA_{ij}]=\bP_{ij}$ for all $i\neq j$ and $P_{ii}\in [0,1]$.
Assume that $n\max_{i,j}\bP_{ij}\leq d$. Then for any $C_0$, there is a constant $C=C(C_0)$ such that 
\begin{equation*}
\op{A-P} \leq C\sqrt{d + \log n}
\end{equation*}
with probability at least $1-n^{-C_0}$.
\label{lem:concen-A}
\end{lemma}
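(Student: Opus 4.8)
The statement is a standard spectral-norm concentration bound for the centered adjacency matrix, and the plan is to follow the combinatorial route of Feige--Ofek as refined by Lei--Rinaldo and Gao et al., since it meshes directly with the scalar Bernstein inequality (Lemma~\ref{lemma-sum-bernoulli}) already available in the excerpt. First I would reduce the operator norm to a supremum of bilinear forms over a net. Since $A-P$ is symmetric, $\op{A-P}\leq C\max_{x,y\in\mathcal{N}}|x^\top(A-P)y|$, where $\mathcal{N}$ is a fixed $1/4$-net of the unit sphere $S^{n-1}$ with $|\mathcal{N}|\leq 9^n$. It then suffices to control $x^\top(A-P)y=\sum_{i,j}x_iy_j(A_{ij}-P_{ij})$ uniformly over the at most $81^n=e^{cn}$ pairs in $\mathcal{N}\times\mathcal{N}$, handling the diagonal correction $\sum_i x_iy_i(A_{ii}-P_{ii})$ (bounded by $1$) separately.

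Second, for each fixed pair I would split the double sum into light pairs $\mathcal{L}=\{(i,j):|x_iy_j|\leq \sqrt{d}/n\}$ and heavy pairs $\mathcal{H}=\mathcal{L}^c$. On $\mathcal{L}$, apply Lemma~\ref{lemma-sum-bernoulli} with coefficients $a_{ij}=x_iy_j$ restricted to $\mathcal{L}$: the variance proxy is $\nu=\sum_{(i,j)\in\mathcal{L}}x_i^2y_j^2P_{ij}\leq (d/n)\|x\|^2\|y\|^2=d/n$, and the boundedness parameter is at most $\sqrt{d}/n$. A Bernstein tail of order $\exp(-c'(d+\log n))$, combined with a union bound over the $e^{cn}$ pairs (using that $d+\log n$ can be taken comparable to $n$ in the exponent without loss), shows that the light contribution is $O(\sqrt{d+\log n})$ with probability at least $1-n^{-C_0}$. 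Note both the $\sqrt{d}$ regime and the $\sqrt{\log n}$ regime fall out of this same union-bound calculation, according to whether $d\gtrsim\log n$ or $d\lesssim\log n$.

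Third---and this is the crux---I would control the heavy pairs, where one cannot union-bound Bernstein because the boundedness parameter is too large. Instead one invokes a high-probability \emph{discrepancy} event: with probability $1-n^{-C_0}$, for all subsets $I,J\subseteq[n]$ the edge count $e(I,J)=\sum_{i\in I,j\in J}A_{ij}$ is close to its mean $\sum_{i\in I,j\in J}P_{ij}$ when $|I||J|$ is large, and otherwise bounded by a Feige--Ofek-type logarithmic term. Grouping the coordinates of $x$ and $y$ into dyadic magnitude blocks and applying this discrepancy bound block by block then shows the heavy contribution is also $O(\sqrt{d+\log n})$. Combining the light and heavy estimates, adjusting constants, and redistributing the failure probabilities completes the proof.

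The main obstacle is the heavy-pair estimate: establishing the discrepancy event and executing the dyadic accounting is the delicate combinatorial heart of the argument, and it is precisely the part responsible for the sharp $\sqrt{d+\log n}$ rate all the way into the sparse regime $d\lesssim\log n$ (where high-degree vertices would otherwise inflate the spectral norm). I would also note that one can alternatively cite a general inhomogeneous random-matrix inequality of Bandeira--van Handel type, which yields $\op{A-\E A}\lesssim \sqrt{d}+\sqrt{\log n}\lesssim\sqrt{d+\log n}$ directly; but the Feige--Ofek/Lei--Rinaldo route is preferable here because it is self-contained given Lemma~\ref{lemma-sum-bernoulli}.
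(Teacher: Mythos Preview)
The paper does not supply its own proof of this lemma: it is stated in the ``Additional technical lemmas'' subsection and attributed directly to \cite{lei2015consistency} and \cite{gao2015achieving}, with no argument given. So there is no in-paper proof to compare against; the relevant benchmark is the argument in those cited references.

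Your sketch is exactly the Feige--Ofek/Lei--Rinaldo approach of those references and is correct in outline: the $\varepsilon$-net reduction, the light/heavy split at threshold $\sqrt{d}/n$, Bernstein (Lemma~\ref{lemma-sum-bernoulli}) plus a union bound on the light pairs, and the discrepancy lemma with dyadic blocking on the heavy pairs. Your identification of the heavy-pair discrepancy step as the crux is right, and the remark that Bandeira--van Handel gives an alternative one-line route to $\sqrt{d}+\sqrt{\log n}\asymp\sqrt{d+\log n}$ is also accurate. One small imprecision: in the light-pair step you write that the Bernstein tail is of order $\exp(-c'(d+\log n))$, but what you actually need---and what the computation gives---is an exponent of order $n$ (since $\nu\leq d/n$ and $a\leq\sqrt{d}/n$, setting $\lambda\asymp\sqrt{d+\log n}$ makes $\lambda^2/(\nu+a\lambda)\gtrsim n$), which is what beats the $9^{2n}$ pairs in the union bound. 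The conclusion is unchanged.
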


\begin{lemma}
Let $\bA$ be the symmetric adjacency matrix of a random graph of $n$ nodes in which edges occur independently. Let $\E[\bA_{ij}]=\bP_{ij}$ for all $i\neq j$ and $P_{ii}\in [0,1]$ for all $i$ and $X$ be deterministic with $X_{ii}=0$ for all $i$. Then, 
\begin{equation*}
| \langle \bA-\bP, \bX \rangle | \leq C\fnorm{X}
\end{equation*}
with probability at least $1- 2\mathrm{exp}(-C^2/8p_{\max}) - \mathrm{exp}(-C^2\fnorm{X}/8\|X\|_{\infty})$, where $p_{\max} = \max_{i\neq j}P_{ij}$.
\label{lemma-AX}
\end{lemma}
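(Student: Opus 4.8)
}
The plan is to recognize $\inner{\bA-\bP,\bX}$ as (twice) a sum of independent, bounded, mean-zero random variables and then invoke the Bernstein-type inequality of Lemma~\ref{lemma-sum-bernoulli}. Since $\bX$ is symmetric with zero diagonal and $\bA$ is symmetric, we can write
\[
\inner{\bA-\bP,\bX} = \sum_{i\neq j}(\bA_{ij}-\bP_{ij})\bX_{ij} = 2\sum_{i<j}(\bA_{ij}-\bP_{ij})\bX_{ij} =: 2S,
\]
and the diagonal entries (where $\bP_{ii}$ is only known to lie in $[0,1]$ rather than being a Bernoulli mean) never enter because $\bX_{ii}=0$. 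The summands $(\bA_{ij}-\bP_{ij})\bX_{ij}$, $i<j$, are independent, have mean zero, and are bounded in absolute value by $\|\bX\|_{\infty}$, so $S$ is exactly of the form covered by Lemma~\ref{lemma-sum-bernoulli}.

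First I would apply Lemma~\ref{lemma-sum-bernoulli} to $S$ with coefficients $a_{ij}=\bX_{ij}$, using the (deliberately crude) variance bound $\nu = \sum_{i<j}\bX_{ij}^2\bP_{ij}\le p_{\max}\sum_{i<j}\bX_{ij}^2\le p_{\max}\fnorm{\bX}^2$ and the range bound $a=\max_{i<j}|\bX_{ij}|=\|\bX\|_{\infty}$. Taking $\lambda=C\fnorm{\bX}/2$, the lower-tail inequality gives $\P(S<-\lambda)\le \exp(-\lambda^2/2\nu)\le \exp(-C^2/(8p_{\max}))$. For the upper tail, the Bernstein inequality gives $\P(S>\lambda)\le \exp\!\big(-\lambda^2/(2(\nu+a\lambda/3))\big)$; splitting the denominator via $\tfrac{1}{u+v}\ge\tfrac12\min(1/u,1/v)$ bounds the exponent below by $\min\!\big(\lambda^2/(4\nu),\,3\lambda/(4a)\big)$, hence this probability is at most $\exp(-\lambda^2/(4\nu))+\exp(-3\lambda/(4a))\le \exp(-C^2/(8p_{\max}))+\exp(-C^2\fnorm{\bX}/(8\|\bX\|_{\infty}))$, after renaming the absolute constant in the sub-exponential regime as needed. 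A union bound over the two tails of $S$, together with $|\inner{\bA-\bP,\bX}|=2|S|$, then yields the claimed bound.

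The argument is essentially routine, so there is no substantive obstacle beyond bookkeeping: one must be careful that symmetry of $\bA$ means the independent randomness is indexed by unordered pairs (hence the factor $2$), that the vanishing diagonal of $\bX$ removes any dependence on the possibly ill-behaved diagonal of $\bP$, and that the correct split of the Bernstein bound into its sub-Gaussian and sub-exponential regimes reproduces the three-term probability expression in the statement. Tracking the exact numerical constants (e.g.\ $\sum_{i<j}\bX_{ij}^2=\tfrac12\fnorm{\bX}^2$ versus the looser $\fnorm{\bX}^2$) would only sharpen the constants and is inessential.
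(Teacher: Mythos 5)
Your proposal is correct and follows essentially the same route as the paper: write $\langle A-P,X\rangle$ as twice a sum over unordered pairs (the zero diagonal of $X$ removing any dependence on the diagonal of $P$), apply Lemma~\ref{lemma-sum-bernoulli} with $\lambda = C\fnorm{X}/2$, $\nu\leq p_{\max}\fnorm{X}^2$ and $a=\|X\|_\infty$, and split the Bernstein upper tail into its sub-Gaussian and sub-exponential regimes to recover the three-term probability bound. Your slightly looser numerical constants (e.g.\ $16$ versus $8$ in the sub-Gaussian exponent) are of the same nature as the paper's own bookkeeping and are immaterial since $C$ is a generic constant.
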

\begin{proof}
Observe that $\langle \bA-\bP, \bX \rangle = 2\sum_{i < j} (A_{ij} - P_{ij}) X_{ij}$ and $A_{ij}$ are independent Bernoulli random variables with $\E[A_{ij}] = P_{ij}$. 
Apply Lemma~\ref{lemma-sum-bernoulli} to $\sum_{i < j} (A_{ij} - P_{ij}) X_{ij}$ with $\lambda = C\fnorm{X}/2$, we have $\nu = \sum_{i<j}X_{ij}^2P_{ij}\leq p_{\max}\fnorm{X}^2$ and  
\begin{equation*}
\begin{aligned}
P\big(|\langle \bA-\bP, \bX \rangle | \leq C\fnorm{X}\big)&\leq \mathrm{exp}(-C^2\fnorm{X}^2/8\nu) + \mathrm{exp}\left(-\frac{C^2\fnorm{X}^2}{8\max\left\{\nu, C\|X\|_{\infty}\fnorm{X}\right\}}\right)\\
&\leq 2\mathrm{exp}(-C^2\fnorm{X}^2/8\nu) + \mathrm{exp}(-C^2\fnorm{X}/8\|X\|_{\infty})\\
&\leq 2\mathrm{exp}(-C^2/8p_{\max}) + \mathrm{exp}(-C^2\fnorm{X}/8\|X\|_{\infty}).
\end{aligned}
\end{equation*}
This completes the proof.
\end{proof}

\appendix

\section{Proofs of results for initialization}

\label{sec:appendix}

This section presents the proofs of Theorem \ref{thm:init-range-kernel}, Corollary \ref{cor:init-algo-prob} and Proposition \ref{prop:svt-init}. 

\subsection{Preliminaries}

We introduce two technical results used repeatedly in the proofs.

\begin{lemma}
	If $\op{\tg}^2 \geq C\fnorm{\gresid}^2$ for some constant $C>0$, then $\op{\tg}^2 \geq c\fnorm{\tg}^2/k$ for some constant $c>0$. 
	\label{lem:Z}
\end{lemma}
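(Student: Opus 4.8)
\textbf{Proof plan for Lemma~\ref{lem:Z}.}
The plan is to work directly with the spectrum of $\tg$. Since $\tg \in \mathbb{S}_+^n$ (it is of the form $JLJ$ and hence positive semi-definite by the Schoenberg condition \eqref{eq:h-constraint}), its singular values coincide with its eigenvalues; denote them $\sigma_1 \geq \sigma_2 \geq \cdots \geq \sigma_n \geq 0$. By the definition in \eqref{eq:Zstar-general}, $U_k D_k U_k^\top$ is the best rank-$k$ approximation to $\tg$, so it carries the top $k$ eigenvalues and $\gresid = \tg - U_k D_k U_k^\top$ carries the tail. Consequently $\fnorm{\gresid}^2 = \sum_{i > k}\sigma_i^2$, $\op{\tg}^2 = \sigma_1^2$, and the Frobenius norm splits as $\fnorm{\tg}^2 = \sum_{i=1}^k \sigma_i^2 + \fnorm{\gresid}^2$.

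The key step is the trivial bound $\sum_{i=1}^k \sigma_i^2 \leq k\sigma_1^2 = k\op{\tg}^2$ together with the hypothesis $\fnorm{\gresid}^2 \leq \op{\tg}^2/C$. Substituting both into the splitting identity gives
\begin{equation*}
\fnorm{\tg}^2 \leq k\op{\tg}^2 + \frac{1}{C}\op{\tg}^2 = \Big(k + \tfrac{1}{C}\Big)\op{\tg}^2 \leq k\Big(1 + \tfrac{1}{C}\Big)\op{\tg}^2,
\end{equation*}
where the last inequality uses $k \geq 1$. Rearranging yields $\op{\tg}^2 \geq \frac{C}{C+1}\cdot\frac{\fnorm{\tg}^2}{k}$, so the claim holds with $c = \frac{C}{C+1}$.

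There is no real obstacle here; the only thing to be careful about is to record at the outset that $\tg$ is positive semi-definite so that the eigenvalue/singular-value identification and the resulting additive decomposition of $\fnorm{\tg}^2$ into the top-$k$ block and $\fnorm{\gresid}^2$ are valid, and that $k \geq 1$ so that $k + 1/C \leq k(1 + 1/C)$.
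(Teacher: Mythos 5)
Your proof is correct and follows essentially the same route as the paper's: split $\fnorm{\tg}^2$ into the top-$k$ spectral block and the residual $\gresid$, bound the former by $k\op{\tg}^2$ and the latter via the hypothesis. The only difference is that you invoke the exact Pythagorean (orthogonal) decomposition where the paper uses the cruder bound $\fnorm{\tg}^2 \leq 2\big(\fnorm{\gresid}^2 + \fnorm{\tz\tz^\top}^2\big)$, which merely sharpens the constant to $c = C/(C+1)$.
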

\begin{proof}
By definition,
\begin{equation*}
\begin{aligned}
\fnorm{\tg}^2 &\leq 2 \left(\fnorm{\gresid}^2 + \fnorm{\tz\tz^\top}^2\right) 
\leq 2\fnorm{\gresid}^2 + 2k\op{\tz}^4 
\leq (2k+1/C)\op{\tz}^4.
\end{aligned}
\end{equation*}
Therefore, $\op{\tz}^4 \geq c\fnorm{\tg}^2/k$ for some constant $c>0$. 
\end{proof}

\begin{theorem}
Under Assumption~\ref{assump:srank}, choose $\lambda_n, \gamma_n$ such that 
\begin{equation}
	\lambda_n\geq 2 \max\left\{\op{\bA-\bP} + \gamma_n\op{\tg}, \op{\bA-\bP} + \frac{\gamma_n}{\sqrt{k}}\fnorm{\ta\one^\top}, \frac{\inner{\bA-\bP, \bX}}{\sqrt{k}\fnorm{X}} + \frac{\gamma_n}{\sqrt{k}}\fnorm{X\tb}\right\}.
\label{eq:init-lambda}
\end{equation}
Let the constant step size $\eta\leq 2/9$ and the constraint sets $\projg, \proja$ and $\projb$ as specified in Theorem \ref{thm:init-range-kernel}.
If the latent vectors contain strong enough signal in the sense that 
\begin{equation}
	\op{\tg}^2\geq C\kappa_{\tz}^6e^{2M_1}\max \Big\{
	e^{2M_1}\lambda_n^2k , ~\|\gresid\|_*^2/k , ~ \fnorm{\gresid}^2
  \Big\}
  \label{eq:Z-op1}
\end{equation}
for some sufficiently large constant $C$, then for any given constant $c_1>0$, there exists a universal constant $C_1$ such that for any $ T \geq T_0$, the error will satisfy $e_T\leq c_1^2e^{-2M_1}\op{\tz}^4/\kappa_{\tz}^4$, where
\begin{equation*}
	T_0= \log \left(\frac{C_1e^{2M_1}k\kappa_{\tz}^6}{c_1^2}\frac{\fnorm{\tg}^2 + 2\fnorm{\ta\one^\top}^2 + \fnorm{X\tb}^2 }{\fnorm{\tg}^2}\right)\left(\log\left(\frac{1}{1-\gamma_n\eta}\right)\right)^{-1}.
\end{equation*}
\label{thm:init-algo}
\end{theorem}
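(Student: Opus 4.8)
The plan is to split the argument into two independent pieces: (a) the iterates of Algorithm~\ref{alg:convex-init} converge linearly, at rate $1-\gamma_n\eta$, to the unique minimizer $\bar\theta=(\bar G,\bar\alpha,\bar\beta)$ of the regularized objective $f$ over the constraint set $\projg\times\proja\times\projb$; and (b) a purely deterministic bound, of the same flavour as \prettyref{thm:kernel-convex-upper}, showing that $\bar\theta$ is close to the truth $(\tg,\ta,\tb)$. Throughout I would work with the weighted squared norm $Q(G,\alpha,\beta):=\fnorm{G}^2+2\fnorm{\alpha\one^\top}^2+\fnorm{X\beta}^2$ (with bilinear form $\langle\cdot,\cdot\rangle_Q$); in the corresponding preconditioned coordinates the three update lines of Algorithm~\ref{alg:convex-init} are exactly one projected gradient step on $f$ with the common step size $\eta$. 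The structural observation driving (a) is that the proximal term $\tfrac{\gamma_n}{2}Q$ is $\gamma_n$-strongly convex in this metric while $h$ is convex and $\lambda_n\tr(G)$ linear, and that the affine map $(G,\alpha,\beta)\mapsto\Theta=G+\alpha\one^\top+\one\alpha^\top+\beta X$ has operator norm at most $2$ from the weighted metric to $\fnorm{\cdot}$ (the degree and latent blocks are orthogonal because $JG=G$ forces $G\one=0$), so that the Hessian bound $\nabla^2 h\preceq\tfrac14 I$ makes $h$ $1$-smooth in $Q$. Hence $f$ is $\gamma_n$-strongly convex and $O(1)$-smooth; since $\eta\le 2/9$ lies below the reciprocal smoothness, \prettyref{lemma-stongly-convex} together with nonexpansiveness of the projections gives $Q(\theta^t-\bar\theta)\le(1-\gamma_n\eta)^t\,Q(\theta^0-\bar\theta)=(1-\gamma_n\eta)^t\,Q(\bar\theta)$, because $\theta^0=0$.

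For (b) I would repeat the basic-inequality argument of the convex proof with the extra proximal term carried along, writing $\Delta=\bar\theta-(\tg,\ta,\tb)$ with components $\Delta_G=\bar G-\tg$, $\Delta_\alpha=\bar\alpha-\ta$, $\Delta_\beta=\bar\beta-\tb$, and $\Delta_\Theta=\bar\Theta-\tall$. From $f(\bar G,\bar\alpha,\bar\beta)\le f(\tg,\ta,\tb)$, Taylor expansion of $h$ with its curvature $\tau=e^{M_1}/(1+e^{M_1})^2\asymp e^{-M_1}$ on $\{\max_{ij}|\Theta_{ij}|\le M_1\}$ (which contains both $\bar\Theta$ and $\tall$ by the constraint sets), and the identity $\tfrac{\gamma_n}{2}Q(\bar\theta)-\tfrac{\gamma_n}{2}Q(\tg,\ta,\tb)=\gamma_n\langle(\tg,\ta,\tb),\Delta\rangle_Q+\tfrac{\gamma_n}{2}Q(\Delta)$, one obtains
\[
\tfrac{\tau}{2}\fnorm{\Delta_\Theta}^2+\tfrac{\gamma_n}{2}Q(\Delta)\le
\bigl|\langle A-P,\ \Delta_G+2\Delta_\alpha\one^\top+\Delta_\beta X\rangle\bigr|
+\lambda_n\bigl|\tr\bar G-\tr\tg\bigr|
+\gamma_n\bigl|\langle(\tg,\ta,\tb),\Delta\rangle_Q\bigr|.
\]
Each term on the right is controlled exactly as in the proof of \prettyref{thm:kernel-convex-upper}: the $\langle A-P,\cdot\rangle$ and $\lambda_n\tr$ terms by H\"older/duality and a cone inequality of the type of \prettyref{lem:kernel-cone}, and $\gamma_n\langle\tg,\Delta_G\rangle\le\gamma_n\op{\tg}\|\Delta_G\|_*$ with $\|\Delta_G\|_*\lesssim\sqrt k\,\sqrt{Q(\Delta)}+\|\gresid\|_*$ (and analogously for the $\alpha$- and $\beta$-blocks). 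This is precisely where hypothesis~\eqref{eq:init-lambda} enters: it guarantees $\lambda_n$ dominates $\op{A-P}+\gamma_n\op{\tg}$, $\op{A-P}+\gamma_n\fnorm{\ta\one^\top}/\sqrt k$ and $|\langle A-P,X\rangle|/(\sqrt k\fnorm X)+\gamma_n\fnorm{X\tb}/\sqrt k$, so that the entire right-hand side collapses to the familiar form $C\lambda_n\sqrt k\,\sqrt{Q(\Delta)}+C\lambda_n\|\gresid\|_*$. Feeding this into a restricted-strong-convexity inequality of the type of \prettyref{lem:kernel-identifiability} (valid under Assumption~\ref{assump:srank}) and solving the resulting quadratic in $\sqrt{Q(\Delta)}$ gives the deterministic bound $Q(\Delta)=\fnorm{\bar G-\tg}^2+2\fnorm{(\bar\alpha-\ta)\one^\top}^2+\fnorm{X(\bar\beta-\tb)}^2\lesssim e^{2M_1}\lambda_n^2k+e^{M_1}\lambda_n\|\gresid\|_*+\|\gresid\|_*^2/k$.

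It then remains to combine (a) and (b) and to pass from $G$-space to the metric $e_T$ of \eqref{eq:metric}. By the triangle inequality in $Q$, the $G$-space error satisfies $Q\bigl(\theta^T-(\tg,\ta,\tb)\bigr)\le 2(1-\gamma_n\eta)^T Q(\bar\theta)+2Q(\Delta)$, and $Q(\bar\theta)\le 2Q(\Delta)+2Q(\tg,\ta,\tb)\lesssim\fnorm{\tg}^2+2\fnorm{\ta\one^\top}^2+\fnorm{X\tb}^2$ since $Q(\Delta)$ is already much smaller than $Q(\tg,\ta,\tb)$ under~\eqref{eq:Z-op1}. Because the final step of Algorithm~\ref{alg:convex-init} sets $Z^T=U_kD_k^{\half}$ from the top-$k$ eigendecomposition of $G^T$, Lipschitzness of the best rank-$k$ approximation gives $\fnorm{Z^T(Z^T)^\top-\tz\tz^\top}^2\lesssim\fnorm{G^T-\tg}^2+\fnorm{\gresid}^2$, and \prettyref{lemma-tu1} (using that $G^T$ is then near $\tg$, so $\sigma_k(\tz)^2\asymp\op{\tz}^2/\kappa_{\tz}^2$) turns this into $\op{\tz}^2\fnorm{\Delta_{Z^T}}^2\lesssim\kappa_{\tz}^2(\fnorm{G^T-\tg}^2+\fnorm{\gresid}^2)$; hence $e_T\lesssim\kappa_{\tz}^2\bigl((1-\gamma_n\eta)^T Q(\bar\theta)+Q(\Delta)+\fnorm{\gresid}^2\bigr)$. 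After the AM--GM split $e^{M_1}\lambda_n\|\gresid\|_*\le e^{2M_1}\lambda_n^2k+\|\gresid\|_*^2/k$, the signal condition~\eqref{eq:Z-op1} forces $\kappa_{\tz}^2(Q(\Delta)+\fnorm{\gresid}^2)\le\tfrac12 c_1^2e^{-2M_1}\op{\tz}^4/\kappa_{\tz}^4$, and requiring $(1-\gamma_n\eta)^T$ to push the transient term below $\tfrac12 c_1^2e^{-2M_1}\op{\tz}^4/\kappa_{\tz}^4$ yields exactly the stated $T_0$, where one uses \prettyref{lem:Z} (namely $\op{\tg}^2=\op{\tz}^4\gtrsim\fnorm{\tg}^2/k$, applicable since~\eqref{eq:Z-op1} forces $\op{\tg}^2\gg\fnorm{\gresid}^2$) to bring the factors $k$ and $\fnorm{\tg}^2$ into the logarithm.

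The main obstacle will be step (b): carrying the cone/restricted-strong-convexity machinery of the convex analysis through with the proximal term present, and in particular the correct handling of $\gamma_n\langle(\tg,\ta,\tb),\Delta\rangle_Q$ — this term must be bounded by applying the cone inequality to $\Delta_G$ \emph{before} invoking Cauchy--Schwarz, since a crude bound $\gamma_n\sqrt{Q(\tg,\ta,\tb)}\,\sqrt{Q(\Delta)}$ would leave a bias of order $\gamma_n e^{M_1}\fnorm{\tg}^2$ that is too large for~\eqref{eq:Z-op1} to absorb. The more bookkeeping-heavy (but routine) part is matching the three sources of error — $e^{2M_1}\lambda_n^2k$, $\|\gresid\|_*^2/k$, and $\fnorm{\gresid}^2$ from the eigendecomposition step — against the three terms in the hypothesis~\eqref{eq:Z-op1}, and tracking the $\kappa_{\tz}$ powers through the $G$-space-to-$Z$-space conversion so that the final radius comes out as $c_1^2e^{-2M_1}\op{\tz}^4/\kappa_{\tz}^4$.
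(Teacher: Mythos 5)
Your proposal is correct and follows essentially the same route as the paper's proof: linear convergence of the projected gradient iterates to the minimizer of the regularized objective via $\gamma_n$-strong convexity and $O(1)$-smoothness in the weighted norm $\fnorm{G}^2+2\fnorm{\alpha\one^\top}^2+\fnorm{X\beta}^2$ (the paper's Lemmas \ref{lem:init-convex-smooth} and \ref{lem:init-descent}), a deterministic bound on that minimizer obtained by re-running the basic-inequality/cone/restricted-strong-convexity argument with the $\gamma_n$ cross terms absorbed into condition \eqref{eq:init-lambda} (Lemmas \ref{lem:init-cone}, \ref{lem:init-identifiability} and Theorem \ref{thm:init}), and the final conversion to $e_T$ through the best rank-$k$ step, Lemma \ref{lemma-tu1} and Lemma \ref{lem:Z}. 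You also correctly identify the one genuinely delicate point, namely that $\gamma_n\langle(\tg,\ta,\tb),\Delta\rangle$ must be handled through the cone inequality rather than crude Cauchy--Schwarz, which is exactly how the paper proceeds.
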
 
\begin{proof}
See \prettyref{sec:thm-deter-init}.
\end{proof}

\subsection{Proof of Theorem~\ref{thm:init-range-kernel}}
We focus on the case where $X$ is nonzero, and the case of $X=0$ is simpler.
By Lemma~\ref{lem:prob}, there exist constants $C_2, c$ such that with probability at least $1-n^{c}$, 
\begin{equation*}
\begin{aligned}
\opnorm{A - P},~\frac{\inner{\bA-\bP, \bX}}{\sqrt{k}\fnorm{X}}\leq C_2\sqrt{\max\left\{ne^{-M_2}, \log n\right\}}.
\end{aligned}
\end{equation*}
All the following analysis is conditional on this event. Since $\fnorm{\ta\one^\top}, \fnorm{\tb X} \leq C\fnorm{\tg}$, by Lemma~\ref{lem:Z}, 
\begin{equation*}
	\fnorm{\ta\one^\top} + \fnorm{X\tb} \leq C_3\sqrt{k}\op{\tg}. 
\end{equation*}
for some constant $C_3$. 
Combining these two inequalities leads to 
\begin{equation*}
\begin{aligned}
& ~~~\max\left\{\op{\bA-\bP} + \gamma_n\op{\tg}, \op{\bA-\bP} + \frac{\gamma_n}{\sqrt{k}}\fnorm{\ta\one^\top}, \frac{\inner{\bA-\bP, \bX}}{\sqrt{k}\fnorm{X}} + \frac{\gamma_n}{\sqrt{k}}\fnorm{X\tb}\right\} \\
& \leq C_2\sqrt{\max\left\{ne^{-M_2}, \log n\right\}} + (1+C_3)\gamma_n \op{\tg} 
\leq  C_2/C_0 \lambda_n + (1+C_3)\delta\lambda_n 
\leq \lambda_n/2.
\end{aligned}
\end{equation*}
Here the last inequality is due to fact that $C_0$ is sufficiently large and $\delta$ is sufficiently small. Furthermore, 
\begin{equation*}
	\wt{C}\kappa_{\tz}^6e^{4M_1}\lambda_n^2k \leq \wt{C}c_0^2\op{\tg}^2\leq \op{\tg}^2,
\end{equation*}
since $c_0$ is a sufficient small constant. Therefore, the inequality~\eqref{eq:Z-op1} holds. Apply Theorem~\ref{thm:init-algo}, there exists a universal constant $C_1$ such that for any given constant $c_1>0$, $e_T\leq c_1^2e^{-2M_1}\op{\tz}^4/\kappa_{\tz}^4$, as long as $T\geq T_0$, where 
\begin{equation*}
\begin{aligned}
	T_0 = \log \left(\frac{C_1e^{2M_1}k\kappa_{\tz}^6}{c_1^2}\frac{\|\tx\|_{\mathcal{D}}^2}{\fnorm{\tg}^2}\right)\left(\log\left(\frac{1}{1-\gamma_n\eta}\right)\right)^{-1} .
\end{aligned}
\end{equation*}
Notice that when $\fnorm{\ta\one^\top}, \fnorm{\tb X} \leq C\fnorm{\tg}$, $\|\tx\|_{\mathcal{D}}^2\leq C_4\fnorm{\tg}^2$ for some constant $C_4$. Therefore, 
\begin{equation*}
T_0 \leq \log \left(\frac{C_1C_4e^{2M_1}k\kappa_{\tz}^6}{c_1^2}\right)\left(\log\left(\frac{1}{1-\gamma_n\eta}\right)\right)^{-1}.
\end{equation*}
This completes the proof.

\subsection{Proof of Corollary~\ref{cor:init-algo-prob}}
By Lemma~\ref{lem:prob}, there exist constants $C_2, c_2$ such that with probability at least $1-n^{c_2}$, 
\begin{equation*}
\begin{aligned}
\opnorm{A - P},~\frac{\inner{\bA-\bP, \bX}}{\sqrt{k}\fnorm{X}}\leq C_2\sqrt{\max\left\{ne^{-M_2}, \log n\right\}}.
\end{aligned}
\end{equation*}
All the following analysis is conditional on this event. Since $\fnorm{\ta\one^\top}, \fnorm{\tb X} \leq C\fnorm{\tg}$, by Lemma~\ref{lem:Z}, 
\begin{equation*}
	\fnorm{\ta\one^\top} + \fnorm{X\tb} \leq C_3\sqrt{k}\op{\tg}. 
\end{equation*}
for some constant $C_3$. 
Combining these two inequalities leads to 
\begin{equation*}
\begin{aligned}
& ~~~\max\left\{\op{\bA-\bP} + \gamma_n\op{\tg}, \op{\bA-\bP} + \frac{\gamma_n}{\sqrt{k}}\fnorm{\ta\one^\top}, \frac{\inner{\bA-\bP, \bX}}{\sqrt{k}\fnorm{X}} + \frac{\gamma_n}{\sqrt{k}}\fnorm{X\tb}\right\} \\
& \leq C_2\sqrt{\max\left\{ne^{-M_2}, \log n\right\}} + (1+C_3)\gamma_n \op{\tg} \\
& \leq C_2\gamma_n\op{\tg} + (1+C_3)\gamma_n \op{\tg} = (1 + C_2 + C_3)\gamma_n\op{\tg},
\end{aligned}
\end{equation*}
where the last inequality is due to equation~\eqref{eq:Z-op2}. Since we choose $\lambda_n = C_0\gamma_n\op{\tz}^2$ for some sufficiently large constant $C_0$, inequality~\eqref{eq:init-lambda} holds. Further, notice that $\gamma_n = \gamma = c_0/(e^{2M_1}\sqrt{k}\kappa_{\tz}^3)$ for some sufficiently small constant $c_0$,
\begin{equation*}
\begin{aligned}
\wt{C}e^{4M_1}\kappa_{\tz}^6\lambda_n^2 k = \wt{C}c_0^2\op{\tg}^2\leq \op{\tg}^2.
\end{aligned}
\end{equation*}
Therefore, the inequality~\eqref{eq:Z-op1} holds. Apply Theorem~\ref{thm:init-algo}, there exists a universal constant $C_1$ such that for any given constant $c_1>0$, $e_T\leq c_1^2e^{-2M_1}\op{\tz}^4/\kappa_{\tz}^4$, as long as $T\geq T_0$, where 
\begin{equation*}
\begin{aligned}
	T_0 = \log \left(\frac{C_1e^{2M_1}k\kappa_{\tz}^6}{c_1^2}\frac{\|\tx\|_{\mathcal{D}}^2}{\fnorm{\tg}^2}\right)\left(\log\left(\frac{1}{1-\gamma\eta}\right)\right)^{-1}.
\end{aligned}
\end{equation*}
Notice that when $\fnorm{\ta\one^\top}, \fnorm{\tb X} \leq C\fnorm{\tg}$, $\|\tx\|_{\mathcal{D}}^2\leq C_4\fnorm{\tg}^2$ for some constant $C_4$. Therefore, 
\begin{equation*}
T_0 \leq \log \left(\frac{C_1C_4e^{2M_1}k\kappa_{\tz}^6}{c_1^2}\right)\left(\log\left(\frac{1}{1-\gamma\eta}\right)\right)^{-1}.
\end{equation*}
This completes the proof.

\subsection{Proof of Proposition~\ref{prop:svt-init}}
Applying Theorem 2.7 in \cite{chatterjee2015matrix} we obtain
\[
\frac{1}{n^2}\fnorm{\wh{P} - P}^2 \leq C(k, M_1, \kappa_{\tz}) n^{-\frac{1}{k+3}}. 
\]
where the constant $C(k, M_1, \kappa_{\tz})$ depends on $k, M_1, \kappa_{\tz}$. Notice that $\Theta_{ij} = \logit (P_{ij})$ and $\logit(\cdot)$ is $4e^{M_1}$-Lipchitz continuous in the interval $\left[\frac{1}{2}e^{-M_1}, \frac{1}{2}\right]$, and so
\[
\frac{1}{n^2}\fnorm{\wh{\Theta} - \Theta}^2 \leq C'(k, M_1, \kappa_{\tz}) n^{-\frac{1}{k+3}}. 
\]
Let $\Delta_{\wh{\Theta}} = \wh{\Theta} - \tall$ It is easy to verify, 
\begin{equation*}
\begin{aligned}
\alpha^0 &= \left(2n \bI_n + 2\one\one^\top\right)^{-1}\wh{\Theta}\one 
= \ta + \frac{1}{n} \left(\bI_n - \frac{1}{2n}\one\one^\top\right)\Delta_{\wh{\Theta}}\one,
\end{aligned}
\end{equation*}
and hence 
\begin{equation*}
\begin{aligned}
\fnorm{\alpha^0\one^\top - \ta\one^\top} &=  \frac{1}{n}\fnorm{\left(\bI_n - \frac{1}{2n}\one\one^\top\right)\Delta_{\wh{\Theta}}\one\one^\top}\\
& \leq \frac{1}{n} \op{\bI_n - \frac{1}{2n}\one\one^\top} \fnorm{\Delta_{\wh{\Theta}}}\fnorm{\one\one^\top}
\leq \fnorm{\Delta_{\wh{\Theta}}}. 
\end{aligned}
\end{equation*}
Notice that $\tg\in\mathbb{S}_+^n$,
\begin{equation*}
\begin{aligned}
\fnorm{\wh{G} - \tg} \leq  \fnorm{ \wh{G} - J\wh{\Theta}J + J\wh{\Theta}J - \tg} \leq 2\fnorm{J\wh{\Theta}J - \tg}\leq 2\fnorm{\Delta_{\wh{\Theta}}}. 
\end{aligned}
\end{equation*}
Further notice that $\rank(\tg) = k$, 
\begin{equation*}
\begin{aligned}
\fnorm{Z^0(Z^0)^\top - \tg} \leq  \fnorm{ Z^0(Z^0)^\top - \wh{G} + \wh{G} - \tg} \leq 2\fnorm{ \wh{G} - \tg} \leq 4\fnorm{\Delta_{\wh{\Theta}}}. 
\end{aligned}
\end{equation*}
Then, by Lemma~\ref{lemma-tu1}, 
\begin{equation*}
\begin{aligned}
e_0 &\leq \op{\tz}^2 \text{dist}(Z^0, \tz)^2 + 2n\vnorm{\alpha^0 - \ta}^2 \\
& \leq \frac{\kappa_{\tz}^2}{2(\sqrt{2}-1)}\fnorm{Z^0(Z^0)^\top - \tg}^2 +  2n\vnorm{\alpha^0 - \ta}^2
\leq 24\kappa_{\tz}^2\fnorm{\Delta_{\wh{\Theta}}}^2 + 2\fnorm{\Delta_{\wh{\Theta}}}^2 \\
& \leq 26\kappa_{\tz}^2 C'(k, M_1, \kappa_{\tz}) n^2\times n^{-\frac{1}{k+3}} 
\leq \frac{26k \kappa_{\tz}^2 C'(k, M_1, \kappa_{\tz})}{c_0}\frac{ \fnorm{\tg}^2}{k}\times n^{-\frac{1}{k+3}}\\
& \leq C_1(k, M_1, \kappa_{\tz})\op{\tz}^4\times n^{-\frac{1}{k+3}}.
\end{aligned}
\end{equation*}
Therefore, the initialization condition in Assumption~\ref{assump:init} will hold for large enough $n$.

\subsection{Proof of Theorem~\ref{thm:init-algo}}
\label{sec:thm-deter-init}

\subsubsection{Preparations}
Recall the definition of $f(G, \alpha, \beta)$ in \eqref{eq:f}
where  
\begin{equation*}
(\bG, \balpha, \beta) \in \mathcal{D}=\left\{(\bG, \balpha, \beta) | \bG\bJ=\bG, \bG\in\mathbb{S}_+, \max_{i, j} |\bG_{ij}|, \max_{i}|\balpha_i|\leq \frac{M}{3}, |\beta|\leq \frac{M}{3\max_{i,j}|\bX_{ij}|}\right\}.
 \end{equation*} 
 Define the norm $\|\cdot\|_{\mathcal{D}}$ in the domain $\mathcal{D}$ by 
$$\|(\bG, \balpha, \beta)\|_{\mathcal{D}}=\Big(\fnorm{\bG}^2+2\fnorm{\balpha\one^\top}^2+\fnorm{\bX\beta}^2\Big)^{\half}.$$

\begin{lemma}
The function $f$ is $\gamma_n$-strongly convex and $(\gamma_n + 9/2)$-smooth in the convex domain $\mathcal{D}$ with respect to the norm  $\|\cdot\|_{\mathcal{D}}$, that is, for $(\bG_i, \balpha_i, \beta_i)\in\mathcal{D},  i=1 ,2$,  let $(\Delta_{\bG}, \Delta_{\balpha}, \Delta_{\beta})=(\bG_1-\bG_2, \balpha_1-\balpha_2, \beta_1-\beta_2)$, then
\begin{equation*}
\begin{aligned}
\frac{\gamma_n}{2}\|(\Delta_{\bG}, \Delta_{\balpha}, \Delta_{\beta})\|_{\mathcal{D}}^2&\leq  f(\bG_1, \balpha_1, \beta_1)-f(\bG_2, \balpha_2, \beta_2)-\inner{\nabla_G f(G_2, \alpha_2, \beta_2) , \Delta_{\bG}}\\
&~~~ -\inner{\nabla_\alpha f(G_2, \alpha_2, \beta_2) , \Delta_{\balpha}}-\inner{\nabla_\beta f(G_2, \alpha_2, \beta_2) , \Delta_{\beta}}\\
& \leq \frac{\gamma_n + 9/2}{2}\|(\Delta_{\bG}, \Delta_{\balpha}, \Delta_{\beta})\|_{\mathcal{D}}^2.
\end{aligned}
\end{equation*}
\label{lem:init-convex-smooth}
\end{lemma}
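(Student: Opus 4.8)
The plan is to write the objective in \eqref{eq:f} as a sum of three pieces,
\[
f(G,\alpha,\beta)=h(\Theta)+\lambda_n\tr(G)+\frac{\gamma_n}{2}\Big(\fnorm{G}^2+2\fnorm{\alpha\one^\top}^2+\fnorm{X\beta}^2\Big),
\]
where $h$ is the log-likelihood term of \eqref{eq:h} and $\Theta=\alpha\one^\top+\one\alpha^\top+\beta X+G$ is an affine function of $(G,\alpha,\beta)$, and then to control, from above and below, the second-order remainder of each piece separately. This suffices because, for any $(G_1,\alpha_1,\beta_1),(G_2,\alpha_2,\beta_2)\in\mathcal{D}$, the quantity sandwiched in the lemma (the increment of $f$ minus its first-order Taylor term at the second point) is the sum of the corresponding quantities for the three pieces, the gradient being additive. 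Throughout, write $(\Delta_G,\Delta_\alpha,\Delta_\beta)=(G_1-G_2,\alpha_1-\alpha_2,\beta_1-\beta_2)$ and $\Delta_\Theta=\Delta_G+\Delta_\alpha\one^\top+\one\Delta_\alpha^\top+\Delta_\beta X$, so that $\Theta_1-\Theta_2=\Delta_\Theta$ and $\|(\Delta_G,\Delta_\alpha,\Delta_\beta)\|_{\mathcal{D}}^2=\fnorm{\Delta_G}^2+2\fnorm{\Delta_\alpha\one^\top}^2+\fnorm{\Delta_\beta X}^2$.

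The middle and last pieces are immediate. The term $\lambda_n\tr(G)$ is affine, so its second-order remainder vanishes. The term $\tfrac{\gamma_n}{2}(\fnorm{G}^2+2\fnorm{\alpha\one^\top}^2+\fnorm{X\beta}^2)=\tfrac{\gamma_n}{2}\|(G,\alpha,\beta)\|_{\mathcal{D}}^2$ is a positive semidefinite quadratic form, so its second-order remainder equals $\tfrac{\gamma_n}{2}\|(\Delta_G,\Delta_\alpha,\Delta_\beta)\|_{\mathcal{D}}^2$ exactly; this is the source of the additive $\gamma_n$ in both the strong-convexity and the smoothness constants.

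For the first piece, recall from the proof of Theorem~\ref{thm:kernel-convex-upper} that $\nabla^2 h(\Theta)=\mathrm{diag}(\mathrm{vec}(\sigma(\Theta)\circ(1-\sigma(\Theta))))$; since $0\le\sigma(x)(1-\sigma(x))\le\tfrac14$ for every $x$, this matrix satisfies $0\preceq\nabla^2 h(\Theta)\preceq\tfrac14 I$ for all $\Theta$, in particular along the segment between $\Theta_1$ and $\Theta_2$. The lower inequality, together with $\Theta(\cdot)$ being affine, shows $(G,\alpha,\beta)\mapsto h(\Theta)$ is convex, so its second-order remainder is nonnegative; combined with the previous paragraph this gives the asserted $\gamma_n$-strong convexity. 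For the smoothness direction, the Hessian bound makes that remainder at most $\tfrac18\fnorm{\Delta_\Theta}^2$, and a triangle-inequality comparison---splitting $\Delta_\Theta$ into the blocks $\Delta_G$, $\Delta_\alpha\one^\top$, $\one\Delta_\alpha^\top$, $\Delta_\beta X$ and using $\fnorm{\Delta_\alpha\one^\top}=\fnorm{\one\Delta_\alpha^\top}$---bounds $\fnorm{\Delta_\Theta}^2$ by an absolute constant times $\|(\Delta_G,\Delta_\alpha,\Delta_\beta)\|_{\mathcal{D}}^2$. Hence the first piece is $L$-smooth with respect to $\|\cdot\|_{\mathcal{D}}$ for an absolute constant $L$; the value $\tfrac92$ in the statement is a convenient, non-tight choice, and adding the $\gamma_n$ from the quadratic piece yields $(\gamma_n+\tfrac92)$-smoothness.

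There is no genuine obstacle here. The only step needing a little care is the norm comparison between $\fnorm{\Delta_\Theta}$ and $\|\cdot\|_{\mathcal{D}}$, i.e.\ correctly accounting for the two rank-one blocks $\Delta_\alpha\one^\top$ and $\one\Delta_\alpha^\top$ (and, were one after the sharp constant, the cross term $(\one^\top\Delta_\alpha)^2$ and the orthogonality $\one^\top\Delta_G=0$ coming from $JG=G$). It is also worth noting that $\nabla^2 h\preceq\tfrac14 I$ holds globally, so the box constraints in the definition of $\mathcal{D}$ are not actually needed for the smoothness estimate and serve only to keep the domain convex.
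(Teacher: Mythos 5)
Your proposal is correct and follows essentially the same route as the paper's proof: split $f$ into the log-likelihood term plus the (affine trace and quadratic) regularizer, get strong convexity exactly from the quadratic piece together with convexity of the likelihood in the affine parameter $\Theta$, and get smoothness from the global Hessian bound $\nabla^2 h(\Theta)\preceq\tfrac14 I$ followed by a triangle-inequality comparison of $\fnorm{\Delta_\Theta}^2$ with $\|(\Delta_G,\Delta_\alpha,\Delta_\beta)\|_{\mathcal{D}}^2$. The paper likewise does not aim for a sharp constant, so your (in fact slightly better) constant in place of $9/2$ is immaterial.
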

\begin{proof}
With slight abuse of notation, let 
\begin{equation}
	h(G, \alpha, \beta) = -\sum_{i, j}\left\{ \bA_{ij}\bTheta_{ij}+\log\Big(1-\sigma(\bTheta_{ij})\Big)\right\}
	\label{eq:h-new}
\end{equation}
which is a convex function of $G,\alpha$ and $\beta$.
In addition, let
\begin{equation}
	r(G, \alpha, \beta) = \frac{\gamma_n}{2} \left(\fnorm{G}^2 + 2\fnorm{\alpha\one^\top}^2 + \fnorm{X\beta}^2\right)  +  \lambda_n \tr(\bG)
	\label{eq:r}
\end{equation}
which is $\gamma_n$-strongly convex w.r.t. the norm $\|\cdot\|_{\mathcal{D}}$. Thus $f(\bG, \balpha, \beta)$ is $\gamma_n$-strongly convex. On the other hand, $r(\cdot, \cdot, \cdot)$ is $\gamma_n$ smooth and 
\begin{equation*}
\begin{aligned}
&~~~\wl(\bG_1, \balpha_1, \beta_1)-\wl(\bG_2, \balpha_2, \beta_2) -\inner{\nabla_G \wl(G_2, \alpha_2, \beta_2) , \Delta_{\bG}} \\
& ~~~~~~~~~~~ -\inner{\nabla_\alpha \wl(G_2, \alpha_2, \beta_2) , \Delta_{\alpha}}-\inner{\nabla_\beta \wl(G_2, \alpha_2, \beta_2) , \Delta_{\beta}}\\
 &=\wl(\bTheta_1)-\wl(\bTheta_2)-\inner{\nabla_{\Theta} \wl(\bTheta_2) ,  \Delta_{\bG}}-\inner{2\nabla_{\Theta} \wl(\bTheta_2)\one,   \Delta_{\alpha}}-\inner{\nabla_{\Theta} \wl(\bTheta_2),  \bX}  \Delta_{\beta}\\
 &=\frac{1}{2}\wl(\bTheta_1)-\wl(\bTheta_2)-\inner{\nabla_{\Theta} \wl(\bTheta_2) , \bTheta_1-\bTheta_2}\\
 &\leq \frac{1}{8} \fnorm{\bTheta_1-\bTheta_2}^2= \frac{1}{8} \fnorm{\Delta_{\bG} + 2\Delta_{\alpha}\one^\top + X\Delta_{\beta}}^2\\
 &\leq \frac{9}{8} \left(\fnorm{\Delta_{\bG}}^2 + 4\fnorm{\Delta_{\alpha}\one^\top}^2 + \fnorm{X\Delta_{\beta}}^2 \right)\leq \frac{9}{4} \left(\fnorm{\Delta_{\bG}}^2 + 2\fnorm{\Delta_{\alpha}\one^\top}^2 + \fnorm{X\Delta_{\beta}}^2 \right).
\end{aligned}
\end{equation*}
This finishes the proof. 
\end{proof}

Define $(\wt{G}, ~\wt{\alpha},~ \wt{\beta}) = \arg\min_{(\bG, \balpha, \beta) \in \mathcal{D}} ~~  f(G, \alpha, \beta)$, $\dtg = \wt{G} - \tg, \dta = \wt{\alpha} - \ta, \dtb = \wt{\beta} - \tb, \dtall = \wt{\Theta} - \tall$. Similar to the analysis of the convex programming in~\eqref{eq:convex-obj}, one can obtain the following results. 
\begin{lemma}
	\label{lem:init-cone}
Let $\mathcal{M}^\perp_k=\{\bM\in\mathbb{R}^{n\times n}: \row(\bM)\subset \col(\tz)^{\perp}\,\,\,\mathrm{and}\,\,\, \col(\bM)\subset \col(\tz)^{\perp}\}$ 
and $\mathcal{M}_k$ be its orthogonal complement in $\mathbb{R}^{n\times n}$ under trace inner product. If 
\begin{equation*}
	\lambda_n\geq 2 \max\left\{\op{\bA-\bP} + \gamma_n\op{\tg}, \op{\bA-\bP} + \frac{\gamma_n}{\sqrt{k}}\fnorm{\ta\one^\top}, \frac{\inner{\bA-\bP, \bX}}{\sqrt{k}\fnorm{X}} + \frac{\gamma_n}{\sqrt{k}}\fnorm{X\tb}\right\},
\end{equation*}
then for $\gresid = \calP_{\mathcal{M}_k^\perp} \tg$, we have
\begin{equation*}
\|\dtg \|_* \leq 4\sqrt{2k}\fnorm{\mathcal{P}_{\mathcal{M}_k}\dtg	} +  2\sqrt{k}\fnorm{\dta\one^\top} + \sqrt{k}\fnorm{X\dtb} + 4\|\gresid\|_*\,.
\end{equation*}
\end{lemma}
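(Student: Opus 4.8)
The plan is to mimic the proof of Lemma~\ref{lem:kernel-cone} closely, replacing the likelihood-based optimality argument by the one appropriate for the regularized objective $f$ of \eqref{eq:f}. First I would record the basic inequality: since $(\wt G, \wt\alpha, \wt\beta)$ minimizes $f$ over $\mathcal{D}$ and the true parameters $(\tg, \ta, \tb)$ are feasible, we have $f(\wt G, \wt\alpha, \wt\beta) - f(\tg, \ta, \tb) \leq 0$. Expanding $f = h + r$ with $h$ the negative log-likelihood as in \eqref{eq:h-new} and $r$ the quadratic-plus-trace penalty \eqref{eq:r}, convexity of $h$ gives $h(\wt\Theta) - h(\tall) \geq \langle \nabla_\Theta h(\tall), \dtall\rangle = -\langle A - P, \dtg + 2\dta\one^\top + \dtb X\rangle$, exactly as in the lemma being imitated. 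The new ingredient is the gradient of the proximal term: $\nabla_G r$ contributes $\lambda_n I + \gamma_n \tg$, while the $\alpha$ and $\beta$ components contribute $\gamma_n$ times $2n\ta$ and $\fnorm{X}^2\tb$ respectively (after accounting for the norm weighting). Combining these, the basic inequality becomes, after some rearrangement, an inequality in which $\lambda_n\big(\|\wt G\|_* - \|\tg\|_*\big)$ is controlled against $|\langle A-P,\,\cdot\,\rangle|$ terms plus the $\gamma_n$-correction terms $\gamma_n|\langle \tg, \dtg\rangle|$, $\gamma_n|\langle \ta\one^\top, \dta\one^\top\rangle|$, $\gamma_n|\langle X\tb, X\dtb\rangle|$.

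Next I would bound each $\gamma_n$-correction by Hölder/Cauchy--Schwarz so that it can be absorbed: $\gamma_n|\langle\tg,\dtg\rangle| \leq \gamma_n\op{\tg}\|\dtg\|_*$, $\gamma_n|\langle\ta\one^\top,\dta\one^\top\rangle|\leq \gamma_n\fnorm{\ta\one^\top}\,\fnorm{\dta\one^\top} \leq \tfrac{\gamma_n}{\sqrt k}\fnorm{\ta\one^\top}\cdot\sqrt k\,\fnorm{\dta\one^\top}$, and similarly for the $\beta$ term with $\fnorm{X\dtb}$. The hypothesis on $\lambda_n$ in the statement — namely that $\lambda_n$ dominates twice the maximum of $\op{A-P}+\gamma_n\op{\tg}$, $\op{A-P}+\tfrac{\gamma_n}{\sqrt k}\fnorm{\ta\one^\top}$, and $\tfrac{\langle A-P,X\rangle}{\sqrt k\fnorm X}+\tfrac{\gamma_n}{\sqrt k}\fnorm{X\tb}$ — is precisely what is needed so that the combined coefficient of $\|\dtg\|_*$ (from both the operator-norm term and the $\gamma_n\op{\tg}$ term) stays below $\lambda_n/2$, and likewise the coefficients multiplying $\sqrt k\fnorm{\dta\one^\top}$ and $\sqrt k\fnorm{X\dtb}$ stay below $\lambda_n$. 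This is the reason the three-way maximum appears rather than the simpler bound of Lemma~\ref{lem:kernel-cone}.

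Then I would carry out the nuclear-norm geometry exactly as in the proof of Lemma~\ref{lem:kernel-cone}: write $\|\wt G\|_* - \|\tg\|_* \geq \|\mathcal{P}_{\mathcal{M}_k^\perp}\dtg\|_* - \|\mathcal{P}_{\mathcal{M}_k}\dtg\|_* - 2\|\gresid\|_*$ using $\gresid = \mathcal{P}_{\mathcal{M}_k^\perp}\tg$ and the orthogonality of the column/row spaces of $\mathcal{P}_{\mathcal{M}_k}\tg$ and $\mathcal{P}_{\mathcal{M}_k^\perp}\dtg$. Substituting everything into the basic inequality and dividing by $\lambda_n/2$, one obtains $\|\mathcal{P}_{\mathcal{M}_k^\perp}\dtg\|_* \leq 3\|\mathcal{P}_{\mathcal{M}_k}\dtg\|_* + 2\sqrt k\,\fnorm{\dta\one^\top} + \sqrt k\,\fnorm{X\dtb} + 4\|\gresid\|_*$ (the coefficients of the $\alpha$ and $\beta$ terms being $2\sqrt k$ and $\sqrt k$ because their $|\langle A-P,\cdot\rangle|$ contributions were already matched to $\lambda_n\sqrt k$-sized quantities rather than to $2\op{A-P}$). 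Adding $\|\mathcal{P}_{\mathcal{M}_k}\dtg\|_*$ to both sides via triangle inequality and using $\rank(\mathcal{P}_{\mathcal{M}_k}\dtg)\leq 2k$ to pass from $\|\mathcal{P}_{\mathcal{M}_k}\dtg\|_*$ to $\sqrt{2k}\,\fnorm{\mathcal{P}_{\mathcal{M}_k}\dtg}$ yields the claimed bound. The main obstacle — though it is more bookkeeping than genuine difficulty — is keeping careful track of the norm weights in $\|\cdot\|_{\mathcal{D}}$ (the factor $2$ on $\alpha\one^\top$, and the identification of $\nabla_\alpha r$ and $\nabla_\beta r$ with $2n\gamma_n\alpha$ and $\gamma_n\fnorm X^2\beta$) so that the coefficients line up exactly with the three-term maximum hypothesized on $\lambda_n$; a secondary point is verifying that the $\gamma_n$-correction from $\langle \tg,\dtg\rangle$ really can be lumped into the $\|\dtg\|_*$ coefficient rather than requiring a separate Frobenius-norm treatment, which is where the $\op{\tg}$ (as opposed to $\fnorm{\tg}$) appears in the $\lambda_n$ condition.
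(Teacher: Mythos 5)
Your proposal is correct and follows essentially the same route as the paper's proof: the basic inequality from optimality of $(\wt{G},\wt{\alpha},\wt{\beta})$ over $\mathcal{D}$, a convexity/gradient lower bound at the truth giving the $-\langle A-P,\dtg+2\dta\one^\top+\dtb X\rangle$ term plus the $\gamma_n$ proximal corrections bounded by H\"older exactly as you describe (which is precisely why the three-way maximum on $\lambda_n$ appears), the decomposability bound $\|\wt{G}\|_*-\|\tg\|_*\geq \|\mathcal{P}_{\mathcal{M}_k^\perp}\dtg\|_*-\|\mathcal{P}_{\mathcal{M}_k}\dtg\|_*-2\|\gresid\|_*$, and the final triangle-inequality and rank-$2k$ steps. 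Treating the $\lambda_n\tr(G)$ term through its gradient rather than keeping $\lambda_n(\|\wt{G}\|_*-\|\tg\|_*)$ separate is an immaterial difference, since the trace is linear and coincides with the nuclear norm on the positive semi-definite cone.
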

\begin{proof}
Let 
\begin{equation*}
\wtl (G, \alpha, \beta)=-\sum_{1\leq i, j\leq n}\{ \bA_{ij}\bTheta_{ij} + \log(1-\sigma(\bTheta_{ij}))\} + \frac{\gamma_n}{2} \left(\fnorm{G}^2 + 2\fnorm{\alpha\one^\top}^2 + \fnorm{X\beta}^2\right).
\end{equation*}
By the convexity of $\wtl$,
\begin{align*}
& \wtl (\wt{G}, \wt{\alpha}, \wt{\beta}) -\wtl(\tg,\ta,\tb)  \\
& \geq  \inner{\nabla_{G}\wtl(\tg,\ta,\tb), \dtg} + \inner{\nabla_{\alpha}\wtl(\tg,\ta,\tb), \dta} + \inner{\nabla_{\beta}\wtl(\tg,\ta,\tb), \dtb}\\
& = - \langle  \bA-\bP,\, \dtg+2 \dta\one^\top +\dtb \bX \rangle + \gamma_n \left(\inner{\tg, \dtg} + 2n\inner{\ta, \dta} + \fnorm{X}^2 \inner{\tb, \dtb}\right) \\
&\geq  -\op{\bA-\bP}\big(\|\dtg\|_*+2\|\dta\one^\top\|_*\big) 
- | \langle  \bA-\bP, \dtb\bX \rangle| \\
& ~~~ - \gamma_n \left(\op{\tg}\nuc{\tg} + 2\fnorm{\ta\one^\top}\fnorm{\dta\one^\top} + \fnorm{X\tb}\fnorm{X\dtb}\right) \\
&\geq  - \left(\op{\bA-\bP} + \gamma_n\op{\tg}\right) \nuc{\dtg} - \left(\op{\bA-\bP} + \gamma_n/\sqrt{k}\fnorm{\ta\one^\top} \right)2\sqrt{k}\fnorm{\dta\one^\top} \\
&~~~ - \left(\inner{\bA-\bP, \bX}/\left(\sqrt{k}\fnorm{X}\right) + \gamma_n/\sqrt{k}\fnorm{X\tb} \right)\sqrt{k}\fnorm{X\dtb}\\
&\geq-\frac{\lambda_n}{2}
\big(\nuc{\dtg}+2\sqrt{k}\fnorm{\dta\one^\top} + \sqrt{k}\fnorm{X\dtb}\big) \\
&\geq -\frac{\lambda_n}{2}
\big(\|\mathcal{P}_{\mathcal{M}_k}\dtg\|_*+\|\mathcal{P}_{\mathcal{M}^\perp_k}\dtg\|_*+2\sqrt{k}\fnorm{\dta\one^\top} + \sqrt{k}\fnorm{X\dtb}\big).
\end{align*}
The last inequality holds since $\calP_{\calM_k}+\calP_{\calM_k^\perp}$ equals identity and 
\begin{equation*}
\lambda_n\geq 2 \max\left\{\op{\bA-\bP} + \gamma_n\op{\tg}, \op{\bA-\bP} + \frac{\gamma_n}{\sqrt{k}}\fnorm{\ta\one^\top}, \frac{\inner{\bA-\bP, \bX}}{\sqrt{k}\fnorm{X}} + \frac{\gamma_n}{\sqrt{k}}\fnorm{X\tb}\right\}.
\end{equation*}
On the other hand, by the definition of $\gresid$, 
\begin{equation*}
\begin{aligned}
\|\wt{G}\|_*-\|\tg\|_*&=\|\mathcal{P}_{\mathcal{M}_k}\tg+ \gresid + \mathcal{P}_{\mathcal{M}_k}\dtg+\mathcal{P}_{\mathcal{M}^\perp_k}\dtg\|_*-\|\mathcal{P}_{\mathcal{M}_k}\tg+ \gresid\|_*\\
&\geq \|\mathcal{P}_{\mathcal{M}_k}\tg+\mathcal{P}_{\mathcal{M}^\perp_k}\dtg\|_* -\|\gresid\|_*-\|\mathcal{P}_{\mathcal{M}_k}\dtg\|_* - \|\mathcal{P}_{\mathcal{M}_k}\tg\|_*- \|\gresid\|_*\\
&=\|\mathcal{P}_{\mathcal{M}_k}\tg\|_*+\|\mathcal{P}_{\mathcal{M}^\perp_k}\dtg\|_* - 2\|\gresid\|_* - \|\mathcal{P}_{\mathcal{M}_k}\dtg\|_* - \|\mathcal{P}_{\mathcal{M}_k}\tg\|_* \\
&=\|\mathcal{P}_{\mathcal{M}^\perp_k}\dtg\|_*-\|\mathcal{P}_{\mathcal{M}_k}\dtg\|_* - 2\|\gresid\|_*\,.
\end{aligned}
\end{equation*} 
Here, the second last equality holds since $\calP_{\calM_k}\tg$ and $\mathcal{P}_{\mathcal{M}^\perp_k}\dtg$ have orthogonal column and row spaces.
Furthermore, since $\hbTheta$ is the optimal solution to \eqref{eq:convex-obj}, and $\tall$ is feasible, the basic inequality and the last two displays imply
\begin{equation*}
\begin{aligned}
0 &\geq \wtl (\wt{G}, \wt{\alpha}, \wt{\beta}) - \wtl(\tg,\ta,\tb) +\lambda_n \big( \|\wt{G}\|_* - \|\tg\|_* \big)\\
&\geq -\frac{\lambda_n}{2}
\big(\|\mathcal{P}_{\mathcal{M}_k}\dtg\|_*+\|\mathcal{P}_{\mathcal{M}^\perp_k}\dtg\|_*+2\sqrt{k}\fnorm{\dta\one^\top} + \sqrt{k}\fnorm{X\dtb}\big) \\
& ~~~ + \lambda_n\big(\|\mathcal{P}_{\mathcal{M}^\perp_k}\dtg\|_*-\|\mathcal{P}_{\mathcal{M}_k}\dtg\|_* - 2\|\gresid\|_*\big)\\
& =  \frac{\lambda_n}{2} \big(\|\mathcal{P}_{\mathcal{M}^\perp_k}\dtg\|_*- 3\|\mathcal{P}_{\mathcal{M}_k}\dtg\|_*  - 4\|\gresid\|_* - 2\sqrt{k}\fnorm{\dta\one^\top} - \sqrt{k}\fnorm{X\dtb}\big).
\end{aligned}
\end{equation*}
Rearranging the terms leads to
\begin{equation*}
\|\mathcal{P}_{\mathcal{M}^\perp_k}\dtg \|_* \leq 3\|\mathcal{P}_{\mathcal{M}_k}\dtg \|_* +  2\sqrt{k}\fnorm{\dta\one^\top} + \sqrt{k}\fnorm{X\dtb} + 4\|\gresid\|_*\,,
\end{equation*}
and triangle inequality further implies 
\begin{equation*}
\|\dtg \|_* \leq 4\|\mathcal{P}_{\mathcal{M}_k}\dtg \|_* +  2\sqrt{k}\fnorm{\dta\one^\top} + \sqrt{k}\fnorm{X\dtb} + 4\|\gresid\|_*\,.
\end{equation*}
Finally, note that the rank of $\mathcal{P}_{\mathcal{M}_k}\dtg$ is at most $2k$,
\begin{equation*}
\|\dtg \|_* \leq 4\sqrt{2k} \fnorm{\mathcal{P}_{\mathcal{M}_k}\dtg} +  2\sqrt{k}\fnorm{\dta\one^\top} + \sqrt{k}\fnorm{X\dtb} + 4\|\gresid\|_*\,.
\end{equation*}
This completes the proof.	
\end{proof}

\begin{lemma} 
\label{lem:init-identifiability}
For any $k\geq 1$ such that Assumption~\ref{assump:srank} holds. Choose $\lambda_n\geq \max\{2\opnorm{\bA-\bP}, 1\}$ and $| \langle \bA-\bP, \bX \rangle | \leq \lambda_n \sqrt{k}\fnorm{X}$. 
There exist constants $C>0$ and $0\leq c<1$ such that
\begin{equation*}
\begin{aligned}
\fnorm{\dtall}^2 & \geq (1 - c)
\big(\fnorm{\dtg}^2 + 2\fnorm{\dta\one^\top}^2+\fnorm{\dtb\bX}^2 \big) - C\|\gresid\|_*^2/k,\quad\mathrm{and} \\
\fnorm{\dtall}^2 & \leq (1 + c) 
\big(\fnorm{\dtg}^2 + 2\fnorm{\dta\one^\top}^2+\fnorm{\dtb\bX}^2 \big) + C\|\gresid\|_*^2/k.
\end{aligned}
\end{equation*} 
\end{lemma}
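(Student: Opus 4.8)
The plan is to mirror the argument already carried out for Lemma~\ref{lem:kernel-identifiability} in Section~\ref{sec:proof-43-lem2}, simply substituting the cone inequality of Lemma~\ref{lem:init-cone} for that of Lemma~\ref{lem:kernel-cone}. First I would expand the squared Frobenius norm of $\dtall = \dtg + \dta\one^\top + \one\dta^\top + \dtb\bX$. Using the symmetry of $\bX$, the identity $\dtg\one = 0$ (which holds since $J\dtg = \dtg$ and $\dtg$ is symmetric), and $\tr(\dta\one^\top\dta\one^\top) = |\one^\top\dta|^2 \ge 0$, this gives
\begin{equation*}
\fnorm{\dtall}^2 = \fnorm{\dtg}^2 + 2\fnorm{\dta\one^\top}^2 + 2|\one^\top\dta|^2 + \fnorm{\dtb\bX}^2 + 2\inner{\dtg + 2\dta\one^\top,\ \dtb\bX},
\end{equation*}
so that $\fnorm{\dtall}^2$ differs from $\fnorm{\dtg}^2 + 2\fnorm{\dta\one^\top}^2 + \fnorm{\dtb\bX}^2$ by at most $2|\inner{\dtg + 2\dta\one^\top, \dtb\bX}|$ from below, and by at most the same quantity plus $2|\one^\top\dta|^2 \le 2\fnorm{\dta\one^\top}^2$ from above.

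Next I would control the cross term. Duality of the nuclear and operator norms gives $|\inner{\dtg + 2\dta\one^\top, \dtb\bX}| \le (\nuc{\dtg} + 2\nuc{\dta\one^\top})\op{\dtb\bX}$. Since $\dta\one^\top$ has rank one, $\nuc{\dta\one^\top} = \fnorm{\dta\one^\top}$; since $\dtb$ is a scalar, $\op{\dtb\bX} = |\dtb|\op{\bX} = \fnorm{\dtb\bX}/\sqrt{\srank(\bX)}$. Feeding Lemma~\ref{lem:init-cone} into $\nuc{\dtg}$ then bounds the cross term by
\begin{equation*}
\frac{1}{\sqrt{\srank(\bX)}}\Big(4\sqrt{2k}\,\fnorm{\mathcal{P}_{\mathcal{M}_k}\dtg} + (2\sqrt{k}+2)\fnorm{\dta\one^\top} + \sqrt{k}\,\fnorm{\bX\dtb} + 4\nuc{\gresid}\Big)\fnorm{\dtb\bX}.
\end{equation*}
Applying $2ab \le a^2 + b^2$ to the first three summands (together with $\fnorm{\mathcal{P}_{\mathcal{M}_k}\dtg} \le \fnorm{\dtg}$) produces a multiple of $\tfrac{\sqrt k}{\sqrt{\srank(\bX)}}\big(\fnorm{\dtg}^2 + 2\fnorm{\dta\one^\top}^2 + \fnorm{\dtb\bX}^2\big)$, while the last summand is split as $\tfrac{2\nuc{\gresid}}{\sqrt{\srank(\bX)}}\fnorm{\dtb\bX} \le \tfrac{2\nuc{\gresid}^2}{c_0\srank(\bX)} + 2c_0\fnorm{\dtb\bX}^2$ for an arbitrary constant $c_0 > 0$. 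Assumption~\ref{assump:srank} ($\srank(\bX) \ge M_0 k$) then controls $\sqrt{k/\srank(\bX)} \le 1/\sqrt{M_0}$ and $\nuc{\gresid}^2/\srank(\bX) \le \nuc{\gresid}^2/(M_0 k)$.

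Substituting these estimates into the two-sided bound for $\fnorm{\dtall}^2$ and collecting terms yields inequalities of the form $\fnorm{\dtall}^2 \ge (1 - C_1/\sqrt{M_0} - 2c_0)\big(\fnorm{\dtg}^2 + 2\fnorm{\dta\one^\top}^2 + \fnorm{\dtb\bX}^2\big) - C\nuc{\gresid}^2/k$ and, symmetrically, $\fnorm{\dtall}^2 \le (1 + C_1/\sqrt{M_0} + 2c_0)(\cdots) + C\nuc{\gresid}^2/k$. Choosing $M_0$ large and $c_0$ small makes the multiplicative factor $1 \pm c$ with some fixed $c \in [0,1)$, which is exactly the claim. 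The argument is routine and essentially identical to that for Lemma~\ref{lem:kernel-identifiability}; the only thing to watch is that Lemma~\ref{lem:init-cone} carries slightly different coefficients than Lemma~\ref{lem:kernel-cone} (extra $\sqrt k$ factors on the $\fnorm{\dta\one^\top}$ and $\fnorm{\bX\dtb}$ terms, and no $\lambda_n^{-1}\inner{A-P,\dtb\bX}$ term), so those must be propagated faithfully through the AM--GM steps — but since $\srank(\bX) \ge M_0 k$ absorbs the extra $\sqrt k$'s, no genuinely new obstacle appears.
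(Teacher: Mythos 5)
Your proposal is correct and follows exactly the route the paper intends: the paper's proof of this lemma is simply the remark that it is the same as that of Lemma~\ref{lem:kernel-identifiability}, and you carry out that adaptation faithfully, substituting the cone inequality of Lemma~\ref{lem:init-cone} for Lemma~\ref{lem:kernel-cone} and absorbing the modified $\sqrt{k}$ coefficients and the $\nuc{\gresid}$ term via Assumption~\ref{assump:srank}.
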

\begin{proof}
The proof is the same as the proof of Lemma~\ref{lem:kernel-identifiability} and we leave out the details. 
\end{proof}

\begin{theorem}
Under Assumption~\ref{assump:srank}, for any $\lambda_n$ satisfying 
\begin{equation*}
	\lambda_n\geq 2 \max\left\{\op{\bA-\bP} + \gamma_n\op{\tg}, \op{\bA-\bP} + \frac{\gamma_n}{\sqrt{k}}\fnorm{\ta\one^\top}, \frac{\inner{\bA-\bP, \bX}}{\sqrt{k}\fnorm{X}} + \frac{\gamma_n}{\sqrt{k}}\fnorm{X\tb}\right\},
\end{equation*}
there exists a constant $C$ such that
\begin{equation*}
\left(\fnorm{\dtg} + 2\fnorm{\dta\one^\top} +\fnorm{\dtb\bX}\right)^2 \leq C\left(e^{2M_1}\lambda_n^2k + \frac{\|\gresid\|_*^2}{k}\right).
\end{equation*}
\label{thm:init}
\end{theorem}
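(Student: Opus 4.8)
\textbf{Plan of the proof of Theorem~\ref{thm:init}.}
The statement is precisely the ``optimization-error'' analogue of Theorem~\ref{thm:kernel-convex-upper}, but for the minimizer $(\wt G,\wt\alpha,\wt\beta)$ of the proximally regularized objective $f$ in \eqref{eq:f} rather than the minimizer of the convex program \eqref{eq:convex-obj}. The plan is to repeat the basic-inequality argument of Section~\ref{sec:proof}, using the two lemmas already set up for the regularized problem, namely Lemma~\ref{lem:init-cone} (the cone/compatibility inequality for $\dtg$) and Lemma~\ref{lem:init-identifiability} (the restricted strong convexity / identifiability relation between $\fnorm{\dtall}^2$ and $\fnorm{\dtg}^2+2\fnorm{\dta\one^\top}^2+\fnorm{\dtb X}^2$), together with the $(\gamma_n+9/2)$-smoothness and $\gamma_n$-strong convexity from Lemma~\ref{lem:init-convex-smooth}.

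First I would write the basic inequality: since $(\wt G,\wt\alpha,\wt\beta)$ minimizes $f$ over $\mathcal D$ and the truth $(\tg,\ta,\tb)$ is feasible, $f(\wt G,\wt\alpha,\wt\beta)\le f(\tg,\ta,\tb)$, which after splitting $f = h + \frac{\gamma_n}{2}(\fnorm{G}^2+2\fnorm{\alpha\one^\top}^2+\fnorm{X\beta}^2) + \lambda_n\tr(G)$ gives
\[
h(\wt\Theta)-h(\tall) + \lambda_n(\nuc{\wt G}-\nuc{\tg}) + \tfrac{\gamma_n}{2}\big(\|(\wt G,\wt\alpha,\wt\beta)\|_{\mathcal D}^2 - \|(\tg,\ta,\tb)\|_{\mathcal D}^2\big)\le 0 .
\]
Next I would Taylor-expand $h(\wt\Theta)-h(\tall)\ge \inner{\nabla_\Theta h(\tall),\dtall}+\frac{\tau}{2}\fnorm{\dtall}^2$ with $\tau\asymp e^{-M_1}$, bound the gradient inner product $\inner{\nabla_\Theta h(\tall),\dtall}=-\inner{A-P,\dtg+2\dta\one^\top+\dtb X}$ by $\opnorm{A-P}(\nuc{\dtg}+2\fnorm{\dta\one^\top})+|\inner{A-P,X}|\,|\dtb|$, and simply drop the nonnegative strong-convexity part of the proximal term (the $\gamma_n$ term only helps the lower bound and can be discarded here, or absorbed, since the threshold on $\lambda_n$ already accounts for $\gamma_n\op{\tg}$ etc.). Plugging in the cone inequality from Lemma~\ref{lem:init-cone} to control $\nuc{\dtg}$ by $\sqrt{k}$ times Frobenius quantities plus $4\nuc{\gresid}$, and then Lemma~\ref{lem:init-identifiability} to pass from $\fnorm{\dtg}^2+2\fnorm{\dta\one^\top}^2+\fnorm{\dtb X}^2$ to $\fnorm{\dtall}^2$, yields an inequality of the shape $\frac{\tau}{2}\fnorm{\dtall}^2\le C\lambda_n\sqrt k\,\fnorm{\dtall}+C\lambda_n\nuc{\gresid}$ (the extra $\gamma_n$-terms are dominated because $\lambda_n$ was chosen $\ge 2(\opnorm{A-P}+\gamma_n\op{\tg})$ etc.). Solving this quadratic and using $1/\tau\lesssim e^{M_1}$ gives $\fnorm{\dtall}^2\lesssim e^{2M_1}\lambda_n^2 k + e^{M_1}\lambda_n\nuc{\gresid}$, and then $e^{M_1}\lambda_n\nuc{\gresid}\le e^{2M_1}\lambda_n^2 k + \nuc{\gresid}^2/(4k)$ by AM–GM converts this to the stated bound; finally Lemma~\ref{lem:init-identifiability} again lets me replace $\fnorm{\dtall}^2$ on the left by $\big(\fnorm{\dtg}+2\fnorm{\dta\one^\top}+\fnorm{\dtb X}\big)^2$ up to constants (using $\nuc{\gresid}^2/k \lesssim$ the allowed slack), which is exactly the claimed inequality.

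The step I expect to be the main obstacle is the careful bookkeeping of the $\gamma_n$-dependent terms: unlike in Theorem~\ref{thm:kernel-convex-upper}, the cone condition here (Lemma~\ref{lem:init-cone}) requires $\lambda_n\ge 2\max\{\op{A-P}+\gamma_n\op{\tg},\,\dots\}$ rather than just $\lambda_n\ge 2\op{A-P}$, and the proximal term contributes cross terms $\gamma_n\inner{\tg,\dtg}$, $2n\gamma_n\inner{\ta,\dta}$, $\gamma_n\fnorm{X}^2\inner{\tb,\dtb}$ when expanding $\|(\wt G,\wt\alpha,\wt\beta)\|_{\mathcal D}^2-\|(\tg,\ta,\tb)\|_{\mathcal D}^2$; these must be shown to be of the same order as, or dominated by, the $\lambda_n$-terms, which is exactly why the threshold on $\lambda_n$ in the hypothesis is stated with the $\gamma_n\op{\tg}$, $\frac{\gamma_n}{\sqrt k}\fnorm{\ta\one^\top}$, $\frac{\gamma_n}{\sqrt k}\fnorm{X\tb}$ corrections. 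Getting all of these to collapse cleanly into the single bound $C(e^{2M_1}\lambda_n^2 k + \nuc{\gresid}^2/k)$ — and verifying that the constant $c<1$ from Lemma~\ref{lem:init-identifiability} under Assumption~\ref{assump:srank} is indeed strictly less than $1$ so the quadratic can be solved — is the delicate part; everything else is a direct transcription of the proof of Theorem~\ref{thm:kernel-convex-upper}.
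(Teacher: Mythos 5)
Your proposal is correct and follows essentially the same route as the paper's proof: the basic inequality for the regularized objective, strong convexity of $h$ giving the $\tau\asymp e^{-M_1}$ quadratic term, absorption of the $\gamma_n\inner{\tg,\dtg}$-type cross terms into the enlarged threshold on $\lambda_n$, Lemma~\ref{lem:init-cone} and Lemma~\ref{lem:init-identifiability}, solving the resulting quadratic, and AM--GM on $e^{M_1}\lambda_n\nuc{\gresid}$. The only difference is cosmetic — you pass through $\fnorm{\dtall}$ as an intermediate quantity and apply Lemma~\ref{lem:init-identifiability} twice, whereas the paper applies it once and runs the quadratic directly in $\fnorm{\dtg}+2\fnorm{\dta\one^\top}+\fnorm{X\dtb}$ — and your handling of the $\gamma_n$ cross terms (which cannot simply be dropped, as you correctly note in your second paragraph) matches the paper's.
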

\begin{proof}
Recall the definition of $h(G,\alpha,\beta)$ in \eqref{eq:h-new}.	
Observe that $\hbTheta=\hbalpha\one^\top+\one\hbalpha^\top+\hbeta\bX+\hbG$ is the optimal solution to \eqref{eq:convex-obj}, and that the true parameter $\tall=\ta\one^\top+\one\ta^\top+\tb\bX+\tg$ is feasible. 
Thus, we have the basic inequality
\begin{equation}
\wtl (\wt{G}, \wt{\alpha}, \wt{\beta}) -\wtl(\tg,\ta,\tb)+\lambda_n (\|\wt{G}\|_*-\|\tg\|_* )\leq 0.
\label{eq:basic2}
\end{equation}
By definition, 
\begin{equation*}
	\wtl (G, \alpha, \beta) = \wl(G, \alpha, \beta) + \frac{\gamma_n}{2} \left(\fnorm{G}^2 + \|\alpha\one^\top\|_2^2 + \fnorm{X\beta}^2\right) .
\end{equation*}
On the one hand, 
\begin{equation*}
\begin{aligned}
	& \wl(\wt{G}, \wt{\alpha}, \wt{\beta}) - \wl(\tg,\ta,\tb)  
	\\ & 
	~~~ -\inner{\nabla_{G}\wl(\tg,\ta,\tb), \dtg}  -  \inner{\nabla_{\alpha}\wl(\tg,\ta,\tb), \dta} - \inner{\nabla_{\beta}\wl(\tg,\ta,\tb), \dtb} \\
	 & = h(\wt{\Theta}) - h(\tall) - \inner{\nabla_{\Theta}h(\tall), \dtall} \geq  \frac{\tau}{2}\fnorm{\dtall}^2 ,
\end{aligned}
\end{equation*}
where the last inequality is by the strong convexity of $h(\cdot)$ with respect to $\Theta$ in the domain $\mathcal{F}_g$ and $\tau=e^{M_1}/(1+e^{M_1})^2$ as in the proof of Theorem~\ref{thm:convex}. Further by Lemma~\ref{lem:init-identifiability}, 
\begin{equation*}
 	\frac{\tau}{2}\fnorm{\dtall}^2 \geq \frac{\tau(1 - c)}{2}
\big(\fnorm{\dtg}^2 + 2\fnorm{\dta\one^\top}^2+\fnorm{\dtb\bX}^2 \big) - \frac{C\tau}{2}\|\gresid\|_*^2/k.
 \end{equation*} 
On the other hand, the $l_2$ regularization term is strongly convex with respect to $(G, \alpha, \beta)$. Then we have 
\begin{equation*}
\begin{aligned}
	& \wtl (\wt{G}, \wt{\alpha}, \wt{\beta}) -\wtl(\tg,\ta,\tb) \\
	&  \geq \inner{\nabla_{G}\wtl(\tg,\ta,\tb), \dtg}  +  \inner{\nabla_{\alpha}\wtl(\tg,\ta,\tb), \dta} + \inner{\nabla_{\beta}\wtl(\tg,\ta,\tb), \dtb} \\
	& ~~~ + \frac{\tau(1 - c)}{2}
\big(\fnorm{\dtg}^2 + 2\fnorm{\dta\one^\top}^2+\fnorm{\dtb\bX}^2 \big) - \frac{C\tau}{2}\|\gresid\|_*^2/k\\
& \geq -\frac{\lambda_n}{2}
\big(\nuc{\dtg}+2\sqrt{k}\fnorm{\dta\one^\top} + \sqrt{k}\fnorm{X\dtb}\big) \\
& ~~~ + \frac{\tau(1 - c)}{2}
\big(\fnorm{\dtg}^2 + 2\fnorm{\dta\one^\top}^2+\fnorm{\dtb\bX}^2 \big) - \frac{C\tau}{2}\|\gresid\|_*^2/k.
\end{aligned}
\end{equation*}
By triangle inequality, 
\begin{equation*}
\lambda_n (\|\hbG\|_*-\|\tg\|_* )\geq -\lambda_n\|\Delta_{\bG}\|_*.
\end{equation*}
Together with \eqref{eq:basic2}, the last two inequalities imply
\begin{equation*}
\begin{aligned}
& \frac{\tau(1 - c)}{2}
\left(\fnorm{\dtg}^2 + 2\fnorm{\dta\one^\top}^2 +\fnorm{\dtb\bX}^2\right)\\
&~~~ \leq \frac{\lambda_n}{2} \left(\nuc{\dtg}+2\sqrt{k}\fnorm{\dta\one^\top}+ \sqrt{k}\fnorm{X\dtb}\right) 
+ \lambda_n\|\dtg\|_* + \frac{C\tau}{2}\|\gresid\|_*^2/k.
\end{aligned}
\end{equation*}
By Lemma~\ref{lem:init-cone}, 
\begin{equation*}
\begin{aligned}
&\frac{\tau(1 - c)}{2}
\left(\fnorm{\dtg}^2 + 2\fnorm{\dta\one^\top}^2 +\fnorm{\dtb\bX}^2\right)\\
& ~~~ \leq C_0\lambda_n \sqrt{k}\left(\fnorm{\dtg}+2\fnorm{\dta\one^\top}+\fnorm{X\dtb}\right)
 + C_1 \lambda_n\|\gresid\|_* + \frac{C\tau}{2}\|\gresid\|_*^2/k.
\end{aligned}
\end{equation*}
This implies that there exists some constant $c_0$ such that 
\begin{equation*}
\begin{aligned}
 	& c_0\tau \left(\fnorm{\dtg} + 2\fnorm{\dta\one^\top} +\fnorm{\dtb\bX}\right)^2 \\ 
	& ~~~~~ \leq  C_0\lambda_n \sqrt{k}\left(\fnorm{\dtg}+2\fnorm{\dta\one^\top}+\fnorm{X\dtb}\right) 
  + C_1\lambda_n\|\gresid\|_* + \frac{C\tau}{2}\|\gresid\|_*^2/k.
\end{aligned}
\end{equation*}
Solving the quadratic inequality, there exists some constant $C_2$ such that
\begin{equation*}
\left(\fnorm{\dtg} + 2\fnorm{\dta\one^\top} +\fnorm{\dtb\bX}\right)^2 \leq C_2\left(\frac{\lambda_n^2k}{\tau^2} + \frac{\lambda_n \|\gresid\|_*}{\tau} + \frac{\|\gresid\|_*^2}{k}\right).
\end{equation*}
Note that $\tau\geq c_1e^{-M_1}$ and $e^{M_1}\lambda_n \|\gresid\|_* \leq c_2 \left(e^{2M_1}\lambda_n^2k  +  \frac{\|\gresid\|_*^2}{k}\right)$ for positive constants $c_1, c_2$. 
Therefore, 
\begin{equation*}
\left(\fnorm{\dtg} + 2\fnorm{\dta\one^\top} +\fnorm{\dtb\bX}\right)^2 \leq C_2\left(e^{2M_1}\lambda_n^2k + \frac{\|\gresid\|_*^2}{k}\right),
\end{equation*}
which completes the proof.
\end{proof}

\begin{lemma}[\cite{bubeck2014theory}]
Let $\bx\in\mathcal{D}$ and $\by\in\mathbb{R}^n$, then 
\begin{equation*}
\inner{\pi_{\mathcal{D}}(y)-x, \pi_\mathcal{D}(y)-y} \leq 0
\end{equation*}
where $\mathcal{D}$ is a convex set and $\pi_{\mathcal{D}}(x)=\arg\min_{\by\in\mathcal{D}}\|\bx-\by\|$.
\label{lemma-projection}
\end{lemma}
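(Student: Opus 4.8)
The plan is to establish the standard variational (obtuse-angle) characterization of Euclidean projection onto a closed convex set. Write $z := \pi_{\mathcal{D}}(y) = \argmin_{v\in\mathcal{D}}\|y-v\|$; this minimizer exists and is unique because $\mathcal{D}$ is a nonempty closed convex set and $v\mapsto\|y-v\|^2$ is strictly convex and coercive. Fix an arbitrary $x\in\mathcal{D}$. By convexity of $\mathcal{D}$, the segment $v_t := (1-t)z + tx = z + t(x-z)$ lies in $\mathcal{D}$ for every $t\in[0,1]$, and by the defining optimality of $z$ together with $v_0 = z$, the scalar function
\[
\phi(t) := \|y-v_t\|^2 = \|y-z\|^2 - 2t\,\langle y-z, x-z\rangle + t^2\|x-z\|^2
\]
attains its minimum over $[0,1]$ at $t=0$.

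From $\phi(t)\ge\phi(0)$ for all $t\in(0,1]$, dividing by $t>0$ gives $-2\langle y-z, x-z\rangle + t\|x-z\|^2 \ge 0$, and letting $t\downarrow 0$ yields $\langle y-z, x-z\rangle\le 0$. Since $\langle y-z, x-z\rangle = \langle z-x, z-y\rangle$ by bilinearity, this is precisely $\langle \pi_{\mathcal{D}}(y)-x, \pi_{\mathcal{D}}(y)-y\rangle\le 0$, the asserted inequality. (Equivalently, since $\phi$ is differentiable and minimized at the left endpoint $t=0$ of $[0,1]$, we have $\phi'(0) = -2\langle y-z, x-z\rangle \ge 0$; or one may invoke first-order optimality of $z$ for the convex program $\min_{v\in\mathcal{D}}\tfrac12\|y-v\|^2$, which gives $\langle z-y, x-z\rangle\ge 0$ for every $x\in\mathcal{D}$.)

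There is no genuine obstacle: the whole argument is a one-variable comparison along a segment. The only steps deserving a word of care are that $\mathcal{D}$ be closed (so that $\pi_{\mathcal{D}}(y)$ exists) and convex (so that both uniqueness of the projection and feasibility of the comparison point $v_t$ hold) — both automatic for the product constraint set $\mathcal{D}=\projg\times\proja\times\projb$ used in Algorithm~\ref{alg:convex-init}. This inequality is exactly what is needed downstream to obtain the non-expansiveness of the projection step, which in turn drives the linear convergence estimate in the analysis of that algorithm.
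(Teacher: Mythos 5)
Your proof is correct: it is the standard first-order/variational characterization of Euclidean projection onto a convex set, obtained by comparing the squared distance along the segment from $\pi_{\mathcal{D}}(y)$ to an arbitrary $x\in\mathcal{D}$ and letting the step size tend to zero. The paper itself does not prove this lemma but simply cites \cite{bubeck2014theory}, where exactly this argument appears, so your write-up supplies the standard proof the paper outsources; your added remark that $\mathcal{D}$ should be closed (automatic for the constraint sets $\projg$, $\proja$, $\projb$ used in the algorithm) for the projection to exist is a fair, if minor, sharpening of the statement as written.
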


\begin{lemma}
With $\etag=\eta, \etaa=\eta/2n, \etab=\eta/\fnorm{X}^2$,
\begin{equation*}
\begin{aligned}
&\quad \inner{\gt-\bG^{t+1}, \gt-\wt{G}}+ 2n\inner{\at-\balpha^{t+1}, \at-\wt{\alpha} }+\inner{\bt-\beta^{t+1}, \bt-\wt{\beta}}\fnorm{X}^2\\
&~~~\geq \frac{\eta\mu}{2}\|\xt-\wt{x}\|_{\mathcal{D}}^2+\left(1-\frac{\eta L}{2}\right)\Big\{\fnorm{\bG^{t+1}-\gt}^2+2\fnorm{\left(\balpha^{t+1}-\balpha^{t}\right)\one^\top}^2+\fnorm{(\beta^{t+1}-\beta^{t})\bX}^2\Big\}
\end{aligned}
\end{equation*}
where $\mu = \gamma_n$ and $L= \gamma_n + 9/2.$
\label{lem:init-descent}
\end{lemma}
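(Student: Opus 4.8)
The plan is to reinterpret the three block updates of Algorithm~\ref{alg:convex-init} as one projected gradient step for $f$ over the product domain $\mathcal{D}$, carried out in the geometry of the norm $\|\cdot\|_{\mathcal{D}}$, and then to invoke the textbook one-step progress bound for projected gradient descent on a strongly convex, smooth function. Write $x^t=(\gt,\at,\bt)$ and $\wt x=(\wt G,\wt\alpha,\wt\beta)$, and equip the parameter space with the inner product
\[
\big\langle (G_1,\alpha_1,\beta_1),(G_2,\alpha_2,\beta_2)\big\rangle_{\mathcal{D}}=\inner{G_1,G_2}+2n\inner{\alpha_1,\alpha_2}+\fnorm{X}^2\,\beta_1\beta_2 ,
\]
which induces exactly $\|\cdot\|_{\mathcal{D}}$ since $\fnorm{\alpha\one^\top}^2=n\|\alpha\|^2$ and $\fnorm{X\beta}^2=\fnorm{X}^2\beta^2$. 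The differential of $f$ at $x^t$ pairs against the ordinary inner product blockwise, so its gradient with respect to $\langle\cdot,\cdot\rangle_{\mathcal{D}}$ is $\nabla_{\mathcal{D}}f(x^t)=\big(\nabla_G f,\ \tfrac1{2n}\nabla_\alpha f,\ \tfrac1{\fnorm{X}^2}\nabla_\beta f\big)$. Because $\mathcal{D}=\projg\times\proja\times\projb$ is a Cartesian product and $\|\cdot\|_{\mathcal{D}}^2$ is a weighted sum of per-block squared norms, the projection $\pi_{\mathcal{D}}$ onto $\mathcal{D}$ in this geometry acts blockwise; combined with the choice $\etag=\eta$, $\etaa=\eta/2n$, $\etab=\eta/\fnorm{X}^2$ — precisely $\eta$ divided by the block weights $1$, $2n$, $\fnorm{X}^2$ — this gives the identification $x^{t+1}=\pi_{\mathcal{D}}\big(x^t-\eta\,\nabla_{\mathcal{D}}f(x^t)\big)$.

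With this in hand I would run the standard argument. Put $y^{t+1}=x^t-\eta\,\nabla_{\mathcal{D}}f(x^t)$, so $x^{t+1}=\pi_{\mathcal{D}}(y^{t+1})$ and $x^t-x^{t+1}=\eta\,\nabla_{\mathcal{D}}f(x^t)+(y^{t+1}-x^{t+1})$, and split
\[
\big\langle x^t-x^{t+1},\,x^t-\wt x\big\rangle_{\mathcal{D}}=\|x^{t+1}-x^t\|_{\mathcal{D}}^2+\big\langle x^t-x^{t+1},\,x^{t+1}-\wt x\big\rangle_{\mathcal{D}} .
\]
Since $\wt x\in\mathcal{D}$, the obtuse-angle inequality for projections (Lemma~\ref{lemma-projection}, valid in any inner-product geometry, in particular $\langle\cdot,\cdot\rangle_{\mathcal{D}}$) gives $\big\langle y^{t+1}-x^{t+1},\,x^{t+1}-\wt x\big\rangle_{\mathcal{D}}\ge 0$, hence
\[
\big\langle x^t-x^{t+1},\,x^{t+1}-\wt x\big\rangle_{\mathcal{D}}\ \ge\ \eta\,\big\langle \nabla_{\mathcal{D}}f(x^t),\,x^{t+1}-x^t\big\rangle_{\mathcal{D}}+\eta\,\big\langle \nabla_{\mathcal{D}}f(x^t),\,x^t-\wt x\big\rangle_{\mathcal{D}} .
\]
For the first term I would use the $(\gamma_n+9/2)$-smoothness half of Lemma~\ref{lem:init-convex-smooth} (the descent lemma), giving $\big\langle\nabla_{\mathcal{D}}f(x^t),x^{t+1}-x^t\big\rangle_{\mathcal{D}}\ge f(x^{t+1})-f(x^t)-\tfrac{L}{2}\|x^{t+1}-x^t\|_{\mathcal{D}}^2$; for the second, the $\gamma_n$-strong-convexity half, giving $\big\langle\nabla_{\mathcal{D}}f(x^t),x^t-\wt x\big\rangle_{\mathcal{D}}\ge f(x^t)-f(\wt x)+\tfrac{\mu}{2}\|x^t-\wt x\|_{\mathcal{D}}^2$. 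Adding these and using $f(x^{t+1})\ge f(\wt x)$ and $f(x^t)\ge f(\wt x)$ (both hold because $\wt x=\argmin_{\mathcal{D}}f$ while $x^t,x^{t+1}\in\mathcal{D}$) cancels the function-value terms, leaving $\big\langle x^t-x^{t+1},x^{t+1}-\wt x\big\rangle_{\mathcal{D}}\ge \tfrac{\eta\mu}{2}\|x^t-\wt x\|_{\mathcal{D}}^2-\tfrac{\eta L}{2}\|x^{t+1}-x^t\|_{\mathcal{D}}^2$. Substituting into the split and expanding $\|x^{t+1}-x^t\|_{\mathcal{D}}^2=\fnorm{\bG^{t+1}-\gt}^2+2\fnorm{(\balpha^{t+1}-\at)\one^\top}^2+\fnorm{(\beta^{t+1}-\bt)X}^2$ reproduces the stated inequality with $\mu=\gamma_n$ and $L=\gamma_n+9/2$.

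The only step that is not mechanical is the translation in the first paragraph: one must check that running the three coordinate updates with their three different step sizes, followed by the three separate projections, is literally one $\eta$-step of projected gradient descent for $f$ with respect to $\|\cdot\|_{\mathcal{D}}$. Concretely this means verifying (i) $\pi_{\mathcal{D}}$ factors over $\projg\times\proja\times\projb$ because $\mathcal{D}$ is a product set and $\|\cdot\|_{\mathcal{D}}^2$ is block-separable, and (ii) the per-block step sizes $\eta$, $\eta/2n$, $\eta/\fnorm{X}^2$ are exactly $\eta$ over the block weights $1$, $2n$, $\fnorm{X}^2$ appearing in $\|\cdot\|_{\mathcal{D}}$, so that the Euclidean per-block gradients used in the algorithm become the $\langle\cdot,\cdot\rangle_{\mathcal{D}}$-gradient after scaling. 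Once that identification is made, everything reduces to Lemma~\ref{lem:init-convex-smooth} together with Lemma~\ref{lemma-projection}, and the remaining manipulations are routine; a minor point to keep straight is the sign in Lemma~\ref{lemma-projection}, whose conclusion $\langle\pi_{\mathcal{D}}(y)-x,\pi_{\mathcal{D}}(y)-y\rangle\le 0$ rearranges to the form $\langle y-\pi_{\mathcal{D}}(y),\pi_{\mathcal{D}}(y)-x\rangle\ge 0$ used above.
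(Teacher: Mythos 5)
Your proposal is correct and is essentially the paper's own argument: both rest on the strong-convexity/smoothness bounds of Lemma~\ref{lem:init-convex-smooth} applied between $x^t$, $x^{t+1}$ and $\wt{x}$, the obtuse-angle projection inequality of Lemma~\ref{lemma-projection}, and the optimality of $\wt{x}$ over $\mathcal{D}$ to discard $f(x^{t+1})-f(\wt{x})\geq 0$. Your repackaging of the three block updates as one projected gradient step in the $\|\cdot\|_{\mathcal{D}}$ geometry is just a cleaner bookkeeping of what the paper does blockwise (applying the projection lemma separately to $G$, $\alpha$, $\beta$ with the weights $1$, $2n$, $\fnorm{X}^2$), so the two derivations coincide step for step.
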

\begin{proof}
Let $\xt=(\gt, \at, \bt)$ and $\wt{x} = (\wt{G}, \wt{\alpha}, \wt{\beta})$.
Then
\begin{equation*}
\begin{aligned}
f(&\bx^{t+1})-f(\wt{x})=f(\bx^{t+1})-f(\xt)+f(\xt)-f(\wt{x})\\
&\leq \inner{\nabla f(\xt), \bx^{t+1}-\bx_{t}}+\frac{L}{2}\|\bx^{t+1}-\xt\|_{\mathcal{D}}^2+\inner{\nabla f(\xt), \bx_{t}-\wt{x}}-\frac{\mu}{2}\|\xt-\wt{x}\|_{\mathcal{D}}^2\\
&\leq \inner{\nabla f(\xt), \bx^{t+1}-\wt{x}}+\frac{L}{2}\|\bx^{t+1}-\xt\|_{\mathcal{D}}^2-\frac{\mu}{2}\|\xt-\wt{x}\|_{\mathcal{D}}^2\\
&=\inner{\nabla_G f(G^t, \alpha^t, \beta^t) , \bG^{t+1}-\wt{G}}+\inner{\nabla_\alpha f(G^t, \alpha^t, \beta^t) , \balpha^{t+1}-\wt{\alpha}}+\inner{\nabla_\beta f(G^t, \alpha^t, \beta^t) , \beta^{t+1}-\wt{\beta}}\\
&\quad+\frac{L}{2}\|\bx^{t+1}-\xt\|_{\mathcal{D}}^2-\frac{\mu}{2}\|\xt-\wt{x}\|_{\mathcal{D}}^2.
\end{aligned}
\end{equation*}
Notice that $\widetilde{\bG}^{t+1}=\gt-\etag\frac{\partial f}{\partial \bG}|_{\bG=\gt}$ and $\bG^{t+1}$ is the projection of $\widetilde{\bG}^{t+1}$ to the convex set $\{\bG| \bG\bJ=\bG, \bG\in\mathbb{S}_+, \max_{i,j}\|\bG_{ij}\|\leq M_1\}$. Therefore by Lemma~\ref{lemma-projection}, 
\begin{equation*}
\inner{\bG^{t+1}-\widetilde{\bG}^{t+1}, \bG^{t+1}-\wt{G}}\leq 0
\end{equation*}
which implies that 
\begin{equation*}
\begin{aligned}
\inner{\frac{\partial f}{\partial \bG}|_{\bG=\gt} , \bG^{t+1}-\wt{G}}&\leq \frac{1}{\etag}\inner{\gt-\bG^{t+1}, \bG^{t+1}-\wt{G}}\\
&=\frac{1}{\etag}\inner{\gt-\bG^{t+1}, \gt-\wt{G}}-\frac{1}{\etag}\fnorm{\gt-\bG^{t+1}}^2.
\end{aligned}
\end{equation*}
Similar argument will yield 
\begin{equation*}
\begin{gathered}
\inner{\frac{\partial f}{\partial \balpha}|_{\balpha=\balpha^{t+1}} , \balpha^{t+1}-\wt{\alpha}}\leq \frac{1}{\etaa}\inner{\at-\balpha^{t+1}, \balpha^{t}-\wt{\alpha}}-\frac{1}{\etaa}\|\at-\balpha^{t+1}\|^2, \\
\inner{\frac{\partial f}{\partial \beta}|_{\beta=\beta^{t+1}} , \beta^{t+1}-\wt{\beta}}\leq \frac{1}{\etab}\inner{\bt-\beta^{t+1}, \beta^{t}-\wt{\beta}}-\frac{1}{\etab}\|\bt-\beta^{t+1}\|^2.
\end{gathered}
\end{equation*}
Also notice that $f(\bx^{t+1})-f(\wt{x})\geq 0$, therefore
\begin{equation*}
\begin{aligned}
0&\leq \eta(f(\bx^{t+1})-f(\wt{x}))\leq \inner{\gt-\bG^{t+1}, \gt-\wt{G}}+ 2n\inner{\at-\balpha^{t+1}, \at-\wt{\alpha}}\\
&\quad+\fnorm{X}^2\inner{\bt-\beta^{t+1}}-\|\xt-\bx^{t+1}\|_{\mathcal{D}}^2+\frac{\eta L}{2}\|\xt-\bx^{t+1}\|_{\mathcal{D}}^2-\frac{\eta\mu}{2}\|\xt-\wt{x}\|_{\mathcal{D}}^2.
\end{aligned}
\end{equation*}
This completes the proof.
\end{proof}

\subsubsection{Proof of the theorem}
Let $\xt=(\gt, \at, \bt)$, $\wt{x} = (\wt{G}, \wt{\alpha}, \wt{\beta})$. By definition, 
\begin{equation*}
\begin{aligned}
\| \bx^{t+1}-\wt{x}\|_{\mathcal{D}}^2&= \fnorm{\bG^{t+1}-\wt{G}}^2+2\fnorm{\left(\balpha^{t+1}-\balpha^{t}\right)\one^\top}^2+\fnorm{(\beta^{t+1}-\wt{\beta})\bX}^2.
\end{aligned}
\end{equation*}
Notice that for each component, the error can be decomposed as (with $G$ as an example),
\begin{equation*}
 \fnorm{\bG^{t+1}-\wt{G}}^2= \fnorm{\gt-\wt{G}}^2-2\inner{\gt-\bG^{t+1}, \gt-\wt{G}}+\fnorm{\bG^{t+1}-\gt}^2.
\end{equation*} 
Summing up these equations leads to
\begin{equation*}
\begin{aligned}
\| \bx^{t+1}-\wt{x}\|_{\mathcal{D}}^2
&=\|\xt-\wt{x}\|_{\mathcal{D}}^2 \\
& ~~~ -2\Big\{\inner{\gt-\bG^{t+1}, \gt-\wt{G}}+ 2n\inner{\at-\balpha^{t+1}, \at-\wt{\alpha}}
+\fnorm{X}^2\inner{\bt-\beta^{t+1}, \bt-\wt{\beta}} \Big\}\\
& ~~~ +\Big\{\fnorm{\bG^{t+1}-\gt}^2+2\fnorm{\left(\balpha^{t+1}-\balpha^{t}\right)\one^\top}^2 +\fnorm{(\beta^{t+1}-\beta^{t})\bX}^2\Big\}.
\end{aligned}
\end{equation*}
By Lemma~\ref{lem:init-descent}, 
\begin{equation*}
\begin{aligned}
\| \bx^{t+1}-\wt{x}\|_{\mathcal{D}}^2\leq (1-\eta\mu)\|\xt-\wt{x}\|_{\mathcal{D}}^2 - (1-\eta L)\|\xt-\bx^{t+1}\|_{\mathcal{D}}^2 .
\end{aligned}
\end{equation*}
Then for any $\eta\leq 1/L$, 
\begin{equation*}
\begin{aligned}
\| \bx^{t+1}-\wt{x}\|_{\mathcal{D}}^2\leq (1-\eta\mu)\|\xt-\wt{x}\|_{\mathcal{D}}^2.
\end{aligned}
\end{equation*}
By Lemma~\ref{lem:kernel-identifiability-nonconvex}, and repeatedly using the inequality $(a+b)^2\leq 2(a^2+b^2)$, 
\begin{equation*} 
\begin{aligned}
e_t &\leq \frac{\kappa^2_{\tz}}{2(\sqrt{2}-1)}\fnorm{Z^t(Z^t)^\top - \tz\tz^\top}^2+2\fnorm{\dat\one^\top}^2+\fnorm{\dbt X}^2 \\
& \leq \frac{\kappa^2_{\tz}}{(\sqrt{2}-1)} \left(\fnorm{Z^t(Z^t)^\top - G^t}^2 + \fnorm{G^t - \tz\tz^\top}^2\right)+2\fnorm{\dat\one^\top}^2+\fnorm{\dbt X}^2\\
& \leq \frac{2\kappa^2_{\tz}}{(\sqrt{2}-1)}\fnorm{G^t - \tz\tz^\top}^2+2\fnorm{\dat\one^\top}^2+\fnorm{\dbt X}^2\\
& \leq \frac{4\kappa^2_{\tz}}{(\sqrt{2}-1)}\left(\fnorm{G^t - \tg}^2 + \fnorm{\tg- \tz\tz^\top}^2\right) +2\fnorm{\dat\one^\top}^2+\fnorm{\dbt X}^2\\
& \leq \frac{4\kappa^2_{\tz}}{(\sqrt{2}-1)}\fnorm{G^t - \tg}^2 + \frac{4\kappa^2_{\tz}}{(\sqrt{2}-1)}\fnorm{\gresid}^2 + 2\fnorm{\dat\one^\top}^2+\fnorm{\dbt X}^2.
\end{aligned}  
\end{equation*}
By the definition of $\|\cdot\|_\mathcal{D}$, we further have
\begin{align*}
e_t & \leq \frac{4\kappa^2_{\tz}}{(\sqrt{2}-1)}\left(\|\xt-\tx\|_{\mathcal{D}}^2 + \fnorm{\gresid}^2\right)
\leq \frac{4\kappa^2_{\tz}}{(\sqrt{2}-1)}\left(2\|\xt-\wt{x}\|_{\mathcal{D}}^2 + 2\|\wt{x}-\tx\|_{\mathcal{D}}^2 + \fnorm{\gresid}^2\right)\\
&\leq \frac{4\kappa^2_{\tz}}{(\sqrt{2}-1)}\left(2 (1-\eta\gamma_n)^t \|x^0-\wt{x}\|_{\mathcal{D}}^2 + 2\|\wt{x}-\tx\|_{\mathcal{D}}^2 + \fnorm{\gresid}^2\right)\\
&\leq \frac{4\kappa^2_{\tz}}{(\sqrt{2}-1)}\left(4 (1-\eta\gamma_n)^t \|x^0-\tx\|_{\mathcal{D}}^2 + 4 (1-\eta\gamma_n)^t \|\wt{x}-\tx\|_{\mathcal{D}}^2 + 2\|\wt{x}-\tx\|_{\mathcal{D}}^2 + \fnorm{\gresid}^2\right).
\end{align*}
According to Theorem~\ref{thm:init}, there exists constant $C_0>0$ such that 
$$\|\wt{x}-\tx\|_{\mathcal{D}}^2\leq C_0\left(e^{2M_1}\lambda_n^2k + \frac{\|\gresid\|_*^2}{k}\right).$$
Therefore, $e_t\leq  C_1\kappa_{\tz}^2\left((1-\eta\gamma_n)^t\|x^0-\tx\|_{\mathcal{D}}^2  + e^{2M_1}\lambda_n^2k + \|\gresid\|_*^2/k + \fnorm{\gresid}^2\right)$. Since $x^0 = 0$, 
\begin{equation*}
\begin{aligned}
e_t&\leq  C_1\kappa_{\tz}^2\left((1-\eta\gamma_n)^t\|\tx\|_{\mathcal{D}}^2  + e^{2M_1}\lambda_n^2k + \|\gresid\|_*^2/k + \fnorm{\gresid}^2\right)\\
&\leq \frac{c_1^2}{\kappa_{\tz}^4e^{2M_1}} \op{\tz}^4 \times \frac{C_1\kappa_{\tz}^6e^{2M_1}}{c_1^2\op{\tz}^4} \left((1-\eta\gamma_n)^t\|\tx\|_{\mathcal{D}}^2  + e^{2M_1}\lambda_n^2k + \|\gresid\|_*^2/k + \fnorm{\gresid}^2\right).
\end{aligned}
\end{equation*}
Under our assumptions, there exists some sufficiently large constant $C_2$ such that 
\begin{equation*}
	\op{\tz}^4\geq C_2\kappa_{\tz}^6e^{2M_1}\max \Big\{
	e^{2M_1}\lambda_n^2k , ~\|\gresid\|_*^2/k , ~ \fnorm{\gresid}^2
  \Big\}.
\end{equation*}
Therefore, 
\begin{equation*}
\begin{aligned}
e_t &\leq  \frac{c_1^2}{\kappa_{\tz}^4e^{2M_1}} \op{\tz}^4  \times \left(\frac{C_1e^{2M_1}\kappa_{\tz}^6\fnorm{\tall}^2}{c_1^2\tau^2\op{\tz}^4}(1-\eta\gamma_n)^t + \frac{3C_1}{c_1^2C_2} \right).
\end{aligned}
\end{equation*}
Choose large enough $C_2 > 6C_1/c_1^2$, then 
\begin{equation*}
\begin{aligned}
e_t &\leq  \frac{c_1^2}{\kappa_{\tz}^4e^{2M_1}} \op{\tz}^4  \times \left(\frac{Ce^{2M_1}\kappa_{\tz}^6\|\tx\|_{\mathcal{D}}^2}{c_1^2\op{\tz}^4}(1-\eta\gamma_n)^t + \frac{1}{2} \right).
\end{aligned}
\end{equation*}
Therefore, $e_t\leq  \frac{c_1^2\tau^2}{\kappa^4}\op{\tz}^4$ when 
\[
\frac{C_1e^{2M_1}\kappa_{\tz}^6\|\tx\|_{\mathcal{D}}^2}{c_1^2\op{\tg}^2}(1-\eta\gamma_n)^t\leq \frac{1}{2}.
\]
By Lemma~\ref{lem:Z}, $\op{\tg}^2 \geq c\fnorm{\tg}^2/k$. Therefore, it suffices to have 
$$t\geq \log \left(k\frac{\|\tx\|_{\mathcal{D}}^2}{\fnorm{\tg}^2}\frac{2C_1e^{2M_1}\kappa_{\tz}^6}{c_1^2c}\right)\left(\log\left(\frac{1}{1-\eta\gamma_n}\right)\right)^{-1}.$$
This completes the proof.

\bibliography{network}
\bibliographystyle{abbrvnat}

\end{document}